

\RequirePackage{boolean}
\input{suspended.cfg}

\documentclass[acmsmall,screen,nonacm%
\Anonymous{,anonymous}{}%
\Final{}{,review}%
]{acmart}

\usepackage{suspended}


\author{Alistair O'Brien}
\orcid{0009-0009-0055-7793}
\email{ajo41@cam.ac.uk}
\affiliation{
  \institution{University of Cambridge}
  \city{Cambridge}
  \country{United Kingdom}
}

\author{Didier R\'emy}
\orcid{0000-0002-0693-6278}
\email{didier.remy@inria.fr}
\affiliation{
  \city{Paris}
  \institution{INRIA}
  \country{France}
}

\author{Gabriel Scherer}
\orcid{0000-0003-1758-3938}
\email{gabriel.scherer@inria.fr}
\affiliation{
  \institution{INRIA \& IRIF, Universit\'e Paris Cit\'e}
  \city{Paris}
  \country{France}
}


\newcommand{\BBemlti}{Pottier-Remy/emlti}
\newcommand{\BBpolyml}{Garrigue-Remy/poly-ml}
\newcommand{\BBpolyparams}{White/polyparams@ml2023}
\newcommand{\BBoutsidein}
   {Vytiniotis-Peyton-Jones-Schrijvers-Sulzmann/outsidein@jfp2011}

\title{Omnidirectional type inference for \ML: principality any way}

\begin{document}

\begin{abstract}

  The Damas-Hindley-Milner (\ML) type system owes its success to
  \emph{principality}, the property that every well-typed expression has a
  unique most general type. This makes inference predictable and efficient.
  Yet, principality is \emph{fragile}: many extensions of \ML---GADTs,
  higher-rank polymorphism, and static overloading---break it by introducing
  \emph{fragile} constructs that resist principal inference. Existing
  approaches recover principality through \emph{directional} inference
  algorithms, which propagate \emph{known} type information in a fixed (or
  \emph{static}) order (\eg as in bidirectional typing) to disambiguate such
  constructs. However, the rigidity of a \emph{static} inference order often
  causes otherwise well-typed programs to be rejected.


  We propose \emph{omnidirectional} type inference, where type information
  flows in a \emph{dynamic} order. Typing constraints may be solved in any
  order, suspending when progress requires known type information and resuming
  once it becomes available, using \emph{suspended match constraints}. This
  approach is straightforward for simply typed systems, but extending it to \ML
  is challenging due to \emph{let-generalization}. Existing \ML inference
  algorithms type let-bindings in a fixed order---type the let-bound expression 
  first, generalize its type, and then type the let-body. To overcome this, we
  introduce \emph{incremental instantiation}, allowing partially solved type
  schemes containing suspended constraints to be instantiated, with a mechanism
  to incrementally update instances as the scheme is refined.
  %
  %
  Omnidirectionality provides a \emph{general framework} for restoring
  principality in
  the presence of fragile features. We demonstrate its versatility on two
  fundamentally different features of \OCaml: static overloading of record
  labels and datatype constructors and semi-explicit first-class polymorphism.
  In both cases, we obtain a \emph{principal} type inference algorithm that is
  more expressive than \OCaml's current typechecker.

\end{abstract}
\maketitle

\section{Introduction}
\label{sec/introduction}

\parcomment {Introduction. What is \ML, what is principality?}

The Damas-Hindley-Milner (\ML) \citep*{Damas-Milner/W@popl82,
hindley1969principal} type system has long occupied a sweet spot in the design
space of strongly typed programming languages, as it enjoys the \emph{principal
types property}: every well-typed expression $\e$ has a most general type $\ts$
from which all other valid types for $\e$ are instances of $\ts$. For example,
the identity function $\efun \x \x$ has the principal type $\tfor \tv \tv \to
\tv$, generalizing types like $\tint \to \tint$ and $\tbool \to \tbool$.

\parcomment {Benefits of principality: practical implications}

The existence of principal types in \ML has important practical benefits. It
makes inference predictable, compositional, and efficient: since every
expression has a most general type, local typing decisions are always optimal,
with no need for guessing or backtracking. Beyond inference, principality
ensures that well-typedness is stable under common program transformations such
as let-contraction (or inlining) and argument reordering.

\parcomment {Principality is fragile. Extensions often break it}

Principality, however, is fragile. Many extensions of \ML---such as extensible
records with row-polymorphism \citep*{Remy/popl89, conf/popl/RemyV97,
conf/lics/Wand89, journals/toplas/Ohori95, garrigue1998programming} and
higher-kinded types \citep*{journals/jfp/Jones95}---are \emph{robust}: they
preserve principality. Others, including GADTs
\citep*{conf/icfp/SchrijversJSV09, conf/aplas/GarrigueR13}, higher-rank and
first-class polymorphism \citep*{conf/popl/OderskyL96,
\BBpolyml, journals/pacmpl/SerranoHJV20}, and static overloading
\citep*{Chargueraud-Bodin-Dunfield-Riboulet/jfla2025}, are \emph{fragile}: they
break principality under their \emph{natural} typing rules.

\parcomment {Example}

This fragility can already be observed in \OCaml through impredicative
higher-rank (\ie first-class) polymorphism, exposed by \emph{polymorphic
methods}~\citep{\BBpolyml} (\smashcolorbox{welltyped}{green (\cmark)} indicates
typechecking success and \smashcolorbox{illtyped}{red (\xmark)} indicates
failure):

\vspace{1ex}
\begin{program}[input]
let self x = x#f x °\Ocamlcomment{\ocamlFlag {\OCaml}1}°
\end{program}

In \OCaml, objects are defined as a collection of methods within
\code{object ... end}, and accessed using $\e \esend m$. Unlike Java or C++,
\OCaml uses \emph{structural typing} for objects: object types are a list of
method types between two chevrons \eg \ocaml[angles]{<f : 'a. 'a -> 'a>},
where the method \code{f} has the polymorphic identity function type
$\all \tv {\tv \to \tv}$ (the $\forall$ being omitted in \OCaml syntax).
When typing \code{self} in the example above, one could \emph{guess} the type
of \code{x} to be either $\epolymeth f \tv {\tv \to \tv}$ or $\epolymeth f \tv
{\tv \to \tv \to \tv}$---neither of which is strictly more general than the
other, violating principality.

\parcomment {Recovering principality}

Principality can be recovered through explicit type annotations. The return
type of overloaded datatype constructors may be annotated; polymorphic
expressions can be annotated with a type scheme; and for GADTs, both the type
of the \texttt{match} scrutinee and return type can be annotated with a rigid
type, which is refined by type equalities introduced in each branch.
In this example, the binding of \code{x} should be annotated with the
higher-rank type $\epolymeth f \tv {\tv \to \tv}$.

\parcomment {Fragile constructs can be viewed as an elaboration into robust ones}

Every fragile construct has a corresponding \emph{robust} form where the type
annotation is mandatory---for instance, $\eannot {\e \esend m} {} \ts$ is the
robust form of $\e \esend  m$ for polymorphic method invocation.  Robust forms
preserve principality, but at the cost of being significantly more cumbersome
to use. Fragile forms relieve this burden, but can only be elaborated into
their robust counterpart if sufficient type information is already available
from the context. Thus, typing fragile constructs ultimately amounts to
identifying when type information is \emph{known}.

\parcomment {The hard part of the problem---defining known type information.}

Intuitively, by known information, we mean typing constraints that
must hold---either from typing rules (\eg application requires the
function to have an arrow type) or programmer supplied type annotations. Yet,
formulating a declarative specification for \emph{when} type information is
known is difficult: most specifications are often twisted with some direct or
indirect algorithmic flavor in order to preserve principality and completeness.

\parcomment {Existing approaches: all rely on a static order of inference}

The two dominant approaches thus far are \emph{bidirectional} typechecking
\citep*{conf/popl/PierceT98} and \emph{\geninst-directional} type inference
\citep*{\BBpolyml}. Each impose some \emph{static} ordering of inference, using
it to propagate inferred types and user-provided annotations as \emph{known}
information.

\parcomment {Static orders have limitations}

While effective in many settings, the rigidity of a static ordering causes even
simple examples whose type could easily be inferred to be rejected. For
instance, \OCaml accepts or rejects the following expression, depending on the
position of the annotation:

\begin{program}[input,angles]
let self_1_1 (x : <f : 'a. 'a -> 'a>) = if true then x#f x else x °
\Ocamlcomment{\ocamlFlag {\OCaml}0}°
let self_1_2 x = if true then x#f x else (x : <f : 'a. 'a -> 'a>) °
\Ocamlcomment{\ocamlFlag {\OCaml}1}°
\end{program}

\parcomment {Our solution: a dynamic order}

We propose \emph{omnidirectional} type inference, a global inference approach
that relies on a \emph{dynamic} order of inference. The solving of inference
constraints may proceed in any order, suspending whenever progress requires
\emph{known} type information. Other constraints may continue to be solved;
once the missing information becomes available (typically via unification), the
suspended typing constraints are resumed.

\parcomment {Omnidirectionality is a framework}

We present omnidirectionality as a general framework for inference in the
presence of fragile features. While this paper instantiates our framework for
two concrete features in \OCaml---static overloading of constructors and record
fields, and polymorphic object methods---its scope is broader: we expect it to
extend to richer features such as GADTs, polymorphic parameters
\citep{\BBpolyparams}, and more general forms of static overloading \ala Swift.

Despite the name, omnidirectionality is not an extension of bidirectional
typechecking. Fragile features are often handled by \emph{local} inference
methods, most notably bidirectional typechecking, because they rely on the
propagation of known type information; omnidirectionality addresses this same
need using \emph{global} inference.

\parcomment {Scaling to ML / let-generalization}

Because our long-term goal is to integrate omnidirectional inference into
\OCaml's typechecker, it must scale to \ML-style polymorphism. While the idea
of suspending constraints is not new (see \cref{sec:related-work}), we show how
suspended constraints can coexist with \ML \emph{local let-generalization}---an
indispensable feature of \OCaml\footnote{In contrast, \Haskell only supports
top-level implicit let-generalization.}---but one that makes suspended
constraints uniquely difficult to implement and specify declaratively.

\subsection* {Contributions}

Section \cref{sec/overview} introduces our setting: \OCaml's static overloading
of datatype constructors and record labels, and polymorphic methods. We review
directional inference, its limitations, and motivate omnidirectional inference.
To this end, we introduce our three key ideas: a new characterization of
\emph{known} type information, \emph{suspended match constraints}, and
\emph{incremental instantiation}, and demonstrate how together they enable
principal type inference for these fragile features. Before turning to
technical developments, we also discuss the \emph{limitations and trade-offs}
of omnidirectionality.

The subsequent sections present our main contributions:
\begin{enumerate}

  \item[(\cref{sec/oml})]

    The \OML calculus, an extension of \ML featuring \OCaml's static
    overloading of record labels and semi-explicit first-class polymorphism
    (\cref{sec/overview/polytypes}).
    We give typing rules with a new declarative
    characterization of \emph{known} type information.

  \item[(\cref{sec:constraints})]

    A novel constraint language for omnidirectional inference, equipped with a
    semantics for suspended constraints.
    We describe the translation of \OML programs to constraints
    representing typing problems, and establish the expected metatheoretic
    properties: soundness, completeness, and principality of inference.

  \item[(\cref{sec:solving})]

    A formal definition of our constraint solver as a series of
    non-deterministic rewriting rules, proved correct with respect to
    the constraint semantics. The rewriting rules detail our treatment
    for the interaction of let-generalization with suspended
    constraints via \emph{incremental instantiation}, and the formal
    description can be directly related to an efficient
    implementation.

  \item[(\cref{sec:implementation})]

    A description of an efficient implementation of our solver, including our
    treatment of suspended constraints and partial type schemes. Validating
    that omnidirectional inference for \ML is practical.

\end{enumerate}
Section \cref{sec/discussion} considers additional practical and peripheral
aspects of our work. Finally, \cref{sec:related-work} compares related work,
and \cref{sec:future-work} concludes with future work. Appendix
\cref{app:full-reference} contains a complete technical reference, collecting
key definitions and figures for convenient lookup. All proofs are deferred to
the appendices.

\section{Overview}
\label{sec/overview}

We ground our work in two fragile features of \OCaml: \emph{static overloading}
of record labels and constructors, and \emph{polymorphic object methods}. Both
are useful in practice: static overloading is widely relied upon in large
programs, and polymorphic methods make first-class polymorphism available
within \OCaml.

\subsection{Static overloading of constructors and record labels}
\label{sec/overview/overloading}

\parcomment{What do we mean by static overloading?}

\emph{Static overloading} denotes a form of overloading in which resolution is
performed entirely at compile time, enabling the compiler to select a unique
implementation without relying on runtime information---in contrast to
\emph{dynamic overloading}, which defers resolution to runtime via mechanisms
such as dictionary-passing or dynamic dispatch. Many mainstream languages, such
as C++, Rust, and Java, use static overloading; its appeal is that it provides
a \emph{zero-cost} abstraction.

\parcomment{\OCaml's static overloading}

\OCaml supports a limited yet useful form of static overloading for record
labels and datatype constructors. Ambiguity is resolved using \emph{known
type information} under its directional inference algorithm (discussed in
\cref{sec/overview/directional}).
To illustrate static overloading in \OCaml, consider two nominal
record types with overlapping field names:
\begin{program}[input]
type point      = { x : int; y : int }
type gray_point = { x : int; y : int; color : int } °
\vadjust {\penalty 10000}°
\end{program}
With both definitions in scope, \OCaml must statically disambiguate each
field usage:
\begin{program}[input,checkocaml]
let one = { x = 42; y = 1337 }                           °\ocamlflags 00°
let ex_1 r = r.x                                         °\ocamlflags 21°
let ex_2 (r : point) = r.x + r.y                         °\ocamlflags 00°
let ex_3 r = (r.x, (r : point).y)                        °\ocamlflags 10°
\end{program}
The type of expression \ocaml!one! has the unambiguous type \code{point},
even though both \code{point} and \code{gray_point} define the fields
\code{x} and \code{y}.
This is because \OCaml performs \emph{closed-world reasoning}: the
typechecker is able to unambiguously
infer the type of \ocaml{one} as \code{point}, since it is the only record type
whose domain is $\set{\mathtt{x}, \mathtt{y}}$. Similarly,
\code{r.color} necessarily infers \code{gray_point} for the type of \code{r}.

\parcomment{Examples}

By contrast, \code{r.x} is ambiguous unless the type of \code{r} is
\emph{known}.
In \ocaml{ex_1}, the type of \code{r} is unconstrained, so disambiguation
fails.\footnote{\let \code \code
  In fact, \OCaml does not fail on ambiguous types, but instead applies a
  default resolution strategy: it emits a warning and selects the last
  matching type definition in scope. Here, this will amount to choosing the
  type \code{gray_point} for \code{r}. 
}\footnote{\let \code \code
  To check all our examples, use the options
  \code{-principal -w +41+18 -warn-error +41+18}, which enables principal
  type inference and escalates the associated warnings to errors.
  \label{fn/principal}
}
%
In \code{ex_2}, the annotation fixes the type of \code{r}, thus \code{r}'s type
is \emph{known} and resolves \code{r.x} and \code{r.y} unambiguously. In
\code{ex_3}, the type of \code{r} can only be \code{point}: considering the
second projection first, we learn that \code{r} must have the type
\code{point}, and since it is $\lambda$-bound, this should make the first
projection unambiguous. However, \OCaml still rejects this example due to its
\emph{static order} of inference (\cref{sec/overview/directional}).

\paragraph{Default rules}
\label{sec/default-rules}

If local type information and closed-world reasoning are insufficient, \OCaml
falls back to a syntactic default: it selects the most recently defined
compatible type. For example, \OCaml accepts the following expression, when
\warningsign{} are not turned into \errorsign{},
\ie without the \code{-warn-error} flag\footnotecref{fn/principal}:
%
\begin{program}[input,checkocaml=true]
let ex_1 r = r.x                                      °\ocamlflags 21°
\end{program}
The expression is compatible with both \code{point} and \code{gray_point},
since each defines a field \code{x}. But \code{gray_point} is chosen simply
because it appears later in the source.

\parcomment {Default rules are inherently non-principal}

Such fallback behavior is inherently \emph{non-principal}: it reflects the
typechecker's decision to abandon principal inference and arbitrarily select a
syntactic default when no unique type can be inferred. We therefore give no
formal account of such ``default rules''.

\parcomment{Default rules interact poorly with directional inference}

Defaulting also interacts poorly with \OCaml's directional inference. Once the
compiler selects a type, it commits to it---even if that choice causes errors
downstream. Consider \code{ex_3}: when typing \code{r.x}, \OCaml defaults
\code{r} to \code{gray_point}, and subsequently fails on \code{(r : point).y}.
\OML succeeds by suspending the resolution of \code{r.x} until it learns from
\code{(r : point).y} that \code{r} has type \code{point}.

\parcomment{Variants are supported, but not discussed}

Since overloaded datatype constructors are analogous to record fields, we focus
only on record fields in this work. Our prototype implementation
(\cref{sec:implementation}), however, supports both.

\subsection{Polymorphic methods}

\parcomment{Why polymorphic methods?}

Polymorphic methods \citep*{\BBpolyml} bring a form of
System-$\F$-like expressiveness to \OCaml by supporting first-class
polymorphism (impredicative higher-rank polymorphism) while preserving
principal type inference.

\paragraph{From polymorphic methods to \polytypes}
\label{sec/overview/polymethods-reduction}

\parcomment{Idea and syntax}

Polymorphic methods can be translated into ordinary methods that carry a
\emph{\polytype}.
A \polytype is
a \emph{boxed} type scheme $\tpoly \ts$ that
must be explicitly unboxed at use sites. The syntax of \polytypes is given in
\cref{fig:polyml-syntax}.
We write $\epoly[\ts] \e$ to box a term~$\e$ with the scheme $\ts$, yielding
an expression whose type is the \polytype $\tpoly \ts$. Since polytypes are
boxed, they cannot be be freely instantiated, unlike ordinary type schemes. An
expression~$\e$ of a \polytype $\tpoly \ts$ must first be unboxed using the
construct $\einst \e$ before it can be typed at any instance $\t$ of $\ts$. 
Because they are boxed, polytypes can be treated as regular
(mono)types, thereby enabling impredicativity.

\begin{bnffig}[tb]
  {fig:polyml-syntax}
  {Selected syntax for \polytypes \citep*{\BBpolyml}.}
     \entry[Terms]{\e}{
       \epoly[\ts] \e
       \and \einst \e
       \and \eannot \e {} \t
       \and \dots
     }\\
     \entry[Types]{\t}{
        \tpoly \ts
        \and \dots
     }
\end{bnffig}

\parcomment{Example}

Concretely, the polymorphic method of
\ocaml{object method id : 'a. 'a -> 'a = fun x -> x end}
is translated to \ocaml{object method id = [ fun x -> x : 'a. 'a -> 'a ] end}.
Method invocation implicitly unboxes the polytype \eg $\ttlab \x \esend
\ttlab {id}$ becomes $\einst {\ttlab \x \esend \ttlab {id}}$.

\parcomment{Why bother?}

This reduction is useful for two reasons:
\begin{enumerate*}
  \item Inference for \OCaml's object layer is largely governed by row-polymorphism,
    which is \emph{robust} and does not threaten principality; it is therefore
    orthogonal to our concerns. In contrast, \polytypes are \emph{fragile}.

  \item \Polytypes underpin other features in \OCaml, notably the recent
    addition of polymorphic function parameters \citep*{\BBpolyparams}.

\end{enumerate*}

\paragraph{Default rules}
\label{sec/default-rules/poly}


As with static overloading in \cref{sec/default-rules}, \OCaml employs a
default rule for \polytypes: when a polytype cannot be inferred as
\emph{known}, its polymorphic scheme is defaulted to a monomorphic type. 
This is particularly useful, as it allows method calls such as \ocaml{fun bag x -> bag#mem x} 
to typecheck without annotating \ocaml{bag}. Here, the method
invocation $\ttlab {bag} \esend \ttlab {mem}$ introduces an unboxing operation,
becoming $\einst {\texttt{bag} \esend \texttt{mem}}$. Since the type of this
expression is not yet known, the default rule applies, assigning \ocaml{bag}'s
\ocaml{mem} field the \emph{monomorphic} polytype $\tpoly {\tva \to \tvb}$.


Defaulting is not considered in our work.  In practice, many users restrict
their use of objects to method calls on objects of known types. In such
cases, defaulting never arises.
By contrast, defaulting appears essential for polymorphic parameters
\citep*{\BBpolyparams}: they introduce polytype boxing and unboxing at every
function application, with the default corresponding to the common monomorphic
case. We therefore defer them to future work. 

\paragraph{Semi-explicit first-class polymorphism}
\label{sec/overview/polytypes}


\Polytypes, introduced by \citet{\BBpolyml}, allow so-called
\emph{semi-explicit first-class polymorphism}, requiring explicit boxing an
unboxing, but permit instantiations to be implicit. They therefore stand
in contrast to \emph{implicit} higher-rank or first-class polymorphism,
provided by systems such as
\DK~\citep*{dunfield-krishnaswami-bidirectional-poly},
\QuickLook~\citep*{journals/pacmpl/SerranoHJV20}, and \Frost~\citep*{frost},
where polymorphic values are unboxed. At the same time, they are less explicit
than \emph{fully explicit} polymorphism (\eg System-$\F$), where both
quantification and instantiation must be explicitly and fully specified.


We focus on \polytypes in the remainder of this work for two reasons. First,
our work is grounded in the features of \OCaml, where \polytypes are used for
polymorphic methods and polymorphic function parameters. Second, \polytypes
expose the subtle interaction with principality that is of interest. We discuss
approaches for implicit first-class polymorphism in \cref{sec:related-work}.

\subsection{Directional type inference}
\label{sec/overview/directional}

We now discuss the two main directional inference approaches:
\geninst-directional and bidirectional, illustrated using \polytypes as a
running example. We then discuss limitations of both approaches, providing us
with the motivation for omnidirectional type inference.

\paragraph{\Geninst-directional type inference}

\parcomment{\ML (and \geninst) have a fixed order}

Most \ML type inference algorithms proceed in a fixed order when typechecking
let-bindings $\elet \x \ea \eb$: first typecheck the definition $\ea$,
generalize its type, and then typecheck the body $\eb$ under the extended
environment. \Geninst-directionality leverages this ordering to resolve
ambiguous constructs in a \emph{principal} way. The key idea is that this
ordering reveals which types are \emph{known}, namely types that are fixed by
generalization---and therefore stable enough to guide disambiguation. We call
this approach \geninst-directional (read as ``\textbf{pi}-directional'') type
inference, reflecting that \textbf{p}olymorphic expressions must be typed
before their \textbf{i}nstances. 

\parcomment{Annotation variables. Why?}

To make this notion precise, we annotate types with \emph{annotation variables}
$\av$. We write $\tannot \t \av$ for a type~$\t$ annotated with the variable
$\av$. Annotation variables record the origins of types and may themselves be
generalized, yielding type schemes such as $\tfor \av \t$, where $\av$ appears
free in $\t$; in particular, we may have a scheme $\tfor \av {\tannot \t \av}$,
where $\av$ annotates the topmost structure of $\t$.
\parcomment{What is a known type?}
A type $\tannot \t \av$ is considered \emph{known} when its (topmost)
annotation variable $\av$ is eligible for generalization (\eg $\tfor \av
{\tannot \t \av}$). Conversely, a type whose topmost annotation variable is
monomorphic---that is, it cannot be generalized in the current context---is
considered \emph{not-yet-known} and cannot be relied on for disambiguation.

\begin{mathparfig}[htpb!]
  {fig:polyml}
  {\Geninst-directional syntax and typing rules for \polytypes from \PolyML~\citep*{\BBpolyml}.}
\begin{bnfgrammar}[\def\bnfsep{\;}]
     \entry[Types]{\t}{
        \tannot \t \av
        \and \tpoly \ts
        \and \dots
     }\\
     \entry[Type schemes]{\ts}{
       \t \and \tfor \tv \ts \and \tfor \av \ts
     }\\
     \entry[\llap{Annotation variables}]{\av}{}
\end{bnfgrammar}

  \infer[PolyML-Poly]
    { \G \th \e : \tsa \\ \annot \ts \tsa \\ \annot \ts \tsb}
    {\G \th \epoly[\ts] \e : \tapoly {\tsb} \av}

  \infer[PolyML-Inst]
    {\G \th \e : \tfor \av {\tapoly \ts \av}}
    {\G \th \einst \e : \ts}

  \infer[PolyML-Annot]
    {\G \th \e : \ta \\ \annot \t \ta \\ \annot \t \tb}
    {\G \th \eannot \e {} \t : \tb}

  \infer[Annot-Poly]
    {\annot \ts \tsp}
    {\annot {\tpoly \ts} {\tannot {\tpoly \tsp} \av}}
\end{mathparfig}

\parcomment{\PolyML typing rules}

Using this approach, we can give typing rules for polytypes; this
yields precisely the formulation introduced in
\PolyML~\citep*{\BBpolyml}, given in \cref{fig:polyml}.
\parcomment {Boxing}
The introduction form (\Rule{PolyML-Poly}) for \polytypes is a boxing operator
$\epoly[\ts] \e$ with an explicit \emph{closed} polytype annotation%
\footnote{%
  For simplicity, we here restrict \PolyML annotations to closed types. Both
  \citet{\BBpolyml} and our work support a more general form with
  existentially quantified type variables, as commonly used in \OCaml.
  \label{fn/polyml-closed-annotations}
}
$\ts$ (\ie $\ts$ contains no free type or annotation variables).
The body $\e$ is checked against $\tsa$, a \emph{freshened copy} of $\ts$, \ie
a variant of $\ts$ with fresh annotation variables placed at \polytypes
(\Rule{Annot-Poly}). The resulting expression has type $\tapoly {\tsb} \av$
where $\av$ is an arbitrary (typically fresh) annotation variable and $\tsb$ is
another copy of $\ts$. Because $\ts$ is supplied by the programmer, the
polytype is treated as known: $\epoly[\ts] \e$ also has the generalized type
scheme $\all \av {\tapoly {\tsb} \av}$. This is by design---the explicit
annotation in $\epoly[\ts] \e$ records that the polytype is known. 

\parcomment {Unboxing (principality restriction and \geninst-directionality)}

Conversely, to instantiate a \polytype expression (\Rule{PolyML-Inst}), one must
use an explicit unboxing operator $\einst \e$, which requires no accompanying
type annotation. However, the operator requires~$\e$ to have a \emph{known}
polytype scheme 
of the form $\tfor \av {\tapoly \ts \av}$ and then assigns $\einst \e$ the
type~$\ts$. If, by contrast, $\e$ has the type $\tapoly \ts \av$ for some 
non-generalizable annotation variable $\av$, then $\e$ is considered as a
\emph{not-yet-known} \polytype, and therefore $\einst \e$ is ill-typed. This
restriction enforces principality, preventing instantiation on \emph{guessed}
polytypes.

\parcomment {Annotations}

\Rule{PolyML-Annot} can be used to introduce fresh annotation variables.
Intuitively, $\ta$ and $\tb$ are two \emph{fresh copies} of the same
closed annotation\footnotecref{fn/polyml-closed-annotations} $\t$, preventing
unwanted sharing of annotation variables that could otherwise block
generalization.  For example, $\efun {\x : \tpoly \ts} {\einst \x}$, which is
syntactic sugar for 
$\efun \x {\elet \x {\eannot \x {} {\tpoly \ts}}} {\einst \x}$, is well-typed
because the explicit annotation introduces a fresh annotation variable~$\av$
for $\eannot \x {} {\tpoly \ts}$, which can then be generalized, yielding
$\tfor \av {\tannot {\tpoly \ts} \av}$ for the type of the let-bound variable
$\x$.

\parcomment {Who uses this?}

Building on its introduction in \PolyML, \geninst-directionality has since been
adopted in other systems, notably \MLF~\citep*{LeBotlan-Remy/recasting-mlf},
and in \OCaml for features such as polymorphic object methods and the
overloading of record fields and variant constructors.  

\parcomment{Good example}

We illustrate \geninst-directionality using the following \OCaml examples:%
\footnote{
In \OCaml, these examples can be typechecked by translating
$\epoly[\ts] \e$ to
\Code{object method f} $: \ts = \e$ \Code{end}
and $\einst \e$ to $\e \esend \texttt{f}$. This translation is the inverse of
that discussed in \cref{sec/overview/polymethods-reduction}.
}
\begin{program}[input,angles,checkpolyml]
let pid = [ fun x -> x : 'a. 'a -> 'a ] °\ocamlflags 00°
let ex_5 = let p = pid in <p> °\ocamlflags 00°
let ex_6 = (fun p -> <p>) pid °\ocamlflags 10°
\end{program}
At first glance, \code{ex_5} and \code{ex_6} appear equivalent: both simply
instantiate the \polytype bound to~\code{p}. Yet, \OCaml accepts \code{ex_5} and
rejects \code{ex_6}. This is because the let-binding in \code{ex_5}
assigns \code{p} the type scheme $\tfor \av {\tapoly {\tfor \tv \tv \to
\tv} \av}$, and thus its type is considered \emph{known}---permitting unboxing
(\Rule{PolyML-Inst}).
In \code{ex_6}, by contrast, \code{p} is monomorphic at the point of
instantiation as it is $\lambda$-bound, and unboxing is therefore
forbidden.

To emphasize that this behavior is specification-driven and not an artifact of
\OCaml's inference algorithm, consider two equivalent versions of
\ocaml{ex_6}:\footnote {\let \code \Code
\code{app} and \code{rev_app} are the application
function \Code{fun f x -> f x} and the reverse application function
\Code{fun x f -> f x}, respectively.}
\begin{program}[input,angles,checkpolyml]
let ex_6_2 = app (fun p -> <p>) pid                        °\ocamlflags 10°
let ex_6_3 = rev_app pid (fun p -> <p>)                    °\ocamlflags 20°
\end{program}
While these terms are semantically equivalent, they highlight a potential
hazard: their typability may vary under a directionally biased inference
algorithm, depending on whether the function or argument is typed first.  To
limit such implementation-dependent behavior, \OCaml infers all
subexpressions in an \emph{order-independent} manner until they are let-bound.
Consequently, \OCaml does not make any distinction between
\ocaml[indices]{ex_6}, \ocaml[indices]{ex_6_2}, and \ocaml[indices]{ex_6_3}.

\parcomment{Short comings of pi-directionality}

Treating both examples uniformly is in one sense a strength of
\geninst-directionality, but it also reveals a limitation:
annotability is fragile, in that well-typedness depends on the
\emph{precise} placement of annotations, often forcing the programmer
to introduce annotations that would otherwise be unnecessary.
For instance, the following two terms differ only in the position of
the annotation, yet only \code{self_2_1} is well-typed in
\OCaml---while they are both well-typed in \OML:
\begin{program}[input,angles,checkpolyml]
let self_2_1 x = <(x : [ 'a. 'a -> 'a ])> x °\ocamlflags 00°
let self_2_2 x = <x> (x : [ 'a. 'a -> 'a ]) °\ocamlflags 10°
\end{program}

\paragraph{Bidirectional typechecking}

Bidirectional typechecking is a standard alternative to unification for
propagating type information. It is typically formulated by splitting typing
rules into two modes: \emph{checking mode} ($\G \th \e \Leftarrow \t$), which
typechecks a term $\e$ against a type $\t$ in a given context, and
\emph{inference mode} which infers $\e$'s type from the context alone ($\G \th \e
\Rightarrow \t$).

\parcomment{Assignment of modes}

The type system designer assigns modes---checking or inference---to each
language construct. For instance, one can decide to typecheck function
applications $\eapp \ea \eb$ by first \emph{inferring} that $\ea$ has some
function type $\ta \tarrow \tb$, and then \emph{checking} $\eb$ against $\ta$ (\Rule{Syn-App});
but the opposite, mode-correct choice (\Rule{Chk-App}) is also possible:
\begin{mathpar}
  \infer[Syn-App]
    {\G \th \ea \Rightarrow \ta \to \tb \\ \G \th \eb \Leftarrow \ta}
    {\G \th \eapp \ea \eb \Rightarrow \tb}

  \infer[Chk-App]
    {\G \th \ea \Leftarrow \ta \to \tb \\ \G \th \eb \Rightarrow \ta}
    {\G \th \eapp \ea \eb \Leftarrow \tb}
\end{mathpar}
Following the standard bidirectional recipe of checking introduction forms and
inferring elimination forms---often referred to as the \emph{Pfenning
recipe}~\citep*{conf/popl/DunfieldP04}---the former rule (\Rule{Syn-App}) is
typically preferred. 

\begin{mathparfig}[htpb!]
  {fig:bidir-poly}
  {Bidirectional typing rules for \polytypes.}
  \infer[Syn-Var]
    {\x : \t \in \G}
    {\G \th \x \Rightarrow \t}

  \infer[Chk-Fun]
    {\G, \x : \ta \th \e \Leftarrow \tb}
    {\G \th \efun \x \e \Leftarrow \ta \to \tb}

  \infer[Syn-App]
    {\G \th \ea \Rightarrow \ta \to \tb \\ \G \th \eb \Leftarrow \ta}
    {\G \th \eapp \ea \eb \Rightarrow \tb}

  \infer[Syn-Annot]
    {\G \th \e \Leftarrow \t}
    {\G \th \eannot \e {} \t \Rightarrow \t}

  \infer[Chk-Forall]
    {\G \th \e \Leftarrow \ts \\ \tv \notin \fvs \G}
    {\G \th \e \Leftarrow \tfor \tv \ts}

  \infer[Chk-Inst]
    {\G \th \e \Rightarrow \ts \\ \ts \leq \t}
    {\G \th \e \Leftarrow \t}

  \infer[Chk-Poly]
    {\G \th \e \Leftarrow \ts}
    {\G \th \epoly \e \Leftarrow \tpoly \ts}

  \infer[Syn-Inst]
    {\G \th \e \Rightarrow \tpoly \ts}
    {\G \th \einst \e \Rightarrow \ts}
\end{mathparfig}

\parcomment{Known type information}

Crucially, the assignment of modes determines how type information flows
through the program. In particular, types become available either as inputs to
checking judgments or as outputs of synthesizing judgments. We call such
types \emph{known}. 
Formally, a type $\t$ is \emph{known} when it is either:
\begin{enumerate*}
  \item part of an annotation,
  \item supplied as input to a checking judgment in the conclusion
  ($\G \th \e \Leftarrow \t$), or
  \item produced by a synthesizing premise ($\G \th \e \Rightarrow \t$)
\end{enumerate*}.

\parcomment{\Polytype rules}

Using this notion of known types, we can give a \emph{principal} type system
for \polytypes, shown in \cref{fig:bidir-poly}. This system eliminates two
artifacts needed in the \geninst-directional presentation: explicit annotations
on boxing and annotation variables. Since $\eannot \e {} \t$ already propagates
known information, $\epoly \e$ requires no attached annotation; and because
``known-ness'' now follows from inference modes rather than polymorphism,
annotation variables are unnecessary. 

\parcomment{Limitations}

While this approach yields a remarkably clean account of \polytypes, it suffers
from the fundamental limitation of bidirectional typing: there is generally no
optimal assignment of modes. For any choice of modes, some programs will
typecheck successfully, while others will fail unnecessarily. Yet, the typing
rules must irrevocably commit to a fixed set of modes, after which, principal
types exist, but only with respect to a specification that made non-principal
choices to begin with. For instance, \code{ex_6} would be ill-typed under the
rules of \cref{fig:bidir-poly}.

\paragraph{Limitations of directional type inference}

Bidirectional typechecking is lightweight, practical, and well-suited for
complex language features such as higher-rank polymorphism, dependent types, or
subtyping. It supports the propagation of type information with minimal
annotations. Its main downside lies in the need to fix an often arbitrary flow
of type information---as in the case of function applications discussed above.

On the other hand, \emph{\geninst-directional} type inference appears better
suited for \ML: %
\begin{enumerate*}
    \item thanks to its use of polymorphism\penalty 50---the essence of \ML; and
    \item its ability to be retrofitted easily onto existing typecheckers.
\end{enumerate*}
But it remains surprisingly weak in some cases: it does not even allow the
propagation of user-provided type annotations from a function to its argument!
This weakness is sometimes counter-intuitive to the user. For example, the
following would be rejected as ambiguous using \geninst-directional type
inference alone:
\begin{program}[input,angles,checkpolyml]
let ex_7 = °\ocamlflags 00°
  let g (f : [ 'a. 'a -> 'a ] -> int) = f pid in
  g (fun p -> <p> 42)
\end{program}
Here, \code{p} is $\lambda$-bound and therefore monomorphic. Without further
propagation, the term \ocaml[angles]{<p>} would be ambiguous, as no polymorphic (and
thus \emph{known}) type can be ascribed to \code{p}. \OCaml resolves this by
supplementing \geninst-directional inference with a form of bidirectional
propagation: the expected type of \code{g}'s parameter (\code{[ 'a. 'a -> 'a ] -> int}) is
bidirectionally propagated to the application of~\code{g}, assigning
\code{p} to have the \emph{known} type \code{[ 'a. 'a -> 'a ]} and thereby
disambiguating \ocaml[angles]{<p>}.

\subsection{Omnidirectional type inference}
\label{sec/overview/omni}

%
%
%

\parcomment{Idea: inference has a dynamic order}

Omnidirectional inference infers typing constraints in any order. Constraints
advance \emph{dynamically}; those that require \emph{known} type information
suspend, and resume when other constraints supply it. This stands in contrast
to the fixed \emph{static} order of bidirectional and \geninst-directional
inference.

\parcomment{Example: dynamic order types more programs}
Consider again \code{ex_6_2} and \code{ex_6_3} from
\cref{sec/overview/directional}: 
\begin{program}[input,angles]
let ex_6_2 = app (fun p -> <p>) pid     °\ocamlflags 10°
let ex_6_3 = rev_app pid (fun p -> <p>) °\ocamlflags 00°
\end{program}
Under \geninst-directionality, both terms are ill-typed; under \OCaml's
bidirectional approach,\footnote {In this example, we use the option
\code{-no-principal} to enable bidirectional propagation in 
addition to \geninst-directionality} only
\code{ex_6_2} is rejected. Yet both terms have a principal type---it is
merely a question of propagating type information in the right
order. Omnidirectional type inference typechecks both: since it allows the
typing of either side to proceed first, suspending the other until the
relevant type information is \emph{known}.

\paragraph{To be or not to be known}
\label{sec/overview/omni/unicity}

\parcomment{Idea: known = contextual uniqueness}

That is the question, indeed. To specify omnidirectionality declaratively, we
must say when a type is considered \emph{known} without relying on any fixed
directional order. Our key idea is that a type is \emph{known} when it is the
\emph{unique} type that can be inferred within some surrounding term context
$\E$.

\parcomment{Example: ex_62}

Take \code{ex_6_2} as an example, using the following term context where
$\hole$ denotes the hole:
\begin{program}[input, mathescape=true]
let pid = [ fun x -> x : 'a. 'a -> 'a ]
let ex_6_2 = app (fun p -> $\hole$) pid
\end{program}
Here, \code{p} has a uniquely inferrable type $\tpoly {\tfor \tv
{\tv \to \tv}}$, no other type can be inferred for \code{p}. As a result, we
can consider \code{p}'s type \emph{known}.

\parcomment{Elaboration}

\begin{wraphbox}{}{}
\begin{mathpar}[inline]
  \infer[Use-I]
    {\eshape \E \e {\tpoly \ts} \\ \G \th \E\where{\exinst \e {} \ts} : \t}
    {\G \th \E\where{\einst \e} : \t}
\end{mathpar}
\end{wraphbox}
Once the type is known, the fragile implicit term can be elaborated to a
robust explicit counterpart. For example, the unboxing \ocaml[angles]{<p>} can
be elaborated into the explicitly annotated form \ocaml[angles]{<p : 'a. 'a -> 'a>}.
Consequently, to typecheck $\einst \e$ under the context $\E$, it suffices to
assert that $\e$ has the known polytype $\tpoly \ts$ and elaborate
$\einst \e$ into an annotated unboxing $\exinst \e {} \ts$, as captured by
\Rule{Use-I}.
The predicate $\eshape \E \e {\tpoly \ts}$---our \emph{unicity
condition}---formalizes precisely what it means for a type to be \emph{known}.
We defer its technical definition, which is rather subtle, to \cref{sec/oml/typing/I}.

\paragraph{Suspension in action}
\label{sec/overview/omni/match}

\parcomment{Suspended constraints}

Suspension is the mechanism that allows inference to proceed in any order,
in spite of constructs that require \emph{known} type information.  In our
framework, we realize this through our novel primitive: \emph{suspended
match constraints}.

\parcomment{Syntax + informal semantics}

A match constraint $(\cmatch \t {{\overline{\cbranch \cpat \c}}})$ pairs a
(typically unknown) matchee type $\t$ with a finite series of shape-pattern
branches $\overline{\cbranch \cpat \c}$. Such constraints remain
\emph{suspended} until the \textit{shape} of~$\t$ (\ie its top-level
constructor) is known. Then, they are \emph{discharged}: a unique branch is
selected and its associated constraint has to be solved. A match constraint
that is never discharged is considered unsatisfiable.

\parcomment{A note on shape patterns}

For now, it suffices to think of shapes as parts of types (\eg the record type constructor 
$\T$ in ($\trcd \T \tys$)%
\footnote{
  For readability in shape patterns, type constructors are written in 
  prefix form and record type constructors are prefixed with
  \textsf{rcd}. This prefix is omitted in \OCaml code,  
  where type constructors appear in postfix position.
})
while shape patterns $\cpat$ act as type `destructors', binding parts of the type
(\eg the constructor name $\T$) to meta-variables (\eg the pattern variable
$\ct$) used in $\cs$. This will be made precise in
\cref{sec/constraints/shapes}.

\parcomment{Why our solution makes things easier?}

We now illustrate the role of suspended constraints on our running
\emph{fragile} features: static overloading of records (and variants) and
semi-explicit first-class polymorphism.
Each feature translates the typability of the term into constraints, formalized
using a constraint generation function of the form
$\cinfer \e \t$, which,
given a term $\e$ and expected type $\t$, produces a constraint $\c$ which is
satisfiable if and only if $\e$ has the type $\t$.
As we will see, once we adopt the suspended constraint machinery developed in
this paper, much of the complexity of these typing fragile constructs
vanishes---suspended constraints do most of the heavy lifting.

\parcomment{Records}

For an ambiguous record projection $\efield \e \elab$,
we generate the typing constraint:
\begin{mathpar}
\cinfer {\efield \e \elab} \t \wide\eqdef
  \cexists \tv \cinfer \e \tv
  \cand
  \cmatch \tv
      {{\cbranch {\cpatrcd \ct}
  {{\labfrom \elab \ct \leq (\tlab \tv \t)}}}
      }
\end{mathpar}
This constraint introduces the unification variable $\tv$, unifying it
with the type of $\e$ (via $\cinfer \e \tv$), and suspends resolution
of the return type $\t$ until the type $\tv$ of $\e$ becomes
\emph{known} to be some non-variable type $\tz$. The branch then
matches $\tz$ against the record type pattern $(\cpatrcd \ct)$. If $\tz$
is a record type $(\trcd {\T} \tyas)$,
then the pattern binds the record name variable $\ct$ to the record name $\T$,
otherwise the whole constraint fails. When $\ct$ is bound to $\T$, the
right-hand-side constraint becomes
$\labfrom \elab \T \leq (\tlab \tv \t)$, which requires that the type
of the projection of the label $\elab$ at type $\T$ in the global
record-declaration environment can be instantiated into
$(\tlab \tv \t)$.

\parcomment{\Polytypes}

When typechecking the polytype unboxing operator $\einst \e$, if $\e$ is
already known to have the type $\tpoly \ts$, then we can simply
instantiate $\ts$.  However, if the type of $\e$ is not yet known---\ie  it is a
(possibly constrained) type variable $\tv$---then we must defer until more
information is available. We capture this behavior with a suspended match
constraint:
\begin{mathpar}
\cinfer {\einst \e} \t \Wide\eqdef
    \cexists \tv \cinfer \e \tv
\cand
    \cmatch  \tv \cbranch {\tpoly \cscm} \cscm \leq \t
\end{mathpar}
The match remains suspended until $\tv$ resolves to some type $\tz$. If, upon
resolution, $\tz$ is $\tpoly \ts$, the pattern $\tpoly \cscm$ matches
successfully, binding the polytype variable $\cscm$ to $\ts$ and performing the
instantiation $\cleq \ts \t$. Otherwise, the
pattern does not match and the constraint fails.

\paragraph{Scaling to \ML}

\parcomment{It's easy without polymorphism}

In the absence of (implicit) polymorphism, type inference is solely based on
unification constraints which can be solved in any order; omnidirectional
inference with suspended match constraints is then natural and easy to
implement.

\parcomment{\ML generalization has a fixed order}

The difficulty originates from \ML \emph{implicit} \texttt{let}-polymorphism for which
all known implementations follow the \geninst-order: first typing the binding,
generalizing it into a type scheme, and finally typing the body under the
extended typing environment that binds the generalized scheme. The
Hindley-Milner algorithm $\mathcal{J}$, its variants $\mathcal{W}$ or
$\mathcal{M}$~\citep* {Lee_Yi/algoM@toplas1998}, or more flexible
constraint-based type inference implementations~\citep*
{Remy/mleth,Remy/thesis, Odersky-Sulzmann-Wehr@tpos, \BBemlti} all
follow this strategy, to the best of our knowledge.

\parcomment{Example of why this problematic}

Consider the following program:
\begin{program}[input]
type 'a gpoint = { x : 'a; y : 'a }                     
let diag (n : 'a) : 'a gpoint = { x = n; y = n }
°\halfline°
let ex_8 gp =                                           °\ocamlflags 20°
  let getx p = p.x in getx (diag 42), (getx gp : float)
\end{program}
We introduce a new parameterized record type \code{'a gpoint}, whose
fields \code{x} and \code{y} are overloaded---recall that both \code{point} and
\code{gray_point} already define these fields---but here they have a
\emph{polymorphic} projection type $\tfor \tv {}$ \code{'a gpoint -> 'a}. The
function \code{diag} constructs a diagonal point, a point lying on the
diagonal $x = y$, and has the type $\tfor \tv {}$ \code{'a -> 'a gpoint}.

When typechecking \code{ex_8}, we cannot infer the type of \code{getx} first,
since the type of \code{p.x} is still not yet known. Instead, we must typecheck
the body first, where the call \code{getx (diag 42)} reveals that \code{p} has
type \code{'a gpoint}.
Failing to do so would make inference incomplete, since the program is
clearly well-typed. Nor can we treat the let-binding as monomorphic, since
both calls to \code{getx} use different instantiations of \code{'a}: indeed,
$\tv$ is $\tint$ in \code{getx (diag 42)} while $\tv$ is $\tfloat$ in
\code{(getx gp : float)}.

\parcomment{Our solution}

We solve this by introducing \emph{incremental instantiation}, \ie the ability
to instantiate type schemes that are not yet fully determined (so-called
\emph{partial type schemes}) and consequently revisit their instances when they
are being refined, \emph{incrementally}. This allows inferring parts of a
let-body to disambiguate its definition, without duplicating
constraint-solving work.


\paragraph {The forest, not the trees.}

Suspended match constraints offer a \emph{general framework} to typing the
features we have considered so far, and more. Some of those features can be
handled using more specialized approaches: for example, \SML employs row
variables to support overloaded fields for structural records, while \GHC uses
qualified types to allow overloading of nominal record fields with a
simple-enough type.


In contrast to these specialized mechanisms, suspended constraints are more
expressive. They can handle cases where the typing rule to use on the subterms
depends on the outcome of disambiguation, such as overloaded polymorphic record
fields or overloaded GADT constructors in patterns.
Moreover, these simpler approaches typically lack a declarative semantics that
justify rejecting programs with unresolved disambiguation choices.

\subsection{Limitations}
\label{sec/overview/limitations}


\parcomment{Some programs even though they have a principal type are ill-typed}

Omnidirectional inference is powerful, but not omnipotent:
\begin{enumerate*}

  \item some programs are still rejected as ambiguous, even though a unique
    elaboration could in principle be chosen---a deliberate ``Goldilocks''
    compromise;

  \item our current formalization cannot disambiguate based on the \emph{return
    type} of overloaded projections;

  \item it presently omits a formalization of \emph{default rules}
    (\cref{sec/default-rules}); and

  \item our framework entails a higher conceptual and implementation complexity
    than static directional approaches (\eg bidirectional typechecking).

\end{enumerate*}

\paragraph{Not too hot, not too cold} Some expressions
must be rejected even though their elaboration would be unambiguous. Consider:
\begin{program}[input,checkocaml]
type cie_color = { x : int; y : int; z : int }
type cie_point = { x : int; y : int; color : cie_color }
let ex_1_0 r = r.color.x °\ocamlflags 11°
\end{program}
Neither field projections in \code{ex_1_0} can individually be
disambiguated. However, if one were allowed to combine the constraints, they
would jointly determine that \code{r} must have type \code{cie_point}: in
\code{gray_point}, the field \code{color} has type \code{int}, and hence
cannot itself be projected, leaving a unique consistent elaboration:
\code{r.cie_point.color.cie_color.x}.%
\footnote{
The syntax $\exfield \e \T \elab$ qualifies the label $\elab$ to unambiguously
belong to the type $\T$ in the projection.
}

Our framework nonetheless rejects \code{ex_1_0}. This is intentional:
overloaded projections must be elaborated \emph{sequentially}, each in isolation,
rather than jointly with others. We view this restriction as a
``Goldilocks'' compromise: it rules out examples like the above, but avoids
the intractability of full general overloading which is NP-hard, even
without let-polymorphism, as shown by a reduction from 3-SAT~\citep*
{Chargueraud-Bodin-Dunfield-Riboulet/jfla2025}.

\paragraph{No returns accepted}
At present, our system also resolves overloaded projections without taking their
\emph{return type} into account.
\begin{program}[input,checkocaml]
let ex_1_1 r = (r.color : cie_color) °\ocamlflags 11°
\end{program}
Under the rules defined in this work, programs such as \code{ex_1_1} are
rejected. This limitation is shared with existing languages such as \OCaml,
\Haskell, and \SML, none of which use the return type of a projection to guide
disambiguation. Our framework already goes beyond these systems, but we believe
that omnidirectionality provides the right foundation to incorporate return
type disambiguation. We leave this refinement to future work.

\paragraph{No defaults, by default}

\parcomment{No default rules formalism}

We do not yet provide a formal account of \emph{default rules} mentioned in
\cref{sec/default-rules} and \cref{sec/default-rules/poly}.  However, our
prototype implementation, discussed in \cref{sec:implementation}, does support
(optionally) attaching a default strategy to each suspended constraint.
\Draft{More details can be found in Appendix \cref {app/default-rules}.}{} In
practice, defaulting proves useful and appears essential for richer features
such as implicit first-class polymorphism.
%
Developing a formal treatment of defaulting within our framework is an
important direction for future work.

\paragraph{On complexity budgets}

Omnidirectional type inference is conceptually straightforward but technically
challenging. It follows a simple key idea: solving constraints in any order,
suspending when known type information is required. However, realizing this
idea precisely and efficiently comes at a higher complexity cost than
bidirectional or $\Geninst$-directional type inference.

\begin{enumerate}
\item[(\cref{sec/oml})]
  Giving a declarative characterization of \emph{known} type information
  without statically relying on directionality is hard. Our contextual rules
  nicely solve this problem, but their meta-theory is unsurprisingly more
  complex than local rules.

\item[(\cref{sec:implementation})]

  Implementing efficient incremental instantiation is harder than \ML
    instantiation, as refinements of partial type schemes must trigger
    re-instantiations while avoiding solving the same constraints 
    repeatedly across successive instantiations.

\end{enumerate}
We hope to pay off some of this complexity in future work; in particular, we
believe omnidirectionality is the missing piece to unlock modular
implicits~\citep*{White-Bour-Yallop/Modular_Implicits/ml2014}---an approach to
generalized overloading and a long-anticipated feature within the \OCaml
community.

\section{The \OML calculus}
\label{sec/oml}

\begin{mathparfig}[t]
  {fig/oml/syntax-xtyping}
  {Syntax and explicit, robust typing rules of \OML.}
  \begin{bnfgrammar}
  \entryset[Type variables]{\tva, \tvb, \tvc}{\TyVars}{}
  \\
  \entry[Types]{\t}{
      \tv \and
      \tunit \and
      \ta \to \tb \and
      \trcd \T \tys \and
      \tpoly \ts
  }
  \\
  \entry[Type schemes]{\ts}{
      \t \and
      \all \tv \ts
  }
  \\
  \entry[Record name]{\T}{}{}
  \\
  \entryset[Ground types]{\gt}{\Ground}
  \\[1ex]
  \entry[Terms]{\e}{
    x \and
    () \and
    \efun x e \and
    \eapp \ea \eb \and
    \elet x \ea \eb \and
    \eannot \e \tvs \t \andcr
     \epoly e \and
     \expoly e \tvs \ts \and
     \einst e \and
     \exinst e \tvs \ts \andcr
    \erecord {\overline{\elab = \e} } \and
    \efield e \elab \and
    \exrecord \T {\overline{\elab = \e}} \and
    \exfield e \T \elab
     }
  \\[1ex]
  \entry[Contexts]{\G}{
     \eset \and
     \G, x : \ts
  }
  \\
  \entry[Label contexts]{\labenv}{
    \eset \and
    \labenv, \labfrom \elab \T : \all \tvs \trcd \T \tvs \to \t
  }
  \\[1ex]
  \entry[Shapes] {\Sh} {\any \tvcs \t \qquad\;\;\ (\Sh \in \Shapes)}
  \\
  \entryset[Canonical principal shapes] {\sh} {\CanonicalShapes \qquad (\CanonicalShapes \subset \Shapes)}
  \end{bnfgrammar}
  \par
  \inferrule[Var]
    {x : \sigma \in \G}
    {\G \th x : \sigma}

  \inferrule[Fun]
    {\G, x : \ta \th e : \tb }
    {\G \th \efun x e : \ta \to \tb}

  \inferrule[App]
    {\G \th \ea : \ta \to \tb \\
     \G \th \eb : \ta}
    {\G \th \eapp \ea \eb : \tb}

  \inferrule[Unit]
    { }
    {\G \th () : 1}

  \inferrule[Gen]
    {\G \th e : \sigma \\ \tv \disjoint \fvs \G}
    {\G \th e : \tfor \tv \sigma}

  \inferrule[Inst]
    {\G \th e : \tfor \tv \ts}
    {\G \th e : \ts \where{\tv \is \t}}

  \inferrule[Let]
    {\G \th \ea : \sigma \\
     \G, x : \sigma \th \eb : \t}
    {\G \th \elet x \ea \eb : \t}

  \inferrule[Annot]
    {\G \th e : \t\where {\tvs \is \tys}}
    {\G \th (e : \exi \tvs \t) : \t\where {\tvs \is \tys}}

  \inferrule [Poly-X]
    {\G \th \e : \ts\where {\tvs \is \tys}}
    {\G \th \expoly \e \tvs \ts : \tpoly {\ts \where {\tvs \is \tys}}}

  \inferrule [Use-X]
    {\G \th \e : \tpoly \ts \where {\tvs \is \tys}}
    {\G \th \exinst e \tvs \ts : \ts \where {\tvs \is \tys}}

  \inferrule[Rcd-X]
    {\dom {(\Labenv[\T])} = \elabs \\
     \parens{{\labfrom \elabi \T} \leq \t \to \ti}\iton \\
     \parens{\G \th \ei : \ti}\iton }
    {\G \th \exrecord \T {\elaba = \ea; \ldots; \elab_n = \en} : \t}

  \inferrule[Rcd-Closed]
    {\labsuni \elabs \T \\
     \G \th \exrecord \T {\overline{\elab = \e}} : \t }
    {\G \th \erecord {\overline{\elab = \e}} : \t}

  \inferrule[Rcd-Proj-X]
    {{\labfrom \elab \T} \leq \tya \to \tyb \\ \G \th \e : \tya }
    {\G \th \exfield \e \T \elab : \tyb}

  \inferrule[Rcd-Proj-Closed]
    {\labuni \elab \T \\
     \G \th \exfield \e \T \elab : \t }
    {\G \th \efield \e \elab : \t}

  \inferrule[Lab-Inst]
    {\labenv(\labfrom \elab\T) = \tfor \tvs {\trcd \T \tvs} \to \t}
    {{\labfrom \elab \T} \leq (\trcd \T \tys \to \t\where{\tvs \is \tys})}
\end{mathparfig}

\parcomment {Running example: \polytypes}

\parcomment {We need a spec, but this itself is hard}

To prove correctness of type inference, we must define a language and
its type system. Identifying an appropriate declarative type system to use as a
specification is itself a challenging problem. In particular, natural
specifications for fragile features often fail to preserve principality.

\parcomment {Why do natural approaches not guarantee principal types.}

\begin{wraphbox}{}{}
\begin{mathpar}[inline]
  \inferrule[Rcd-Proj-I-Nat]
    {\labfrom \elab \T \leq \ta \to \tb \\\\ \G \th \e : \ta}
    {\G \th \efield \e \elab : \tb}
\end{mathpar}
\end{wraphbox}

Consider records, for instance. We can ask the user to provide a type
annotation by using an explicit record projection $\exfield \e \T \elab$, which
has a simple typing rule (\Rule{Rcd-Proj-X} in \cref{fig/oml/syntax-xtyping}).
By contrast, the natural typing rule for the overloaded projections $\efield \e
\elab$ breaks principality (\Rule{Rcd-Proj-I-Nat}). For example, term $\efun r
\efield r {\ttlab x}$ admits two incompatible types, \code{point -> int} and
\code{gray_point -> int}, as explained in \cref{sec/overview}.

\parcomment{How we restore principal types (briefly)}

To restore principality for overloaded projections, we require that the type of
$\e$ be \emph{known}---that is, determined to be a specific record type $\trcd
\T \tys$, rather than merely \emph{guessed}. As discussed earlier
(\cref{sec/overview/omni/unicity}), this requirement is expressed via a
\emph{unicity condition}, giving rise to the contextual rule \Rule{Rcd-Proj-I}
(\cref{fig/oml/typing/I}), which elaborates the fragile overloaded projection
$\efield \e \elab$ into its robust explicit counterpart $\exfield \e \T \elab$.

\subsection{Syntax}

\OML (\cref{fig/oml/syntax-xtyping}) extends \ML with two fragile constructs:
\polytypes and nominal records.  Variants are not treated formally in \OML, but
behave analogously to records.

\paragraph{Notation for collections}

We write $\overline X$ for a (possibly empty) set of elements $\set {X_1,
\ldots, X_n}$ and a (possibly empty) sequence $X_1, \ldots, X_n$. The
interpretation of whether $\overline X$ is a set or a sequence is often
implicit. We write $\overline{X} \disjoint \overline{X'}$ as a shorthand for
when $\overline X \cap \overline {X'} = \emptyset$. We write $\overline X,
\overline {X'}$ as the union or concatenation (depending on the interpretation)
of $\overline X$ and $\overline X'$. We often write $X$ for the singleton set
(or sequence).

\paragraph{Types}

Monotypes (or just types) include, as usual, type variables $\tv$, the unit
type $\tunit$, arrow types, but also nominal record types $\trcd \T \tys$, and
\polytypes $\tpoly \ts$. We use a non-standard syntax $\trcd \T \tys$ for record
types, where $\T$ is a \emph{record name} and $\tys$ a list of type
parameters.%
\footnote{
  Using $\T$ as an argument of $(\trcd \T \tys)$ rather a
  head constructor $(\Fapp[\T] \tys)$ will make record
  shape patterns easier to understand.
}
Type schemes $\ts$ are of the form $\all \tvs \t$, they are equal up to the
reordering of binders and removal of useless variables. Standard notions such
as the set of free variables $\fvs \ts$ and capture-avoiding substitutions
$\ts\where{\tv \is \t}$ are defined in the usual way. We write $\TyVars$ for
the set of type variables, and use $\tvs \disjoint \ts$ as a short-hand for
$\tvs \disjoint \fvs \ts$. Finally, $\gt \in \Ground$ denotes a \emph{ground
type}---a type with no free variables.\footnote{$\gt$ can be pronounced
``Fraktur g'' or just ``ground g''.}

\paragraph{Terms}

Terms of \OML are variables~$\x$, the unit literal $\eunit$,
lambda-abstractions $\efun \x \e$, applications $\eapp \ea \eb$, annotations
$\eannot \e \tvs \t$, and let-bindings $\elet x \ea \eb$, extended with the
following expressions:
\begin{itemize}

\item
  For \polytypes, we introduce implicit and explicit boxing and unboxing
  forms: $\epoly \e$, $\expoly \e \tvs \ts$, and $\einst \e$, $\exinst \e
  \tvs \ts$ respectively.

\item
  Overloaded record labels include record literals $\erecord { \elaba = \ea;
  \ldots; \elab_n = \en }$ and field projections $\efield \e \elab$. Both
  constructs have explicit counterparts: $\exrecord \T {\elaba = \ea;
  \ldots; \elab_n = \en }$ and $\exfield \e \T \elab$, where the explicit
    record annotation $\labfrom {\_} \T$ indicates that the labels unambiguously
  belong to the record type $\T$.

\end{itemize}
All annotations in \OML are closed \ie their existentially quantified
variables $\tvs$ are exactly the free variables of the type $\t$ or type
scheme $\ts$.
We use $\e^i$ to range over fragile, implicit terms (\eg $\efield \e \elab$) and
$\e^x$ for their explicit counterparts (\eg $\exfield \e \T \elab$).

\parcomment {Continuation onto next sections}

Typing rules for explicit terms are mostly standard; nominal
records require a more intricate (yet largely folklore) treatment of
\emph{closed world} reasoning.
The crux of our work is the novel typing of the fragile constructs,
presented in~\cref {sec/oml/typing/I}.

\subsection{Typing rules for robust, explicit constructs}
\label{sec/oml/typing/X}

\parcomment {Simple typing rules explained}

As usual, the main typing judgment $\G \th \e : \ts$
(\cref{fig/oml/syntax-xtyping}) states that in context~$\G$, the expression~$\e$
has the type (scheme) $\ts$.
Rules \Rule{Var} through \Rule{Let} are standard. Annotations $\eannot \e \tvs
\t$ (\Rule{Annot}) ensures that the type of $\e$ is (an instance of) the type
$\t$. The type variables $\tvs$ are \emph{flexibly} (or existentially) bound in
$\t$, meaning they may be instantiated to some types $\tys$ so that the
resulting annotation matches the type of $\e$. For instance, the term $(\efun
\x \x + 1 : \exi \tv \tv \to \tv)$ is well-typed under \Rule{Annot} with the
substitution $\where {\tv \is \tint}$.

\paragraph {Explicit \polytypes}

\Rule{Poly-X} serves as the introduction rule: given the (closed) type
scheme~$\ts$, it forms a first-class \polytype $\tpoly \ts$, requiring the
expression
$\e$ to be at least as polymorphic as $\ts$. \Rule {Use-X} is the
corresponding
elimination rule, unpacking an expression of \polytype $\tpoly \ts$ into
one of polymorphic type $\ts$,
which may be freely instantiated (via \Rule{Inst}). Both rules also allow
polytype annotations to be partial, \ie $\ts$ may have free type variables
$\tvs$, which are existentially quantified to close the annotation, as in
\Rule{Annot}.

\paragraph{Explicit nominal records}
\label {app/typing/X/records}

\parcomment {Label types and contexts}

We assume a global label context $\labenv$ mapping labels to their projection
type, \ie type schemes of the form $\all \tvs \trcd \T \tvs \to \t$. A label
$\elab$ may belong to multiple record types, but is unique within each record
type $\T$. We write $\Labenv$ for the set of labels belonging to the record type $\T$:
${\Labenv} \eqdef \set {\elab : \labfrom \elab \T : \all \tvs \trcd \T \tvs \to \t \in \labenv}$.
We write $\labenv (\labfrom \elab \T)$ for the unique scheme
$\tfor \tvs {\trcd \T \tvs} \to \t$ associated with $\elab$ in $\T$ (if defined).

\parcomment{Typing rules}

Label instantiations are typed by an auxiliary judgment ${\labfrom \elab \T}
\leq \tya \to \tyb$ defined by \Rule{Lab-Inst} and
meaning that $\ta \to \tb$ is an instance of the projection scheme
$\labenv (\labfrom \ell \T)$. Explicit field projections
(\Rule{Rcd-Proj-X}) require that $\exfield \e \T \elab$ projects from a record
$\e$ of type $\tya$ to $\tyb$, provided ${\labfrom \elab \T}
\leq \tya \to \tyb$ holds.
Explicit records (\Rule {Rcd-X}) are typed similarly, checking that each field
has the appropriate type. In addition, the premise asserts that the fields
$\elabs$ appearing in the record expression exactly match the labels of $\T$
(\ie $\dom {(\Labenv)}$).

\parcomment {Closed world reasoning}

Following \OCaml, our explicit system also supports \emph{closed-world}
reasoning, which exploits the absence of ambiguity in the label context
$\labenv$ to infer record annotations.
In particular, in a record expression $\erecord {\elaba = \ea; \ldots; \elab_n
= \en}$, if the set of labels ${ \elaba, \ldots, \elab_n }$ uniquely identifies
a record type $\T$ in the context~$\labenv$, then the record has the
type of $\exrecord \T {\elaba = \ea; \ldots; \elab_n = \en}$ (via
\Rule{Rcd-Closed}).
Similarly, if the label $\elab$ is associated with exactly one record type $\T$
in $\labenv$, then the projection $\efield \e \elab$ has the type of $\exfield
\e \T \elab$ (by \Rule{Rcd-Proj-Closed}).

\parcomment {Closed sets}
These two forms of label uniqueness differ. A \emph{closed set} of labels
may uniquely identify a record type even if no individual label is unique.
Conversely, a unique label implies uniqueness of every closed set containing it.
For instance, recall \code{one} from \cref{sec/overview/overloading}:
\begin{program}[input]
type point  = { x : int; y : int }
type gray_point = { x : int; y : int; color : int }
let one = { x = 42; y = 1337 }           °\ocamlflags 00°
\end{program}
Here, the closed set $\set {\text{\code{x}}, \text{\code{y}}}$ in \code{one}
uniquely identifies \code{point}, even though the individual labels \code{x}
and \code{y} also appear in \code{gray_point}.

\parcomment {Formalization of closed-world reasoning}

We formalize this \emph{closed world uniqueness} using the predicates $\labuni \elab \T$ (a label
uniquely identifies $\T$) and $\labsuni \elabs \T$ (a closed label set uniquely
identifies $\T$):
\begin{mathpar}
  \begin{tabular}{C;C;L;C;L}
    \labuni \elab \T &\eqdef& \elab \in \dom {(\Labenv)} &\wedge& \forall \Tp, \; \parens {\elab \in \dom {(\Labenv[\Tp])} \implies {\T} = \Tp} \\
    \labsuni \elabs \T &\eqdef& \dom {(\Labenv)} = \elabs &\wedge& \forall \Tp, \;\parens {\dom {(\Labenv[\Tp])} = \elabs \implies {\T} = \Tp}
  \end{tabular}
\end{mathpar}
These predicates depend only on the global label environment $\labenv$:
they ignore field types and require no contextual type information.
The associated typing rules (\Rule{Rcd-Closed}, \Rule{Rcd-Proj-Closed}) are
therefore \emph{robust}, since disambiguation relies solely on globally known
label information rather than type-directed disambiguation.

\subsection{Shapes}
\label{sec/constraints/shapes}

\parcomment {Why shapes?}

We introduce \emph{shapes} as a generalization of type constructors.
They provide a uniform treatment of both
constructors and \polytypes, and are useful in defining polytype
unification (\Cref{sec:implementation}).

\parcomment {Definition of shape}

A shape $\Sh$ is a type with holes, written $\any \tvcs \t$, where $\tvcs$
denotes the set of type variables representing the holes.  By construction, we
require $\tvcs$ to be \emph{exactly} the free variables of $\t$.  Hence, shapes
are closed and do not contain useless binders.  We consider shapes equal up to
$\alpha$-conversion.  When $\t$ is a ground type, we omit the binder and
simply write $\t$ for the shape.
We use $\bot$ to denote the shape $\any \tvc \tvc$, which we call the
\emph{trivial} shape. Let $\Shapes$ denote the set of all shapes and
$\Shapesz \subset \Shapes$ the set of non-trivial shapes.

\parcomment {Instantiation order of shapes}

\begin{wraphbox}{}{}
\begin{mathpar}[inline]
  \infer[Inst-Shape]
    { }
    {\any {\tvcs_1} \t \preceq
     \any {\tvcs_2} \t \where {\tvcs_1 \is \tys_1}}
\end{mathpar}
\end{wraphbox}
Shapes are equipped with the standard instantiation ordering,
defined by \Rule{Inst-Shape}.
When writing $\Sh \preceq \Shp$, we say that $\Sh$ is more general
than~$\Shp$. When $\Sh$ and $\Shp$ are more general than one another, they
are actually equal. The trivial shape $\bot$ is the most general shape.
If $\Sh$ is $\any \tvcs \t$, the shape application $\shapp[\Sh] \tys$ is
defined as $\t \where {\tvcs \is \tys}$. We say that $\Sh$ is a shape of
$\t$ when there exists $\tys$ such that $\t = \shapp[\Sh] \tys$; in this
case, we call the pair $(\Sh, \tys)$ a decomposition of $\t$.

\begin{definition}
A non-trivial shape $\Sh \in \Shapesz$ is the principal shape of the type
$\t$ iff:
\begin{enumerate}
  \item
    $\exists \typs,\ \t = \shapp[\Sh] \typs$
  \item
    $\forall \Shp \in \Shapesz, \forall \typs,\ \t = \shapp[\Shp] \typs
    \implies \Sh \preceq \Shp$
\end{enumerate}
\end{definition}

\begin{restatable}[Principal shapes]{theorem}{principalShapesBIS}
  \label{thm:principal-shapes}
Any non-variable type $\t$ has a principal shape $\Sh$.
\end{restatable}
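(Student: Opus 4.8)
The plan is to establish existence by a case analysis on the outermost constructor of the non-variable type $\t$: in each case I would exhibit the \emph{shallowest} non-trivial shape, the one that reveals only the head of $\t$ and leaves everything below it maximally abstract, and then check the two clauses of the definition of principal shape. Before the case analysis I would prove one supporting observation, call it top-constructor preservation: if $\Shp = \any{\tvcs'}{s}$ is any non-trivial shape of $\t$, then $s$ cannot be a type variable (if it were, then $\tvcs' = \fvs{s}$ is a singleton and $\Shp = \any\tvc\tvc = \bot$, contradicting $\Shp \in \Shapesz$), and since the shape application $\shapp[\Shp]{\tys'} = s\where{\tvcs' \is \tys'}$ only substitutes at the leaves, $\t$ and $s$ must share the same outermost constructor. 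Thus every non-trivial shape of $\t$ has the same head as $\t$.

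The constructor cases then follow directly. For $\t = \tunit$ the candidate is the ground shape $\tunit$, which is the only non-trivial shape of $\tunit$. For $\t = \ta \to \tb$ I would take $\Sh = \any{\tvc_1,\tvc_2}{\tvc_1 \to \tvc_2}$: clause (1) holds since $\shapp[\Sh]{\ta,\tb} = \t$, and for clause (2) the lemma forces any non-trivial shape of $\t$ to have the form $\any{\tvcs'}{s_1 \to s_2}$, so a single application of \Rule{Inst-Shape} with the instantiation $\where{\tvc_1 \is s_1, \tvc_2 \is s_2}$ yields $\Sh \preceq \Shp$. The record case $\t = \trcd \T \tys$ is identical, taking $\Sh = \any{\tvcs}{\trcd \T \tvcs}$ with one fresh hole per parameter; the lemma pins down the record name $\T$ and the arity, so the same instantiation argument applies.

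The delicate case, which I expect to be the main obstacle, is the polytype $\t = \tpoly\ts$ with $\ts = \all\tvs{\t_0}$, because its argument is a \emph{scheme} whose binders $\tvs$ the holes may not cross. Here the shallowest shape is \emph{not} obtained by abstracting each immediate subterm; instead I would abstract each maximal subterm of $\t_0$ that is free of the bound variables $\tvs$, replacing it by a fresh hole to obtain a body $\t_0'$, and set $\Sh = \any{\tvcs}{\tpoly{\all\tvs{\t_0'}}}$. Clause (1) holds because the abstracted subterms are $\tvs$-free, so reinserting them is capture-free and recovers $\t_0$, hence $\ts$. For clause (2) the crux is to show that in any non-trivial shape $\any{\tvcs'}{\tpoly{\all\tvs{\rho_0}}}$ of $\t$---which is necessarily a polytype whose scheme has the same binder prefix up to $\alpha$-renaming, since substitution preserves quantifiers---the filled types $\tys'$ must themselves be $\tvs$-free, as otherwise capture-avoidance would prevent the bound occurrences of $\tvs$ in $\t_0$ from being reconstructed. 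Granting this, every hole of such a shape sits at a $\tvs$-free position, so $\rho_0$ refines $\t_0'$ below the binder, and one \Rule{Inst-Shape} step instantiating the holes of $\Sh$ with the matching subterms of $\rho_0$ gives $\Sh \preceq \Shp$. Finally, in every case the constructed shape is non-trivial, since its head is a genuine constructor and so it differs from $\bot$, which completes the argument.
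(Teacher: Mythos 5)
Your proof is correct and takes essentially the same route as the paper's: one-hole-per-immediate-argument shapes for ordinary constructors (clearly principal), and, for the polytype case, holes placed at the maximal subterms free of the bound variables---the paper phrases these as the shortest paths that cannot be extended to reach a bound variable---with minimality established by showing that any competing non-trivial shape must be a polytype shape whose holes sit at bound-variable-free positions, hence refines the candidate by one \Rule{Inst-Shape} step. Your explicit capture-avoidance argument for why the filled types must avoid $\tvs$ is the same fact the paper handles implicitly through its $\alpha$-conversion conventions.
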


\parcomment {Define canonical shape}

A principal shape $\any \tvcs \t$ is \emph{canonical} if its free variables
appear in the sequence $\tvcs$ in the order in which they occur in $\t$. Canonical
principal shapes are written $\sh$, and we write $\CanonicalShapes$ for the set of
all such shapes.
Each non-variable type $\t$ has a unique canonical principal shape,
written $\shape \t$. For example, $\shape {\trcd \T \tys}$ is
$\any \tvcs \trcd \T \tvcs$.

\parcomment {\Polytypes are constructors}

Shapes are particularly interesting in the context of \polytypes,
which can be decomposed into shapes and thus treated analogously to
type constructors.
\begingroup
\newcommand {\tsh}[1]%
  {\def \tsi {\all \tvb {{\parens{\tva \to #1}}} \tprod \tvb}
  \tpoly {\all \tva {\parens {\tpoly \tsi} \to \tva} \to \tva}}%
For instance, consider the \polytype $\tsh {\tint \tlist}$. Since its 
principal shape $\sh$ is $\any \tvc {\tsh \tvc}$, the original \polytype
can therefore be represented as the shape application $\shapp (\tint \tlist)$.
\endgroup

\subsection {Typing rules for fragile, implicit constructs}
\label{sec/oml/typing/I}

\begin{mathparfig}[htpb!]
  {fig/oml/typing/I}
  {Typing rules for fragile, implicitly typed extensions.}
  \begin{bnfgrammar}
    \entry[Terms]{\e}{
      \dots
      \and \emagic \es
    }\\
    \entry[Term contexts]{\E}{
      \hole
      \and \eapp \E \e
      \and \eapp \e \E
      \and \dots
    }
  \end{bnfgrammar}
  \par
  \inferrule[Hole]
    {\parens{\G \th \ei : \ti}\iton}
    {\G \th \emagic \es : \tp}

  \inferrule [Use-I]
    {\eshape \E  \e {\any \tvcs \tpoly \ts} \\\\
     \G \th \E\where{\exinst \e \tvcs \ts} : \t}
    {\G \th \E\where{\einst \e} : \t}

  \inferrule [Poly-I]
    {\Eshape \E \e {{\any \tvcs \tpoly \ts}} \\\\
     \G \th \E \where{\expoly \e \tvcs \ts} : \t}
    {\G \th \E \where{\epoly \e} : \t}
\\
  \inferrule[Rcd-I]
    {\Eshape \E \es {\any \tvcs \trcd \T \tvcs} \\
     \G \th \E\where{\exrecord \T {\overline{\elab = \e}}} : \t }
    {\G \th \E\where{\erecord {\overline{\elab = \e}}} : \t}

  \inferrule[Rcd-Proj-I]
    {\eshape \E \e {\any \tvcs \trcd \T \tvcs} \\
     \G \th \E\where{\exfield \e \T \elab} : \t}
    {\G \th \E\where{\efield \e \elab} : \t}

\eshape[\es] \E \e \sh \quad\eqdef\quad
  \forall \G, \t, \gt, \uad
  \G \th \eerase {\E \where {\emagic {\es, \eannot \e {} \gt }}} : \t
      \wide\implies \shape \gt = \sh

\Eshape \E \es \sh \quad\eqdef\quad
  \forall \G, \t, \gt, \uad
      \G \th \eerase {\E\where{\eannotmagic \es {} \gt}} : \t
      \wide\implies \shape \gt = \sh
\end{mathparfig}

\parcomment{Typing rules are contextual. Re-iterate the idea of unicity}

We now turn to the typing of fragile implicit constructs
(\cref{fig/oml/typing/I}). To prevent the kind of uncontrolled guessing
permitted by \emph{natural} typing rules (\eg \Rule{Rcd-Proj-I-Nat}), we adopt
\emph{contextual} typing rules: an implicit term $\e^i$ is typed \emph{within}
a surrounding one-hole term context $\E$ (\cref{fig/oml/typing/I}). The context
$\E$ is used to ensure that the relevant type information (\eg a shape $\sh$)
is \emph{known}---that is, it is the unique shape~$\sh$ that can be inferred
from the context, rather than an arbitrary one that could be \emph{guessed}.
Such rules are therefore inherently non-compositional.

\paragraph{Unicity}

\parcomment{What are unicity conditions? e.g. notation}

The key question for our contextual typing rules is whether the shape $\sh$ of
a term's type is \emph{uniquely determined} by its surrounding context $\E$. We
capture this with our \emph{unicity conditions} $\eshape[\es] \E \e \sh$ and
$\Eshape \E \es \sh$, which state that all valid typings of the (erased)
context $\E$ assign the same canonical shape $\sh$ to the subterm $\e$ and the
context's hole $\hole$, respectively.

In other words, $\eshape[\es] \E \e \sh$ holds when the type of the subterm
$\e$ has a unique shape $\sh$ fixed by its context $\E$ and sibling terms
$\es$, while $\Eshape \E \es \sh$ holds when the expected type of the hole in
$\E$ (with subterms $\es$) has the unique shape $\sh$.
\parcomment{The formal definition / auxiliary notions}
\begin{mathpar}
\def \Eqdef {&\eqdef&}
{\begin{tabular}{RCL}
\eshape[\es] \E \e \sh \Eqdef
  \forall \G, \t, \gt, \uad
  \G \th \eerase {\E \where {\emagic {\es, \eannot \e {} \gt }}} : \t
      \wide\implies \shape \gt = \sh
\\[1ex]
\Eshape \E \es \sh \Eqdef
  \forall \G, \t, \gt, \uad
      \G \th \eerase {\E\where{\eannotmagic \es {} \gt}} : \t
      \wide\implies \shape \gt = \sh
\end{tabular}}
\end{mathpar}
To make sense of these definitions, we rely on a few auxiliary notions:
\begin{enumerate}
  \item
    We write $(\emagic \es)$ for a \emph{typed hole} carrying subterms $\es$.
    The subterms $\es$ are required to be well-typed in the current
    environment (\Rule{Hole}), but their types are independent of the type of
    the hole: the hole itself may be assigned an arbitrary type.

  \item
    We introduce an \emph{erasure} function $\eerase \e$ that erases all
    not-yet-elaborated implicit constructs $\e^i$ in $\e$ with a typed hole
    around their subterms. For example, $\eerase {\efield \e \elab}$ is
    $\parens{\emagic {\cerase \e}}$. The full definition is given in
    \cref{fig/oml/erasure}.
\end{enumerate}

\begin{mathparfig}
  {fig/oml/erasure}
  {Selected cases of the erasure of $\e$. All other cases are homomorphic.}
\newcommand{\Erule}[2]{\eerase {#1} &\eqdef& {#2}}
  \begin{tabular}{RCL}
  \Erule{\epoly \e}{\emagic {\eerase \e}}\\
    \Erule{\expoly \e \tvs \ts}{\expoly {\eerase \e} \tvs \ts}\\
  \Erule{\einst \e}{\emagic {\eerase \e}}\\
  \Erule{\exinst \e \tvs \ts}{\exinst {\eerase \e} \tvs \ts}\\
    \Erule{\erecord {\elaba = \ea; \ldots; \elab_n = \en}}{\begin{cases}
      \erecord {\elaba = \eerase \ea; \ldots; \elab_n = \eerase \en} &\text{if } \labsuni \elabs \T \\
      \emagic {\eerase \ea, \ldots, \eerase \en} & \text{otherwise}
    \end{cases}}\\
  \Erule{\exrecord \T {\elaba = \ea; \ldots; \elab_n = \en}}{\exrecord \T {\elaba = \eerase \ea; \ldots; \elab_n = \eerase \en}}\\
    \Erule{\efield \e \elab}{\begin{cases}
      \efield {\eerase \e} \elab & \text{if } \labuni \elab \T \\
      \emagic {\eerase \e} & \text{otherwise}
    \end{cases}}\\
  \Erule{\exfield \e \T \elab}{\exfield {\eerase \e} \T \elab}\\
  \Erule{\emagic \es}{\emagic {\parens {\eerase \ei} \iton}}\\
\end{tabular}
\end{mathparfig}

\parcomment{Typed holes must preserve typability of subterms}

Typed holes ensure that subterms---such as type annotations---remain
present even when the implicit construct containing them is erased. This is
because they may introduce constraints that can still contribute to unicity.
For instance, $\efun r \efield {\eannot r {} {\ttlab {point}}} {\ttlab x}$
would be ill-typed if we erased $\eannot r {} {\ttlab {point}}$.

\parcomment{Erasure induces a causal order}

The erasure of $\eerase{\E\where{\emagic{\eannot \e {} \gt, \es}}}$ and
$\eerase{\E\where{\eannotmagic \es {} \gt}}$ replaces all
not-yet-elaborated implicit constructs with typed holes, ensuring that the
unicity of $\sh$ is determined \emph{only} by implicit terms that have been
elaborated thus far. This induces a causal, or partial, order among implicit
constructs: an implicit term can only be elaborated once all of its
dependencies---those providing the relevant \emph{known} type
information---have themselves been elaborated. This ordering prevents
self-justifying cycles in unicity and rules out ``out-of-thin-air'' guesses.

\parcomment{Well-foundedness}

The attentive reader may notice that our unicity conditions contain a
negative occurrences of the typing judgment. At first glance, this seems
problematic for well-foundedness. However, the issue is easily resolved by
noting that these occurrences arise only as typing assumptions on erased
terms, which do not themselves involve any implicit rules. Formally one
could make this explicit by introducing a restricted typing judgment $\G
\throbust \e : \t$ that excludes the implicit rules (\NoRule{*-I}), and by
using this strictly simpler judgment in the antecedent of unicity
conditions. We omit this distinction for simplicity.

\paragraph{The omnidirectional recipe}
All typing rules instantiate a common framework---the \emph{omnidirectional
recipe}---which ensures that certain omitted type annotations are uniquely
determined from the context. Each construct, however, requires a specific
instantiation of the framework.
We first describe the framework, then present each feature separately.

\begin{enumerate}
  \item[\emph{Step 1}:] \emph{Contextualize.}
    Each implicit fragile term $\e^i$ is typed relative to a surrounding one-hole term context $\E$:
    its rule asserts the typability of $\G \th \E \where{\e^i} : \t$ as the conclusion.

  \item[\emph{Step 2}:] \emph{Select a unicity condition.}
    This is the secret ingredient! The unicity condition ensures that the shape
    $\sh$ is fully determined by the surrounding context $\E$ and
    subexpressions $\es$ (\eg the subexpressions $\es$ in $\erecord
    {\overline{\elab = \e}}$).

    If the $\e^i$ is an introduction form, we infer the shape from the context's
    hole $\Eshape \E \es \sh$. If $\e^i$ is an elimination form, we infer
    the shape from the \emph{principal term} $\e$ (the term whose type
    contains the connective we're eliminating): $\eshape[\es] \E \e \sh$.

    This division of terms is reminiscent of the \emph{Pfenning recipe} for
    bidirectional typechecking~\citep*{conf/popl/DunfieldP04}, where $\Eshape
    \E \es \sh$ can be viewed as a checking mode, and $\eshape[\es] \E \e \sh$
    as an inference mode for shapes.

  \item[\emph{Step 3}:] \emph{Elaborate.}
    The uniquely inferred shape $\sh$ is used to elaborate $\e^i$ into its
    explicit counterpart~$\e^x$, and the rule asserts $\G \th \E\where{\e^x}
    : \t$ as a premise.
\end{enumerate}

\paragraph{Implicit \polytypes}

Unboxing a \polytype $\einst \e$ is an \emph{elimination form}. Following the
omnidirectional recipe, \Rule{Use-I} is a contextual rule (\emph{Step 1})
requiring that the principal term $\e$ have the unique shape $\any \tvcs \tpoly
\ts$ in the context $\E$ (\emph{Step 2}). In \emph{Step 3}, we then elaborate
$\einst \e$ into $\exinst \e \tvcs \ts$.
Conversely, boxing with $\epoly \e$ is an \emph{introduction form}. In
\Rule{Poly-I}, we require that the expected type of the context's hole $\E$ has
the shape $\any \tvcs \tpoly \ts$ (\emph{Step 2}). We then type $\epoly \e$ as
$\expoly \e \tvcs \ts$ (\emph{Step 3}).

\paragraph{Implicit nominal records}

Overloaded record labels are handled analogously.
Typing record projections in \Rule{Rcd-Proj-I} is an \emph{elimination
  form} for the record type $\trcd \T \tys$: the projection $\efield \e \elab$
is typed as $\exfield \e \T \elab$ (\emph{Step 3}) provided the
type of expression $\e$ in context $\E$ has record shape
$\any \tvcs \trcd \T \tvcs$ (\emph{Step 2}).
For record construction, $\erecord {\overline{\elab = \e}}$ is an
\emph{introduction form}. In \Rule{Rcd-I}, we type an overloaded record
$\erecord {\overline{\elab = \e}}$ as $\exrecord \T {\overline{\elab = \e}}$
(\emph{Step 3}), provided the context $\E$ with subterms $\es$ expects a
record type of shape $\any \tvcs \trcd \T \tvcs$ (\emph{Step 2}).

\parcomment {Illustrative examples}
We now illustrate the typing of implicit constructs with a few examples.

\begin{example}
  \locallabelreset
  Consider the term $\ttex {3} \eqdef \efun r {\etuple {\efield r \ttx,
  \efield {\eannot r {} \ttpoint} \tty}}$ from \cref{sec/overview/overloading}.%
  \footnote{
    The typing rules for tuples are standard and present in Appendix
    \cref{app:full-reference}.
  }
  In \ocaml{ex_3}, $r$ can only be of type \ocaml{point}. Indeed,
  considering the second projection first, we should learn that $r$ is of type
  \ocaml{point} (using \Rule{Annot}) and since it is
  $\lambda$-bound, this makes the first projection unambiguous. (For record
  types without parameters, we use $\T$ as a shorthand for $\trcd \T \emptyset$.)

  Formally, we derive:
\begin{mathpar}
\newcommand{\reinferonbraces}[4]
   {\phantomontop[-1.4ex]
     {#1 \underbrace{#2} {#4}}
     {#1 \overbrace {#3} {#4}}}

\infer* [right=Rcd-Proj-I]{
   \eshape \E {\eannot r {} \ttpoint} {\ttpoint} \\
   \topinfer [Left=Rcd-Proj-I]{
        \eshape \Ep r \ttpoint \\ 
          \eset \th \Ep \where{\exfield r \ttpoint \ttx} :
    \ttpoint \to \tint \tprod \tint
}{
    \reinferonbraces
            {\eset \th}
            {\Ep\where{\efield r \ttx}}
            {\E\where{\exfield r \ttpoint \tty}}
            {: \ttpoint \to \tint \tprod \tint}
   }
}{
      \eset \th \E\where{\efield {\eannot r {} \ttpoint} \tty} :
      \ttpoint \to \tint \tprod \tint
}
\end{mathpar}
  where the contexts $\E$ and $\Ep$ are:
  \begin{mathpar}
    \begin{array}{rcl}
      \E &\eqdef& \efun r {\etuple {\efield r \ttx, \square}} \\
      \Ep &\eqdef& \efun r {\etuple {\square, \exfield r \ttpoint \tty}}
    \end{array}
  \end{mathpar}

  We have $\eset \th \Ep\where{\exfield r \ttpoint \ttx} : \ttpoint \to \tint
  \tprod \tint$, indeed.
  It therefore remains to prove $\eshape \E {\eannot r {} \ttpoint}
  \ttpoint$~\llabel B and $\eshape \Ep r {\ttpoint}$~\llabel A.

  \sloppy
  For \lref B, we use the fact that unicity is \emph{composable}: if $\eshape
  \Ea \e \sh$, then $\eshape {\Eb\where\Ea} \e \sh$. Since ${\eshape \square
  {\eannot r {} \ttpoint} \ttpoint}$ holds trivially, we have $\eshape \E
  {\eannot r {} \ttpoint} \ttpoint$.
  
  For \lref A, assume $\eset \th \Ep\where{\emagic {\eannot r {} \gt}}
  : \t$. 
  Since $\exfield r \ttpoint \tty$ requires $r$ to have the type
  $\ttpoint$ (due to \Rule{Rcd-Proj-X}), it follows that there is no other choice
  but to take $\ttpoint$ for $\gt$, which proves~\lref A.
\end{example}

\begin{example}
To illustrate a simple case of non-typability, we reconsider the example $\ttex
1 \eqdef \efun r {\efield r \ttx}$ from \cref{sec/overview/overloading}.
If there is a derivation of $\ttex 1$, then there must be one of the form:
\begin{mathpar}
  \infer* [Right=Rcd-Proj-I]
    { \eshape \E r {\any \tvcs \trcd \T \tvcs} \\
      \eset \th \E\where{\exfield r \T \ttx} : \t}
    { \eset \th \E\where{\efield r \ttx} : \t}
\end{mathpar}
where $\E$ is the term $\efun r \hole$, which is the largest possible
context.
Unfortunately, $\eshape \E r {\any \tvcs \trcd \T \tvcs}$ does not hold for any
$\T$. Indeed, we have $\eset \th \E \where {\emagic {\eannot r {} {\gt}}} :
\gt \to \t$.  for any $\gt$ and $\t$.  Hence, $\ttpoint$ and $\ttgraypoint$
are both possible shapes for the type of $r$.
\end{example}

\def \ttciecolor {\ttlab {cie\_color}}
\def \ttciepoint {\ttlab {cie\_point}}

\begin{example}
Considering the example from \cref{sec/overview/limitations}.
\begin{program}[input,checkocaml]
type gray_point = { x : int; y : int; color : int }
type cie_color  = { x : int; y : int; z : int }
type cie_point  = { x : int; y : int; color : cie_color }
let ex_1_0 r = r.color.x °\ocamlflags 11°
\end{program}

\parcomment{Proof that this is ill-typed}

As explained in \cref{sec/overview/limitations}, \code{ex_1_0} is ill-typed
because our unicity conditions enforce that each implicit field projection
must be resolved individually and in a sequential fashion. We view this as a
``Goldilocks'' solution: we deliberately trade a small loss in expressivity
for the benefit of a tractable, backtracking-free inference algorithm.

To type $\ttex {10}$ one must eliminate the final implicit
projection in a context of the form $\E\where{\efield \e \elab}$.
Neither projection can be resolved by applying
\Rule{Rcd-Proj-Closed} since the labels (\code{color} and \code{x})
are both overloaded. Thus, the implicit projections must
be typed using \Rule{Rcd-Proj-I}.
Two cases arise, neither of which are possible:
\begin{proofcases}
\proofcase{$\E$ is $\efun r {\hole}$}
We should have a derivation that ends with
\begin{mathpar}
  \infer*[Right=Rcd-Proj-I]
    {\eshape \E {\efield r \ttcolor} {\ttciecolor} \\
     \eset \th \E \where {\exfield {\efield r \ttcolor} \ttciecolor \ttx} : \t}
    {\eset \th \E \where {\efield {\efield r \ttcolor} \ttx} : \t}
\end{mathpar}

However, $\eshape \E {\efield r \ttcolor} {\ttciecolor}$ does not hold. Indeed,
the judgment
  $\eset \th \eerase {\E \where {\eannot {\emagic {\efield r \ttcolor}} {}
  {\gt}}} : \gt \to \tp$
(\ie $\eset \th \efun r {\eannotmagic r {} {\gt}} : \gt \to \tp$) holds for
any $\gt$.  Hence, the shape of the type of $\efield {\efield r \ttcolor}
\ttx$ is not uniquely determined and this case cannot occur.

\proofcase{$\E$ is $\efun r {\efield {\hole} \ttx}$}
The derivation must end with:
\begin{mathpar}
  \infer*[Right=Rcd-Proj-I]
    {\eshape \E {r} {\ttciepoint} \\
     \eset \th \E\where{\exfield r \ttciepoint \tty} : \t}
    {\eset \th \E \where{\efield r \ttcolor} : \t}
\end{mathpar}

However, $\eshape \E {r} {\ttciepoint}$ does not hold
either. Again, the judgment
$\eset\th \eerase {\E\where{\eannotmagic r {} {\gt}}} : \tp \to \gt$,
(\ie $\eset\th \efun r {\eannotmagic r {} {\gt}} : \tp \to \gt$) holds for
any $\gt$.

\end{proofcases}
\end{example}

\section{Constraints}
\label{sec:constraints}


\begin{mathparfig}[t]%
  {fig:constraints}%
  {Selected syntax and semantics of constraints.}
\begin{bnfgrammar}
\entry[Constraints]{\c}{
        \ctrue
  \and  \cfalse
  \and  \ca \cand \cb
  \and  \cunif \ta \tb
  \and  \cexists \tv \c
  \and 	\cfor \tv \c
  \nextline
  \and  \clet \x \tv \ca \cb
  \and  \capp \x \t
  \and  \cmatch \t \cbrs
}\\[1ex]
\entry[Branches]{\cbr}{\cbranch \cpat \c} \\
\entry[Shape patterns]{\cpat}{
  \cpatwild
  \and \ldots
}{} \\
\entry[Constraint contexts]{\C}{
  \hole
  \and \C \cand \c
  \and \c \cand \C
  \and \cexists \tv \C
  \and \cfor \tv \C
  \nextline
  \and \clet \x \tv \C \c
  \and \clet \x \tv \c \C
} \\[1ex]
\entry[Semantic environments]{\semenv}{
  \emptyset
  \and \semenv\where{\tv \is \gt}
  \and \semenv\where{\x \is \gabs}
} \\
\entryset[Ground types]{\gt}{\Ground} \\
\entrysubseteq[Sets of ground types]{\gabs}{\Ground}{}
\end{bnfgrammar}

  \infer[True]
    { }
    {\semenv \th \ctrue}

  \infer[Conj]
    {\semenv \th \ca \\
     \semenv \th \cb}
    {\semenv \th \ca \cand \cb}

  \infer[Unif]
    {\semenv(\ta) = \semenv(\tb)}
    {\semenv \th \cunif \ta \tb}

  \infer[Exists]
    {\semenv\where{\tv \is \gt} \th \c}
    {\semenv \th \cexists \tv \c}

  \infer[Forall]
    {\forall \gt, ~ \semenv\where{\tv \is \gt} \th \c}
    {\semenv \th \tfor \tv \c}

  \infer[Let]
    {\gabs = \semenv(\cabs \tv \ca) \\
     \gabs \neq \eset \\\\
     \semenv\where{\x \is \gabs} \th \cb}
    {\semenv \th \clet \x \tv \ca \cb}

  \infer[App]
    {\semenv(\t) \in \semenv(\x)}
    {\semenv \th \capp x \t}

  {\let \Eqdef\eqdef \def \eqdef {&\Eqdef&}
  \begin{tabular}[c]{.R.C;.L.}
  \semenv(\cabs \tv \c) \eqdef
    \set {\gt \in \Ground : \semenv\where{\tv \is \gt} \th \c}
  \\
  \ca \centails \cb \eqdef
    \forall \semenv,\ \semenv \th \ca \implies \semenv \th \cb
  \\
  \ca \cequiv \cb \eqdef
    (\ca \centails \cb) \wide\wedge   (\ca \centails \cb)
  \end{tabular}}
\par
\begin{tabular}{.L.}
  \cmatched \t \sh {\cbranch \cpats \cs} \eqdef \\[.6ex]
\uad \begin{cases}
    \cexists \tvs \cunif \t \shapp \tvs \cand \theta(\ci) &
    \text{if } \cmatches \cpati \sh \tvs \theta\\
    \cfalse & \text{otherwise}
\end{cases}
\end{tabular}
\qquad\qquad
\hfil
{\let \Eqdef\eqdef \def \eqdef {&\Eqdef&}\def \EQ{&\partialmap&}
\begin{tabular}[c]{.R.C;.L.}
  \hline \vrule\;
  \cmatches[\EQ] \cpat \sh \tvcs \theta
  \;\vrule \\ \hline
  \noalign{\vskip .6ex}
  \cmatches[\EQ] \cpatwild \sh \tvcs \eset \\
  &\vdots&
\end{tabular}}

\infer[Match-Ctx]
    {\Cshape \C \t \sh \\
      \semenv \th \C\where{\cmatched \t \sh \cbrs}
    }
    {\semenv \th \C\where{\cmatch \t \cbrs}}
\hfill
\begin{tabular}{.L.}
  \Cshape \C \t \sh \eqdef \\[.6ex]
  \quad \forall \semenv, \gt. \uad \semenv \th \cerase {\C\where{\cunif \t \gt}} \implies \shape \gt = \sh
\end{tabular}
\end{mathparfig}

\parcomment{Why bother with a formal constraint language?}

To reason about constraint-based inference, we need more than a procedure for
generating and solving constraints: we require a \emph{formal logic} of
constraints, with a syntax and a declarative semantics that characterizes
satisfiability. This semantics is essential: it validates the design of our
constraint language and provides the foundation for proving the soundness,
completeness, and principality of inference. Without it, the meta-theory of our
approach cannot be stated precisely.

\parcomment{Defining the syntax / semantics}

We now introduce the syntax and semantics of our constraint language.
Building atop the constraint-based inference framework of
\citet*{\BBemlti}, we adopt a constraint language
(\cref{fig:constraints}) that includes both term and type variables. Its
semantics is given by a satisfiability judgment $\semenv \th \c$
(\cref{fig:constraints}). The semantic environment $\semenv$ assigns to each
free type variable $\tv$ a ground type $\gt \in \Ground$ (a type with no
free variables) and to each term variable $\x$ a set of ground
types\footnote {$\gabs$ can be pronounced ``Fraktur S'' or ``ground S''.}
$\gabs \subseteq \Ground$ (the instances of a type scheme bound to
$\x$).
We write $\semenv\where{\tv \is \gt}$ and $\semenv\where{\x \is \gabs}$ for
extensions of $\semenv$, and $\semenv(\t)$ for the ground type obtained by
substitution.

\parcomment{Semantics detailed}

Constraints include basic logical forms: tautological $\ctrue$ (\Rule{True}),
unsatisfiable $\cfalse$, and conjunctive $\ca \cand \cb$ (\Rule{Conj})
constraints. The unification constraint $\cunif \ta \tb$ is satisfied when
$\ta$ and $\tb$ are equal (\Rule{Unif}).
An existential constraint $\cexists \tv \c$ holds if there exists a witness
$\gt$ for $\tv$ satisfying $\c$ (\Rule{Exists}), while a universal constraint
$\cfor \tv \c$ holds if $\c$ is satisfied for every binding of $\tv$
(\Rule{Forall}).
When $\ts$ is a polymorphic type scheme $\tfor \tvs \tp$, we use the notation
$\cleq \ts \t$ as shorthand for the instantiation constraint $\cexists
\tvs \cunif \tp \t$. \OML-specific constraints, such as record label
instantiation $\labfrom \elab \ct \leq \ta \to \tb$
(\cref{sec/overview/omni/match}), are introduced later (\cref{sec/oml/constraints}).

\parcomment{Constraint abstractions}

Polymorphism in the constraint language is expressed through
\emph{generalization} and \emph{instantiation} constraints. In a generalization
constraint (or \emph{let}-constraint) $\clet \x \tv \ca \cb$, the definition of
$\x$ is a constraint abstraction $\cabs \tv \ca$, a function that, when applied
to a type $\t$, returns $\ca \where {\tv \is \t}$. The binding of $\x$ is then
available in $\cb$ and refers to this abstraction. Semantically, the
abstraction $\cabs \tv \ca$ is interpreted as a set of ground types that
satisfies $\ca$, and the generalization constraint requires this set to be
non-empty, \ie there is at least one instantiation of $\tv$ that satisfies
$\ca$.
Instantiations (or applications) $\capp \x \t$ eliminate abstractions by
applying a type $\t$ to the abstraction bound to $\x$. This holds precisely
when $\semenv(\t) \in \semenv(\x)$, \ie $\t$ is one of the satisfiable
instances of $\x$. The $\lambda$-abstraction and application syntax follows
\citet*{Pottier/inferno@icfp2014}. In other presentations, constraint
abstractions $\cabs \tv {\cexists \tvbs \c}$ are written as constrained type
schemes $\all {\tv, \tvbs} C \Rightarrow \tv$, and instantiation constraints
$\capp \x \t$ are written $\x \leq \t$.

\parcomment {Suspended match constraints}

Finally, we introduce \textit{suspended match constraints} $(\cmatch \t \cbrs)$,
which consist of:
\begin{enumerate}

\item
  A matchee $\t$. The constraint remains suspended
  until the \emph{shape} of $\t$ is determined, \ie
  while $\t$ is a type variable.

\item

  A list of branches $\cbrs$ of the form $\cbranch \cpat \c$, where
  $\cpat$ is a shape pattern.\footnote{%
    The match constraints considered in this paper only use a single branch.
    In a richer language, however, multiple branches would be useful; for 
    instance, record projection could be overloaded to operate on
    nominal records as well as tuples (or objects, modules, \etc), requiring
    several branches in the generated constraint. We therefore kept the general
    syntax.%
  } The constraint $\c$ is solved in the extended
  context produced by the matching pattern.

  For example, the wildcard pattern $\cpatwild$ matches any shape,
  and binds nothing.
  To ensure determinism, the set of patterns $\bar \cpat$ must be
  \emph{disjoint}---that is, no shape may be matched by more than one pattern
  in the list.

\end{enumerate}
The formal semantics of suspended match constraints are somewhat involved;
we return to them in the next subsection (\cref{sec/constraints/semantics}).

\begin{wraphbox}{0.2}{0.6}
\begin{mathpar}[inline]
\infer*[right=Exists]
    {\infer*[Right=Unif]
      {\infer*{}{\tint = \tint}}
      {\semenv\where{\tv \is \tint} \th \cunif \tv \tint}}
  {\semenv \th \cexists \tv \cunif \tv \tint}
\end{mathpar}
\end{wraphbox}
Closed constraints are either satisfiable in any semantic environment (\ie
they are tautologies) or unsatisfiable. For example, the satisfiability of
the constraint $\cexists \tv {\cunif \tv \tint}$ is established by the
derivation on the right-hand side.


We write $\ca \centails \cb$ to express that $\ca$ \emph{entails} $\cb$,
meaning every solution $\semenv$ to $\ca$ is also a solution to $\cb$.
We write $\ca \cequiv \cb$ to indicate that $\ca$ and $\cb$ are equivalent,
that is, they have exactly the same set of solutions.

\parcomment {Constraint contexts}

Throughout this paper, we will find it convenient to work with \emph{constraint
contexts}. A constraint context $\C$ is simply a constraint with a \emph{hole},
analogous to term contexts $\E$ introduced in \cref{sec/oml}. We write
$\C\where{\c}$ to denote filling the hole of the context $\C$ with the
constraint $\c$. Hole filling may capture variables, so we frequently require
explicit side conditions when variable capture must be avoided. We write $\bvs
\C$ for the set of variables bound at the hole in $\C$.

\subsection{Suspended constraints}
\label{sec/constraints/semantics}

\parcomment{Defining semantics for suspended constraints is hard}

A central difficulty in our work on suspended constraints was defining a
satisfying semantics. The challenge lies in formalizing what it means for type
information to be \emph{known} without presupposing a \emph{static} solving
order. This is the same issue encountered in our typing rules
(\cref{sec/oml/typing/I}), and we address it in the same way: by introducing a
contextual rule together with a unicity condition.

To define the semantics for suspended constraints, we first introduce
\emph{discharged match constraints}.

\begin{definition}[Discharged match constraint]\label{def/discharged}
  Given a suspended constraint $(\cmatch \t \cbrs)$ and a canonical shape
  $\sh$, we introduce the syntactic sugar $(\cmatched \t \sh \cbrs)$ for the
  \emph{discharged match constraint} that selects the branch in $\cbrs$ that
  matches $\sh$:
\begin{mathpar}
  \cmatched \t \sh {\overline {\cbranch \cpat \c}} \uad\eqdef\uad
  \begin{cases}
    \cexists \tvs \cunif \t \shapp \tvs \cand \theta(\ci)
    & \text{if } \cmatches \cpati \sh \tvs \theta\\
    \cfalse & \text{otherwise}
  \end{cases}
\end{mathpar}

The first conjunct ($\tau = \shapp \tvs$) ensures that $\sh$ is indeed the
canonical shape of $\t$, and the second conjunct is the selected branch
constraint $\ci$ under the appropriate substitution. Since the syntax of
suspended match constraints requires that branch patterns are
non-overlapping, the matching branch $\cbranch \cpati \ci$ is uniquely
determined. It may, however, be undefined if the branches are not exhaustive,
in which case the discharged constraint is $\cfalse$.

\end{definition}

The partial function $(\Cmatches \cpat \sh \tvs)$, introduced in
\cref{fig:constraints}, describes how a pattern $\cpat$ is matched against a
canonical principal shape $\sh$. Before matching, $\sh$ is applied with fresh
shape variables $\tvs$ (of the same arity as $\sh$), so that the result of
matching may refer to them. The match either fails or returns a substitution
$\theta$ mapping pattern variables (\eg record name variables $\ct$ from
\cref{sec/overview/omni}) to shape components (\eg record names $\T$).  These
components may themselves mention the freshly introduced variables $\tvs$.
In~\cref {fig:constraints}, we only introduce the wildcard pattern and its
matching rule; in~\cref {sec/oml/constraints} we extend both the syntax of
shape patterns and the matching function itself to handle the patterns that
arise in \OML.

\paragraph {A natural attempt}

As with typing rules for fragile constructs (\eg \Rule{Rcd-Proj-I-Nat}), suspended
match constraints have a \emph{natural rule}:
\begin{mathpar}
  \infer[Match-Nat]
  {\sh = \shape {\semenv(\t)} \\ \semenv \th \cmatched \t \sh \cbrs}
  {\semenv \th \cmatch \t \cbrs}
\end{mathpar}
This rule states that a suspended constraint holds whenever the corresponding
discharged constraint holds for the canonical shape of $\semenv(\t)$.

\parcomment {The problem}

Although simple and declarative, this semantics is too
permissive. It allows \emph{guessing} the shape of $\t$ rather than requiring
it to be \emph{known}. For example, $\cexists \tv \cmatch \tv {\cbranch \cpatwild {\cunif
\tv \tint}}$ is satisfiable under the natural semantics:
\begin{mathpar}
\def \cmatchex {\cmatch \tv {\cbranch \cpatwild {\cunif \tv \tint}}}
\def \semenvex {\semenv\where{\tv \is \tint}}
    \infer*[Right=Match-Nat]
    {
      \cmatches \cpatwild \tint \eset \eset
      \\
      \infer*[Right=Unif]
        {\tint = \tint}
    {\semenvex \th \cunif \tv \tint}
}{
    \infer*[Right=Exists]
    {\semenvex \th \cmatchex}
  {\semenv \th \cexists \tv \cmatchex}
}
\end{mathpar}
This ``out-of-thin-air'' behavior does not match the intended
meaning of suspended match constraints and raises several problems:
\begin{enumerate*}
  \item a reasonable solver---one that avoids backtracking---cannot
    be complete with respect to this semantics; and

  \item it breaks the existence of principal solutions, just
    as the \emph{natural} typing rules do (\eg \Rule{Rcd-Proj-I-Nat}).

\end{enumerate*}

\paragraph {Contextual semantics}

To rule out guessing, we instead adopt a \emph{contextual} semantics: a
match constraint is satisfiable only if the shape of the type is determined
by the surrounding context. The corresponding rule for suspended
constraints, \Rule {Match-Ctx} (\cref{fig:constraints}), is the only non-syntax-directed rule in our semantics.
\begin{mathpar}
  \infer[Match-Ctx]
    {\Cshape \C \t \sh \\
      \semenv \th \C \where {\cmatched \t \sh \cbrs}
    }
    {\semenv \th \C \where {\cmatch \t \cbrs}}
\end{mathpar}
In this rule, a suspended constraint $(\cmatch \t \cbrs)$ in the context
$\C$ can be discharged, provided the shape $\sh$ is not guessed from $\semenv$,
but recovered from the constraint context $\C$. This \emph{unicity} condition
$\Cshape \C \t \sh$ (defined below) ensures that $\sh$ is uniquely determined
by the context $\C$, capturing precisely what it means for the shape of a type
to be \emph{known}.

\begin{definition}[Erasure]
  \label{def:erasure}
  The erasure $\cerase \c$ of a constraint $\c$ is defined as the constraint
  obtained by replacing suspended match constraints in $\c$ with $\ctrue$.
\end{definition}

\begin{definition}[Simple constraints]
  We say that $\c$ is \emph{simple} if it contains no suspended match
  constraints.
\end{definition}

\begin{definition}[Unicity]
  \label{def:unicity}
  We define the unicity condition $\Cshape \C \t \sh$, which states that $\t$
  has a unique canonical shape $\sh$ within the context $\C$ as:
  \begin{mathpar}[inline]
    \forall \semenv, \gt. \uad
      \semenv \th \cerase {\C\where{\cunif \t \gt}} \implies
          \shape \gt = \sh
  \end{mathpar}.
\end{definition}

\parcomment{Briefly explain unicity (using analogies to typing rules)}

Similarly to the unicity conditions introduced for typing rules in
\cref{sec/oml/typing/I}, unicity
for constraints $\Cshape \C \t \sh$ relies on \emph{erasure} $\cerase
{\C\where{\cunif \t \gt}}$ to restrict attention to previously discharged
match constraints, inducing a causal order between match constraints,
enforced by \Rule {Match-Ctx}.

\parcomment{The interesting cases of unicity}

When $\t$ is not a variable, $\Cshape \square \t \sh$ holds trivially when
$\sh$ is the shape of $\t$. Likewise, when (the erasure of) $\C$ is
unsatisfiable, then $\Cshape \C \tv \sh$ holds vacuously for any $\sh$. The
interesting and nontrivial cases---the ones we illustrate next---arise when
$\t$ is a type variable and $\C$ is satisfiable.

\parcomment {Examples}

\begin{example}
  Recall the problematic example that was satisfiable under the natural
  semantics:
\begin{mathpar}
  \cexists \tv \cmatch \tv {\cbranch \cpatwild {\cunif \tv \tint}}
\end{mathpar}
  Under the contextual semantics, the suspended constraint appears in a
  context $\C$
  with no contextual information. \ie $\C$ is $\cexists \tv \hole$.
  So for any ground type $\gt$, $\cerase{\C\where{\cunif \tv \gt}}$ (\ie
  $\cexists \tv {\cunif \tv \gt}$) is satisfiable, allowing $\gt$ to have an
  arbitrary shape (\eg $\tint$, $\tbool$, \etc). As a result, the uniqueness
  condition $\Cshape \C \tv \sh$ never holds making \Rule{Match-Ctx}
  inapplicable. The constraint is unsatisfiable as intended.

\end{example}

\begin{example}
  Consider the satisfiable constraint:
\begin{mathpar}
\cexists \tv \cunif \tv \tint
  \cand
  \cmatch \tv {\cbranch \cpatwild \ctrue}
\end{mathpar}
  Here, we apply the contextual rule with the context $\C$ equal to $\cexists \tv
  \cunif \tv \tint \cand \hole$. Any solution $\semenv$ of this context
  necessarily satisfies \relax $\cunif \tv \tint$, so we have \relax $\Cshape
  \C \tv \tint$ and the suspended constraint can be discharged.
\end{example}
\begin{example}
  Consider the more intricate example:
\begin{mathpar}
  \cexists {\tva, \tvb}
  {\def \EX
     {\cmatch \tva {\cbranch \cpatwild {\cunif \tvb \tbool}} \and
      \cmatch \tvb {\cbranch \cpatwild \ctrue} \and
      \cunif \tva \tint}
    \False
      {\def \and{\\{}\cand}\Parens
         {\begin{array}{;l}
            \quad \EX
          \end{array}}}
      {\def \and{)\wide\wedge(}\parens{\EX}}
  }
\end{mathpar}
  Suppose we attempt to apply \Rule{Match-Ctx} to the match on $\tvb$ first. We
  want to show $\Cshape \C \tvb \tbool$ for the context $\C$ equal to
  $(\cmatch \tv
  {\cbranch \cpatwild {\cunif \tvb \tbool}}) \cand \hole \cand \cunif \tva
  \tint$. Its erasure $\cerase \C$ is $\ctrue \cand \hole \cand \cunif \tva
  \tint$, which imposes no constraints on $\tvb$. Thus both \relax $\cerase
  {{\C\where{\cunif \tvb \tint}}}$ and \relax $\cerase {\C\where{\cunif \tvb
  \tbool}}$ are both satisfiable: unicity does not hold and \Rule{Match-Ctx}
  cannot be applied.

  By contrast, if we first discharge the match on $\tva$, we consider the
  context $\C$ equal to $\hole \cand (\cmatch \tvb {\cbranch \cpatwild
  \ctrue}) \cand
  \cunif \tva \tint$. Its erasure $\cerase \C$ equal to $\hole \cand \ctrue \cand
  \cunif \tva \tint$ does constraint $\tva$, giving $\Cshape \C \tva \tint$.
  We may therefore discharge the match on $\tva$, rewriting it as
  $(\cmatched \tva \tint {\cbranch \cpatwild {\cunif \tvb \tbool}})$ \ie
  $\tva = \tint \cand \cunif \tvb \tbool$. Substituting back, we
  are left to satisfy the constraint $\C \where {\tva = \tint \cand \cunif \tvb \tbool}$ \ie
\begin{mathpar}[inline]
  \cunif \tva \tint \cand \cunif \tvb \tbool \cand
  (\cmatch \tvb {\cbranch \cpatwild \ctrue}) \cand
  \cunif \tva \tint
\end{mathpar}.
  At this point, unicity for $\tvb$ holds, since the context now includes $\cunif
  \tvb \tbool$. We can therefore apply \Rule{Match-Ctx} to eliminate the final
  match constraint.

  This example demonstrates that suspended constraints must be resolved in
  a dependency-respecting order: attempting to resolve a match constraint too
  early may result in unsatisfiability.
\end{example}

\begin{example}
  Let us consider a constraint with a cyclic dependency between match
  constraints:
\begin{mathpar}
  \cexists {\tva, \tvb}
  {\def \EX
     {\cmatch \tva {\cbranch \cpatwild {\cunif \tvb \tbool}} \and
      \cmatch \tvb {\cbranch \cpatwild {\cunif \tva \tint}}}
    \False
      {\def \and{\\{}\cand}\Parens
         {\begin{array}{;l}
            \quad \EX
          \end{array}}}
      {\def \and{)\wide\wedge(}\parens{\EX}}
  }
\end{mathpar}
  Under the natural semantics this constraint is satisfiable, since one may
  \emph{guess} the assignment $\tva \is \tint, \tvb \is \tbool$,
  making both match constraints succeed. However, our solver and contextual semantics
  reject it.

  Without loss of generality, suppose we attempt to apply \Rule{Match-Ctx} on
  $\tva$ first. We must establish $\Cshape \C \tva \tint$ for the context $\C
  \is \hole \cand \cmatch \tvb {\cbranch \cpatwild {\cunif \tva \tint}}$. But
  the erasure $\cerase \C$ is $\hole \cand \ctrue$, which imposes no constraint on
  $\tva$. Hence unicity fails and \Rule{Match-Ctx} is inapplicable.

\end{example}
\subsection{Constraint generation}
\label{sec/oml/constraints}

\begin{mathparfig}[tp]
  {fig:patterns-oml}
  {Patterns for \OML.}
  \begin{bnfgrammar}
   \entry[Record variables]{\ct}{}\\
   \entry[Scheme variables]{\cscm}{}\\
   \entry[Shape patterns]{\cpat}{
      \ldots
      \and \cpatrcd \ct
      \and \cpatpoly \cscm
    } \\[1ex]
    \entry[Constraints]{\c}{
      \dots
      \and \labfrom \elab \ct \leq \ta \to \tb
      \and \dom \ct = \elabs
      \and \cscm \leq \t
      \and \x \leq \cscm
    }
 \end{bnfgrammar}
  \\
  \def \>{\noalign{\vskip .8ex}}
  \newcommand{\Mrule}[5][]{{#2} \Matches {(#3)} \; #4 &\eqdef& {#5} & #1}
  \begin{tabular}{RCLL}
    \Mrule
      {(\cpatrcd \ct)}
      {\any \tvcs \trcd \T \tvcs} \tvcps
      {[\ct \is \T]}
    \\\>
    \Mrule
      {\cpatpoly \cscm}
      {\any \tvcs \tpoly \ts} \tvcps
      {[\cscm \is \ts \where{\tvcs \is \tvcps}]}
  \end{tabular}
  \\
  \newcommand{\Srule}[3][]{{#2} &\eqdef& {#3} & {#1}}
  \newcommand{\Scases}[3]{%
    \left\{
      \begin{array}{l}
      #1\\
      #2
      \end{array}%
    \right.
    &
    \hspace{-1ex}\begin{array}{l}
      \text{if } #3\\
      \text{otherwise}
    \end{array}%
  }
  \begin{tabular}{RCLL}
    \labfrom \elab \T \leq \ta \to \tb &\eqdef&
      \Scases{\cexists \tvs \cunif \ta  {\trcd \T \tvs} \cand \cunif \tb \t}{\cfalse}{\labenv(\labfrom \elab \T) = \tfor \tvs {\trcd \T \tvs \to \t}}
    \\\>
    \dom{\T} = \elabs &\eqdef&
       \Scases{\ctrue}{\cfalse}{\Dom {\labenv(\T)} = \elabs}
    \\\>
    \Srule
      {(\tfor \tvs \tp) \leq \t}
      {\cexists \tvs \cunif \tp \t}
    \\\>
    \Srule
      {\x \leq (\tfor \tvs \t)}
      {\cfor \tvs \capp \x \t}
  \end{tabular}
\end{mathparfig}

\begin{mathparfig}
  [htpb!]
  {fig/constraint-gen}
  {The constraint generation translation for \OML.}
\newcommand {\Crule}    [2]{#1 &\eqdef& #2 &}
\newcommand {\CruleCond}[3]{\Crule{#1}{#2}#3}
\def \arraystretch{1.1}
\begin{tabular*}{\linewidth}{;;;L=;C;;L~L.}
\Crule
   {\cinfer x \t}
   {\cinst x \t}
\\
\Crule
  {\cinfer {()} \t}
  {\cunif \t \tunit}
\\
\Crule
  {\cinfer {\efun \x \e} \t}
  {\cexists {\tva, \tvb}
    \clet \x \tvp {\cunif \tvp \tv} {\cinfer \e \tvb}
    \cand \cunif \t {\tva \to \tvb}}
\\
\Crule
  {\cinfer {\eapp \ea \eb} \t}
  {\cexists {\tva, \tvb}
    \cinfer \ea {\tvb} \cand \cinfer \eb \tva
    \cand \cunif \tvb {\tva \to \t}}
\\
\Crule
  {\cinfer {\elet \x \ea \eb} \t}
  {\clet \x \tv {\cinfer \ea \tv} {\cinfer \eb \t}}
\\
\Crule
  {\cinfer {\eannot \e \tvs \tp} \t}
  {\cexists \tvs  \cinfer \e \tp \cand \cunif \t \tp}
\\
\Crule
  {\cinfer {\expoly \e \tvs \ts} \t}
  {\cexists {\tvs}
    \cinfer \e \ts
    \cand \cunif \t {\tpoly \ts}}
\\
\Crule
  {\cinfer {\exinst \e \tvs \ts} \t}
  {\cexists {\tvs, \tvb}
    \cinfer \e \tvb
    \cand \cunif \tvb {\tpoly \ts}
    \cand \ts \leq \t}
\\
\Crule
  {\cinfer {\einst \e} \t}
  {\cexists \tv
    \cinfer \e \tv
    \cand \cmatch \tv {\cbranch {\cpatpoly \cscm} \cscm \leq \t}}
\\
\Crule
  {\cinfer {\epoly \e} \t}
  {\clet \x \tv {\cinfer \e \tv}
    {\cmatch \t {\cbranch {\cpatpoly \cscm} {\x \leq \cscm}}}}
\\
\CruleCond
  {\cinfer {\efield \e \elab} \t}
  {\!\begin{cases}
    \cinfer {\exfield \e \T \elab} \t
    \\
    \cexists \tv \cinfer \e \tv \cand
    \cmatch \tv
      {\cbranch {\cpatrcd \ct} {\labfrom \elab \ct \leq \tv \to \t}}
  \end{cases}}
  {\begin{aligned}
    &\text{if\space} \labuni \elab \T
    \\
    &\text{otherwise}
  \end{aligned}}
\\
\Crule
  {\cinfer {\exfield \e \T \elab} \t}
  {\cexists \tv \cinfer \e \tv \cand
   \labfrom \elab \T \leq \tv \to \t}
\\
\CruleCond
  {\cinfer {\erecord {\overline{\elab = \e}}} \t}
  {\!\begin{cases}
    \cinfer {\exrecord \T {\overline{\elab = \e}}} \t
    \\
    \cexists \tvs \cAnd\iton \cinfer \ei \tvi \uad \cand
    \\
    \qquad
        \cmatch \t {
                \cbranch {\cpatrcd \ct}
      {\parens{\dom \ct = \elabs \cand
               \cAnd\iton \labfrom \elabi \ct \leq \t \to \tvi}}}
      \hskip -8em
   \end{cases}}
   {\begin{aligned}
    &\text{if\space} \labsuni \elabs \T \\[.2ex]
    &\text{otherwise}\\
    &\\
  \end{aligned}}
\\
\Crule
  {\cinfer {\exrecord \T {\overline{\elab = \e}}} \t}
  {\cexists \tvs \cAnd\iton \cinfer \ei \tvi
   \cand \dom {\T} = \elabs
   \cand \cAnd\iton \labfrom \elabi \T \leq \t \to \tvi }
\\
\Crule
  {\cinfer {\emagic \es} \t}
  {\cexists \tvs \cAnd\iton \cinfer \ei \tvi}
\\\\
\Crule
  {\cinfer \e {\tfor \tvs \t}}
  {\cfor \tvs \cinfer \e \t}
\end{tabular*}
\end{mathparfig}

We now present the formal translation from terms $\e$ to constraints $\c$,
such that the resulting constraint is satisfiable if and only if the term is
well typed. The translation is defined as a function $\cinfer \e \t$, where $\e$
is the term to be translated and $\t$ is the expected type of $\e$.
The expected type $\t$ is permitted to contain type variables, which can be
existentially bound in order to perform type inference. The models of
constraint $\cinfer \e \t$ interpret the free variables of
$\t$ such that $\t$ becomes a valid type of $\e$. For example, to infer the
entire type of $\e$ we may pick a fresh type variable $\tv$ for $\t$.

\paragraph{Shape patterns}

Thus far, our formal presentation of shape patterns has remained
abstract, deliberately leaving the syntax and semantics of match constraints
partially unspecified to accommodate a range of language features. We now
concretize this by specifying the shape patterns used in \OML
(see \cref{fig:patterns-oml}), and introducing the corresponding constraints
for the variables they bind.
Shape patterns include:
\begin{enumerate}

  \item Record type patterns $\cpatrcd \ct$ which match record types $\trcd \T \tys$
    and bind the record variable $\ct$ to the name $\T$. 
    The parameters $\tys$ are not bound, since record
    field overloading depends only on the name $\T$.

  \item \Polytype patterns $\cpatpoly \cscm$, which match polytypes $\tpoly \ts$ and
    bind the scheme variable $\cscm$ to $\ts$.
\end{enumerate}

Each new kind of pattern introduces corresponding constraint formers:
$\labfrom \elab \ct \leq \ta \to \tb$ asserts that $\ta \to \tb$ is an
instance of projection type associated with the explicit label $\labfrom
\elab \ct$; $\dom \ct = \elabs$ ensures that the domain of record type
$\ct$ is the set of labels $\elabs$; and $\cscm \leq \t$ checks that $\t$ is
an instance of $\cscm$, while $\x \leq \cscm$ asserts that every instance of
$\cscm$ is an instance of $\x$.

By definition, each of these constraint forms is unsatisfiable, since we do not
extend the satisfiability relation to include them. Their role is purely
syntactic: they only appear within the branches of a suspended match
constraint. Once such a constraint is discharged $(\cmatched \t \sh {\cbranch
\cpats \cs})$ (\cref {def/discharged}), the substitution $\theta$ produced by
the matching pattern $\cpati$---which binds the pattern variables (\eg
$\ct$, $\cscm$, \etc)---is
applied to the corresponding branch $\ci$. Hence, it suffices to define
the semantics of these constraint formers only for their substituted forms (\eg
$\labfrom \elab \T \leq \ta \to \tb$, $\dom {\T} = \elabs$, \etc), as shown in
\cref{fig:patterns-oml}. We define the semantics of these substituted forms by
translation into existing constraints---that is, as syntactic sugar over the
existing core constraint language.

\paragraph{Constraint generation}

Constraint generation $\cinfer \e {\mathop{\t}}$ is defined in
\cref{fig/constraint-gen}. All generated type variables are fresh with respect
to the expected type $\t$, ensuring capture-avoidance. We now review the cases
of the constraint generator, beginning with the cases corresponding to
traditional \ML constructs.

\parcomment{Simple \ML terms}

Unsurprisingly, variables $\x$ generate an instantiation constraint $\capp \x
\t$. The unit term $\eunit$ requires $\t$ to be the unit type $\tunit$. A
function generates a constraint that binds two fresh flexible type variables
for the parameter and return types.  We use a let-constraint to bind the
parameter in the constraint generated for the body of the function. The
let-constraint is monomorphic since $\tvp$ is fully constrained by type
variables defined outside the abstraction's scope and therefore cannot be
generalized. Applications introduce two fresh flexible type variables, one for
the argument type and one for the type of the function, typing each subterm
with these, ensuring $\t$ is the expected return type. Let-bindings generate a
polymorphic let-constraint; $\cabs \tv {\cinfer \e \tv}$ is a principal
constraint abstraction for $\e$: its intended interpretation is the set of all
types that $\e$ admits. Annotations bind their flexible type variables and
enforce the equality of the annotated type $\tp$ and the expected type $\t$.

\parcomment{\Polytypes}

Explicit \polytype boxing asserts that $\e$ has the polymorphic type $\ts$
(using universal quantification) and that the expected type is the \polytype
$\tpoly \ts$. Conversely, explicit unboxing requires that $\t$ be an instance
of $\ts$. These cases introduce no suspended match constrains because the
annotations contain the required type information: the \polytype $\tpoly \ts$.

By contrast, implicit unboxing suspends until the inferred type of $\e$ is
known to be some non-variable type $\tz$. At that point, the suspended match
constraint attempts to match $\tz$ against the pattern $\cpatpoly \cscm$. If
$\tz = \tpoly \ts$, the match succeeds, binding $\cscm$ to $\ts$ and requiring
$\t$ to be an instance of $\cscm$ (\ie $\ts \leq \t$); otherwise, the match
fails and the term is ill-typed.

Implicit boxing behaves dually: we infer the principal type for $\e$ using a
let-constraint and suspend until the expected type of the entire term
resolves to some type $\tz$. If $\tz$ matches the same pattern $\cpatpoly
\cscm$ (\ie $\tz = \tpoly \ts$), we assert that the principal type of $\e$ is
at least as general as $\cscm$, via the constraint $\x \leq \cscm$ (\ie $\x
\leq \ts$); otherwise, the match fails.

\parcomment{Records}

Record constructs are handled in a similar way, but their matching patterns
concern record shapes rather than \polytypes. A record projection generate a
fresh variable $\tv$ for the record type and constrain $\e$ to this type,
suspending until the type of $\e$ ($\tv$) resolves to some type $\tz$. When
$\tz = \trcd \T \tys$, the shape pattern $(\cpatrcd \ct)$ matches successfully,
binding $\ct$ to the record name $\T$; the matching branch then
retrieves the projected label's type from the global label context $\labenv$
and instantiates it to match $\tv \to \t$. If $\tz$ is not a
record type, the match fails and the term is ill-typed.

For record expressions, we generate a fresh variable $\tvi$ for each field
assignment to capture the type of each $\ei$ as $\tvi$. The rest of the
constraint is deferred until the context determines the type of the whole
record type to be $\tz$. Once known, it is matched against the same pattern
$\parens{\cpatrcd \ct}$. If the match succeeds (\ie $\tz = \trcd \T \tys$ and
$\ct = \T$) the labels are instantiated to match the projection types $\t \to
\tvi$, and we additionally check that the domain of $\ct$ is exactly $\elabs$,
ensuring that every label is defined. Otherwise, the constraint is
unsatisfiable as intended.

Explicit records and projections, along with closed-world disambiguated terms,
bypass suspension and directly instantiate the appropriate labels.

\begin{example}
Consider the following example:
\begin{program}[input]
  let ex_4 r = let x = r.x in x + (r : point).y °\ocamlflags 10°
\end{program}
The typing constraint generated for \code{ex_4} contains the following,
where $\tv$ stands for the type of \code{r}:
\begin{mathpar}
  \cexists {\tv}
    \clet x \tvb
      {(\cmatch \tva \dots)}
      {\cinst x \tint \cand \cunif \tv {\ttpoint}}
\end{mathpar}
The suspended constraint can be discharged under our contextual semantics.
We apply the \Rule{Match-Ctx} rule with context
$\C$ equal to
\begin{mathpar}[inline]
  \clet \x \tvb \hole \capp \x \tint \cand
  \cunif \tv \ttpoint
\end{mathpar}.
Although the context includes a let-binding---which in practice
involves let-generalization---we can still deduce $\Cshape \C \tv
  {\ttpoint}$, since the erased context $\cerase
\C$ contains the unification constraint $\cunif \tv \ttpoint$.

This example illustrates that our formulation of suspended constraints
interacts nicely with let-polymorphism. Although the two features are
specified in a modular fashion, they are carefully crafted to work together,
as we further show in our next example.
\end{example}

\begin{example}\label{ex:backprop}
A subtle yet crucial feature of our semantics is its support for
\emph{backpropagation}:
\begin{program}[input,checkocaml]
let ex_1_1 = let getx r = r.x in getx one  °\ocamlflags 10°
\end{program}

As in the previous example, the type of \code{r} cannot be disambiguated from
the let-definition alone. There, the ambiguity was resolved when the type
($\tv$) was unified to a known type ($\ttpoint$) in the let-body. Here, the
situation is more subtle: an \emph{instance} of \code{getx}'s type scheme is
taken, which only satisfies the application (\ocaml!getx one!) if \code{r}
has either a variable type or the record type \ocaml{point}.  However, the
projection \code{r.x} would be ill-typed if \code{r} had a variable type
(unicity would fail), so \code{point} is the unique consistent solution. This
information therefore flows back from the instances in the let-body to the
let-definition, a phenomenon we call \emph{backpropagation}.

The constraint generated when typing
\code{ex_1_1} is:
\begin{mathpar}
\begin{tabular}{L.L}
  \cexists \tv {}
  &\clet {getx} \tvd
     {\cexists {\tvb, \tvc} \Parens {\strut
        \cunif \tvd {\tvb \to \tvc} \cand
	\cmatch \tvb \dots
        }}{}
    \cinst {getx} {(\ttpoint \to \tv)}
\end{tabular}
\end{mathpar}
With the context $\C$ equal to $\clet {getx}
\tvd {\cexists {\tvb, \tvc} \cunif \tvd {\tvb \to \tvc} \cand \hole}
{\capp {getx} {(\ttpoint \to \tv)}}$, we can show that the unicity
predicate $\Cshape \C \tvb \ttpoint$ holds.
For any $\gt$, the erasure $\cerase {\C \where {\cunif \tvb \gt}}$
is
\begin{mathpar}[inline]
\clet {getx}
\tvd {\cexists {\tvb, \tvc} \cunif \tvd {\tvb \to \tvc} \cand \cunif \tvb \gt}
{\capp {getx} {(\ttpoint \to \tv)}}
\end{mathpar}.
Since $getx$ is bound to the constraint abstraction
\relax $\cabs \tvd \exists \tvc.\uad \cunif \delta {(\gt \to \gamma)}$,
the instantiation
\relax $\capp {getx} (\ttpoint \to \tv)$
can only be satisfied when $\gt$ is equal to $\ttpoint$. This proves unicity,
hence the generated constraint for \code{ex_1_1} is satisfiable.
\end{example}

\subsection{Metatheory}
\label{sec/oml/metatheory}

Constraint generation is sound and complete with respect to the typing judgment:
\begin{theorem}[Constraint generation is sound and complete]
  \label{thm:constraint-gen-is-sound-and-complete}
A closed \OML term $\e$ is typable if and only if
the constraint $\cexists \tv {\cinfer \e \tv}$ is satisfiable.
\end{theorem}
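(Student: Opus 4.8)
The plan is to prove both directions at once, via a generalization to open terms in the constraint-based style of \citet*{Pottier-Remy/emlti}. First I would fix a realization relation between a typing context $\G$ and a semantic environment $\semenv$: say that $\semenv$ \emph{realizes} $\G$ when $\semenv$ grounds every free type variable of $\G$ and, for each binding $x : \ts$ in $\G$, the set $\semenv(\x)$ is exactly the set of ground instances of $\ts$ under $\semenv$. The generalized statement, proved by structural induction on $\e$, is then: for every $\semenv$ realizing $\G$, $\semenv \th \cinfer \e \t$ holds if and only if $\semenv(\G) \th \e : \semenv(\t)$. The theorem follows by taking $\G$ empty and $\t$ a fresh variable $\tv$, since satisfiability of $\cexists \tv {\cinfer \e \tv}$ then amounts to the existence of a ground typing of $\e$, which by \Rule{Gen} and \Rule{Inst} is equivalent to typability.

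The induction splits along the robust and fragile constructs. For the robust cases---\Rule{Var}, \Rule{Fun}, \Rule{App}, \Rule{Annot}, the explicit polytype rules \Rule{Poly-X} and \Rule{Use-X}, and the explicit record rules---the argument is the standard one. The only subtle robust case is \Rule{Let}: here I must connect the constraint abstraction $\cabs \tv {\cinfer \ea \tv}$, semantically a set of ground types, with the scheme $\ts$ generalized by \Rule{Gen}. This is exactly where the instance-set reading of $\semenv(\x)$ does the work, reducing the case to the classical Hindley--Milner argument.

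The crux is the fragile cases \Rule{Rcd-Proj-I}, \Rule{Use-I}, \Rule{Poly-I}, and \Rule{Rcd-I}, whose constraints carry suspended match constraints discharged by the contextual \Rule{Match-Ctx}. For these I need a bridge lemma identifying the term-level unicity conditions $\eshape \E \e \sh$ and $\Eshape \E \es \sh$ with the constraint-level condition $\Cshape \C \tv \sh$, where $\C$ is the constraint context that generation places around the match constraint produced for the hole of $\E$. Two facts feed this bridge. The first is an erasure-commutation lemma, $\cerase{\cinfer \e \t} \cequiv \cinfer{\eerase \e}{\t}$, proved by a routine induction: term erasure turns a fragile head construct into a typed hole $\emagic \es$ exactly when constraint erasure replaces its suspended match by $\ctrue$, and $\cinfer{\emagic \es}{\t}$ leaves $\t$ unconstrained, matching the ``arbitrary type for the hole'' reading of \Rule{Hole}. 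The second, which keeps the bridge well-founded, is that both unicity conditions range only over \emph{erased} objects---implicit-free terms on one side, simple (suspended-match-free) constraints on the other---so the soundness and completeness correspondence for that fragment is already available and lets me transport the quantification over ground types $\gt$ and the shape equation $\shape \gt = \sh$ across.

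The main obstacle I anticipate is this bridge lemma, specifically lining up a term context $\E$ with the constraint context $\C$ that surrounds the corresponding suspended match, while tracking binders and freshness and checking that the discharged branch $\cmatched \t \sh \cbrs$ corresponds to the explicit elaboration---$\exfield \e \T \elab$, $\exinst \e \tvcs \ts$, and so on---named in the premise of the fragile typing rule. Once the bridge holds, each fragile case collapses: the unicity condition transfers across it, and the induction hypothesis applies to the elaborated explicit term, whose generated constraint is exactly the discharged branch. Treating soundness and completeness as the two reading directions of one equivalence, the only orientation-sensitive steps are \Rule{Match-Ctx} and the contextual typing rules, which the bridge lemma handles symmetrically.
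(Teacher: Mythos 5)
Your proposal assembles the same key ingredients as the paper's proof---the commutation of erasure with constraint generation ($\cerase{\cinfer \e \t} = \cinfer{\eerase \e} \t$, the paper's \cref{lem:erasure-constraint-gen}), the alignment of term contexts with constraint contexts (\cref{lem:ctxt-gen-correctness}), and a bridge between term-level unicity $\eshape \E \e \sh$ and constraint-level unicity $\Cshape \C \tv \sh$ that is well-founded because both conditions mention only erased, simple objects. The genuine gap is the outer induction these ingredients are plugged into. You announce a structural induction on $\e$, but in every fragile case you apply the induction hypothesis to the elaborated term $\E\where{\e^x}$, which is \emph{not} a structural subterm of $\E\where{\e^i}$: elaboration replaces a single node by its explicit counterpart and leaves the entire surrounding context $\E$ intact, so the term does not shrink. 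Worse, neither judgment can even be inverted along the term structure. The fragile typing rules conclude $\G \th \E\where{\e^i} : \t$ for an arbitrary enclosing context $\E$, so a derivation for $\eapp \ea \eb$ in which $\ea$ contains an unresolved overloaded projection need not end with \Rule{App} and yields no typings of $\ea$ and $\eb$ separately; dually, \Rule{Match-Ctx} may be applied with a constraint context $\C$ spanning the sibling constraint generated for $\eb$ (this is exactly backpropagation, \cref{ex:backprop}), so a satisfiability derivation for $\cinfer{\eapp \ea \eb} \t$ need not decompose through \Rule{Conj}. In such cases both sides of your inductive hypothesis for $\ea$ are false---the subterm is untypable in isolation and its generated constraint unsatisfiable in isolation---so the hypothesis, while true, gives you nothing to build the compound case from. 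The paper's fix is precisely what is missing here: first canonicalize derivations on both sides so that all contextual rules sit at the root (\cref{thm:canonicalization} for constraints, \cref{thm:canonicalization-typing} for typing), then induct on the \emph{number of implicit constructs} in $\e$ (lexicographically with structural size). Under that measure, peeling one contextual rule and elaborating one implicit node strictly decreases the measure, and your fragile-case argument becomes legitimate.

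A secondary concern is your generalization to open terms via a realization relation, which the paper deliberately avoids (see the preamble of \cref{app/oml/proofs}). Term-level unicity quantifies over \emph{all} typing contexts $\G$, while constraint-level unicity quantifies over environments that interpret free term variables as arbitrary \emph{sets} of ground types---not necessarily the instance sets of any scheme---so once $\e$ is open the two sides of your bridge quantify over different classes of objects and require extra repair (e.g., a monotonicity argument inflating sets to the instance set of $\tfor \tv \tv$). Likewise, your \Rule{Let} case is not ``the classical Hindley--Milner argument'' whenever the definition still carries suspended matches resolved only by the body; it only becomes classical after canonicalization has discharged them. Both complications are avoidable by stating the induction for closed terms and unifiers, as the paper does; the theorem itself demands nothing more.
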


\begin{restatable}[Principal types]{theorem}{principalTypesBIS}
\label{thm:principal-types}
For any well-typed closed \OML term $\e$, there exists a type $\t$
such that:
\begin{enumerate*}[(\roman*)]
\item
  $\th \e : \t$.
\item
  For any other typing $\th \e : \tp$, then $\tp = \theta(\t)$ for some
  substitution $\theta$.
\end{enumerate*}
\end{restatable}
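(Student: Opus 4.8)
The plan is to prove principality at the level of constraints and transport it back through constraint generation. By the per-type soundness and completeness of constraint generation underlying \cref{thm:constraint-gen-is-sound-and-complete} --- the refinement that identifies the ground types $\gt$ of $\e$ (those with $\th \e : \gt$) with the solutions for $\tv$ of $\c \eqdef \cinfer \e \tv$ --- it suffices to exhibit a \emph{principal solution} of $\c$: a type $\t$, with free type variables $\tvs$, whose ground instances $\theta(\t)$ are exactly the ground solutions for $\tv$. From such a $\t$ both clauses follow by the closure of the typing judgment under \Rule{Gen} and \Rule{Inst}: clause (i) is that $\t$ is itself a typing, and clause (ii) follows because every monotype typing $\tp$ of $\e$ has all its ground instances among the ground types of $\e$, hence among the ground instances of $\t$, which forces $\tp = \theta(\t)$ for some $\theta$ (the usual argument, using that there are infinitely many ground types). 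Well-definedness of $\shape{\cdot}$ throughout rests on \cref{thm:principal-shapes}.

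The heart of the argument is to eliminate suspended match constraints while preserving the solution set. I would introduce a reduction that discharges a single match: whenever $\c = \C\where{\cmatch \t \cbrs}$ and the side condition $\Cshape \C \t \sh$ holds, rewrite $\c$ to $\C\where{\cmatched \t \sh \cbrs}$ (\cref{def/discharged}). The key lemma --- and the crux of the whole proof --- is \textbf{determinacy of discharging}: if $\c$ is satisfiable, there is a \emph{unique} shape $\sh$ for which $\Cshape \C \t \sh$ holds, and for that $\sh$ we have $\c \cequiv \C\where{\cmatched \t \sh \cbrs}$. Uniqueness is exactly where the contextual semantics earns its keep: since $\Cshape \C \t \sh$ quantifies over \emph{all} semantic environments (\cref{def:unicity}), the discharged shape cannot depend on the particular solution, so no ``guessing'' is possible, and any satisfying $\semenv$ forces $\shape{\semenv(\t)}$ to equal the common $\sh$. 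The equivalence is then immediate: the backward direction is a single instance of \Rule{Match-Ctx}, and the forward direction replays \Rule{Match-Ctx}, using uniqueness to identify the shape produced by the derivation with the $\sh$ fixed by the reduction.

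With determinacy in hand, the remaining steps are routine. First, \textbf{progress}: if $\c$ is satisfiable and still contains a match, then the application of \Rule{Match-Ctx} closest to the root in any derivation of $\semenv \th \c$ exhibits a match whose surrounding context $\C$ satisfies $\Cshape \C \t \sh$; and since $\Cshape$ depends only on $\cerase \C$, in which the remaining still-suspended matches are erased to $\ctrue$, this match is dischargeable by our reduction. Second, \textbf{termination}: each discharge strictly decreases the number of match constraints, because in \OML the branch constraints produced by $\cmatched$ (\cref{fig:patterns-oml}) are themselves simple. Iterating from $\c = \cinfer \e \tv$, which is satisfiable since $\e$ is well typed, we reach a \emph{simple} constraint $\c^{*}$ with $\c \cequiv \c^{*}$ (equivalence modulo the fresh, existentially bound shape variables $\tvs$ introduced along the way, which do not affect the solution set on $\tv$).

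Finally I would invoke principality for the match-free fragment. The constraint $\c^{*}$ lives in the simple sublanguage, which is precisely the constraint-based presentation of Damas--Milner polymorphism of \citet*{Pottier-Remy/emlti}; there, satisfiable constraints admit principal solutions, computed through most general unifiers, with let-bindings carrying the principal constraint abstractions $\cabs \tv {\cinfer \e \tv}$. Reading off the principal solution $\t$ for $\tv$ from $\c^{*}$ and transporting it back along $\c \cequiv \c^{*}$ and constraint generation yields the principal type of $\e$. I expect the determinacy lemma to be the main obstacle: it is what guarantees that the interaction of suspended matches with let-generalization never introduces a branching choice, and hence that principality --- not merely typability --- survives.
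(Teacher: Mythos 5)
Your overall architecture (reduce to constraints, eliminate the suspended matches, fall back on principality of the match-free fragment) is workable, but it is not the paper's route, and it has one genuine gap at the transport step. The paper's proof of \cref{thm:principal-types} does not work with ground solutions at all: it uses the appendix theorem \cref{thm:soundness-and-completeness}, namely $\Th \e : \sub(\tv)$ if and only if $\sub$ is a unifier of $\csem{\e : \tv}$ --- soundness and completeness stated \emph{per substitution}, not per ground type. From that, clause (i) is immediate once a most general unifier $\sub$ is read off the solved form, and clause (ii) is exactly \cref{lem:mgus}. Your ground-instance refinement is strictly weaker, and your claim that ``both clauses follow by the closure of the typing judgment under \Rule{Gen} and \Rule{Inst}'' does not hold: \Rule{Gen} and \Rule{Inst} move between a scheme and its instances; they cannot promote the family of ground typings $\set{\th \e : \theta(\t)}$ (for all ground $\theta$) to the non-ground typing $\th \e : \t$. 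To repair clause (i) you would need an un-substitution (skolemization) lemma for the \OML judgment, and clause (ii) additionally needs substitution-stability of typing plus the first-order matching lemma you allude to --- standard facts, but nontrivial here because of the contextual rules, and all avoidable by proving completeness per-unifier, which is what the paper actually does.

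The middle of your proof is a genuinely different route from the paper's, and a defensible one. The paper never eliminates matches semantically: it invokes \cref{corollary:correctness} --- progress, preservation and termination of the operational solver, with its syntactic unicity judgment $\th \Cshape \C \t \sh$, regional let-constraints, and incremental instantiation --- to rewrite $\cinfer \e \tv$ into an equivalent solved form. Your alternative (discharge one match at a time under the semantic condition $\Cshape \C \t \sh$, then appeal to \citet*{Pottier-Remy/emlti} on the simple residue) avoids the incremental-instantiation machinery entirely, at the price of establishing principality only through the simple fragment. Its crux is exactly what you identify: the forward direction of your ``determinacy of discharging'' is the paper's \cref{lem:susp-inversion}, whose proof (via \cref{thm:canonicalization}) must handle the case where a satisfying derivation discharges \emph{other} matches first, using a two-hole-context argument and simple congruence to show that the shape fixed there agrees with yours; your one-line ``replay \Rule{Match-Ctx}'' understates this, though the lemma is true. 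Also note that uniqueness of $\sh$ is only non-vacuous once you argue that satisfiability of $\c$ forces satisfiability of $\cerase{\C\where{\cunif \t \gt}}$ for some $\gt$ --- otherwise unicity holds vacuously for \emph{every} shape. Your progress step implicitly needs composability of unicity (\cref{lem:compose-unicity}) to extend the context of the root-most \Rule{Match-Ctx} application to the full constraint, and your termination argument (the match count strictly decreases because \OML branch constraints are simple) is correct.
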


\section{Constraint solving}
\label{sec:solving}

\parcomment{Intro}

We now present a machine for solving constraints in our language. The solver
operates as a rewriting system on constraints $\c \csolve \cp$. Once no
further transitions are applicable, \ie $\c \cnsolve$, the constraint $\c$
is either a solved form---from which we can read off a most general
solution---or unsatisfiable (if the constraint
$\c$ is closed).

\paragraph{Conjunction notation}

Our rewriting rules frequently manipulate collections of constraints. We write
$\bar\c$ for a (possibly empty) sequence of constraints interpreted as their
conjunction $\cAnd_i \ci$. The empty sequence denotes $\ctrue$.

\subsection{Unification}

\begin{figure}[tp]
\begin{mathparsubfig}
  {fig:unification-syntax-and-semantics}
  {Syntax and semantics of unification problems.}
  \begin{minipage}[l]{0.7\textwidth}%
    \begin{bnfgrammar}[\noleftfill]%
      \entry[Unification problems]{\up}{
        \ctrue \and \cfalse \and \upa \cand \upb \and \cexists \tv \up \and \ueq
        \uad\strut
      } \\
      \entry[Multi-equations]{\ueq}{
        \eset \mid \cunif \t \ueq
      } \\
      \entry[Constraints]{\c}{
        \dots \and \ueq
      } \\
      \entry[Unification context]{\Up}{
        \hole
        \and \Up \cand \upb
        \and \upa \cand \Up
        \and \cexists \tv \Up
      }
    \end{bnfgrammar}
    \hfill
  \end{minipage}
  \hfill
  \vcenter{\hbox{
    \infer[Multi-Unif]
      {\forall {\t \in \ueq},\, \semenv(\t) = \gt}
      {\semenv \th \ueq}
  }}
\end{mathparsubfig}

\medskip

\begin{mathparsubfig}
  {fig:unification-algorithm}
  {Unification algorithm as a series of rewriting rules
   $\upa \unif \upb$. All shapes are principal.}
   \rewrite[U-Exists]
      {(\cexists \alpha \upa) \cand \upb }{ \tv \disjoint \upb}
      {\cexists \tv {\upa \cand \upb}}

    \rewrite[U-Cycle]
      {\up }{ \cyclic \up}
      {\cfalse}

    \rewrite[U-True]
      {\up \cand \ctrue}
      {}
      {\up}

    \rewrite[U-False]
      {\Up\where\cfalse }{ \Up \neq \hole}
      {\cfalse}

    \rewrite[U-Merge]
      {\cunif \tv \ueqa \cand \cunif \tv \ueqb}
      {}
      {\cunif \tv {\cunif \ueqa \ueqb}}

    \rewrite[U-Stutter]
      {\cunif \tv {\cunif \tv \ueq}}
      {}
      {\cunif \tv \ueq}

    \rewrite[U-Name]
      {\cunif {\pshapp \parens{\tys, \ti, \typs}} \ueq }
      { \tv \disjoint \tys, \typs, \ueq \\ \ti \notin \TyVars}
      {\cexists \tv {\cunif \tv \ti \cand \cunif {\pshapp \parens{\tys, \tv, \typs}} \ueq}}

    \rewrite[U-Decomp]
      {\cunif {\pshapp \tvs} {\cunif {\pshapp \tvbs} \ueq}}
      {}
      {\cunif {\pshapp \tvs} \ueq \cand \cunif \tvs \tvbs}

    \rewrite[U-Clash]
      {\cunif {\pshapp \tvs} {\cunif {\pshapp[\shp]\tvbs } \ueq }}{
       \sh \neq \shp}
      {\cfalse}

    \rewrite[U-Trivial]
      {\ueq }
      {|\ueq| \leq 1}
      {\ctrue}
\end{mathparsubfig}
\caption{Unification: syntax, semantics, and the rewriting-based solver $\upa \unif \upb$.}
\label{fig:unification}
\end{figure}

Our constraints ultimately reduce to equations between types, which we solve
using first-order unification. Like our solver, we specify unification as a
non-deterministic rewriting relation between \emph{unification problems} $\upa
\unif \upb$, that eventually reduces to a solved form $\hat\up$ or to $\cfalse$.

\parcomment{Unification problems and multi-equations}

Unification problems $\up$
(\cref{fig:unification-syntax-and-semantics}) are a restricted subset
of constraints, extended with \emph{multi-equations}
\citep*{\BBemlti}---a multi-set of types considered
equal. These generalize binary equalities: $\semenv$ satisfies
a multi-equation $\ueq$ if all of its members are mapped to a single
ground type $\gt$ (\Rule{Multi-Unif}). Multi-equations are
considered equal modulo permutation of their members.

The unification rules are listed in \cref{fig:unification-algorithm}. Rewriting
proceeds under an arbitrary context $\Up$, modulo $\alpha$-equivalence and
associativity and commutativity of conjunctions.
Our algorithm is largely standard~\citep*{\BBemlti}, with its main
novelty being the use of \emph{canonical principal shapes} in place of type
constructors. This uniform treatment of monotypes and \polytypes simplifies
unification and improves on the previous treatment of \polytype unification
\citep{\BBpolyml}.

\parcomment{Explanation of the rules}

We briefly summarize the role of each rule. \Rule{U-Exists} lifts existential
quantifiers, enabling applications of \Rule{U-Merge} and \Rule{U-Cycle} since
all multi-equations eventually become part of a single conjunction.
\Rule{U-Merge} combines multi-equations sharing a common variable and
\Rule{U-Stutter} removes duplicate variables. \Rule{U-Decomp} decomposes equal
types with matching shapes into equalities between their subcomponents, while
\Rule{U-Clash} detects shape mismatches that result in failure. \Rule{U-Name}
introduces fresh variables for subcomponents, ensuring unification operates on
\emph{shallow terms}, making sharing of type variables explicit and avoiding
copying types in rules such as \Rule{U-Decomp}. \Rule{U-True} and
\Rule{U-Trivial} eliminate trivial constraints, and \Rule{U-False} propagates
failure.
Finally, \Rule{U-Cycle} implements the \emph{occurs check}, ensuring that a
type variable does not occur in the type it is being unified with. This is a
necessary condition for unification, as it would otherwise lead to infinite
types.
This is formalized by the relation $\tv \prec_\up \tvb$, which indicates that
$\tv$ occurs in a non-variable type $\t$ equated to $\tvb$ in $\up$. Concretely, this holds when 
$\up = \Up\where{\cunif \tvb {\cunif \t \ueq}}$,
$\tv \in \fvs \t$, $\t \notin \TyVars$ and $\tv, \tvb \disjoint \bvs \Up$.
A unification problem $\up$ is said to be cyclic, written $\cyclic \up$, if
$\tv \prec_\up^+ \tv$ for some $\tv$, where $\prec_\up^+$ denotes the 
transitive closure of $\prec_\up$. 


\begin{definition}[Solved form $\hat\up$]
\label{def:solved-form}
  We write $\hat\up$ for constraints in \emph{solved form}, that is,
  constraints of the form $\cexists \tvs {\ueqs}$, where:
\begin{enumerate*}
  \item
    each $\ueqi$ contains at most one non-variable type;
  \item
    each type variable may occur as a member of at most one multi-equation
    $\ueqi$;
  \item the constraint is acyclic.
\end{enumerate*}\relax
\end{definition}

\subsection{Administrative constraints}
\label{sec/solving/admin}

\parcomment {Intro: new administrative constraint formers!}

Section \cref{sec:constraints} introduced the constraint language used for
constraint generation (\cref{sec/oml/constraints}). In this section we extend
the language with a small set of \emph{administrative constraints}: auxiliary
constructs that never appear in generated constraints, but are introduced
internally by the solver to manage generalization and incremental
instantiation.

\parcomment {Point: What for?}

These administrative constraints enable the solver to represent and refine
\emph{partial type schemes} as solving progresses, capturing intermediate
generalization states and tracking how instantiations evolve as suspended
constraints are discharged.

\begin{mathparfig}[tp]
  {fig:constraint-let-regions}
  {Syntax and semantics of region-based $\Let$ and
   incremental instantiation constraints.}
  \begin{bnfgrammar}
    \entry[Instantiation variables]{\inst}{}\\
    \entry[Constraints]{\c}{
       \dots \let \and \wand
       \and \cletr \x \tv \tvs \ca \cb
       \and \cexistsi \inst \x \c
       \and \cpinst \inst \tv \t
    }\\[1ex]
    \entry[Ground regions]{\gr}{
      \greg \gt \semenv \qquad\qquad\! \parens{\gr \in \GroundRegion}
    }\\
    \entrysubseteq[Sets of ground regions]{\gabsr}
      {\GroundRegion}\\
    \entry[Semantic environments]{\semenv}{
      \dots \let \and \wand
      \and \semenv\where{\x \is \gabsr}
      \and \semenv\where{\inst \is \semenvp}
    }
  \end{bnfgrammar}
  \par
  \semenv(\cabsr \tv \tvs \c) \uad\eqdef\uad
    \set {\greg \gt {\semenv\where{\tv \is \gt, \tvs \is \gts}} \in \GroundRegion
         : \semenv\where{\tv \is \gt, \tvs \is \gts} \th \c
         }
  \par
  \infer[LetR]
    {\gabsr = \semenv(\cabsr \tv \tvs \ca) \\
     \gabsr \neq \emptyset \\\\
     \semenv\where{\x \is \gabsr} \th \cb}
    {\semenv \th \cletr \x \tv \tvs \ca \cb}

  \infer[AppR]
    {\greg {\semenv(\t)} \wild \in \semenv(\x)  }
    {\semenv \th \capp \x \t}

  \infer[Exists-Inst]
    {\greg \wild \semenvp \in \semenv(\x) \\\\
     \semenv\where{\inst \is \semenvp} \th \c}
    {\semenv \th \cexistsi \inst \x \c}

  \infer[Incr-Inst]
    {\semenv(\inst)(\tv) = \semenv(\t) }
    {\semenv \th \cpinst \inst \tv \t}
\end{mathparfig}

\paragraph{Regional let-constraints}

\parcomment{Recall the standard solving form for let-constraints}

As is standard in constraint-based formulations of \ML type inference
\citep{\BBemlti}, generalization-related rules operate on constraints
of the form $\clet \x \tv {\exi \tvs \ca} \cb$. The outermost existentially
quantified variables $\tvs$ correspond to the \emph{potentially generalizable}
variables.
\parcomment{What is a region?}%
%
We refer to this existential prefix $\exists \tvs$ as a \emph{region}.
Intuitively, a region delimits the scope of variables that may be generalized
once the abstraction has been fully solved.

\parcomment{Explicit regions}%
To make regions explicit, we introduce regional let-constraints $\cletr \x
\tv \tvs \ca \cb$ (\cref{fig:constraint-let-regions}), where $\tv$ is the
\emph{root} of the region and $\tvs$ are auxiliary existential variables.
The order of $\tvs$ is immaterial; regions are considered equal
up to permutation of these variables.

\parcomment{Semantics}

Satisfiability of regional let-constraints is defined in
\cref{fig:constraint-let-regions}. The semantics of an abstraction with a
region, written $\semenv(\cabsr \tv \tvs \c)$, is a set $\gabsr$ of
\emph{ground regions} that satisfy $\c$. A ground region $\gr$ is a pair $\greg
\gt \semenv$; it satisfies the constraint abstraction $\cabsr \tv \tvs \c$ if
$\semenv$ satisfies $\c$, and $\gt$ is $\semenv(\tv)$. Intuitively, the region
environment $\semenv$ carries a ground type for each inference variable $\tvb$
occurring in $\c$; in particular, a ground instance for $\tv$ and $\tvs$.

The rules \Rule{LetR} and \Rule{AppR} are the \emph{regional counterparts} of
the \Rule{Let} and \Rule{App} rules introduced in \cref{sec:constraints}. They
are semantically equivalent to their non-regional forms, except that each
operates over sets of ground regions ($\gabsr$) rather than sets of ground
types ($\gabs$). The additional environment $\semenv$ carried by these ground
regions $\greg \gt \semenv$ is semantically inert for \Rule{LetR} and
\Rule{AppR} themselves, but will become important later, in the semantics of
\emph{incremental instantiation} constraints (\cref{sec/solver/incr-inst}).

Regional let-constraints strictly generalize ordinary let-constraints, as
captured by the equivalence:
\begin{mathpar}
  \clet \x \tv {\exi \tvs \ca} \cb \Wide\cequiv \cletr \x \tv \tvs \ca \cb
\end{mathpar}

\paragraph{The trouble with lets}


\parcomment{Previous ways of solving abstractions (aka generalization)}


Solving let-constraints is deceptively difficult.
Naively, let-constraints (or \emph{generalization} constraints) could be
solved by copying constraints:
\begin{mathpar}
\hfil
\rewrite
    {\cletr \x \tv \tvs \ca {\C\where{\capp \x \t}}}
    {\tv, \tvs \disjoint \t \\ x \disjoint \bvs \C}
    {\cletr \x \tv \tvs \ca {\C\where{\cexists {\tv, \tvs} \cunif \tv \t \cand \ca}}}
\eqno
{\DefTirName{S-Let-App-Beta}}
\end{mathpar}
This rule, due to \citet*{\BBemlti}, resembles $\beta$-reduction in
some presentations of explicit substitutions. While this rule is sound
when $\ca$ is a \emph{simple} constraint, it becomes unsound for abstractions
containing suspended constraints.

\begin{local}
\def \xgetx{{\ttlab {getx}}}
\def \xdiag{{\ttlab {diag}}}
\def \elabx{{\ttlab x}}
\def \tvp {{\tv_{\ttlab p}}}
\def \tvgp {{\tv_{\ttlab {gp}}}}
\def \tvgetx {{\tva_\xgetx}}
\def \tvgetxret {\tvc}
\def \tgpoint {\ttlab {gpoint}}

\parcomment{Example}
To see why, consider again the example \code{ex_8} from \cref{sec/overview/omni}:%
\begin{program}[input]
type 'a gpoint = { x : 'a; y : 'a }
let diag (n : 'a) : 'a gpoint = { x = n; y = n }
°\halfline°
let ex_8 gp =                                           °\ocamlflags 20° 
  let getx p = p.x in getx (diag 42), (getx gp : float)
\end{program}
As explained earlier, this program is clearly well-typed: the type of \code{p}
is unambiguously determined as \code{'b gpoint} through \emph{backpropagation}
from the first application of \code{getx} (\code{getx (diag 42)}).
The (simplified) generated constraint for \code{ex_8} contains:
\begin{mathpar}
  \begin{array}{rl}
  \cexists {\tvgp, \tvd}
    &\Let \xgetx = \cabs \tvgetx
  {\cexists {\tvp, \tvgetxret}
    {\bigwedge \PARENS {
      \cunif \tvgetx {\tvp \to \tvgetxret} \\
      \cmatch \tvp {\cbranch {(\cpatrcd \ct)}
         {\labfrom {\ttlab \x} \ct \leq \tvp \to \tvgetxret}}
       }}}
  \\[2ex]
    & \In {\capp \xgetx \parens{\trcd \tgpoint \tint \to \tvd}
      \cand \capp \xgetx \parens{\tvgp \to \tfloat}}
  \end{array}
\end{mathpar}
  If we now apply \Rule{S-Let-App-Beta} to both applications of $\xgetx$
  (removing the let-constraint for concision), we obtain:
\begin{mathpar}
  \begin{array}{rl}
    \cexists {\tvgp, \tvd} {} &

  \cexists {\tvp_1, \tvgetxret_1}
\bigwedge \PARENS {
  \cunif {\trcd \tgpoint \tint \to \tvd} {\tvp_1 \to \tvgetxret_1} \\
      \cmatch {\tvp_1} {\cbranch {(\cpatrcd \ct)}
         {\labfrom {\ttlab \x} \ct \leq \tvp_1 \to \tvgetxret_1}}
    } \\

    \cand & \cexists {\tvp_2, \tvgetxret_2}
\bigwedge \PARENS {
  \cunif {\tvgp \to \tfloat} {\tvp_2 \to \tvgetxret_2} \\
      \cmatch {\tvp_2} {\cbranch {(\cpatrcd \ct)}
         {\labfrom {\ttlab \x} \ct \leq \tvp_2 \to \tvgetxret_2}}
    }
  \end{array}
\end{mathpar}
The first conjunct is satisfiable, since $\tvp_1$ is unified with $\trcd
\tgpoint \tint$, thereby discharging the match constraint on $\tvp_1$.  The
second, however, is not: $\tvp_2$ remains underdetermined, leaving its match
constraint unsatisfiable. Thus, although the original constraint was
satisfiable, the application of \Rule{S-Let-App-Beta} makes it
unsatisfiable.  By copying the abstraction, we lose the essential
\emph{sharing} between both instantiations---namely, that both copies of
$\tvp$ ($\tvp_1$ and $\tvp_2$) must have the same shape $\any \tvc \trcd
\tgpoint \tvc$. This loss of sharing is precisely why \Rule{S-Let-App-Beta}
is unsound in the presence of suspended constraints.

\parcomment{What about treating let-constraints monomorphically?}

A tempting alternative, used in
\citet*{\BBoutsidein} and
\citet*{benevs2025simple}, is to treat the let-bindings \emph{monomorphically},
sharing $\tvp$ directly between both applications. However, this is incomplete
with respect to \OML's type system. It forbids local let-polymorphism (commonly 
used in \OCaml) and fails to typecheck \code{ex_8}: the two calls to
\code{getx} require different instantiations of $\tvp$---\code{int gpoint} and
\code{float gpoint}, respectively.

\parcomment{Generalization in the context of constraint solving}

Even setting soundness aside, \Rule{S-Let-App-Beta} is inefficient. Each
application duplicates constraint solving work for the same abstraction.
A more efficient approach is to \emph{solve once and reuse}: first solve the
abstraction once---\eg reducing it to $\cabsr \tv \tvs \ueqs$, where $\tvs$ are
generalizable variables---and then reuse the result at each instantiation site
by only copying the solved constraint $\ueq$. This mirrors the generalization
and instantiation steps of \ML inference algorithms such as $\mathcal{W}$:
$\cabsr \tv \tvs \ueqs$ corresponds to the type scheme $\tfor \tvs
{\sub(\tv)}$, where $\sub$ is the most general unifier of $\ueqs$.
\citet*{\BBemlti} formalize this connection, and the optimized
treatment is naturally expressed as a strategy on top of their
\Rule{S-Let-App-Beta} rule.

\parcomment{The two problems and the solution}

We therefore face two related challenges: \begin{enumerate*}
  \item handle instantiation soundly in the presence of suspended constraints, and
  \item to avoid redundant work by reusing \emph{partial} results across instantiations.
\end{enumerate*}

To address both, we introduce \emph{partial type schemes}, our second novel
mechanism for omnidirectional inference. Partial type schemes are type schemes
that delay commitment to certain quantifications (\eg~$\tvp$ and~$\tvgetxret$).
Such \emph{partially generalized} variables are treated as generalized, but can
be incrementally refined in future as suspended constraints are discharged.

Returning to our running example, we begin with the partial type scheme $\all
{\tvp, \tvgetxret} \tvp \to \tvgetxret$ for \code{getx}, since the suspended
match constraint in its abstraction cannot yet be solved. We continue solving
the body of the let-constraint, tracking every instances of this partial
scheme. As type information flows back---\eg when the shape of $\tvp$ becomes
known via backpropagation from the first application of \code{getx}---the
scheme is refined to $\all {\tvb} \trcd \tgpoint \tvb \to \tvb$. The second
application of \code{getx} then updates its instantiation accordingly, unifying
$\tvp_2$ with $\trcd \tgpoint \tvb_2$ and $\tvgetxret_2$ with $\tvb_2$.

\end{local}

\paragraph{Incremental instantiation}
\label{sec/solver/incr-inst}

\parcomment {Intro partial instantiations}

To support partial type schemes, we also extend the constraint language with
\emph{incremental instantiation constraints}
(\cref{fig:constraint-let-regions}).
We introduce two new constraint formers:
\begin{enumerate}
  \item
    $\cexistsi \inst \x \c$, which binds a fresh instantiation $\inst$ of $\x$'s
    region within $\c$, and
  \item
    $\cpinst \inst \tv \t$, which asserts that the copy of $\tv$ in $\inst$
    equals~$\t$.
\end{enumerate}
The instantiation variable $\inst$ is required to ensure all incremental
instantiations $\cpinst \inst \tv \t$ are solved uniformly.


\parcomment {Conceptual model}
Within the solver, we view incremental instantiations as markers indicating
which parts of the abstraction still need to be copied, and abstractions
themselves are treated as partial type schemes.

\parcomment{Remind why we need this}

This mechanism enables efficient handling of constraint instantiations: solved
parts are reused immediately, while suspended constraints can be solved later,
further refining the abstraction and propagating new equations to all
instantiation sites.

\parcomment{Semantics}

We now turn to the semantics of incremental instantiations
(\cref{fig:constraint-let-regions}). The existential constraint $\cexistsi
\inst \x \c$ is satisfiable (\Rule{Exists-Inst}) if $\c$ is satisfiable with
$\inst$ bound to one of the region environments $\semenvp$ in the
interpretation of $\x$.
An incremental instantiation $\cpinst \inst \tv \t$ is satisfiable
(\Rule{Incr-Inst}) exactly when $\inst$'s instance of $\tv$
(\ie $\semenv(\inst)(\tv)$) is equal to $\t$.

\parcomment {Scoping}

When a regional abstraction $\cletr \x \tv \tvs \ca \cb$ contains an
incremental instantiation constraint $\cpinst \inst \tvb \t$, and $\inst$ is
an instantiation of $\x$, the variable $\tvb$ may range over any free variable
of $\ca$---including $\tv$ and the regional variables $\tvs$. In other
words, the regional let-constraint ${\cletr \x \tv \tvs \ca \cb}$ provides a
non-obvious scoping rule: the variables $\tv$ and $\tvs$ are also bound in
$\cb$, but may occur there only within incremental instantiations of
$\x$. This subtlety justifies introducing an explicit syntax for regional
let-constraints, rather than encoding them as $\clet \x \tv {\exi \tvs \ca}
\cb$.

\subsection{Solving rules}


We now gradually introduce the rules of the constraint solver itself
(\cref {fig:solver-basic,fig:solver-schemes,fig:solver-susp};
for a complete view with all rules placed together, see Appendix~\cref{app/ref/solver}).
These rules define a non-deterministic rewriting
system, operating modulo $\alpha$-equivalence, and the associativity and
commutativity of conjunction. Rewriting takes place under an arbitrary
one-hole constraint context $\C$.
%
A constraint $\c$ is satisfiable if it rewrites to a solved form $\hat\up$
(\cref{def:solved-form}); otherwise it gets stuck---in particular, $\cfalse$ is
considered a stuck constraint. To characterize all stuck constraints, we 
introduce \emph{normal forms}.
Every solved form is a normal form, but not every normal form is solved. 

\begin{definition}[Normal forms]
  \label{def/normal-forms}
  A constraint $\c$ is in \emph{normal form}, written $\hat\c$, if no 
  rewriting rules apply, \ie $\c \cnsolve$. Similarly, a constraint context 
  $\C$ is in \emph{normal form}, written $\hat\C$, if $\C\where\ctrue$ is in normal form.
\end{definition}

\paragraph{Basic rules}

\begin{mathparfig}[tp]
  {fig:solver-basic}
  {Basic rewriting rules $\ca \csolve \cb$.}
  \rewrite[S-Unif]
    {\upa}
    {\upa \unif \upb}
    {\upb}

  \rewrite[S-False]
    {\C\where\cfalse}
    {\C \neq \hole}
    {\cfalse}

  \rewrite[S-Let]
    {\clet \x \tv \ca \cb}
    {}
    {\cletr \x \tv \eset \ca \cb}

  \rewrite[S-Exists-Conj]
    {(\cexists \alpha \ca) \cand \cb}
    {\tv \disjoint \cb}
    {\cexists \tv {\ca \cand \cb}}

  \rewrite[S-Let-ExistsLeft]
    {\cletr \x \tv \tvs {\cexists \tvb \ca} \cb}
    {\tvb \disjoint \tv, \tvs, \cb}
    {\cletr \x \tv {\tvs, \tvb} \ca \cb}

  \rewrite[S-Let-ExistsRight]
    {\cletr \x \tv \tvs \ca {\cexists \tvb \cb}}
    {\tvb \disjoint \tv, \tvs, \ca}
    {\cexists \tvb {\cletr \x \tv \tvs \ca \cb}}

  \rewrite[S-Let-ConjLeft]
    {\cletr \x \tv \tvs {\ca \cand \cb} \cc}
    {\ca \disjoint \tv, \tvs}
    {\ca \cand \cletr \x \tv \tvs \cb \cc}

  \rewrite[S-Let-ConjRight]
    {\cletr \x \tv \tvs \ca (\cb \cand \cc)}
    {\x \disjoint \cb}
    {\cb \cand \cletr \x \tv \tvs \ca \cc}
\end{mathparfig}

\parcomment{Unification}


\cref{fig:solver-basic} contains a selected set of basic solving rules, drawn
from the standard repertoire of small-step constraint solvers for \ML type inference.
\parcomment{Explanation of rules}
\Rule{S-Unif} invokes the unification algorithm on the
current unification problem. The unification algorithm itself is treated as a
black box by the solver, so the system could be extended with any
equational theory of types implemented by the unification algorithm.
\Rule{S-Let} rewrites let-constraints into regional form.
\Rule{S-Exists-Conj} lifts existentials across conjunctions;
\Rule{S-Let-ExistsLeft} and \Rule{S-Let-ExistsRight} lift existentials across
let-binders; \Rule{S-Let-ConjLeft}, \Rule{S-Let-ConjRight} hoist constraints
out of let-binders when they are independent of the local variables.
Collectively, these lifting rules normalize the structure of each region into a
block of existentially bound variables, whose body consists of a conjunction of
solved multi-equations followed by a residual constraint---typically an
instantiation, let-binding, or suspended constraint.

\parcomment{\OML constraints do not need dedicated rules}

\OML-specific constraints, such as the label and \polytype instantiation
constraints ($\labfrom \elab \ct \leq \ta \to \tb$, $\cleq \cscm \t$,
\etc), require no special treatment in our solver. Once their pattern variables
are substituted---after solving a match constraint---they are desugared into
constraints already handled by the solver.

\begin{mathparfig}[tp]
  {fig:solver-schemes}
  {Solving rules for let-constraints and instantiations.}
\rewrite[S-Inst-Copy]
  {\cletr \x \tv \tvs {\c} \C\where{\cpapp \x \tvb \tvc \inst} \\}
   {\acyclic {\c} \\
    \x \disjoint \bvs \C \\\\
    \c = \cp \cand \cunif \tvb {\cunif {\shapp \tvbs} \ueq}\\
    \tvb \in \reg \tv \tvs \\
    \tvbs' \disjoint \tvb, \tvc, \tvbs}
  {{\begin{array}{l}\cletr \x \tv \tvs {\c}
      \C\where{\cexists {\tvbs'}
      \cunif \tvc {\shapp \tvbs'} \cand \overline{\cpapp \x {\tvb} {\tvb'} \inst}}\end{array}}}

\rewrite[S-Inst-Unif]
  {\cpinst \inst \tvb \tvca \cand \cpinst \inst \tvb \tvcb}
  {}
  {\cpinst \inst \tvb \tvca \cand \cunif \tvca \tvcb}

\rewrite[S-Inst-Poly]
  {\cletr \x \tv {\tvs} {\ueqs \cand \c}
      {\C\where{\cpapp \x \tvp \tvc \inst}}}
  {\cfor \tvb \cexists {\tvbs} {\ueqs} \cequiv \ctrue \\\\
   \tvb, \tvbs \subseteq \tv, \tvs \\
   \tvbs \disjoint \c \\
   \inst(\tvb) \disjoint \insts \C \\
   \x \disjoint \bvs \C}
  {\cletr \x \tv {\tvs} {\ueqs \cand \c} {\C\where\ctrue}}

\rewrite[S-Inst-Mono]
  {\cletr \x \tv \tvs \c {\C\where{\cpapp \x \tvb \tvc \inst}} \\}
  {\tvb \notin \reg \tv \tvs \\ \x, \tvb \disjoint \bvs \C}
  {\cletr \x \tv \tvs \c {\C\where{\cunif \tvb \tvc}}}
\\
\rewrite[S-Let-AppR]
  {\cletr \x \tv \tvs \c {\C\where{\capp \x \t}} \\}
  {\tvc \disjoint \t \\
   \x \disjoint \bvs \C}
  {{\begin{array}{l}\cletr \x \tv \tvs \c
     {\C\where{\cexistsi {\tvc, \inst} \x
              {\cpinst \inst \tv \tvc \cand \cunif \tvc \t}}}\end{array}}}

\rewrite[S-Let-Solve]
  {\cletr \x \tv \tvs \ueqs \c\\}
  {\cexists {\tv, \tvs} \ueqs \cequiv \ctrue \\
   \x \disjoint \c}
  {\c}

\rewrite[S-Compress]
  {\cletr \x \tv {\tvs, \tvb} {\ca \cand \cunif \tvb {\cunif \tvc \ueq}} {\cb}}
  {\tvb \neq \tvc}
  {\cletr \x \tv {\tvs}
     {\ca\where{\tvb \is \tvc}
      \cand \cunif \tvc {\ueq\where{\tvb \is \tvc}}}
     {\cb\where{\x(\tvb) \is \tvc}}}

\rewrite[S-Exists-Lower]
  {\cletr \x \tv {\tvas, \tvbs} \ca \cb\\}
  {\th \cdetermines {\cexists {\tv, \tvas} \ca} \tvbs}
  {\cexists \tvbs \cletr \x \tv \tvas \ca \cb}

\inferrule
  [Det-Dom]
  {\tvc \disjoint \tvbs, \tvas \\ \tvs \subseteq \fvs \ueq}
  {\th \cdetermines {\cexists \tvbs \c \cand \cunif \tvc \ueq} \tvs}

\inferrule
  [Det-Esc]
  {\fvs \t \disjoint \tvs, \tvbs}
  {\th \cdetermines {\cexists \tvbs \c \cand \cunif \tvs {\cunif \t \ueq}} \tvs}
\end{mathparfig}

\paragraph{Incremental instantiations}
\parcomment{Solving incremental instantiations}

\cref{fig:solver-schemes} describes the solving rules in charge of incremental
instantiation.
Incremental instantiation constraints are reduced using the following rules:
\begin{enumerate}

\item
    \Rule{S-Inst-Copy} copies the shape of a type to the
    instantiation site, 
    introducing fresh variables for each subcomponents and marking them with
    corresponding instantiation constraints.
    We write $\cpapp \x \tvb \t \inst$ as a shorthand for $\cpinst \inst \tvb \t$
    when $\inst$ is bound with $\exists \inst^\x$ in the context. To ensure
    termination, the abstraction must contain acyclic types.

  \item
    \Rule{S-Inst-Unif} unifies two instantiations if they both
    refer to the same source variable $\tvb$ at the same instantiation site
    $\inst$. 
    
\end{enumerate}
There are three cases in which an instantiation constraint is eliminated:
\begin{enumerate}
  \item
   \sloppy
    A base type (\ie a nullary shape) is copied and no further
    instantiations are needed (\Rule{S-Inst-Copy}).
    
  \item
    The copied variable $\tvp$ is polymorphic, and thus the instantiation
    constraint imposes no restriction (\Rule{S-Inst-Poly}), provided no
    other instantiations of $\tvp$ exist for the same instantiation $\inst$ (if
    not, then apply \Rule{S-Inst-Unif}).

  \item
    The copy is monomorphic and in scope, so we unify it directly
    (\Rule{S-Inst-Mono}).
\end{enumerate}


\parcomment{S-Let-AppR}

\Rule{S-Let-AppR} rewrites an instantiation constraint $\capp \x \t$ introducing an
incremental instantiation constraint $\cpinst \inst \tv \tvc$. Here, $\inst$ is
a fresh instantiation of $\x$, $\tv$ is the \emph{root} of $\x$'s region, and
$\tvc$ is a fresh alias for $\t$. We introduce $\tvc$ explicitly, since our
rewriting rules for incremental instantiations generally assume that the copied
type is a variable rather than an arbitrary type.


\paragraph{Let-constraints}

\cref{fig:solver-schemes} also presents the rules governing generalization.
\parcomment{Cleaning up partial instantiations and let-constraints}
\Rule{S-Let-Solve} removes a let-constraint when the bound term
variable is unused and the abstraction is satisfiable. \Rule{S-Compress}
determines that a regional variable $\tvb$ is an an alias for $\tvc$. We
replace every free occurrence of $\tvb$ with $\tvc$---\emph{including} the
domains of any incremental instantiation constraints, written as the
substitution $\where{\x(\tvb) \is \tvc}$.

Conceptually, \Rule{S-Compress} acts a variable-level analogue of
\Rule{S-Inst-Copy}: both rules copy solved constraints $\ueq$ from an
abstraction to its instantiation constraints. While \Rule{S-Inst-Copy}
propagates the head shape of a multi-equation, \Rule{S-Compress} propagates
equalities between variables within a multi-equation, thereby enabling
subsequent applications of \Rule{S-Inst-Unif}.

\parcomment{Lowering}

\Rule{S-Exists-Lower} implements the non-trivial case of lowering
existentials across let-binders. It identifies a subset of variables in
the region of a let-constraint that are unified with variables from
outside the region. Such variables are considered monomorphic and thus
cannot be generalized; they can instead be safely lowered to the outer
scope.

\parcomment {Determines}

This is the case when the types of $\tvbs$ are \emph{determined} in a unique
way. In short, $\c$ determines $\tvbs$ if and only if the solutions for
$\tvbs$ are uniquely fixed by the solutions to other variables in $\c$.
\begin{definition}
  $\cdetermines \c \tvbs$ if and only if every ground assignments $\semenv$
  and $\semenvp$ that satisfy (the erasure of) $\c$ and coincide outside of
  $\tvbs$ coincide on $\tvbs$ as well.
  \belowdisplayskip 0em
  \begin{mathpar}
    \cdetermines \c \tvb \uad\eqdef\uad \all {\semenv, \semenvp} \uad
      \semenv \th \cerase \c
      \wedge \semenvp \th \cerase \c
      \wedge \semenv =_{\setminus \tvbs} \semenvp
      \implies
      \semenv = \semenvp
  \end{mathpar}
\end{definition}
\parcomment {How the determines relation corresponds to ML}
Conceptually, this corresponds to the negation of the generalization
condition in \ML: a type variable \emph{cannot} be generalized if it appears
in the typing context. In the constraint setting, it \emph{cannot} be
generalized if it depends on variables from outside the region. For
instance, $\cexists \tvb \cunif \tv {\tvb \to \tvc}$ determines $\tvc$,
as $\tv$ is free and therefore constrains the solution of $\tvc$.

\parcomment{How to decide the relation}

To decide when $\cdetermines \c \tvs$, we introduce the judgment $\th
\cdetermines \c \tvs$, which syntactically proves that $\tvs$ are determined
in $\c$.
If $\c$ is of the form $\cexists \tvbs \cp$ where $\tvbs \disjoint \tvs$, then
we search for a multi-equation $\ueq$ in $\cp$ of the form:
\begin{enumerate*}
  \item[(\Rule{Det-Dom})]
    $\cunif \tvc \ueq'$ where $\tvc \disjoint \tvs, \tvbs$ and
    $\tvs \subseteq \fvs {\ueq'}$, or
  \item[(\Rule{Det-Esc})]
    $\cunif \tvs {\cunif \t \ueq'}$ where
    $\fvs \t \disjoint \tvs, \tvbs$.
\end{enumerate*}
This syntactic relation coincides with the semantic definition of
determinacy whenever $\c$ is in solved form. Otherwise, it is a
sound approximation of the semantic definition.

\parcomment{Why we lower?}

Lowering such variables improves solver efficiency. It avoids unnecessary
duplication of work that would otherwise occur via \Rule{S-Inst-Copy}. Because
these variables are determined by monomorphic ones, lowering allows the solver
to apply \Rule{S-Inst-Mono} directly at instantiation sites, rather than
duplicating equivalent constraints that ultimately express the same semantic
fact---that the variable is monomorphic and shared across instances.

\paragraph{Suspended match constraints}

\begin{mathparfig}[tp]
  {fig:solver-susp}
  {Rewriting rules for suspended match constraints.}
\rewrite[S-Match-Ctx]
  {\C\where{\cmatch \t \cbrs}}
  {\th \Cshape \C \t \sh}
  {\C\where{\cmatched \t {\sh} \cbrs}}

\infer[Uni-Var]
  {\color{gray}\tv \disjoint \bvs \Cb}
  {\th \Cshape {\Ca\where{\cunif \tv
     {\cunif \t \ueq} \cand \Cb\where{\hole}}} \tv {\shape \t}}

\infer[Uni-Type]
  {{\color{gray}\t \notin \TyVars}}
  {\th \Cshape \C \t {\shape \t}}

\infer[Uni-BackProp]
  {\th \Cshape{\parens{\cletr \x \tv \tvs {\Ca\where{\ctrue}}
  {\Cb\where{\cpapp \x \tvp \tvc \inst \cand \hole}}}} \tvc \sh \\
   \color{gray}\tvp \in \tv, \tvs \\
   \color{gray}\x \disjoint \bvs \Cb \\
   \color{gray}\tvp \disjoint \bvs \Ca}
  {\th \Cshape
     {\parens{\cletr \x \tv \tvs {\Ca\where{\hole}}
                  {\Cb\where{\cpapp \x \tvp \tvc \inst}}}}
     \tvp \sh}
\end{mathparfig}

\cref{fig:solver-susp} describes the rules for discharging suspended
constraints.
\parcomment{S-Match-Ctx}
%
\break \Rule{S-Match-Ctx} solves suspended
match constraints. It would not be effective to allow rewriting whenever the
unicity condition $\Cshape \C \t \sh$ holds, because it is not a-priori
feasible to check or decide this semantic condition which quantifies over all
solutions. Instead we introduce a restricted, decidable approximation via
 three syntactic ``unicity rules'' that form the judgment $\th \Cshape \C \t \sh$.

The unicity rule \Rule{Uni-Type} applies when $\t$ is a non-variable type $\t$,
in which case the shape is simply $\shape \t$. \Rule{Uni-Var} applies when the
scrutinee is a variable $\tv$ and the context establishes that $\tv$ is equal
to some non-variable type $\t$ by exhibiting an equality $\cunif \tv {\cunif \t
\ueq}$ and $\t$ is a non-variable type. In this case, the shape of $\tv$ is
$\shape \t$.

Finally, \Rule{Uni-BackProp} expresses
\emph{backpropagation}, previously illustrated in \cref{ex:backprop}. In
particular, the shape of a regional variable can sometimes be determined
from its instantiations. If an abstraction contains a regional variable
$\tvp$, and the constraint context includes an incremental instantiation
$\cpapp \x \tvp \tvc \inst$ such that the instance $\tvc$ of $\tvp$ has
the unique shape $\sh$, then $\tvp$ must also have shape $\sh$, as any
other shape would render the instantiation unsatisfiable. This rule is
well-founded because the regional depth of the hole strictly decreases
in the premise.

For a normalized context $\hat\C$ (\cref{def/normal-forms}), this syntactic
condition exactly characterizes the semantic definition of unicity: $\Cshape \C
\tau \sh$ holds if and only if $\th \Cshape \C \tau \sh$ is derivable
(Appendix~\cref{lem:unicity-completeness}).

\begin{example}
\begin{local}
\def \xgetx{{\ttlab {getx}}}
\def \xdiag{{\ttlab {diag}}}
\def \elabx{{\ttlab x}}
\def \tvp {{\tv_{\ttlab p}}}
\def \tvpc {{\tvc_{\ttlab p}}}
\def \tvpcp {{\tvc_{\ttlab p}'}}
\def \tvgp {{\tv_{\ttlab {gp}}}}
\def \tvgetx {{\tva_\xgetx}}
\def \tvcgetx {{\tvc_\xgetx}}
\def \tvcpgetx {{\tvc_\xgetx'}}
\def \tvgetxret {\tvc}
\def \tvcgetxret {\tvca}
\def \tvcpgetxret {\tvcb}
\def \tgpoint {\ttlab {gpoint}}
\def \qcommaq {\quad,\quad}
\def \commaq {,\quad}

Let us consider the (simplified) constraint generated for \ocaml{ex_8}, discussed in \cref{sec/solving/admin}. 
For readability, we begin by factoring out the following invariant two-hole constraint context $\C$:
\begin{mathpar}
  \C\Where{\hole_1, \hole_2} \quad \eqdef \quad   \begin{array}{rl}
  \cexists {\tvgp, \tvd}
    &\Let \xgetx = \cabs \tvgetx
  {\cexists {\tvp, \tvgetxret}
    { \cunif \tvgetx {\tvp \to \tvgetxret} \cand \hole_1 }}
  \\[1ex]
    & \In {\hole_2
      \cand \capp \xgetx \parens{\tvgp \to \tfloat}}
  \end{array}
\end{mathpar}
The initial constraint can then be written as follows:
\begin{mathpar}
  \begin{array}{cl}
  \cexists {\tvgp, \tvd}
    &\Let \xgetx = \cabs \tvgetx
  {\cexists {\tvp, \tvgetxret}
    {\bigwedge \PARENS {
      \cunif \tvgetx {\tvp \to \tvgetxret} \\
      \cmatch \tvp {\cbranch {(\cpatrcd \ct)}
         {\labfrom {\ttlab \x} \ct \leq \tvp \to \tvgetxret}}
       }}}
  \\[2ex]
    & \In {\capp \xgetx \parens{\trcd \tgpoint \tint \to \tvd}
      \cand \capp \xgetx \parens{\tvgp \to \tfloat}} \\[3ex]
    \equiv &\C\Where{
        \cmatch \tvp {\cbranch {(\cpatrcd \ct)}
         {\labfrom {\ttlab \x} \ct \leq \tvp \to \tvgetxret}},
        \capp \xgetx \parens{\trcd \tgpoint \tint \to \tvd}
       }
  \end{array}
\end{mathpar}
We now reduce the constraint step by step using the rules 
outlined above.
\begin{local}
\def \arraystretch{1.2}
\small
\begin{mathpar}
\begin{tabular*}{\linewidth}{;;L;;L~R}
  &\C\Where{
        \cmatch \tvp {\cbranch {(\cpatrcd \ct)}
         {\labfrom {\ttlab \x} \ct \leq \tvp \to \tvgetxret}}\qcommaq
        \capp \xgetx \parens{\trcd \tgpoint \tint \to \tvd}
      }\\[3ex]
\csolve 
  &\C\Where{
        \ldots \commaq
        \cexistsi {\tvcgetx, \inst} \xgetx \cpinst \inst \tvgetx \tvcgetx \cand \cunif {\tvcgetx} {\trcd \tgpoint \tint \to \tvd}
       }
    & \Rule{S-Let-AppR}\\[3ex]
\csolve
  &\C\Where{
      \ldots \commaq
      \cexistsi {\tvcgetx, \tvpc, \tvcgetxret, \inst} \xgetx 
        {\bigwedge \PARENS 
          { \cunif \tvcgetx {\tvpc \to \tvcgetxret} \cand \cpinst \inst \tvp \tvpc \cand \cpinst \inst \tvc \tvcgetxret \\ 
          \cunif {\tvcgetx} {\trcd \tgpoint \tint \to \tvd} }}
  }
  & \Rule{S-Inst-Copy}\\[3ex]
\csolve^+
  &\C\Where{
      \ldots \commaq
      \cexistsi {\tvcgetx, \tvpc, \tvcgetxret, \inst} \xgetx 
        {\bigwedge \PARENS 
          { \cpinst \inst \tvp \tvpc \cand \cpinst \inst \tvc \tvcgetxret \cand  \cunif {\tvcgetx} {\tvpc \to \tvcgetxret}\\ 
            \cunif \tvcgetxret \tvd \cand \cunif \tvpc {\trcd \tgpoint \tint}  }}
  }
  & \\[3ex]
\csolve
  &\C\Where{
    \cmatched \tvp {\any \eta {\trcd \tgpoint \eta}} {\cbranch {(\cpatrcd \ct)}
         {\labfrom {\ttlab \x} \ct \leq \tvp \to \tvgetxret}}\qcommaq
      \ldots
  }
  & \Rule{S-Match-Ctx}\\[2ex]
\csolve^+
  &\C\Where{
    {\labfrom {\ttlab \x} \tgpoint \leq \tvp \to \tvgetxret} \qcommaq
    \ldots
  }
  & \\[2ex]
\csolve^+
  &\C\Where{
    \cunif \tvp {\trcd \tgpoint \tvgetxret} \qcommaq
    \ldots
  }\\[2ex]
\csolve
    &\C\Where{
        \ldots \commaq
        \cexistsi {\tvcgetx, \tvpc, \tvcgetxret, \tvcgetxret', \inst} \xgetx 
          {\bigwedge \PARENS 
            {\cpinst \inst \tvgetxret \tvcgetxret \cand \cpinst \inst \tvgetxret {\tvcgetxret'} \cand \cunif {\tvcgetx} {\tvpc \to \tvcgetxret}\\ 
              \cunif \tvcgetxret \tvd \cand \cunif \tvpc {\trcd \tgpoint \tint} \\ 
            \cunif {\tvpc} {\trcd \tgpoint {\tvcgetxret'}}}}
    }
    & \Rule{S-Inst-Copy} \\[5ex]
\csolve 
    &\C\Where{
        \ldots \commaq
        \cexistsi {\tvcgetx, \tvpc, \tvcgetxret, \tvcgetxret', \inst} \xgetx 
          {\bigwedge \PARENS 
            {\cpinst \inst \tvgetxret \tvcgetxret \cand  \cunif {\tvcgetx} {\tvpc \to \tvcgetxret}\\ 
              \cunif \tvcgetxret \tvd \cand \cunif \tvpc {\trcd \tgpoint \tint} \\ 
            \cunif {\tvpc} {\trcd \tgpoint {\tvcgetxret'}} \cand \cunif {\tvcgetxret} {\tvcgetxret'}}}
    }
    & \Rule{S-Inst-Unif} \\[5ex]
\csolve^+
    &\C\Where{
        \ldots \commaq
        \cexistsi {\tvcgetx, \tvpc, \tvcgetxret, \tvcgetxret', \inst} \xgetx 
          {\bigwedge \PARENS 
            {\cpinst \inst \tvgetxret \tvcgetxret  \\ 
            \cunif {\tvcgetx} {\tvpc \to \tvcgetxret} \cand \cunif \tvpc {\trcd \tgpoint \tvcgetxret} \\ 
            \cunif {\tvcgetxret} {\cunif \tvd {\cunif {\tvcgetxret'} \tint}}}}
    }
    & \\[5ex]
\csolve
    &\C\Where{
        \ldots \commaq
        \cexistsi {\tvcgetx, \tvpc, \tvcgetxret, \tvcgetxret', \inst} \xgetx 
          {\bigwedge \PARENS 
            {\cunif {\tvcgetx} {\tvpc \to \tvcgetxret} \cand \cunif \tvpc {\trcd \tgpoint \tvcgetxret} \\ 
            \cunif {\tvcgetxret} {\cunif \tvd {\cunif {\tvcgetxret'} \tint}}}}
    }
    & \Rule{S-Inst-Poly}\\[3ex]
\end{tabular*}
\end{mathpar}
\end{local}

We begin by applying \Rule{S-Let-AppR} to the application of $\xgetx$,
introducing a fresh instantiation $\inst$. Applying \Rule{S-Inst-Copy} then
copies the arrow shape of $\tvgetx$ and introduces two incremental
instantiation constraints on the parameter type $\tvp$ and return type $\tvgetxret$. 
Using \Rule{U-Merge} and \Rule{U-Decomp}, we combine the two equations on
$\tvcgetx$, from which the solver deduces that $\tvpc$ must be the record type
$\trcd \tgpoint \tint$. 
From this equality on $\tvpc$, it follows that $\tvpc$ has the shape $\any \eta
{\trcd \tgpoint \eta}$. Together with the instantiation $\cpinst \inst \tvp
\tvpc$, we can apply \Rule{S-Match-Ctx} using \Rule{Uni-BackProp}. The match
constraint is then discharged, yielding a label instantiation constraint
forcing $\tvp$ to be $\trcd \tgpoint \tvgetxret$.
We next apply \Rule{S-Inst-Copy} to this equality on $\tvp$, introducing a
duplicate instantiation constraint on $\tvgetxret$, which is eliminated using
\Rule{S-Inst-Unif}. 
Finally, applying \Rule{U-Merge} and \Rule{U-Decomp} simplifies the remaining
equalities on $\tvcgetxret$ and $\tvpc$, and \Rule{S-Inst-Poly} removes the
last instantiation constraint on $\tvgetxret$, since it is
polymorphic.

\end{local}
\end{example}

\subsection{Metatheory}

In this section, we establish the correctness of our solver. Correctness
follows from three standard metatheoretic properties: \emph{progress},
\emph{preservation}, and \emph{termination}.  Together, they ensure that every
satisfiable (term-variable-closed) constraint eventually reduces to an
equivalent solved form.


\begin{definition}
A constraint $\c$ is term-variable-closed if all its term variables $\x$ are
bound \ie $\fvs \c \subseteq \TyVars$.
\end{definition}

\begin{lemma}[Scope preservation]
If $\ca \csolve \cb$, then $\fvs \ca \supseteq \fvs \cb$.
\end{lemma}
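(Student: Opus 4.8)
The plan is to reduce the statement to a purely local check on each base rewrite rule and then lift it through the context closure. Recall that $\csolve$ rewrites under an arbitrary one-hole constraint context $\C$, so every step decomposes as $\C\where{\ell} \csolve \C\where{r}$ for some instance $\ell \to r$ of a rule in \cref{fig:solver-basic,fig:solver-schemes,fig:solver-susp}. First I would record the elementary fact that hole-filling is monotone in free variables: for any context $\C$ we have $\fvs{\C\where{\c}} = V_\C \cup (\fvs{\c} \setminus \bvs{\C})$, where $V_\C$ is the set of variables contributed by $\C$ itself (treating the hole as closed) and $\bvs{\C}$ is the set of variables bound at the hole. Consequently, if a base rewrite satisfies $\fvs{r} \subseteq \fvs{\ell}$, then $\fvs{\C\where{r}} \subseteq \fvs{\C\where{\ell}}$ for every $\C$, since shrinking the hole's free-variable set can only shrink the result. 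It therefore suffices to verify $\fvs{r} \subseteq \fvs{\ell}$ for each rule in isolation.

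Next I would dispatch the rules by the shape of their effect on free variables. The bulk of the rules (\eg \Rule{S-Let}, and the lifting rules \Rule{S-Exists-Conj}, \Rule{S-Let-ExistsLeft}, \Rule{S-Let-ExistsRight}, \Rule{S-Let-ConjLeft}, \Rule{S-Let-ConjRight}) merely rearrange binders and conjuncts and so preserve $\fvs$ exactly; here the freshness side conditions (\eg $\tv \disjoint \cb$) are precisely what guarantees no free occurrence is captured or released. A few rules discard material---\Rule{S-False}, \Rule{S-Let-Solve}, \Rule{S-Inst-Poly}, \Rule{U-True}, \Rule{U-Trivial}---and so only remove free variables; for \Rule{S-Let-Solve} the side condition $\x \disjoint \c$ ensures the body $\c$ contains no incremental instantiation of $\x$, hence no residual free occurrence of the region roots. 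The interesting rules introduce fresh variables, but always under a binder: \Rule{U-Name} and \Rule{S-Inst-Copy} wrap their fresh variables in an existential, \Rule{S-Let-AppR} wraps $\tvc$ and $\inst$ in $\cexistsi {\tvc, \inst} \x {-}$ while its new reference to $\tv$ stays bound by the enclosing $\cletr$, and \Rule{S-Match-Ctx} introduces a \emph{closed} canonical shape $\sh$ applied to freshly existentially-bound variables (so $\fvs{\cmatched \t \sh \cbrs} \subseteq \fvs{\cmatch \t \cbrs}$ regardless of the surrounding context). Finally, \Rule{S-Compress} performs a substitution $\where{\tvb \is \tvc}$, and $\where{\x(\tvb) \is \tvc}$ inside incremental instantiations, while deleting $\tvb$ from the region; since $\tvb$ is replaced everywhere by the already-occurring variable $\tvc$, this can only shrink $\fvs$.

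Since \Rule{S-Unif} delegates to the unification relation, I would prove the analogous statement for $\unif$ as a sublemma, by the same case analysis over \cref{fig:unification-algorithm}; the only nontrivial case is again \Rule{U-Name}, whose fresh variable is existentially bound, with the remaining rules clearly non-increasing. I expect the main obstacle to be bookkeeping rather than mathematics: getting the free-variable function right for regional let-constraints $\cletr \x \tv \tvs \ca \cb$ and incremental instantiation constraints $\cpinst \inst \tv \t$, whose nonstandard scoping lets the region root $\tv$ and the auxiliary variables $\tvs$ occur in the body $\cb$ \emph{only} within incremental instantiations of $\x$. With the correct convention $\fvs{\cletr \x \tv \tvs \ca \cb} = (\fvs{\ca} \cup \fvs{\cb}) \setminus (\set{\x, \tv} \cup \tvs)$, and the corresponding refinement of $\bvs{\C}$ at a body hole, every region-manipulating rule (\Rule{S-Let-AppR}, \Rule{S-Compress}, \Rule{S-Exists-Lower}, \Rule{S-Inst-Poly}, \Rule{S-Inst-Mono}) is checked routinely, relocating or erasing bound variables without creating new free ones.
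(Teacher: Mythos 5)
Your proposal is correct and takes essentially the same approach as the paper, whose entire stated proof is ``by induction on $\ca \csolve \cb$'': since rewriting is closed under one-hole contexts, that induction unfolds to exactly your argument, namely monotonicity of hole-filling in free variables together with a per-rule check that every freshly introduced variable is bound (existentially, by an instantiation binder, or by the regional let-scoping). Your rule-by-rule verification, including the careful treatment of the nonstandard scoping of regional let-constraints and incremental instantiation constraints, simply supplies the detail that the paper's one-line proof leaves implicit.
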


\begin{theorem}[Closed Progress]
  If a term-variable-closed constraint $\c$ cannot take a step $\c \csolve \cp$,
  then either:
  \begin{enumerate}
    \item $\c$ is solved.
    \item $\c$ is $\cfalse$.
    \item for every match constraint $\c = \hat\C\where{\cmatch \tv \cbrs}$ with 
      $\hat\c$ in normal form (\cref{def/normal-forms}), 
      $\Cshape {\hat\C} \tv \sh$ does not hold for any $\sh$.
  \end{enumerate}
\end{theorem}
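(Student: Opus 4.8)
The plan is to prove the contrapositive of the trichotomy: assuming that the term-variable-closed constraint $\c$ is stuck (no step $\c \csolve \cp$ applies) and is neither solved nor $\cfalse$, I would show that every suspended match constraint it contains fails the semantic unicity condition, which is exactly case~3. The overall strategy is a \emph{saturation} argument — I read each rewrite rule as an obstruction to being stuck, so a stuck constraint must avoid the left-hand side of every rule, and I show that what survives this filtering is precisely a solved \emph{simple} part wrapped around suspended match constraints that cannot be discharged.

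First I would normalise the simple (match-free) part. This is a structural case analysis over the basic rules of \cref{fig:solver-basic}: since none fire, existentials are fully hoisted (else \Rule{S-Exists-Conj}, \Rule{S-Let-ExistsLeft} or \Rule{S-Let-ExistsRight} would apply) and every unification subproblem is a conjunction of solved multi-equations (else $\unif$ makes progress through \Rule{S-Unif}). The crux is the treatment of regions and instantiations in \cref{fig:solver-schemes}, and this is where \emph{term-variable-closedness} does the real work: it guarantees that each residual instantiation $\capp \x \t$ and each incremental instantiation $\cpapp \x \tvb \tvc \inst$ lies under its binder (a regional let-constraint $\cletr \x \tv \tvs \ca \cb$, respectively an instantiation binder $\cexistsi \inst \x \cb$), so that one of \Rule{S-Let-AppR}, \Rule{S-Inst-Copy}, \Rule{S-Inst-Unify}, \Rule{S-Inst-Mono} or \Rule{S-Inst-Poly} would otherwise apply; dually, \Rule{S-Let-Solve}, \Rule{S-Compress} and \Rule{S-Exists-Lower} must already be saturated. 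Verifying that the side conditions of these rules are exactly the negations of the obstructions to progress — in particular that lowering and compression leave nothing further to do — is the main obstacle and accounts for most of the proof.

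Next I would locate the match constraints. After the first step, the only constraint formers that can remain unsolved are suspended match constraints, and each must have a \emph{variable} matchee: were the matchee a non-variable type, \Rule{S-Match-Ctx} would fire with the shape supplied by \Rule{Uni-Type}, contradicting stuckness. Hence each occurs as $\hat\C\where{\cmatch \tv \cbrs}$ with $\hat\C$ a solved context, its simple part being in solved form by the previous step — which is exactly the shape of context demanded in case~3.

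Finally I would conclude using unicity completeness. Fix such a match constraint and suppose, for contradiction, that $\Cshape {\hat\C} \tv \sh$ held for some $\sh$. Because $\hat\C$ is solved, the equivalence between semantic and syntactic unicity for solved contexts (Appendix~\cref{lem:unicity-completeness}) yields the derivable judgment $\th \Cshape {\hat\C} \tv \sh$, which is precisely the premise of \Rule{S-Match-Ctx}; that rule would then rewrite the match constraint, contradicting stuckness. So semantic unicity fails for every $\sh$ and every remaining match constraint, giving case~3. If instead no match constraint survives, then $\c$ is in solved form (case~1), unless normalisation collapsed it to $\cfalse$ (case~2); this exhausts the alternatives. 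The delicate point to double-check throughout is that the first step genuinely leaves each $\hat\C$ \emph{solved} in the technical sense the completeness lemma requires, since that is what bridges the semantic condition in the theorem to the syntactic rule used by the solver.
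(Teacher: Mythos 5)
Your skeleton is the right one, and it matches the paper's: the paper proves a more general (open) progress theorem, \cref{thm:progress}, by structural induction on the constraint, and that induction implements exactly your saturation argument; the closed statement follows because term-variable-closedness rules out the extra stuck cases for unbound $\capp \x \t$. In particular, each match case in the paper ends with the same two ingredients you identify: \Rule{Uni-Type} (hence \Rule{S-Match-Ctx}) forces every surviving matchee to be a variable, and \cref{lem:unicity-completeness} converts a hypothetical semantic unicity $\Cshape {\hat\C} \tv \sh$ into a derivation of $\th \Cshape {\hat\C} \tv \sh$, which would let \Rule{S-Match-Ctx} fire and contradict stuckness.

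There is, however, a genuine gap in your intermediate characterization. It is not true that after saturation ``the only constraint formers that can remain unsolved are suspended match constraints,'' nor that term-variable-closedness lets the instantiation rules eliminate every residual $\cpapp \x \tvp \tvc \inst$. A stuck, closed constraint that is neither solved nor $\cfalse$ can contain irreducible incremental instantiations under a regional let: consider $\cletr \x \tv \tvs {(\cmatch \tvp \cbrs) \cand \ueqs} {\C\where{\cpapp \x \tvp \tvc \inst}}$ with $\tvp \in \reg \tv \tvs$ guarded by the match. Here \Rule{S-Inst-Copy} cannot fire ($\tvp$ heads no shaped multi-equation in the abstraction), \Rule{S-Inst-Poly} cannot fire (its side condition $\tvp \disjoint \c$ fails because $\tvp$ occurs in the match branch), \Rule{S-Inst-Mono} cannot fire ($\tvp$ is regional), and \Rule{S-Let-Solve} is blocked since the abstraction is not a solved $\ueqs$. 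So regional lets, instantiation binders, and blocked incremental instantiations survive saturation alongside the matches; this is precisely where the paper's let case does its real work, and it is not cosmetic, because whether such a constraint can still step depends on \emph{backpropagation} (\Rule{Uni-BackProp}) from the instance $\tvc$ back to $\tvp$. Your final step can absorb this once repaired: the precondition of \cref{lem:unicity-completeness} is not ``solved simple context'' but ``normalized context,'' and its normal-form grammar consists exactly of solved multi-equations, matches, incremental instantiations, and regional lets, while the syntactic judgment $\th \Cshape \C \t \sh$ it completes includes \Rule{Uni-BackProp}. So you should replace your step-3 claim by the statement that saturation yields that normal form (including the blocked-instantiation configurations above), and only then invoke the completeness lemma.
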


\vskip -\lastskip
\begin{restatable}[Termination]{theorem}{terminationBIS}
  \label{thm:termination}
  The constraint solver terminates on all inputs.
\end{restatable}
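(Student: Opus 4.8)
The plan is to prove strong normalization of the rewriting relation $\csolve$ by exhibiting a well-founded lexicographic termination measure that strictly decreases on every step, regardless of which redex and rule are chosen (the system being non-deterministic). The most significant component of the measure is the number of suspended match constraints occurring in $\c$. The first thing I would check is that this count is never increased by any rule other than \Rule{S-Match-Ctx}: no rule duplicates a subconstraint except the copying rule \Rule{S-Inst-Copy}, which copies only solved multi-equations $\ueq$ (never match constraints) from an abstraction to its instantiation sites; and for \OML the branch constraints produced by discharging a match (label and scheme instantiations) contain no match constraints. Hence \Rule{S-Match-Ctx} strictly decreases this count, every other rule leaves it constant, and the count is bounded by the number of matches in the input. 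This stratifies the whole run into finitely many \emph{epochs} separated by applications of \Rule{S-Match-Ctx}, reducing the problem to proving termination \emph{within} a single epoch.

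Within an epoch the match constraints are inert, so the solver behaves as the Pottier--R\'emy constraint solver extended only with incremental instantiation over acyclic abstractions. I would reuse their termination argument as a black box for the structural rules: unification $\unif$ terminates by the standard first-order measure (invoked via \Rule{S-Unif}), and the normalization and cleanup rules (\Rule{S-Exists-Conj}, the \Rule{S-Let-Exists*} and \Rule{S-Let-Conj*} hoisting rules, together with \Rule{S-Let-Solve}, \Rule{S-Compress}, and \Rule{S-Exists-Lower}) decrease a standard structural sub-measure counting let-binders, regional variables, and nesting depth. The genuinely new ingredient is the incremental-instantiation subsystem (\Rule{S-Let-AppR}, \Rule{S-Inst-Copy}, \Rule{S-Inst-Unify}, \Rule{S-Inst-Poly}, \Rule{S-Inst-Mono}). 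For this I would introduce two further components below the match count: first, the number of unconverted application constraints $\capp \x \t$, which \Rule{S-Let-AppR} strictly decreases and which no rule replenishes inside an epoch; and second, a \emph{copy-work} measure $M_{\mathrm{inst}}$, defined as the sum over pending incremental instantiations $\cpapp \x \tvb \tvc \inst$ of the size of the solved subtree rooted at $\tvb$ within the abstraction. The $\acyclic{\c}$ side condition on \Rule{S-Inst-Copy} guarantees this subtree is finite, and each application of \Rule{S-Inst-Copy} replaces one instantiation by instantiations on strictly smaller subcomponents, so $M_{\mathrm{inst}}$ descends; \Rule{S-Inst-Unify}, \Rule{S-Inst-Poly}, and \Rule{S-Inst-Mono} eliminate constraints outright.

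The step I expect to be the main obstacle is the interaction between copying and unification-driven refinement of abstractions: a unification step inside an abstraction may turn a previously unconstrained source variable $\tvb$ into one equated to some $\shapp \tvbs$, thereby \emph{increasing} the copy-work owed by every pending $\cpapp \x \tvb \tvc \inst$. A naive single lexicographic layer is therefore inconsistent, since such a step would simultaneously decrease the unification measure and increase $M_{\mathrm{inst}}$. Resolving this requires carefully layering the measure so that the unification/normalization component sits \emph{above} $M_{\mathrm{inst}}$, while verifying that the equations $\cunif \tvc {\shapp \tvbs'}$ introduced by \Rule{S-Inst-Copy} are already shallow and solved, so that \Rule{S-Inst-Copy} does not increase the unification measure (it may only enable a later \Rule{U-Merge}, which is a legitimate decrease). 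I would discharge this by arguing that, since no match discharging occurs within an epoch, each abstraction reaches a stable solved form after finitely many unification steps, after which $M_{\mathrm{inst}}$ is measured against a fixed finite tree and descends monotonically by acyclicity.

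Assembling these, the termination measure is the lexicographic tuple
\[
  \bigl(\, \#\{\text{suspended matches}\},\ \#\{\capp\text{ constraints}\},\ \text{unification/normalization measure},\ M_{\mathrm{inst}},\ \text{cleanup measure} \,\bigr),
\]
ordered by the product of well-founded orders. It remains to verify, rule by rule, that each $\csolve$ step strictly decreases this tuple and that no rule increases a strictly more significant component; the only delicate cases are those above, all controlled by the epoch stratification and the acyclicity precondition of \Rule{S-Inst-Copy}. This establishes that $\csolve$ admits no infinite reduction sequence, i.e.\ the solver terminates on all inputs (\cref{thm:termination}).
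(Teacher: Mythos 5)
Your first two lexicographic components coincide exactly with the paper's own stratification: \Rule{S-Match-Ctx} strictly decreases the number of suspended matches and no rule creates new ones, and within such an epoch \Rule{S-Let-AppR} strictly consumes $\capp \x \t$ constraints and nothing replenishes them. The gap lies in the remaining layers. Your tuple places the unification measure \emph{above} $M_{\mathrm{inst}}$, justified by the claim that the equations $\cunif \tvc {\shapp \tvbs'}$ created by \Rule{S-Inst-Copy} are ``shallow and solved'' and hence cost nothing. That claim is false: $\tvc$ may already occur in other multi-equations at the instantiation site, so the copied equation triggers genuinely new \Rule{U-Merge}, \Rule{U-Decomp}, and possibly \Rule{U-Clash} work, and any measure accounting for pending unification work strictly increases. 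Symmetrically---as you yourself note---a unification step inside the abstraction increases $M_{\mathrm{inst}}$. So neither ordering of these two components is consistent, and no flat lexicographic tuple over the whole constraint can work: the two kinds of work feed each other across the let-boundary.

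Your escape hatch---``each abstraction reaches a stable solved form after finitely many unification steps, after which $M_{\mathrm{inst}}$ descends against a fixed tree''---is the right intuition but is circular as stated: the abstraction $\ca$ of $\cletr \x \tv \tvs \ca \cb$ is an arbitrary constraint containing nested let-constraints, its own instantiations, and stuck matches, so ``$\ca$ stabilizes'' is an instance of the theorem being proven, not a fact you may invoke; moreover the hoisting rules (\Rule{S-Let-ConjLeft}, \Rule{S-Exists-Lower}) move material across the abstraction boundary, so ``the unification measure of $\ca$'' is not even a stable quantity. The paper closes exactly this hole with two ingredients your proposal lacks. First, a structural induction on the let-nesting (a size preserved by all non-discharge rules): by the induction hypothesis $\ca$ admits no infinite rewrite sequence, and it then suffices to prove \emph{relative} termination---between two consecutive rewrites of $\ca$, only finitely many rewrites can occur in $\cb$. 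Second, within that window it uses a single \emph{combined} weight rather than two separately ordered ones: the copy-weight of a source variable counts \emph{twice} the weight of each shape it will copy (well-defined because $\ca$ is acyclic and momentarily fixed), so one application of \Rule{S-Inst-Copy} pre-pays both for the unification work its fresh equation can cause at the instance site (the shape weight $4 + 2|\sh|$ strictly exceeds the arity $|\sh|$) and for the residual sub-instantiations it spawns. Without the nesting induction and this pre-payment trick, the acyclicity side-condition of \Rule{S-Inst-Copy} alone does not yield a decreasing measure, and your argument does not go through.
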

\vskip -\lastskip
\begin{restatable}[Preservation]{theorem}{preservationBIS}
  \label{thm:preservation}
  If $\ca \csolve \cb$, then $\ca \cequiv \cb$.
\end{restatable}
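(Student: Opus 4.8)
The plan is to prove preservation by case analysis on which rewrite rule justifies the step $\ca \csolve \cb$ (the rules of \cref{fig:solver-basic,fig:solver-schemes,fig:solver-susp}, together with \Rule{S-Unif} delegating to the unification rewrites of \cref{fig:unification-algorithm}). Since every rule fires under an arbitrary one-hole context $\C$, the enabling ingredient is a \emph{congruence} principle for $\cequiv$. The obstacle is that our semantics is \emph{not} compositional: by \Rule{Match-Ctx}, the truth of a suspended constraint $\cmatch \t \cbrs$ depends on its surrounding context through the unicity condition $\Cshape \C \t \sh$. I would therefore first isolate the compositional core and show (i) that the semantics restricted to match-free constraints is compositional, so $\cequiv$ is a congruence over \Rule{Conj}, \Rule{Exists}, \Rule{LetR}, etc.; and (ii) that the unicity predicate $\Cshape \C \t \sh$ depends only on the erasure $\cerase \C$ (\cref{def:unicity}). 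Since erasure is a homomorphism replacing matches with $\ctrue$, a solution-preserving rewrite inside $\C$ that does not itself discharge a match leaves $\cerase \C$ logically unchanged, hence preserves all ambient unicity facts. These two facts let me lift each local equivalence through its context while controlling the one part of the semantics that is context-sensitive.

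With this in place, the \emph{logical} and unification rules are routine. \Rule{S-Unif} reduces to soundness of unification, $\upa \unif \upb \implies \upa \cequiv \upb$, which I prove by case analysis on \cref{fig:unification-algorithm} (each rule is a classical multi-equation manipulation, \Rule{U-Cycle} using the occurs check and \Rule{U-Clash} using shape disjointness). The basic rules \Rule{S-Let}, \Rule{S-Exists-Conj}, \Rule{S-Let-ExistsLeft}/\Rule{S-Let-ExistsRight}, \Rule{S-Let-ConjLeft}/\Rule{S-Let-ConjRight}, and \Rule{S-False} are the standard equivalences of the Pottier--R\'emy constraint calculus, read off directly from the semantics; \Rule{S-Let} in particular is immediate from the stated equivalence $\clet \x \tv {\exi \tvs \ca} \cb \cequiv \cletr \x \tv \tvs \ca \cb$.

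The incremental-instantiation rules require unfolding the region semantics (\Rule{LetR}, \Rule{AppR}) and the new formers (\Rule{Exists-Inst}, \Rule{Incr-Inst}). The key correspondence is for \Rule{S-Let-AppR}: I must show $\capp \x \t$ is equivalent, inside the region of $\x$, to $\cexistsi {\tvc,\inst} \x {(\cpinst \inst \tv \tvc \cand \cunif \tvc \t)}$, by matching \Rule{AppR} (which asks for a ground region $\greg {\semenv(\t)} \wild \in \semenv(\x)$) against \Rule{Exists-Inst}/\Rule{Incr-Inst} (which pick a region environment $\semenvp \in \semenv(\x)$ with $\semenvp(\tv) = \semenv(\tvc) = \semenv(\t)$). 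The copying and cleanup rules \Rule{S-Inst-Copy}, \Rule{S-Inst-Unify}, \Rule{S-Inst-Poly}, and \Rule{S-Inst-Mono} then each preserve the set of ground regions interpreting $\x$. The delicate case here is \Rule{S-Exists-Lower}: lowering $\tvbs$ out of a region is sound only because those variables are \emph{determined}, so I would first establish soundness of the syntactic judgment $\th \cdetermines \c \tvbs$ with respect to the semantic $\cdetermines \c \tvbs$ (via \Rule{Det-Dom}, \Rule{Det-Esc}), and then argue that moving a determined variable across a let-binder leaves the set of ground regions interpreting the abstraction of $\x$ unchanged. \Rule{S-Compress} similarly requires that the alias substitution $\where{\tvb \is \tvc}$---extended to instantiation domains via $\where{\x(\tvb) \is \tvc}$---respects the non-obvious scoping of regional variables described in \cref{fig:constraint-let-regions}.

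I expect the main obstacle to be \Rule{S-Match-Ctx}, the only rule whose justification is inherently semantic. Here I must show $\C\where{\cmatch \t \cbrs} \cequiv \C\where{\cmatched \t \sh \cbrs}$ under the side condition $\th \Cshape \C \t \sh$. The forward direction uses soundness of the syntactic unicity judgment---$\th \Cshape \C \t \sh$ implies the semantic $\Cshape \C \t \sh$ (\cref{lem:unicity-completeness}, for solved contexts)---together with the semantic rule \Rule{Match-Ctx}; the reverse direction reads $\sh$ back off the first conjunct $\t = \shapp \tvs$ of the discharged constraint (\cref{def/discharged}). The genuinely hard part is ensuring \emph{global} consistency: because unicity is context-sensitive, I must confirm that every administrative rule which \emph{relocates} a still-suspended match constraint relative to the binders of $\C$---existential lifting, the let-hoisting rules, \Rule{S-Compress}, and \Rule{S-Exists-Lower}---leaves the ambient unicity facts intact, so that no match dischargeable before the step becomes undischargeable after it, and vice versa. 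This is exactly where fact (ii) above does the heavy lifting, and verifying it uniformly across the relocating rules is the crux of the argument.
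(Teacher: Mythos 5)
Your proposal follows essentially the same route as the paper's proof: the paper likewise isolates the match-free ("simple") fragment where congruence holds, exploits that the unicity condition depends only on erasure, and then proves contextual equivalence rule-by-rule, using the same key lemmas you identify (soundness of unification, the region-semantics correspondence for \Rule{S-Let-AppR} and the instantiation rules, soundness of the syntactic determines and unicity judgments). The only machinery the paper makes explicit that you leave implicit is how the "lifting" is formalized---a canonicalization theorem (all \Rule{Match-Ctx} uses permuted to the bottom of a derivation) plus induction on the number of suspended matches, and an inversion-of-suspension lemma giving the harder entailment $\C\where{\cmatch \t \cbrs} \centails \C\where{\cmatched \t \sh \cbrs}$ in the \Rule{S-Match-Ctx} case (your description swaps which direction is the easy application of \Rule{Match-Ctx})---but these are precisely the formal content of your facts (i) and (ii).
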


\begin{corollary}[Correctness]
  For the term-variable-closed constraint $\c$, $\c$ is satisfiable if and
  only if\/ $\c \csolve^* \hat\up$ and $\hat\up$ is a solved form equivalent
  to $\c$.
\end{corollary}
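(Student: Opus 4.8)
The plan is to obtain the corollary as a routine combination of the three preceding results---Termination, Preservation, and Closed Progress---augmented by the observation that every solved form is satisfiable. First I would lift Preservation from single steps to reduction sequences: a short induction on the length of $\c \csolve^* \cp$ gives $\c \cequiv \cp$, which discharges the ``equivalent to $\c$'' clause for free, since any $\hat\up$ reachable from $\c$ is then equivalent to it. I would also record, using the Scope preservation lemma, that term-variable-closedness is preserved along reductions: free variables only shrink under $\csolve$, so every term variable that was bound stays bound.

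For the backward implication, suppose $\c \csolve^* \hat\up$ with $\hat\up$ a solved form. I would first establish (or invoke) that solved forms are always satisfiable: given $\hat\up = \cexists\tvs{\cAnd\iton\ueqi}$, acyclicity lets us assign ground types to variables in a topological order, taking the (ground instance of the) non-variable member of each multi-equation when present and an arbitrary ground type otherwise, and propagating; this yields a model. Hence $\hat\up$ is satisfiable, and by the multi-step Preservation above $\c \cequiv \hat\up$, so $\c$ is satisfiable.

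For the forward implication, suppose $\c$ is satisfiable. By Termination the solver reduces $\c$ to a normal form $\cp$ (with $\c \csolve^* \cp$ and $\cp \cnsolve$); by multi-step Preservation $\c \cequiv \cp$, so $\cp$ is again satisfiable and, by the remark above, term-variable-closed. I then apply Closed Progress to $\cp$. Its second case, $\cp = \cfalse$, is impossible since $\cfalse$ is unsatisfiable. If I can also rule out the third case---a residual, genuinely-stuck match constraint---then $\cp$ is solved and $\hat\up \eqdef \cp$ completes the proof.

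Excluding the third case is the crux. I would argue that a satisfiable constraint cannot be stuck on a match: let $\semenv \th \cp$ witness satisfiability. Since \Rule{Match-Ctx} is the only semantic rule that can discharge a match constraint, the derivation of $\semenv \th \cp$ must apply it; consider its outermost (closest-to-root) application. At that point no match has yet been discharged, so every other match is still suspended and the relevant decomposition $\cp = \C\where{\cmatch \tv \cbrs}$ carries the premise $\Cshape \C \tv \sh$ for some shape $\sh$. A monotonicity property of unicity---enlarging a context adds constraints, hence can only shrink the set of models quantified over in the definition of $\Cshape$---lets me transport this to the maximal solved context $\hat\C$ of that occurrence, yielding $\Cshape {\hat\C} \tv \sh$. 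This contradicts the third case of Closed Progress, which asserts that $\Cshape {\hat\C} \tv \sh$ fails for every match occurrence and every $\sh$. The delicate points are precisely (i) identifying the first \Rule{Match-Ctx} application so that the surrounding context contains no already-discharged matches, and (ii) the context-monotonicity of $\Cshape$; everything else is bookkeeping over the already-established metatheorems.
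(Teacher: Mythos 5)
Your proposal follows the same route as the paper's own proof: the backward direction is an induction on the reduction sequence using Preservation plus the fact that solved forms are satisfiable (which the paper merely asserts and you additionally sketch via acyclicity), and the forward direction combines Termination, Preservation, Scope preservation (\cref{lem:scoping-preservation}) and Closed Progress, with all of the difficulty concentrated in ruling out case (3) of Closed Progress. Your crux is also the paper's crux: the ``monotonicity of unicity'' you invoke is exactly Composability of unicity (\cref{lem:compose-unicity}), and the claim that a satisfiable constraint cannot have every match constraint fail unicity is \cref{lem:unsat-match}, which the paper proves by inverting a \emph{canonicalized} derivation (\cref{thm:canonicalization})---canonicalization being precisely the formal statement that \Rule{Match-Ctx} uses can be taken outermost, i.e.\ the thing your ``first application'' argument does by hand.

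The one step that does not go through as literally stated is: ``the derivation of $\semenv \th \cp$ must apply \Rule{Match-Ctx}; consider its outermost application.'' When the suspended match sits inside the left-hand side of a let-constraint---the central case for this paper, e.g.\ $\cp = \clet \x \tv {\Ca\where{\cmatch \tv \cbrs}} \cb$---a satisfying derivation may instead apply \Rule{Let} at the root, and then the match is discharged not at any node of the derivation tree but inside the set comprehension defining $\semenv(\cabs \tv \ca)$: the \Rule{Let} premise only asserts that this set is non-empty, and the \Rule{Match-Ctx} uses live inside the witnessing derivations $\semenv\where{\tv \is \gt} \th \ca$, of which there may be many, one per element $\gt$ of the set. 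So ``outermost application in the derivation'' is not well-defined until you generalize what counts as a sub-derivation. This is fixable---pick one witness $\gt$, extract unicity from its derivation, and push it through the let-binder using the \Rule{Let} case of composability (which exists for exactly this purpose, as does \cref{corollary:matched-abstractions})---but it is the reason the paper develops the canonicalization machinery rather than arguing inline; your plan should either invoke that machinery or explicitly redo its let-abstraction case.
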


\section{Implementation}
\label{sec:implementation}

We have a working prototype implementing the \OML language with suspended
match constraints and partial type schemes, in which we have reproduced the
various type-system features and examples presented in this work. The
implementation  closely follows the constraint-based presentation described 
in the previous section.
It is public and open-source, available at
\url{https://github.com/johnyob/omniml}. Its implementation is inspired by
previous work such as \Inferno~\citep {Pottier/inferno@icfp2014,
Pottier/inferno@opam}. It uses state-of-the-art ML type inference
implementation techniques for efficiency, such as a Tarjan's union-find data
structure for unification \citep*{journals/jacm/Tarjan75} and \emph{ranks} (or
\emph{levels}) for efficient generalization \citep*{Remy/mleth}. Let us discuss
a few salient points.

\paragraph{Unification and scheduling}

Each unsolved unification variable maintains a \emph{wait list} of suspended
constraints that are blocked until the variable is unified with a concrete
type. When such a unification occurs, the wait list is flushed: the suspended
constraints are scheduled on the global constraint scheduler, which is
responsible for eventually solving them.

\paragraph{From a stack to a tree}



Some efficient implementations of \ML type inference, starting with
\citet {Remy/mleth}, represent the solver state as a linear
\emph{stack} of inference regions, from the outermost variable scope
to the current region.
Each let-binding and, more generally, each generalization location,
introduces a new scope, hence a new region.
Since the solver visits the source terms, hence introduces local regions in
a depth-first manner, these variable scopes may be represented by an integer
\emph{rank} or \emph{level}\footnote{Ranks or levels in fact correspond to
  De Bruijn levels.} attached to each inference variable or type node.
Unification maintains these levels to their minimum value when merging
types, which amounts to compute the least common ancestor of the region
types should belong to.
The use of integer levels to represent regions, which is quite efficient and
simple to implement, does not suffice for partial generalization.
If generalization at some region contain a variable appearing in a suspended
match constraint, the region must be kept alive while we continue inference
in other regions. Hence, later parts of the constraint may introduce a new
let-region at the same level that is unrelated to the suspended one---neither its
ancestor nor its descendant---breaking the linear assumption that allowed
the representation of variable scopes by integer levels.

We must instead use a \emph{tree} of nested let-regions to represent scopes,
similar to the binding tree of $G$-nodes in
\MLF~\cite{Remy-Yakobowski/mlf-type-inference@icfp08}. Under this scheme,
levels no longer uniquely determine a variable's region.  Instead, we interpret
a level relative to a path in the region tree from the root. When two variables
are unified, they must always lie on some shared path---by scoping
invariants---so computing their minimum level (along this path) still suffices
to determine the least common ancestor: we keep the efficient integer
comparisons.

\paragraph{Partial generalization}

Generalization is the process of determining which type variables are
polymorphic and which are monomorphic (\ie implementing
\Rule{S-Exists-Lower}).  \emph{Partial generalization} arises when a region
cannot be fully generalized due to suspended constraints that may 
update its variables. To manage this, we classify type variables into four
categories:
\begin{itemize}
  \let \Item \item
  \renewcommand \item [1][]{\Item[(\textbf{#1})]}
  \item[I] Variables are yet to be generalized. \\
    \emph{Introduced by instantiations or source types in constraints.}

  \item[G] Variables that are generalized. \\
    \emph{Not accessible from any instance type. Definitely polymorphic.}

  \item[PG] Variables that are partially generalizable. \\
    \emph{Generalizable variables mentioned by suspended match constraint or partial
    instantiations. Maybe polymorphic, maybe monomorphic.}

  \item[PI] Variables that were previously partially generalized
    but have since been updated.  \\
    \emph{Awaiting re-generalization. Introduced by the unification of partially
    generalized types.}
\end{itemize}

During solving, we track a set of \emph{guards} for each type variable. Each
guard indicates that the variable is captured by a suspended constraint, and may
therefore be updated when that constraint is discharged. At generalization
time, these guards conservatively approximate whether a variable may be
updated in the future. Variables with at least one guard are partially
generalized (\textbf{PG}); those without guards are fully generalized
(\textbf{G}).

When an instance is taken from a partially generalized variable $\tv$
(\textbf{PG}), we retain a forward reference from $\tv$ to the instance $\tvb$.
This enables $\tv$ to notify $\tvb$ if it is later updated, at which
point~$\tv$ becomes a partial instance (\textbf{PI}) and propagates its
updates to $\tvb$ (and to its other instances).
This mirrors, in reverse, the way our formalized solver uses incremental
instantiation constraints to track copies. In addition, the instance $\tvb$
remains guarded by $\tv$ until the latter is either lowered or fully
generalized.

Once a suspended match constraint is solved, it removes the guards it
introduced.  This may enable previously partially generalized variables $\tv$
(\textbf{PG}) to become fully generalized (\textbf{G}). Conversely, if a
partially generalized variable $\tv$ is lowered (\eg by \Rule{S-Exists-Lower}),
becoming an instance (\textbf{PI} or \textbf{I}), it is unified with all its
instances $\tvbs$ obtained via its forward references.

\paragraph{Lazy generalization}

Repeatedly generalizing a region after every update is expensive.  Instead we
generalize on demand. We mark regions as ``stale'' when they may require
re-generalization. When an instance is taken, we re-generalize the stale
descendants of the region in the region tree.

\parcomment{Lazy generalization is an optimization for a common case, it is
not optimal} 

Although this technique prevents premature re-generalization before
instantiation, it only optimizes for a common case. In some situations,
re-generalization before instantiation can still be more costly than
instantiating first, since the instantiation may discharge suspended
constraints that further refine the region's partial type scheme.

\section{Discussion}
\label{sec/discussion}

We now discuss practical considerations and an alternative
semi-unification-based implementation of incremental instantiation.

\paragraph{Error reporting}

Recent work on error reporting for constraint-based inference
\citep*{journals/pacmpl/BhanukaPBB23} shows that the structure of constraints
can be leveraged to produce precise, localized diagnostics. 
Our setting fits well with this line of work: suspended constraints, in
particular, do not fundamentally complicate error reporting. 
In fact, the causal ordering among suspended constraints gives a natural
account of information flow, which can aid in tracing the source of type
errors. 
At present, our prototype supports only basic error reporting, attaching source
locations to constraints. That said, we see no obstacle to adopting the
approach of \citet*{journals/pacmpl/BhanukaPBB23}.

\paragraph{Integration into \OCaml} 

Integrating suspended constraints into \OCaml's type checker raises two
questions: their compatibility with \OCaml's existing type system, and the
feasibility of implementing them within \OCaml's current typechecker. 

From a type-theoretic perspective, we are confident that suspended constraints
and incremental instantiation are compatible with the various features of the
\OCaml type system. Moreover, several other features of \OCaml could benefit
from suspended constraints. In some cases, such as GADTs, this perspective
could open up a new research topic of its own. Nevertheless, reformulating
every fragile feature to use omnidirectionality is not required for initial
adoption in \OCaml: existing features could remain unchanged, with suspended
constraints introduced selectively where it provides clear benefits. 

For the implementation perspective, the situation is more delicate. The current
\OCaml has grown organically over more than two decades. The underlying
inference algorithm uses a single-pass architecture implemented with global
mutable state and few abstraction boundaries. As a result, the typechecker is
large, intricate and increasingly difficult to maintain. Our current expert
opinion is that extending the existing implementation to support suspended
constraints is feasible.%
\footnote{%
  Compared to the implementation described in \cref{sec:implementation}, \OCaml
  already maintains a tree of inference regions. One could introduce the
  additional data structures required for \emph{partial and lazy
  generalization}, along with the scheduling mechanisms needed to implement
  suspended constraints. 
}
However, such an extension would combine the existing mutable state and
in-place updates with a more complex control flow, raising concerns about code
complexity and long-term maintainability. These concerns could ultimately argue
against integrating suspended constraints into the current typechecker. 

At the same time, there are ongoing efforts towards a clean reimplementation of
the \OCaml typechecker. Such designs would provide a more natural setting for
suspended constraints and could significantly simplify their integration.

\paragraph{Incremental instantiation via semi-unification}

\parcomment{What is semi-unification? And it's relation to \ML}

Semi-unification \citep*{Henglein/phd,Henglein/tiasu@toplas93}, introduced in
the late 1980's, is a generalization of unification that solves a collection
 of \emph{instantiation inequalities} $\bigwedge_{i \in I} \ti \leq \tip$,
meaning that there exist a unifier $\theta$ and a collection of
substitutions $\bar\rho$ such that $\rho_i(\theta(\ti)) = \theta(\tip)$.
Unfortunately, semi-unification was soon proved undecidable~\citep*{%
Kfoury-Tiuryn-Urzyczyn/ic1993} and almost abandoned for type inference
purposes---even though \citeauthor{Henglein/phd} observed that his
semi-algorithm appeared to terminate on \emph{most} \ML type inference
problems involving polymorphic recursion.%
\footnote{
  Although the undecidability of semi-unification already implied the existence
  of such examples, the specific class of examples of non-terminating cases was
  only characterized later by
  \citet*{Figueiredo-Camarao/semiunif-ptm@unpub2004}. 
  These examples use extremely complex patterns, suggesting that they are
  unlikely to 
  appear in practice, unless crafted deliberately.
}
Moreover, in the absence of polymorphic recursion, semi-unification constraints
arising from \ML type inference are acyclic and therefore guaranteed to
terminate.

\parcomment{Semi-unification is omnidirectionally-ready}

For our purposes, semi-unification offers an alternative foundation for
incremental instantiation. Because semi-unification constraints can be
solved \emph{incrementally}, in any order, making them a natural fit for
scaling omnidirectional type inference to \ML-style polymorphism. Although it
offered little advantage for traditional \ML inference,
\citeauthor{Henglein/phd}'s algorithm was \emph{omnidirectionally-ready},
merely waiting for fragile constructs to give it new life!

\parcomment{What we did}

Motivated by this observation, we explored a second prototype that implements
let-generalization and incremental instantiation via semi-unification. To
enable a more direct encoding of type inference problems and improve solving
efficiency, we enriched semi-unification constraints with a notion of
\emph{scope}, forming the same tree-like structure as regions in our
\emph{generalization tree} (\cref{sec:implementation}). This extends the
formalization of semi-unification with \emph{unknowns} proposed by~\citet
{Lushman-Cormack/subti@tr2008}, and allows for more efficient handling of
\emph{monomorphization}---that is, instantiations that collapse into
unification constraints (akin to \Rule{S-Inst-Mono} in our constraint solver).

\parcomment{The solutions coincide}

Although this prototype began from a different perspective, the implementation
problems and their solutions turned out to be strikingly similar to those in
our first prototype (described in \cref{sec:implementation}). This close
correspondence naturally raises the question of whether partial type schemes
could be enriched to cope with polymorphic recursion.

\section{Related work}
\label{sec:related-work}

\paragraph{Suspended constraints}

Suspending constraints that cannot be solved yet is not a novel idea: it is a
standard approach for implementing unification in dependently typed systems. This goes
back to Huet's algorithm for higher-order unification~\citep*{huet-unif} and
pattern unification~\citep*{Miller/pattern-unif@iclp91} where flexible-flexible
pairs are delayed until at least one side becomes rigid.
Our contribution lies in combining constraint suspension with \ML-style
implicit polymorphism---largely absent from dependently typed systems---and in
formulating a declarative constraint semantics.

Conditional constraints \citep*{journals/njc/Pottier00} also delay resolution,
waiting until the top-level constructor of a type is known. They provide an
\code{if-then-else}-like primitive, but differ crucially from our suspended
constraints: in Pottier's system, an unresolved conditional constraint is
considered satisfiable, whereas in ours, an unresolved suspended constraint is
not. This difference forces our semantics to track what is \emph{known} in a
context. Consequently, unresolved conditional constraints may enter a
generalized type scheme as a form of qualified types, while our suspended
constraints cannot. These semantic differences lead the two approaches to
address very different user-facing type system features.

\OutsideIn~\citep*{conf/icfp/SchrijversJSV09} is a type system for GADTs that
introduces \emph{delayed implications} of the form $\where{\tvs}(\all \tvbs \ca
\Rightarrow \cb)$. Constraint solving for delayed implications proceeds in two
steps; solving simple constraints first and then solving delayed implications.
The deferral ensures that inference for GADT match branches occurs when more is
known about the scrutinee and expected return type from the context.
To ensure principality, \OutsideIn enforces an algorithmic restriction: the
variables $\tvs$ must already be instantiated to concrete type constructors
before they may be unified by the implication's conclusion $\cb$. This ensures
information only flows from the outside into the implication's conclusion.
Notably, they do give a declarative specification for this restriction, using
an elegant but mysterious quantification on all possible ways to type the
context outside the GADT clauses. Using our new perspective on \emph{known}
type information, we can say that their semantics enforces that only
\emph{known} information from outside GADT clauses can be used inside.
Later work on \OutsideIn, namely
\OutsideInX~\citep*{\BBoutsidein},
argues that delayed implication constraints make local let-generalization all
but unmanageable, both in theory and implementation. Their proposed fix is to
abandon local let-generalization altogether. By contrast, our work shows that
the difficult interactions between let-generalization and suspended constraints
can be resolved. Furthermore, \OutsideInX forgoes a declarative specification
complete with respect to its inference algorithm, on the grounds that such a
specification would be ``as complicated and hard to understand as the
[inference] algorithm''. We believe that our \emph{omnidirectional recipe}
could provide a declarative specification: one capable of being principal and
complete for GADTs, and we would be interested in studying this application.

\paragraph{Directionality}
\label{sec/related-work/directionality}

The fixed directionality of bidirectional type systems is a known limitation.
Prior work has sought to mitigate this within the bidirectional setting by
altering the order in which information flows. For instance,
\citet*{conf/esop/XieO18}'s \emph{let arguments go first} typechecks
applications $\eapp \ea \eb$ by checking the argument $\eb$ before the function
$\ea$, allowing information to flow from argument to function (\cf
\Rule{Chk-App}).

\parcomment{Contextual typing}

More recent work on \emph{contextual typing} \citep{journals/pacmpl/XueO24, journals/pacmpl/XueCJO26}
goes further by deferring the commitment between \Rule{Syn-App}
and \Rule{Chk-App} by parameterizing typing $\G \th_m \e : \t$ 
with contextual information $m$.
This enables typing multiple arguments from right to left when sufficient
contextual information is available, and thus successfully typechecks
programs such as~\code{ex_6}
from~\cref{sec/overview/directional}. Nevertheless, it still enforces a
fixed order of propagation and, consequently, some well-typed programs are
rejected as ill-typed (\eg \code{ex_6_2, ex_6_3}).\footnote{\let \code
\Code%
Readers wishing to verify
that \code{ex_6_2} and \code{ex_6_3} are rejected under contextual typing must
inline \code{app} and \code{rev_app}. }

%
%
%
%
%

\parcomment{Comparisons}

Omnidirectionality removes this restriction, allowing information to flow
without committing to a fixed propagation order. The contextual information $m$
(or \emph{mask}) of contextual typing can be viewed as a \emph{lossy}
abstraction of the surrounding context $\E$ in our system: omnidirectionality
preserves $\E$ faithfully and exploits the information that the
mask $m$ forgets, propagating more type information and accepting more
well-typed programs. 

However, contextual typing can continue to guide inference even when no
contextual information is available (\ie $m = \Blacksquare$). In particular,
\emph{contextual subtyping} may default to reflexivity:
\begin{mathpar}
  \infer[DS-Refl]
    { }
    {\G \th_\Blacksquare A \leq A}
\end{mathpar}
Within inference, this amounts to resolving subtyping with unification. This
mechanism is analogous to the idea of \emph{default
rules}~(\cref{sec/default-rules/poly}) in our setting.

Because contextual typing still fixes a propagation order, this default can be forced too 
early. The following program, in the contextual System-$\F$ calculus of \citet{journals/pacmpl/XueCJO26} (\LCTI),
illustrates the issue:
\begin{program}[input]
let choose : 'a. 'a -> 'a -> 'a = fun x y -> x
let ex_1_2 = (choose id : (($$'a. 'a -> 'a) -> 'a. 'a -> 'a) -> _) °\Ocamlcomment{\ocamlFlag {\LCTI}1}°
\end{program}
This program is rejected because \ocaml{id} must be inferred ($m =
\Blacksquare$) before the outer annotation can guide the instantiation of
\ocaml{choose}. With no contextual information available, this instantiation of
\ocaml{id} must proceed via \Rule{DS-Refl}. As a result, \ocaml{choose} is
instantiated with \ocaml{'a. 'a -> 'a}, yielding a type incomparable with the
annotation. 

Finally, comparisons between omnidirectionality and contextual typing are not
quite apples-to-apples. Contextual typing targets intersection types and
first-class polymorphism, whereas omnidirectionality has focused on static
overloading and semi-explicit first-class polymorphism. Since important
extensions of omnidirectionality, including default rules and first-class
polymorphism, remain future work, direct expressiveness comparisons are
premature. 

\paragraph{Overloading}

Qualified types~\citep*{jones-qualified-types}, best known for their use in
\Haskell's type-classes, are related to our suspended match constraints: both
represent constraints on types or type variables that are delayed. At
generalization time, constraints on generalizable variables are retained in the
type scheme, yielding a \emph{constrained type scheme} $\tfor \tvs {\c
\Rightarrow \t}$. This is much simpler to implement than our partial type
schemes, but it provides a different behavior: each instance may resolve $\c$
differently (as the constraint is copied on instantiation).
Qualified types are excellent choice when this is the desired behavior,
typically for \emph{dynamic overloading} \citep{conf/popl/WadlerB89}. But they
are insufficient when we require a unique resolution of the constraint across
all instances---as in \emph{static overloading}.
Static overloading can be emulated using qualified types by applying
implementation-defined tweaks to specific built-in type-classes, enforcing that
resolution is global and failing when it is under-determined. However, such
mechanisms lack a clear declarative semantics: to our knowledge, capturing
failure due to missing information requires our unicity conditions.

\citet{Leijen-Ye/prefix@pldi2025} recently proposed a bidirectional account of
generalized static overloading within \ML. However, their approach is limited by
its reliance on fixed directionality (\cref{sec/overview/directional}).
Variational typechecking \citep{Chen-Erwig-Walkingshaw/variational@toplas2014}
was originally developed for reasoning about well-typed CPP \code{#ifdef}-style
macros, introducing \emph{choices}  $a \angles {\ea, \eb}$, where $a$ is a
\emph{dimension} with \emph{alternatives} $\ea$ and $\eb$. Once dimensions are
fixed, we are able to project a well-typed non-variational program.
\citet{benevs2025simple} apply this machinery to recast static overloading as
variational typing, with a resolution algorithm that uniquely selects the
dimensions.
However, their system removes \emph{local let-generalization} and requires an
exponential-time resolution procedure---an unavoidable consequence of the
NP-hardness of \emph{general} static overloading \citep*
{Chargueraud-Bodin-Dunfield-Riboulet/jfla2025}.

Partial type schemes provide an alternative that preserves \ML's local
let-generalization while suspended constraints offer a tractable account of
static overloading. By enforcing resolution using \emph{known} type information
(captured by our novel unicity condition) rather than \emph{guessed}
information, our approach remains tractable. Our experience suggests that this
is a ``goldilocks'' solution: expressive enough for most applications, yet
tractable, and (crucially) compatible with \ML's \emph{local
let-generalization}.
However, we have not yet extended our approach to capture \emph{generalized
static overloading}; developing a full omnidirectional account of that setting
is left to future work.

\paragraph{First-class polymorphism}


\Polytypes provide a form of \emph{semi-explicit} first-class polymorphism and
are of particular interest due to their role in \OCaml's inference of
polymorphic methods and polymorphic function parameters. 
Fully implicit first-class polymorphism is more expressive than \polytypes and
other forms of boxed polymorphism. In the latter, the programmer must
explicitly manage the boxing and unboxing of terms (and types) for
generalization and instantiation. However, making these operations implicits
shifts the burden to inference, which must determine when to introduce and
eliminate these polytypes. It remains unclear whether such inference can be
achieved using omnidirectionality \emph{without} default rules
(\cref{sec/default-rules/poly})---a mechanism also used in bidirectional
systems (\cref{sec/related-work/directionality}).

Many systems for implicit first-class polymorphism exist; here we highlight a
few in the context of \ML.
\MLF~\citep*{LeBotlan-Remy/recasting-mlf} is an extension of \ML that supports
first-class polymorphism that goes beyond the power of System $\mathsf{F}$,
while retaining type inference with principal types. It is a generalization of
\polytypes, relying on \geninst-directionality, but it remains unclear how to
effectively scale \MLF to the rest of \OCaml's features. Nevertheless, it would
be interesting to explore whether the directional sensitivity of \MLF could be
removed by omnidirectionality.

\FreezeML~\citep*{10.1145/3385412.3386003} is an impredicative type inference
system in which polymorphic variables can be \emph{frozen}, written $\lceil x
\rceil$, and only allowing instantiation on ordinary (unfrozen) variables $x$.
Unlike \polytypes, \FreezeML permits higher-rank types directly in the syntax
of types, though these can be encoded back into a minor extension of
\polytypes.\footnote{%
Each let-binding $\elet \x \ea \eb$ introduces a boxed polytype $\elet \x
{\epoly \ea} {\eb}$, with each variable occurrence of $\x$ implicitly
unboxing it $\einst \x$. Freezing disables this implicit unboxing.
By default, the implicit boxing operator $\epoly \ea$ infers a polytype 
$\tpoly \ts$, where $\ts$ is the most general type of $\ea$ when it 
is not already known from the surrounding context.}  
The essential difference is that generalization of higher-rank types is
implicit, inferring the most-general type, whereas \polytypes require an
explicit type annotation $\expoly \e \tvs \ts$. 

\QuickLook \citep{journals/pacmpl/SerranoHJV20}, \Haskell's latest approach at
first-class polymorphism, uses bidirectional propagation to take
a ``quick look'' at the spine of an application to guide the instantiation of
higher-rank functions. This permits \emph{argument reordering}, allowing
later arguments to influence the type of earlier ones within the same
application spine. While this mitigates the order-dependence of fixed directionality
(\cref{sec/overview/directional}), it still enforces a global direction of
type propagation between surrounding constructs (\eg let-bindings). In this
sense, \QuickLook is \emph{locally omnidirectional} but \emph{globally
directional}: they reduce order sensitivity within applications, but do not
provide a declarative account of type flow between arbitrary constructs, as
required for omnidirectionality. More recently, \Frost \citep{frost} follows a
similar line but extends it to handle $\eta$-expansion and introduces a
freezing operator, akin to that of \FreezeML, that helps propagate type
information when the default order of inference is insufficient.

\paragraph{Type-based disambiguation in \SML}

In \SML, tuples are treated as structural records with numeric labels, and
projections such as \code{#1} or \code{#2} are type-directed. This relies on
row typing: if $\e$ has the type $\tsrecord {j = \tj; \varrho}$, where
$\varrho$ is a row describing the remaining tuple fields, then $\eproj \e j$
has type $\tj$. The same mechanism also underlies record projections.

However, \SML does not support row-polymorphic definitions: restricting to
monomorphic rows allows a simple yet efficient compilation strategy for records
and tuples. To enforce this, \SML adds a prose side-condition to the typing
rules requiring that each row variable be fully determined by the ``program
context''~\citep*[Section 4.11]{sml-definition}. Our unicity conditions
provides a precise, declarative specification for this informal restriction,
filling a gap in the original specification.

\citet*[page 36]{rossberg-hamlet} remarks that this restriction interacts
poorly with \texttt{let}-polymorphism and therefore not supported in practice:
\begin{quotation}\em
  Under item 1 the Definition states that “the program context” must
  determine the exact type of flexible records, but it does not
  specify any bounds on the size of this context. Unlimited context is
  clearly infeasible since it is incompatible with let polymorphism:
  at the point of generalization the structure of a type must be
  determined precisely enough to know what we have to quantify
  over. We thus restrict the context for resolving flexible records to
  the innermost surrounding value declaration, as most other SML
  systems seem to do as well. This is in par with our treatment of
  overloading (see 5.8).
\end{quotation}
We have solved this difficult interaction between disambiguation and
let-polymorphism. In particular, we implemented tuples in our
prototype and formalize their meta-theory in Appendix
\cref{app:full-reference}; the system behaves as expected.

We also remark that \PolySML~\citep*{sml-polyml} goes further than other \SML
implementations on this front, supporting examples that even rely on
backpropagation:
\begin{program}[input,deletekeywords={true,false}]
let fun fst r = #1 r in (fst (1, 2), fst (true, false)) end; °\polysmlflags 00°
\end{program}

\paragraph{Record field overloading in \GHC}

\Haskell 98 derives selector functions for each record field, an approach that
precludes declaring two record types with the same field names within a single
module. \GHC 6.8.1 (2007) introduced the \texttt{DisambiguateRecordField}
extension for type-directed disambiguation of record labels,%
\footnote{
  \url{https://ghc.gitlab.haskell.org/ghc/doc/users_guide/exts/disambiguate_record_fields.html}
}
and \GHC 8.0.1 (2016) added \texttt{DuplicateRecordFields}, allowing distinct
record types with overlapping field names in the same module.

However, the \GHC developers found that type-based disambiguation using
bidirectional typechecking is not sufficiently predictable in practice
for users~\citep*{RFC-overloaded-record-fields}, and that the reliance on
projection functions, rather than a dedicated projection syntax, further
complicated the implementation by making all function calls potentially
ambiguous. As a result, \GHC is gradually moving away from type-based
disambiguation towards an approach based on qualified types constraints of the
form \code{HasField "x" a b} (\texttt{OverloadedRecordDot}, 2017).%
\footnote{
  \url{https://ghc.gitlab.haskell.org/ghc/doc/users_guide/exts/hasfield.html}
}

For instance, the projection \code{ex_1} from \cref{sec/overview/overloading}
would be written as:
\begin{program}[input]
ex_1 :: HasField "x" a b => a -> b °\ghcflags 01°
ex_1 r = getField @"x" r
\end{program}
It is worth emphasizing that \OML rejects this program. Since \OML implements
\emph{static} overloading, it requires a unique resolution for each overloaded
occurrence; as a result, programs such as \code{ex_1}, whose projection is
ambiguous, are ill-typed. In contrast, a type-class-based solution---\ie one
using qualified types---implements \emph{dynamic} rather than \emph{static}
overloading, allowing each instantiation to resolve the field independently.
This distinction has practical consequences: dynamic overloading is able to
naturally support field-polymorphic programs such as \code{ex_1}, but it does
not guarantee efficient code generation (though \GHC is often able to optimize
type-class programs effectively).


Beyond efficiency, the \texttt{HasField} approach is also limited in
expressiveness. It works well for records with `simple' field types (\ie field
types with \emph{uniform} typing rules for projections), but it does not scale
to field types where the typing rule itself depends on type-directed
disambiguation---such as fields containing polymorphic types or GADT
existentials. In contrast, suspended match constraints naturally accommodate
such non-uniformity in typing rules.

\section{Conclusions}
\label{sec:future-work}

We presented a constraint-based framework for omnidirectional type inference,
scaled to \ML with \emph{local let-generalization}. Central to our approach is
a new declarative account of when a type is \emph{known} from the context,
rather than \emph{guessed}. Our constraint solver is omnidirectional:
constraints may be solved in \emph{any way}, enabled by partial type schemes.
Through two instantiations of our \emph{omnidirectional recipe}, we obtained a
sound, complete, and \emph{principal} type inference algorithm---in short,
principality held \emph{anyway}, precisely because of omnidirectionality.

\paragraph{Future work}

We aim to extend our framework to support more advanced features. One direction
is generalized \emph{static overloading}; another is \emph{implicit first-class 
polymorphism}. We also plan to investigate \emph{default rules}---a mechanism
where ambiguity is resolved by falling back on a default, non-principal choice
\eg \OCaml selects the most recent matching record type in scope for ambiguous
field names.

\section*{Acknowledgments}

\paragraph{Making of}

The unsatisfactory situation of type-directed disambiguation in \OCaml has
been on our mind for years; a first concrete step to which this work
can be traced is a specific
example\footnote{\url{https://github.com/ocaml/ocaml/issues/7388}} of
inconvenient order of propagation (between pattern and expression in
a let-binding) presented by Andreas Rossberg as part of general
feedback on the use of \OCaml in the \texttt{wasm} reference
interpreter \citep*{rossberg-wasm}. Gabriel Scherer proposed delaying
the resolution of ambiguous constraints and advised Olivier Martinot
as an intern on this topic in Summer
2020~\citep*{frozen-constraints-2021}. Together, they
identified the implementation and specification difficulties; their
implementation would only handle simple cases where no incremental
generalization was necessary (in particular, it did not support
backpropagation), and they did not have a precise declarative
specification. Gabriel Scherer kept working on this infrequently until
2024, identified the need for incremental generalization and
instantiation, but its implementation and declarative specifications
remained incomplete.

Alistair O'Brien started working on suspended constraints in Fall
2024, coming from work on constraint-based presentations of GADT
checking---suspended constraints could delay the checking of GADT
patterns until the equalities to be introduced are known to be between
rigid types. He proposed an alternative declarative semantics,
and implemented a working prototype of incremental instantiation, 
described in the present paper.

Didier R\'emy started working on suspended constraints in Winter
2024--2025, coming from questions of principal type inference in
the presence of modular implicits, in collaboration with Samuel
Vivien. Didier R\'emy implemented a second prototype for incremental 
instantiation based on semi-unification, which is fairly different from
the prototype described in the present paper, and brought the idea of
handling \polytypes as suspended constraints.

All co-authors contributed to the writing and metatheory, with
Alistair O'Brien as the main contributor at every step.

\paragraph{Thanks}

We thank the anonymous reviewers for their thoughtful feedback and questions,
and Simon Peyton-Jones, whose insightful remarks, we hope, helped us make this
work more approachable. We are also grateful to Adam Gundry for clarifying
aspects of record-field disambiguation in \GHC, and to Jeremy Yallop for his
sage guidance throughout this project. Finally, we thank Brendan Coll, Daniel
Gooding, Micheal Lee, and Fran{\latexc{c}}ois Pottier for their helpful
comments on earlier drafts. 
%
This work was partially funded by the \OCaml Software Foundation.

\begin{local}
\let \t \latext \let \c \latexc

\bibliography{suspended}
\end{local}

\newpage
\FloatBarrier 
\appendix

\section*{\Large Organization of appendices}
\label{app:outline}

\paragraph{Reference appendix}
\cref{app:full-reference} gives a full reference for all
  definitions, grammars and figures in the paper, including all cases
  (even those omitted from the main paper for reasons of space).

\paragraph{Proof appendices} These appendices contain proofs for the
formal claims in the article. They are typically written tersely.
\begin{itemize}
\item \cref{app:proofs-constraints} proves properties of the
  constraint language and its semantics. The main result is
  canonicalization, which morally establishes that uses of the
  contextual rule \Rule{Match-Ctx} can be ``permuted down'' in the
  proof until they are all at the bottom of the derivation, followed
  by a proof on a simple constraint.
\item \cref{app:proofs-solving} proves the correctness of the
  constraint solver with respect to the semantics.
\item \cref{app/oml/proofs} proves the properties about the \OML
  type system, in particular the correctness of constraint generation.
\end{itemize}

\clearpage
\section{Full technical reference}
\label{appendix:figures}
\label{app:full-reference}

This section repeats all the technical definitions mentioned in the paper,
including the cases, rules, and definitions that were omitted from the main
paper to save space. It can serve as a useful cheatsheet to understand a
definition in full, or when studying the meta-theory of the system.

\begin{mathpar}
\begin{bnfgrammar}[\noleftfill\negrightfill]
\entryset[Type variables]{\tva, \tvb, \tvc}{\TyVars}{}
\\
\entry[Types]{\t}{
\tv \and
\tunit \and
\ta \to \tb \and
\Pi \iton \ti \and
\trcd \T \tys \and
\tpoly \ts
}\\
\entry[Type schemes]{\ts}{
\t \and
\all \tv \ts
}\\[1ex]
\entry[Ground types]{\gt}{}{}\\
\entry[Ground region]{\gr}{\greg \gt \semenv}\\
\entrysubseteq[Sets of ground types]{\gabs}{\Ground}{}{}\\
\entrysubseteq[Sets of ground regions]{\gabsr}{\GroundRegion}{}\\
\entry[Constraints]{\c}{
\ctrue
\and  \cfalse
\and  \ca \cand \cb
\and  \cexists \tv \c
\and 	\cfor \tv \c
\and  \cunif \ta \tb
\nextline
\and  \clet \x \tv \ca \cb
\and  \capp \x \t
\nextline
\and  \cmatch \t \cbrs
\nextline
\and \ueq
\and \cletr \x \tv \tvs \ca \cb
\and \cexistsi \inst \x \c
\and \cpinst \inst \tv \t
\nextline
\and \labfrom \elab \ct \leq \ta \to \tb
\and \dom \ct = \elabs
\and \cscm \leq \t \mid \x \leq \cscm
}\\[1ex]
\entry[Branches]{\cbr}{\cbranch \cpat \c} \\
\entry[Shape patterns]{\cpat}{
\cpatwild \and \cpatprod \tv j \and \cpatrcd \ct \and \cpatpoly \cscm
} \\[1ex]
\entry[Semantic environment]{\semenv}{
\eset \and \semenv\where{\tv := \gt}
\and \semenv\where{\x := \gabs}
\and \semenv\where{\x := \gabsr}
\and \semenv\where{\inst := \semenvp}
}\\[1ex]
\entry[Unification problems]{\up}{
\ctrue \and \cfalse \and \upa \cand \upb \and \cexists \tv \up \and \ueq
} \\
\entry[Multi-equations]{\ueq}{
\eset \mid \cunif \t \ueq
} \\[1ex]
\entry[Constraint contexts]{\C}{
\hole
\and \C \cand \c
\and \c \cand \C
\and \cexists \tv \C
\and \cfor \tv \C
\nextline
\and \clet \x \tv \C \c
\and \clet \x \tv \c \C
\nextline
\and \cletr \x \tv \tvs \C \c
\and \cletr \x \tv \tvs \c \C
\and \cexistsi \inst \x \C
} \\
\entry[Shapes] {\Sh} {
\any \tvcs \t
}
\\
\entry[\llap{Canonical} principal shapes] {\sh} {} {}\\
\entry[Terms]{\e}{
x \and
() \and
\efun x e \and
\eapp \ea \eb \and
\elet x \ea \eb \and
\eannot \e \tvs \t \andcr
\erecord {\overline{\elab = \e} } \and
\efield e \elab \and
\exrecord \T {\overline{\elab = \e}} \and
\exfield \e \T \elab
\andcr
(\ea, \ldots, \en) \and
\efield e j \and
\exfield e n j \andcr
\epoly e \and
\expoly e \tvs \ts \and
\einst e \and
\exinst e \tvs \ts
\nextline
\and \emagic \es
}\\
\entry[Term contexts]{\E}{
  \hole
  \and \eapp \E \e
  \and \eapp \e \E
  \and \efun \x \E
  \and \elet \x \E \e
  \and \elet \x \e \E
  \and \eannot \E \tvs \t
  \andcr \erecord {\elaba = \ea\; \ldots\; \elabi = \E\; \ldots\; \elab_n = \en}
  \and \efield \E \elab
  \andcr \exrecord \T {\elaba = \ea\; \ldots\; \elabi = \E\; \ldots\; \elab_n = \en}
  \and \exfield \E \T \elab
  \andcr (\ea, \ldots, \E, \ldots, \en)
  \and \eproj \E j
  \and \exproj \E j n
  \andcr \epoly \E
  \and \expoly \E \tvs \ts
  \and \einst \E
  \and \exinst \E \tvs \ts
  \andcr
  \emagic {\ea, \ldots, \E, \ldots, \en}
}\\
\entry[Typing contexts]{\G}{
   \eset \and
   \G, x : \ts
}\\
\entry[Label environment]{\labenv}{
  \eset \and \labenv, \elab : \tfor \tvs {\trcd \T \tvs \to \t}
}
\end{bnfgrammar}
\end{mathpar}

\begin{judgboxmathpar}
  {\semenv \th \c}
  {Under the environment $\semenv$, the constraint $\c$ is satisfiable.}
  \infer[True]
    { }
    {\semenv \th \ctrue}

  \infer[Conj]
    {\semenv \th \ca \\
     \semenv \th \cb}
    {\semenv \th \ca \cand \cb}

  \infer[Exists]
    {\semenv\where{\tv \is \gt} \th \c}
    {\semenv \th \cexists \tv \c}

  \infer[Forall]
    {\forall \gt, ~ \semenv\where{\tv \is \gt} \th \c}
    {\semenv \th \tfor \tv \c}

  \infer[Unif]
    {\semenv(\ta) = \semenv(\tb)}
    {\semenv \th \cunif \ta \tb}

  \infer[Let]
    {\semenv\where{\x \is \gabs} \th \cb \\
     \gabs = \semenv(\cabs \tv \ca) \\
     \gabs \neq \eset}
    {\semenv \th \clet \x \tv \ca \cb}

  \infer[App]
    {\semenv(\t) \in \semenv(\x)}
    {\semenv \th \capp \x \t}

  \infer[Match-Ctx]
    {\Cshape \C \t \sh \\
      \semenv \th \C\where{\cmatched \t \sh \cbrs}
    }
    {\semenv \th \C\where{\cmatch \t \cbrs}}

  \infer[Multi-Unif]
    {\forall \t \in \ueq,\, \semenv(\t) = \gt}
    {\semenv \th \ueq}
\\
  \infer[LetR]
    {\semenv\where{\x \is \gabsr} \th \cb\\
     \gabsr = \semenv(\cabsr \tv \tvs \ca) \\
     \gabsr \neq \emptyset}
    {\semenv \th \cletr \x \tv \tvs \ca \cb}

  \infer[AppR]
    {\greg {\semenv(\t)} \wild \in \semenv(\x)  }
    {\semenv \th \capp \x \t}

  \infer[Exists-Inst]
    {\semenv\where{\inst \is \semenvp} \th \c\\
     \greg \wild \semenvp \in \semenv(\x)}
    {\semenv \th \cexistsi \inst \x \c}

  \infer[Incr-Inst]
    {\semenv(\inst)(\tv) = \semenv(\t) }
    {\semenv \th \cpinst \inst \tv \t}
\\
  \newcommand{\Srule}[3][]{{#2} &\eqdef& {#3} & {#1}}
  \newcommand{\Scases}[3]{%
    \left\{
      \begin{array}{l}
      #1\\[0.4ex]
      #2
      \end{array}%
    \right.
    &
    \hspace{-1ex}\begin{array}{l}
      \text{if } #3\\[0.4ex]
      \text{otherwise}
    \end{array}%
  }
  \begin{tabular}{RCLL}
    \labfrom \elab \T \leq \ta \to \tb &\eqdef&
      \Scases{\cexists \tvs \cunif \ta  {\trcd \T \tvs} \cand \cunif \tb \t}{\cfalse}{\labenv(\labfrom \elab \T) = \tfor \tvs {\trcd \T \tvs \to \t}}
    \\[1ex]
    \dom{\T} = \elabs &\eqdef& \Scases{\ctrue}{\cfalse}{\Dom {\labenv(\T)} = \elabs} \\[1ex]
    \Srule
      {(\tfor \tvs \tp) \leq \t}
      {\cexists \tvs \cunif \tp \t}
    \\[1ex]
    \Srule
      {\x \leq (\tfor \tvs \t)}
      {\cfor \tvs \capp \x \t}
    \\[1ex]
    \cmatched \t \sh {\cbranch \cpat \cs} &\eqdef&
    \Scases
      {\cexists \tvs \cunif \t \shapp \tvs \cand \theta(\ci)}
      {\cfalse}
      {\cmatches \cpati \sh \tvs \theta}
  \end{tabular}
\\
\semenv(\cabs \tv \c) \Wide\eqdef \
  \set {\gt \in \Ground : \semenv\where{\tv \is \gt} \th \c}
\\
  \semenv(\cabsr \tv \tvs \c) \uad\eqdef\uad \set{\greg \gt {\semenv\where{\tv \is \gt, \tvs \is \gts}} \in \GroundRegion :
    \semenv\where{\tv \is \gt, \tvs \is \gts} \th \c}
\\
\Cshape \C \t \sh \Wide\eqdef \
  \forall \semenv, \gt. \uad
      \semenv \th \cerase {\C\where{\cunif \t \gt}} \implies \shape \gt = \sh
\end{judgboxmathpar}

\begin{judgboxmathpar}
  {\Sh \preceq \Shp}
  {The shape $\Shp$ is an instance of $\Sh$. Alternatively, $\Shp$ is more general than $\Sh$.}
  \infer[Inst-Shape]
    { }
    {\any {\tvcs_1} \t \preceq
     \any {\tvcs_2} \t \where {\tvcs_1 \is \tys_1}}
\end{judgboxmathpar}

We write $\bot$ for the trivial shape $\any \tvc \tvc$. $\Shapes$ denotes the set of shapes and $\Shapesz$ is the set of non-trivial shapes.

\begin{definition}
A non-trivial shape $\Sh \in \Shapesz$ is the principal shape of the type
$\t$ iff:
\begin{enumerate}
  \item
    $\exists \typs,\ \t = \shapp[\Sh] \typs$
  \item
    $\forall \Shp \in \Shapesz, \forall \typs,\ \t = \shapp[\Shp] \typs
    \implies \Sh \preceq \Shp$
\end{enumerate}

A principal shape $\any \tvcs \t$ is \emph{canonical} if the sequence of its
free variables $\tvcs$ appear in the order in which the variables occur in
$\t$. $\shape \t$ is the canonical principal shape of $\t$.
\end{definition}

\begin{judgboxmathpar}
  {\cmatches \cpat \sh \tvcs \theta}
  {The pattern $\cpat$ matches the shape $\sh$ with fresh components $\tvcs$ binding\\pattern variables in $\theta$.}
  \\
  \newcommand{\Mrule}[5][]{{#2} \Matches {(#3)} \; #4 &\eqdef& {#5} & #1}
  \begin{tabular}{RCLL}
    \Mrule[\text{if } n \geq j]
      {\cpatprod \tv j}
      {\any \tvcs \Pi\iton \tvcs} \tvcps
      {[\tv \is \tvc_j']}
    \\[1ex]
    \Mrule
      {\cpatrcd \ct}
      {\any \tvcs \trcd \T \tvcs} \tvcps
      {[\ct \is \T]}
    \\[1ex]
    \Mrule
      {\cpatpoly \cscm}
      {\any \tvcs \tpoly \ts} \tvcps
      {[\cscm \is \ts \where{\tvcs \is \tvcps}]}
  \end{tabular}
\end{judgboxmathpar}

\begin{judgboxmathpar}
  {\c \simple}
  {The constraint $\c$ is simple.}
  \label{fig:simple}
  \inferrule[Simple-True]
    { }
    {\ctrue \simple}

  \inferrule[Simple-False]
    { }
    {\cfalse \simple}

  \inferrule[Simple-Conj]
    {\ca \simple \\ \cb \simple}
    {\ca \cand \cb \simple}

  \inferrule[Simple-Exists]
    {\c \simple}
    {\cexists \tv \c \simple}

  \inferrule[Simple-Forall]
    {\c \simple}
    {\cfor \tv \c \simple}

  \inferrule[Simple-Unif]
    { }
    {\cunif \ta \tb \simple}

  \inferrule[Simple-Let]
    {\ca \simple \\ \cb \simple}
    {\clet \x \tv \ca \cb \simple}

  \inferrule[Simple-App]
    { }
    {\capp \x \t \simple}

  \inferrule[Simple-Multi-unif]
    { }
    {\ueq \simple}

  \inferrule[Simple-LetR]
    {\ca \simple \\ \cb \simple}
    {\cletr \x \tv \tvs \ca \cb \simple}

  \inferrule[Simple-Exists-Inst]
    {\c \simple}
    {\cexistsi \inst \x \c \simple}

  \inferrule[Simple-Incr-Inst]
    { }
    {\cpinst \inst \tv \t \simple}
\end{judgboxmathpar}

We write $\semenv \thsimple \c$ for the satisfiability of a simple constraint
$\c$. \\

\begin{judgboxmathpar}
  {\C \simple}
  {The constraint context $\C$ is simple.}
  \label{fig:simple-context}

  \inferrule[Simple-Ctx-Hole]
    { }
    {\hole \simple}

  \inferrule[Simple-Ctx-Conj-Left]
    {\C \simple \\ \c \simple}
    {\C \cand \c \simple}

  \inferrule[Simple-Ctx-Conj-Right]
    {\C \simple \\ \c simple}
    {\c \cand \C \simple}

  \inferrule[Simple-Ctx-Exists]
    {\C \simple}
    {\cexists \tv \C \simple}

  \inferrule[Simple-Ctx-Forall]
    {\C \simple}
    {\cfor \tv \C \simple}

  \inferrule[Simple-Ctx-Let-Abs]
    {\C \simple \\ \c \simple}
    {\clet \x \tv \C \c \simple}

  \inferrule[Simple-Ctx-Let-In]
    {\c \simple \\ \C \simple}
    {\clet \x \tv \c \C \simple}

  \inferrule[Simple-Ctx-Exists-Inst]
    {\C \simple}
    {\cexistsi \inst \x \C \simple}
\end{judgboxmathpar}

\begin{judgboxmathpar}
  {\cerase \c}
  {The erasure of $\c$.}
  \label{fig:erasure}
\newcommand{\Erule}[2]{\cerase {#1} &\eqdef& {#2}}
\begin{tabular}{RCL}
  \Erule{\ctrue}{\ctrue} \\
  \Erule{\cfalse}{\cfalse} \\
  \Erule{\ca \cand \cb}{\cerase \ca \cand \cerase \cb} \\
  \Erule{\cexists \tv \c}{\cexists \tv \cerase \c} \\
  \Erule{\cfor \tv \c}{\cfor \tv \cerase \c} \\
  \Erule{\cunif \ta \tb}{\cunif \ta \tb} \\
  \Erule{\clet \x \tv \ca \cb}{\clet \x \tv {\cerase \ca} {\cerase \cb}} \\
  \Erule{\capp \x \t}{\capp \x \t} \\
    \Erule{\cmatch \t {\cbranch {\bar \cpat} {\bar \c}}}{\ctrue} \\
  \Erule{\ueq}{\ueq} \\
  \Erule{\cletr \x \tv \tvs \ca \cb}{\cletr \x \tv \tvs {\cerase \ca} {\cerase \cb}} \\
  \Erule{\cexistsi \inst \x \c}{\cexistsi \inst \x \cerase \c}\\
  \Erule{\cpinst \inst \tv \t}{\cpinst \inst \tv \t}
\end{tabular}
\end{judgboxmathpar}

\begin{judgboxmathpar}
  {\semenv \Th \c}
  {Under the semantic environment $\semenv$,
   the constraint $\c$ is canonically satisfiable.}
  \label{fig:canonical-sem}
  \inferrule[Can-Simple]
    {\semenv \thsimple \c}
    {\semenv \Th \c}

  \inferrule[Can-Match-Ctx]
    {\Cshape \C \t \sh \\ \semenv \Th \C\where{\cmatched \t \sh \cbrs}}
    {\semenv \Th \C\where{\cmatch \t \cbrs}}
\end{judgboxmathpar}

\begin{judgboxmathpar}
  {{\labfrom \elab \T} \leq \tya \to \tyb}
  {The label $\elab$ of the record $\T$ has the field type $\tyb$ and record type $\tya$.}
  \inferrule[Lab-Inst]
    {\labenv(\labfrom \elab \T) = \tfor \tvs {\trcd \T \tvs} \to \t }
    {{\labfrom \elab \T} \leq \trcd \T \tys \to \t\where{\tvs \is \tys} }
\end{judgboxmathpar}

\judgbox
  {\labuni \elab \T}
  {The label $\elab$ infers the unique record name $\T$.}

\begin{judgboxmathpar}
  {\labsuni \elabs \T}
  {The \emph{closed} set of labels $\elabs$ infer the unique record name $\T$.}
  \infer[Lab-Uni]
    {\elab \in \Dom {\labenv(\T)} \\\\
     \forall \Tp, \uad \elab \in \Dom {\labenv(\Tp)} \implies {\T} = \Tp}
    {\labuni \elab \T}

  \infer[Labs-Uni]
    {\elabs = \Dom {\labenv(\T)}  \\\\
     \forall \Tp, \uad \Dom{\labenv(\Tp)} = \elabs \implies {\T} = \Tp}
    {\labsuni \elabs \T}
\end{judgboxmathpar}

\begin{judgboxmathpar}
  {\G \th \e : \ts}
  {Under the typing context $\G$, the term $\e$ is assigned the type $\ts $}
  \inferrule[Var]
    {x : \sigma \in \G}
    {\G \th x : \sigma}

  \inferrule[Fun]
    {\G, x : \ta \th e : \tb }
    {\G \th \efun x e : \ta \to \tb}

  \inferrule[App]
    {\G \th \ea : \ta \to \tb \\
     \G \th \eb : \ta}
    {\G \th \eapp \ea \eb : \tb}

  \inferrule[Unit]
    { }
    {\G \th () : 1}

  \inferrule[Annot]
    {\G \th e : \t\where {\tvs \is \tys}}
    {\G \th (e : \exi \tvs \t) : \t\where {\tvs \is \tys}}

  \inferrule[Gen]
    {\G \th e : \sigma \\ \tv \disjoint \G}
    {\G \th e : \tfor \tv \sigma}

  \inferrule[Inst]
    {\G \th e : \tfor \tv \ts}
    {\G \th e : \ts \where{\tv \is \t}}

  \inferrule[Let]
    {\G \th \ea : \sigma \\
     \G, x : \sigma \th \eb : \t}
    {\G \th \elet x \ea \eb : \t}

  \inferrule[Tuple]
    {\parens{\G \th \ei : \ti}\iton}
    {\G \th (\ea, \ldots, \en) : \Pi\iton \ti}
\\
  \inferrule[Proj-X]
    {\G \th \e : \Pi\iton \ti \\
     1 \leq j \leq n}
    {\G \th \exproj \e j n : \tj}

  \inferrule[Proj-I]
    {\eshape \E \e {\any \tvcs \Pi\iton \tvcs} \\
     \G \th \E\where{\exproj \e j n} : \t}
    {\G \th \E\where{\eproj \e j} : \t}

  \inferrule [Poly-X]
    {\G \th \e : \ts\where {\tvs \is \tys}}
    {\G \th \expoly \e \tvs \ts : \tpoly {\ts \where {\tvs \is \tys}}}

  \inferrule [Poly-I]
    {\Eshape \E \e {{\any \tvcs \tpoly \ts}} \\
     \G \th \E \where{\expoly \e \tvcs \ts} : \t}
    {\G \th \E \where{\epoly \e} : \t}

  \inferrule [Use-X]
    {\G \th \e : \tpoly \ts \where {\tvs \is \tys}}
    {\G \th \exinst e \tvs \ts : \ts \where {\tvs \is \tys}}

  \inferrule [Use-I]
    {\eshape \E  \e {\any \tvcs \tpoly \ts} \\
     \G \th \E\where{\exinst \e \tvcs \ts} : \t}
    {\G \th \E\where{\einst \e} : \t}
\\
  \inferrule[Rcd-X]
    {\elabs = \Dom {\labenv(\T)} \\
     \parens{{\labfrom \elabi \T} \leq \t \to \tyi}\iton \\
     \parens{\G \th \ei : \tyi}\iton }
    {\G \th \exrecord \T {\elaba = \ea; \ldots; \elab_n = \en} : \t }

  \inferrule[Rcd-Closed]
    {\labsuni \elabs \T \\ \G \th \exrecord \T {\overline{\elab = \e}} : \t}
    {\G \th \erecord {\overline{\elab = \e}} : \t}

  \inferrule[Rcd-I]
    {\Eshape \E \es {\any \tvcs \trcd \T \tvcs} \\
     \G \th \E\where{\exrecord \T {\overline{\elab = \e}}} : \t }
    {\G \th \E\where{\erecord {\overline{\elab = \e}}} : \t}

  \inferrule[Rcd-Proj-X]
    {{\labfrom \elab \T} \leq \tya \to \tyb \\ \G \th \e : \tya }
    {\G \th \exfield \e \T \elab : \tyb}

  \inferrule[Rcd-Proj-Closed]
    {\labuni \elab \T \\ \G \th \exfield \e \T \elab : \t }
    {\G \th \efield \e \elab : \t}

  \inferrule[Rcd-Proj-I]
    {\eshape \E \e {\any \tvcs \trcd \T \tvcs} \\
     \G \th \E\where{\exfield \e \T \elab} : \t}
    {\G \th \E\where{\efield \e \elab} : \t}

  \inferrule[Hole]
    {\parens{\G \th \ei : \ti}\iton}
    {\G \th \emagic {\bar \e} : \tp}

\def \Eqdef {&\eqdef&}
{\begin{tabular}{RCL}
\eshape \E \e \sh \Eqdef
  \forall \G, \t, \gt, \uad
  \G \th \eerase {\E \where {\emagic {\eannot \e {} \gt }}} : \t
      \wide\implies \shape \gt = \sh
\\[1ex]
\Eshape \E {\bar\e} \sh \Eqdef
  \forall \G, \t, \gt, \uad
      \G \th \eerase {\E\where{\eannotmagic {\bar \e} {} \gt}} : \t
      \wide\implies \shape \gt = \sh
\\[1ex]
\end{tabular}}
\end{judgboxmathpar}

\judgbox
  {\cinfer {\G \th \e} \t}
  {$\cinfer {\G \th \e} \t$ is satisfiable iff
   $\e$ has the expected \emph{known} type $\t$ under \emph{known} context $\G$}

\smallskip
\judgbox
  {\cinfer \e \ts}
  {$\cinfer \e \ts$ is satisfiable iff
   $\e$ has the expected \emph{known} type scheme $\ts$.}

\smallskip
\begin{judgboxmathpar}
  {\cinfer \e \t}
  {$\cinfer \e \t$ is satisfiable iff
   $\e$ has the expected \emph{known} type $\t$.}
\newcommand {\Crule}[2]{#1 &\eqdef& #2}
\def \arraystretch{1.2}
\begin{tabular}{.L;CL.}
\Crule
   {\cinfer x \t}
   {\cinst x \t}
\\
\Crule
  {\cinfer {()} \t}
  {\cunif \t \tunit}
\\
\Crule
  {\cinfer {\efun \x \e} \t}
  {\cexists {\tva, \tvb}
    \clet \x \tvp {\cunif \tvp \tv} {\cinfer \e \tvb}
    \cand \cunif \t {\tva \to \tvb}}
\\
\Crule
  {\cinfer {\eapp \ea \eb} \t}
  {\cexists {\tva, \tvb}
    \cinfer \ea {\tvb} \cand \cinfer \eb \tva
    \cand \cunif \tvb {\tva \to \t}}
\\
\Crule
  {\cinfer {\elet \x \ea \eb} \t}
  {\clet \x \tv {\cinfer \ea \tv} {\cinfer \eb \t}}
\\
\Crule
  {\cinfer {\eannot \e \tvs \tp} \t}
  {\cexists \tvs  \cinfer \e \tp \cand \cunif \t \tp}
\\
\Crule
  {\cinfer {\etuple {\ea, \ldots, \en}} \t}
  {\cexists \tvs \cunif \t {\tProd \tvs}
    \cand \cAnd \iton \cinfer \ei {\tv_i}}
\\
\Crule
  {\cinfer {\exproj \e j n} \t}
  {\cexists {\tv, \tvs}
    \cinfer \e \tv
    \cand \cunif \tv {\tProd \tvs}
    \cand \cunif \t {\tvj}}
\\
\Crule
  {\cinfer {\eproj \e j} \t}
  {\cexists \tv \cinfer \e \tv
    \cand \cmatch \tv {\cbranch {\cpatprod \tvb j} {\cunif \t \tvb}}}
\\
\Crule
  {\cinfer {\expoly \e \tvs \ts} \t}
  {\cexists {\tvs}
    \cinfer \e \ts
    \cand \cunif \t {\tpoly \ts}}
\\
\Crule
  {\cinfer {\exinst \e \tvs \ts} \t}
  {\cexists {\tvs, \tvb}
    \cinfer \e \tvb
    \cand \cunif \tvb {\tpoly \ts}
    \cand \ts \leq \t}
\\
\Crule
  {\cinfer {\einst \e} \t}
  {\cexists \tv
    \cinfer \e \tv
    \cand \cmatch \tv {\cbranch {\cpatpoly \cscm} \cscm \leq \t}}
\\
\Crule
  {\cinfer {\epoly \e} \t}
  {\clet \x \tv {\cinfer \e \tv}
    {\cmatch \t {\cbranch {\cpatpoly \cscm} {\x \leq \cscm}}}}
\\
\Crule
  {\cinfer {\efield \e \elab} \t}
  {\!\begin{cases}
    \cinfer {\exfield \e \T \elab} \t
    & \text{if } \labuni \elab \T
    \\
    \cexists \tv \cinfer \e \tv \cand
    \cmatch \tv
      {\cbranch {\cpatrcd \ct} {\labfrom \elab \ct \leq \tv \to \t}}
    \hskip 6ex
    & \text{otherwise}
  \end{cases}}
\\
\Crule
  {\cinfer {\exfield \e \T \elab} \t}
  {\cexists \tv \cinfer \e \tv \cand
   \labfrom \elab \T \leq \tv \to \t}
\\
\Crule
  {\cinfer {\erecord {\overline{\elab = \e}}} \t}
  {\!\begin{cases}
    \cinfer {\exrecord \T {\overline{\elab = \e}}} \t
    & \text{if } \labsuni \elabs \T \\
    \cexists \tvs \cAnd\iton \cinfer \ei \tvi & \text{otherwise} \\
    \uad\cand\uad \cmatch \t {\cbranch {\cpatrcd \ct}
      {\parens{\dom \ct = \elabs \cand
               \cAnd\iton \labfrom \elabi \ct \leq \t \to \tvi}}}
    \hskip -2em
   \end{cases}}
\\
\Crule
  {\cinfer {\exrecord \T {\overline{\elab = \e}}} \t}
  {\cexists \tvs \cAnd\iton \cinfer \ei \tvi \cand \dom {\T} = \elabs \cand \cAnd\iton \labfrom \elabi \T \leq \t \to \tvi }
\\
\Crule
  {\cinfer {\emagic \es} \t}
  {\cexists \tvs \cAnd\iton \cinfer \ei \tvi}
\\[1em]
\Crule
  {\cinfer \e {\tfor \tvs \t}}
  {\cfor \tvs \cinfer \e \t}
\\[1em]
\Crule
  {\cinfer {\eset \th \e} \t}
  {\cinfer \e \t}
\\
\Crule
  {\cinfer {\x : \ts, \G \th \e} \t}
  {\clet \x \tv {\ts \leq \tv} {\cinfer {\G \th \e} \t}}
\end{tabular}
\end{judgboxmathpar}
\emph{Note: When $\t$ is $\tv$ it is considered an \emph{unknown} expected type.}

\begin{judgboxmathpar}
  {\e \simple}
  {The term $\e$ is simple.}
  \inferrule[Simple-Var]
    { }
    {\x \simple}

  \inferrule[Simple-Fun]
    {\e \simple}
    {\efun \x \e \simple}

  \inferrule[Simple-App]
    {\ea \simple \\ \eb \simple}
    {\eapp \ea \eb \simple}

  \inferrule[Simple-Unit]
    { }
    {\eunit \simple}

  \inferrule[Simple-Let]
    {\ea \simple \\ \eb \simple}
    {\elet \x \ea \eb \simple}

  \inferrule[Simple-Annot]
    {\e \simple}
    {\eannot \e \tvs \t \simple}

  \inferrule[Simple-Tuple]
    {\parens {\ei \simple}\iton}
    {\etuple {\ea, \ldots, \en} \simple}

  \inferrule[Simple-Proj-X]
    {\e \simple}
    {\exproj \e j n \simple}

  \inferrule[Simple-Poly-X]
    {\e \simple}
    {\expoly \e \tvs \ts \simple}

  \inferrule[Simple-Use-X]
    {\e \simple}
    {\exinst \e \tvs \ts \simple}

  \inferrule[Simple-Rcd-X]
    {\parens {\ei \simple} \iton}
    {\exrecord \T {\elaba = \ea\; \ldots\; \elab_n = \en}}

  \inferrule[Simple-Rcd-Closed]
    {\parens {\ei \simple} \iton \\ \labsuni \elabs \T}
    {\erecord {\elaba = \ea\; \ldots\; \elab_n = \en}}

  \inferrule[Simple-Rcd-Proj-X]
    {\e \simple}
    {{\exfield \e \T \elab} \simple}

  \inferrule[Simple-Rcd-Proj-Closed]
    {\e \simple \\ \labuni \elab \T}
    {\efield \e \elab \simple}

  \inferrule[Simple-Hole]
    {\parens{\ei \simple}\iton}
    {(\emagic \es) \simple}
\end{judgboxmathpar}

\label {app/ref/eerase}
\begin{judgboxmathpar}
  {\eerase \e}
  {The erasure of $\e$.}
\newcommand{\Erule}[2]{\eerase {#1} &\eqdef& {#2}}
  \begin{tabular}{RCL}
  \Erule{\x}{\x} \\
  \Erule{\efun \x \e}{\efun \x \eerase \e} \\
  \Erule{\eapp \ea \eb}{\eapp {\eerase \ea} {\eerase \eb}} \\
  \Erule{\eunit}{\eunit} \\
  \Erule{\elet \x \ea \eb}{\elet \x {\eerase \ea} {\eerase \eb}} \\
  \Erule{\eannot \e \tvs \t}{\eannot {\eerase \e} \tvs \t} \\
  \Erule{\etuple {\ea, \ldots, \en}}{\etuple {\eerase \ea, \ldots, \eerase \en}} \\
  \Erule{\eproj \e j}{\emagic {\eerase \e}} \\
  \Erule{\exproj \e j n}{\exproj {\eerase \e} j n} \\
  \Erule{\expoly \e \tvs \ts}{\expoly {\eerase \e} \tvs \ts} \\
  \Erule{\epoly \e}{\emagic {\eerase \e}}\\
  \Erule{\einst \e}{\emagic {\eerase \e}}\\
  \Erule{\exinst \e \tvs \ts}{\exinst {\eerase \e} \tvs \ts}\\
    \Erule{\erecord {\elaba = \ea; \ldots; \elab_n = \en}}{\begin{cases}
      \erecord {\elaba = \eerase \ea; \ldots; \elab_n = \eerase \en} &\text{if } \labsuni \elabs \T \\
      \emagic {\eerase \ea, \ldots, \eerase \en} & \text{otherwise}
    \end{cases}}\\
  \Erule{\exrecord \T {\elaba = \ea; \ldots; \elab_n = \en}}{\exrecord \T {\elaba = \eerase \ea; \ldots; \elab_n = \eerase \en}}\\
    \Erule{\efield \e \elab}{\begin{cases}
      \efield {\eerase \e} \elab & \text{if } \labuni \elab \T \\
      \emagic {\eerase \e} & \text{otherwise}
    \end{cases}}\\
  \Erule{\exfield \e \T \elab}{\exfield {\eerase \e} \T \elab}\\
  \Erule{\emagic \es}{\emagic {\parens {\eerase \ei} \iton}}\\
\end{tabular}
\end{judgboxmathpar}

\begin{judgboxmathpar}
  {\G \thsimplesd \e : \t}
  {Under the typing context $\G$, the simple term $\e$ has the type $\t$.}
\\
  \inferrule[Var-SD]
    {x : \tfor \tvs \t \in \G}
    {\G \thsimplesd x : \t\where{\tvs \is \tys}}

  \inferrule[Let-SD]
    {\G \thsimplesd \ea : \ta\\
     \tvs \disjoint \fvs \G \\
     \G, x : \tfor \tvs \ta \thsimplesd \eb : \tb}
    {\G \thsimplesd \elet x \ea \eb : \tb}

  \inferrule [Poly-X]
    {\G \thsimplesd \e : \t\where {\tvs \is \tys} \\ \tvbs \disjoint \G}
    {\G \thsimplesd \expoly \e \tvs {\tfor \tvbs \t} : \tpoly {\ts \where {\tvs \is \tys}}}

  \inferrule [Use-X]
    {\G \th \e : \tpoly {\tfor \tvbs \t} \where {\tvs \is \tys}}
    {\G \th \exinst e \tvs {\tfor \tvbs \t} : \t \where {\tvs \is \tys, \tvbs \is \typs}}
\end{judgboxmathpar}

\begin{judgboxmathpar}
  {\Th \e : \t}
  {The term $\e$ canonically has the type $\t$.}
  \inferrule[Can-Base]
    {\eset \thsimplesd \e : \t}
    {\Th \e : \t}

  \inferrule[Can-Proj-I]
    {\eshape \E \e {\any \tvcs \Pi\iton \tvcs} \\
     \Th \E\where{\exproj \e j n} : \t}
    {\Th \E\where{\eproj \e j} : \t}

  \inferrule [Can-Poly-I]
    {\Eshape \E \e {{\any \tvcs \tpoly \ts}} \\
     \Th \E \where{\expoly \e \tvcs \ts} : \t}
    {\Th \E \where{\epoly \e} : \t}

  \inferrule [Can-Use-I]
    {\eshape \E  \e {\any \tvcs \tpoly \ts} \\
     \Th \E\where{\exinst \e \tvcs \ts} : \t}
    {\Th \E\where{\einst \e} : \t}

  \inferrule[Can-Rcd-I]
    {\Eshape \E \es {\any \tvcs \trcd \T \tvcs} \\
     \Th \E\where{\exrecord \T {\elaba = \ea; \ldots; \elab_n = \en}} : \t}
    {\Th \E\where{\erecord {\elaba = \ea; \ldots; \elab_n = \en}} : \t}

  \inferrule[Can-Rcd-Proj-I]
    {\Eshape \E \e {\any \tvcs \trcd \T \tvcs} \\
     \Th \E\where{\exfield \e \T \elab} : \t}
    {\Th \E\where{\efield \e \elab} : \t}
\end{judgboxmathpar}

\label {app/rules/unif}
\begin{judgboxmathpar}
  {\up \unif \upp}
  {The unifier rewrites $\up$ to $\upp$.}
   \rewrite[U-Exists]
      {(\cexists \alpha \upa) \cand \upb }{ \tv \disjoint \upb}
      {\cexists \tv {\upa \cand \upb}}

    \rewrite[U-Cycle]
      {\up }{ \cyclic \up}
      {\cfalse}

    \rewrite[U-True]
      {\up \cand \ctrue}
      {}
      {\up}

    \rewrite[U-False]
      {\Up\where{\cfalse}}
      { \Up \neq \hole}
      {\cfalse}

    \rewrite[U-Merge]
      {\cunif \tv \ueqa \cand \cunif \tv \ueqb}
      {}
      {\cunif \tv {\cunif \ueqa \ueqb}}

    \rewrite[U-Stutter]
      {\cunif \tv {\cunif \tv \ueq}}
      {}
      {\cunif \tv \ueq}

    \rewrite[U-Name]
      {\cunif {\pshapp \parens{\tys, \ti, \typs}} \ueq }
      {\tv \disjoint \tys, \typs, \ueq \\ \ti \notin \TyVars}
      {\cexists \tv
        {\cunif \tv \ti \cand
         \cunif {\pshapp \parens{\tys, \tv, \typs}} \ueq}}

    \rewrite[U-Decomp]
      {\cunif {\pshapp \tvs} {\cunif {\pshapp \tvbs} \ueq}}
      {}
      {\cunif {\pshapp \tvs} \ueq \cand \cunif \tvs \tvbs}

    \rewrite[U-Clash]
      {\cunif {\pshapp \tvs} {\cunif {\pshapp[\shp]\tvbs } \ueq }}{
       \sh \neq \shp}
      {\cfalse}

    \rewrite[U-Trivial]
      {\ueq}
      {|\ueq| \leq 1}
      {\ctrue}
\end{judgboxmathpar}

\judgbox
  {\tva \prec_\up \tvb}
  {$\tva$ occurs in $\tvb$ in the constraint $\up$.}
\smallskip
\begin{judgboxmathpar}
  {\cyclic \up}
  {$\up$ contains a cyclic type.}

  \inferrule[Occurs]
    {\tva \in \fvs \t \\
     \t \notin \TyVars \\
     \tva, \tvb \disjoint \bvs \Up}
    {\tva \prec_{\Up\where{\cunif {\tvb~} {~\cunif {\t~} ~\ueq}}} \tvb}

  \inferrule[Cycle]
    {\tva \prec_\up^+ \tva}
    {\cyclic \up}
\end{judgboxmathpar}

\begin{definition}[Solved form $\hat\up$]
  We write $\hat\up$ for constraints in \emph{solved form}, that is,
  constraints of the form $\cexists \tvs \ueqs$, where:
\begin{enumerate*}
  \item
    each $\ueqi$ contains at most one non-variable type;
  \item
    each type variable may occur as a member of at most one multi-equation
    $\ueqi$;
  \item the constraint is acyclic.
\end{enumerate*}\relax
\end{definition}

\label {app/ref/solver}
\begin{judgboxmathpar}
  {\c \csolve \cp}
  {The constraint solver rewrites $\c$ to $\cp$.}

  \rewrite[S-Unif]
    {\upa}
    {\upa \unif \upb}
    {\upb}

  \rewrite[S-True]
    {C \cand \ctrue}
    {}
    {C}

  \rewrite[S-False]
    {\C\where\cfalse}
    {\C \neq \hole}
    {\cfalse}

  \rewrite[S-Let]
    {\clet \x \tv \ca \cb}
    {}
    {\cletr \x \tv \eset \ca \cb}

  \rewrite[S-Exists-Conj]
    {(\cexists \alpha \ca) \cand \cb }{
     \tv \disjoint \cb}
    {\cexists \tv {\ca \cand \cb}}

  \rewrite[S-Let-ExistsLeft]
    {\cletr \x \tv \tvs {\cexists \tvb \ca} \cb }{
     \tvb \disjoint \tv, \tvs, \cb}
    {\cletr \x \tv {\tvs, \tvb} \ca \cb}

  \rewrite[S-Let-ExistsRight]
    {\cletr \x \tv \tvs \ca {\cexists \tvb \cb} }{
     \tvb \disjoint \tv, \tvs, \ca}
    {\cexists \tvb {\clet \x \tvs \ca \cb}}

  \rewrite[S-Let-ConjLeft]
    {\cletr \x \tv \tvs {\ca \cand \cb} \cc }{
     \ca \disjoint \tv, \tvs}
    {\ca \cand \cletr \x \tv \tvs \cb \cc}

  \rewrite[S-Let-ConjRight]
    {\cletr \x \tv \tvs \ca (\cb \cand \cc) }{
     \x \disjoint \cc}
    {\cc \cand \Clet \x \tv \ca \cb}

  \rewrite[S-Match-Ctx]
    {\C\where{\cmatch \t \cbrs}}
    {\th \Cshape \C \t \sh}
    {\C\where{\cmatched \t {\sh} \cbrs}}

   \rewrite[S-Inst-Name]
    {\cpinst \inst \tv \t}
    {\t \notin \TyVars}
    {\cexists \tvc \cunif \tvc \t \cand \cpinst \inst \tv \tvc}

  \rewrite[S-Let-AppR]
    {\cletr \x \tv \tvs \c {\C\where{\capp \x \t}}}
    {\tvc \disjoint \t \\ \x \disjoint \bvs \C}
    {\cletr \x \tv \tvs \c {\C\where{\cexistsi
      {\tvc, \inst} \x { \cpinst \inst \tv \tvc \cand \cunif \tvc \t}}}}

  \rewrite[S-Inst-Copy]
    {\cletr \x \tv \tvs {\c} \C\where{\cpapp \x \tvp \tvc \inst}}
    {\c = \cp \cand \cunif \tvp {\cunif {\shapp \tvbs} \ueq}\\\\
     \tvp \in \reg \tv \tvs \\
     \neg \cyclic {\c} \\
     \tvbs' \disjoint \tvp, \tvc, \tvbs \\
     \x \disjoint \bvs \C}
    {\cletr \x \tv \tvs {\c}
      \C\where{\cexists {\tvbs'}
         {\cunif \tvc {\shapp \tvbs'} \cand \overline{\cpapp \x {\tvb} {\tvb'} \inst}}}}

  \rewrite[S-Inst-Unify]
    {\cpinst \inst \tv \tvca \cand \cpinst \inst \tv \tvcb}
    {}
    {\cpinst \inst \tv \tvca \cand \cunif \tvca \tvcb}

  \rewrite[S-Inst-Poly]
    {\cletr \x \tv {\tvs} {\ueqs \cand \c}
        {\C\where{\cpapp \x \tvp \tvc \inst}}}
    {\cfor \tvb \cexists {\tvbs} {\ueqs} \cequiv \ctrue \\\\
     \tvb, \tvbs \subseteq \tv, \tvs \\
     \tvbs \disjoint \c \\
     \inst(\tvb) \disjoint \insts \C \\
     \x \disjoint \bvs \C}
    {\cletr \x \tv {\tvs} {\ueqs \cand \c} {\C\where\ctrue}}

  \rewrite[S-Inst-Mono]
    {\cletr \x \tv \tvs \c {\C\where{\cpapp \x \tvb \tvc \inst}}}
    {\\\tvb \notin \reg \tv \tvs \\ \x, \tvb \disjoint \bvs \C}
    {\cletr \x \tv \tvs \c {\C\where{\cunif \tvb \tvc}}}

  \rewrite[S-Compress]
    {\cletr \x \tv {\tvs, \tvb}
       {\ca \cand \cunif \tvb {\cunif \tvc \ueq}} {\cb}}
    {\tvb \neq \tvc}
    {\cletr \x \tv {\tvs}
       {\ca\where{\tvb \is \tvc} \cand \cunif \tvc {\ueq\where{\tvb \is \tvc}}}
       {\cb\where{\x(\tvb) \is \tvc}}}

  \rewrite[S-Gc]
    {\cletr \x \tv {\tvs, \tvb} {\ca \cand \cunif \tvb \ueq} \cb}
    {\tvb \disjoint \ca, \ueq, \cb}
    {\cletr \x \tv {\tvs} {\ca \cand \ueq} \cb}

  \rewrite[S-Exists-Lower]
    {\cletr \x \tv {\tvas, \tvbs} \ca \cb}
    {\th \cdetermines {\cexists {\tv, \tvas} \ca} \tvbs}
    {\cexists \tvbs \cletr \x \tv \tvas \ca \cb}

  \rewrite[S-Exists-Exists-Inst]
    {\cexistsi \inst \x \cexists \tv \c}
    {}
    {\cexists \tv \cexistsi \inst \x \c}

  \rewrite[S-Exists-Inst-Conj]
    {\cexistsi \inst \x \ca \cand \cb}
    {\inst \disjoint \ca}
    {\ca \cand \cexistsi \inst \x \cb}

  \rewrite[S-Exists-Inst-Let]
    {\cletr \x \tv \tvs \ca {\cexistsi \inst \xp \cb}}
    {\x \neq \xp}
    {\cexistsi \inst \xp \cletr \x \tv \tvs \ca \cb}

  \rewrite[S-Exists-Inst-Solve]
    {\cexistsi \inst \x \c}
    {\inst \disjoint \c}
    {\c}

  \rewrite[S-All-Conj]
    {\cfor \tvs {\cexists \tvbs {\ca \cand \cb}}}
    {\tvs, \tvbs \disjoint \ca}
    {\ca \cand \cfor \tvs {\cexists \tvbs \cb}}

  \rewrite[S-Exists-All]
    {\cfor \tvs {\cexists {\tvbs, \tvcs} \c}}
    {\th \cdetermines {\cexists {\tvs, \tvbs} \c} \tvcs}
    {\cexists \tvcs \cfor \tvs {\cexists \tvbs \c}}

  \rewrite[S-All-Escape]
    {\cfor {\tvs, \tv} {\cexists \tvbs {\c \cand \ueqs}}}
    {\tv \prec_{\ueqs}^* \tvc \\
     \tvc \disjoint \tv, \tvbs \\
     \tv \disjoint \tvbs}
    {\cfalse}

  \rewrite[S-All-Rigid]
    {\cfor {\tvs, \tv}
      {\cexists \tvbs \c \cand \cunif \tv {\cunif \t \ueq}}}
    {\t \notin \TyVars \\ \tv \disjoint \tvbs}
    {\cfalse}

  \rewrite[S-All-Solve]
    {\cfor \tvs \cexists \tvbs \ueqs}
    {\cexists \tvbs \ueqs \cequiv \ctrue}
    {\ctrue}
\end{judgboxmathpar}

\begin{judgboxmathpar}
  {\th \Cshape \C \t \sh}
  {Under $\C$, the type $\t$ has the provably unique canonical shape $\sh$.}
  \infer[S-Uni-Var]
    {\color{gray}\tv \disjoint \bvs \Cb}
    {\th \Cshape {\Ca\where{\cunif \tv {\cunif \t \ueq} \cand \Cb\where{-}}} \tv {~\shape \t}}

  \infer[S-Uni-Type]
    {{\color{gray}\t \notin \TyVars}}
    {\th \Cshape \C \t {~\shape \t}}

  \infer[S-Uni-BackProp]
    {\th \Cshape{\cletr \x \tv \tvs {\Ca\where{\ctrue}} {\Cb\where{\cpapp \x \tvp \tvc \inst \cand -}}} \tvc \sh \\
     \color{gray}\tvp \in \tv, \tvs \\
     \color{gray}\x \disjoint \bvs \Cb \\
     \color{gray}\tvp \disjoint \bvs \Ca}
    {\th \Cshape{\cletr \x \tv \tvs {\Ca\where{-}} {\Cb\where{\cpapp \x \tvp \tvc \inst}}} \tv \sh}

\end{judgboxmathpar}

\begin{definition}
  $\cdetermines \c \tvbs$ if and only if every ground assignments
  $\semenv$ and $\semenvp$ that satisfy (the erasure of) $\c$ and coincide outside of $\tvb$
  coincide on $\tvbs$ as well.
  \begin{mathpar}
    \cdetermines \c \tvb \uad\eqdef\uad \all {\semenv, \semenvp} \uad
      \semenv \th \cerase \c
      \wedge \semenvp \th \cerase \c
      \wedge \semenv =_{\setminus \tvbs} \semenvp
      \implies
      \semenv = \semenvp
  \end{mathpar}
\end{definition}

\begin{judgboxmathpar}
  {\th \cdetermines \c \tvs}
  {$\c$ provably determines $\tvs$.}

  \inferrule
    [S-Det-Dom]
    {\tvc \disjoint \tvbs, \tvas \\ \tvs \subseteq \fvs \ueq}
    {\th \cdetermines {\cexists \tvbs \c \cand \cunif \tvc \ueq} \tvs}

  \inferrule
    [S-Det-Esc]
    {\fvs \t \disjoint \tvs, \tvbs}
    {\th \cdetermines {\cexists \tvbs \c \cand \cunif \tvs {\cunif \t \ueq}} \tvs}
\end{judgboxmathpar}

\begin{judgboxmathpar}
  {\insts \c}
  {The set of instantiations in $\c$.}
  \newcommand{\Srule}[2]{#1 &\eqdef& #2}
  \begin{tabular}{RCL}
    \Srule{\insts \ctrue}{\eset}\\
    \Srule{\insts \cfalse}{\eset}\\
    \Srule{\insts {\ca \cand \cb}}{\insts \ca \cup \insts \cb}\\
    \Srule{\insts {\cexists \tv \c}}{\insts \c}\\
    \Srule{\insts {\cfor \tv \c}}{\insts \c}\\
    \Srule{\insts {\cunif \t \tp}}{\eset}\\
    \Srule{\insts {\clet \x \tv \ca \cb}}{\insts \ca \cup \insts \cb}\\
    \Srule{\insts {\capp \x \t}}{\eset}\\
    \Srule{\insts {\ueq}}{\eset}\\
    \Srule{\insts {\cletr \x \tv \tvs \ca \cb}}{\insts \ca \cup \insts \cb}\\
    \Srule{\insts {\cexistsi \inst \x \c}}{\insts \c}\\
    \Srule{\insts {\cpinst \inst \tv \t}}{\set {\inst(\tv)}}
  \end{tabular}
\end{judgboxmathpar}

\begin{definition}[Normal forms]
  \label{def/normal-forms-appendix}
  A constraint $\c$ is in \emph{normal form}, written $\hat\c$, if no 
  rewriting rules apply, \ie $\c \cnsolve$. Similarly, a constraint context 
  $\C$ is in \emph{normal form}, written $\hat\C$, if $\C\where\ctrue$ is in normal form.
\end{definition}

\begin{definition}[Measure]
  For the relation $\semenv \th \c$, the following measure enables a useful
  induction principle:
    \begin{mathpar}
    \cmeasure \c \uad\eqdef\uad \angles{\cnmatches \c, \csize \c}
  \end{mathpar}
  where $\angles \ldots$ denotes a pair with lexicographic ordering, and:
  \begin{enumerate}

    \item $\cnmatches \c$ is the number of $\cmatch \t \cbrs$ constraints in
      $\c$.

    \item the last component $\csize \c$ is a structural measure of constraints \ie a
      conjunction $\ca \cand \cb$ is larger than the two conjuncts $\ca,
      \cb$.

  \end{enumerate}
\end{definition}

\judgbox
  {\ctxinfer \E \tp \t}
  {$\ctxinfer \E \tp \t$ is a satisfiable context iff the context $\E$ has \\ the expected type $\t$ given the hole has the type $\tp$.}

{
\newcommand {\Crule}[2]{#1 &\eqdef& #2}
\def \arraystretch{1.4}
\small
\begin{longtable}{.L;C;L.}
  \\
  \Crule{\ctxinfer \hole \t \t}{\hole}\\
  \Crule{\ctxinfer {\parens{\eapp \E \e}} \tp \t}{\cexists {\tva, \tvb} \cunif \tva {\tvb \to \t} \cand \ctxinfer \E \tp \tva \cand \cinfer \e \tvb}\\
  \Crule{\ctxinfer {\parens{\eapp \e \E}} \tp \t}{\cexists {\tva, \tvb} \cunif \tva {\tvb \to \t} \cand \cinfer \e \tva \cand \ctxinfer \E \tp \tvb}\\
  \Crule{\ctxinfer {\parens{\elet \x \E \e}} \tp \t}{\clet \x \tv {\ctxinfer \E \tp \tv} {\cinfer \e \t}}\\
  \Crule{\ctxinfer {\parens{\elet \x \e \E}} \tp \t}{\clet \x \tv {\cinfer \e \tv} {\ctxinfer \E \tp \t}}\\
  \Crule{\ctxinfer {{\eannot \E \tvs \tpp}} \tp \t}{\cexists \tvs \cunif \t \tpp \cand \ctxinfer \E \tp \t}\\
  \Crule{\ctxinfer {\etuple {\ea, \ldots, \E_j, \ldots, \en}} \tp \t}{\cexists \tvs \cunif \t {\tProd \tvi} \cand \cAnd_{i \neq j} \cinfer \ei \tvi \cand \ctxinfer {\E_j} \tp \tvj} \\
  \Crule{\ctxinfer {\parens{\exproj \E j n}} \tp \t}{\cexists {\tv, \tvs} \ctxinfer \E \tp \tv \cand \cunif \tv {\tProd \tvi} \cand \cunif \t \tvj}\\
  \Crule{\ctxinfer {\parens{\eproj \E j}} \tp \t}{\cexists \tv \ctxinfer \E \tp \tv \cand \cmatch \tv {\cbranch {\cpatprod \tvc j} {\cunif \t \tvc}}}\\
  \Crule{\ctxinfer {\expoly \E \tvs \ts} \tp \t}{\cexists \tvs \ctxinfer \E \tp \ts \cand \cunif \t {\tpoly \ts}}\\
  \Crule{\ctxinfer {\exinst \E \tvs \ts} \tp \t}{\cexists {\tvs, \tvb} \ctxinfer \E \tp \tvb \cand \cunif \tvb {\tpoly \ts} \cand \cleq \ts \t}\\
  \Crule{\ctxinfer {\epoly \E} \tp \t}{\clet \x \tv {\ctxinfer \E \tp \tv} {\\&&\cmatch \t {\cbranch {\cpatpoly \cscm} {\cleq \x \cscm}}}}\\
  \Crule{\ctxinfer {\einst \E} \tp \t}{\cexists \tv {\ctxinfer \E \tp \tv \cand \cmatch \tv {\cbranch {\cpatpoly \cscm} {\cleq \cscm \t}}}}\\
\Crule
  {\ctxinfer {\parens{\efield \E \elab}} \tp \t}
  {\!\begin{cases}
    \ctxinfer {\parens{\exfield \E \T \elab}} \tp \t
    & \text{if } \labuni \elab \T
    \\
      \cexists \tv \ctxinfer \E \tp \tv
    & \text{otherwise}
    \\
    \cand ~ \cmatch \tv {\cbranch {\cpatrcd \ct} {\labfrom \elab \ct \leq \tv \to \t}}
  \end{cases}}
\\
\Crule
  {\ctxinfer {\parens{\exfield \E \T \elab}} \tp \t}
  {\cexists \tv \ctxinfer \E \tp \tv \cand \labfrom \elab \T \leq \tv \to \t}
\\
\Crule
  {\ctxinfer {\erecord {\elaba = \ea; \ldots; \elab_j = \E_j; \ldots; \elab_n = \en}} \tp \t}
  {\!\begin{cases}
    \ctxinfer {\exrecord \T { \ldots; \elab_j = \E_j; \ldots}} \tp \t
    & \text{if } \labsuni \elabs \T
    \\
    \cexists \tvs \cAnd_{i \neq j} \cinfer \ei \tvi \cand \ctxinfer {\E_j} \tp \tvj
    & \text{otherwise} \\
    \uad\cand~\cmatch \t {\cbranch {\cpatrcd \ct}
      {\\ \hspace{3em}
       \dom {\ct} = \elabs \cand \cAnd\iton \labfrom \elabi \ct \leq \t \to \tvi}}
  \end{cases} }
\\
\Crule
  {\ctxinfer {\exrecord \T {\elaba = \ea; \ldots; \elab_j = \E_j; \ldots; \elab_n = \en}} \tp \t}
  {\cexists \tvs \cAnd_{i \neq j} \cinfer \ei \tvi \cand \ctxinfer {\E_j} \tp \tvj \cand \dom {\T} = \elabs \\
  &&\cand \cAnd\iton \labfrom \elabi \T \leq \t \to \tvi }
\\
\Crule
  {\ctxinfer {\parens{\emagic {\ea, \ldots, \E_j, \ldots, \en}}} \tp \t}
  {\cexists \tvs \cAnd_{i \neq j} \cinfer \ei \tvi \cand \ctxinfer {\E_j} \tp \tvj}
\\\\
\Crule
  {\ctxinfer \E \tp {\tfor \tvs \t}}
  {\cfor \tvs \ctxinfer \E \tp \t}
\\\\
\end{longtable}

}

\section{Properties of the constraint language}
\label{app:proofs-constraints}

This appendix establishes key properties of the constraint language. The first
is the principality of shapes \cref{thm:principal-shapes}: any non-variable type
$\t$ admits a non-trivial principal shape $\sh$.

The second is the canonicalization of satisfiability derivations $\semenv \th
\c$, which enables a simple induction principal for reasoning about unicity.
This canonical form for derivations is a crucial tool in our proof of
soundness and completeness in \cref{app/oml/proofs}.

\subsection{Principality of shapes}

\principalShapesBIS*
\begin{proof}
  Let us assume $\t$ is a non-variable type.

  \begin{proofcases}
    \proofcase{$\t$ is a type constructor $\tconstr \tys$}

    $\tconstr$ is a top-level type constructor of arity $n$, which in
    our setting may be the nullary $\tunit$, the binary arrow,
    the $n$-ary product, or a nominal record type. In all
    these cases, the shape of $\t$ is $\any \tvcs \tconstr \tvcs$
    where $\tvcs$ is a sequence of $n$ distinct type variables. This
    is clearly principal.

    \proofcase{$\t$ is a polytype $\tpoly {\tfor \tvs \t}$}

    We may assume \Wlog that each variable of $\tvs$ occurs free in
    $\t$.
    Let $(\pi_i)\iton$ be the sequence of shortest paths in $\t$ that cannot be
    extended to reach a (polymorphic) variable in $\tvas$, in lexicographic
    order and $\tvcs$ be a sequence $(\tvci)\iton$ of distinct variables that do
    not appear in~$\t$.
    Let $\tyz$ be $\t \where {\pi_i \is \tvci}\iton$, \ie the term $\t$ where each
    path $\pi_i$ has been substituted by the variable $\tvci$.  Let $\Sh$ be the
    shape $\any \tvcs {\tpoly {\all \tvs \tyz}}$.
    We claim that $\Sh$ is actually the principal shape of $\tpoly {\all \tvs
    \t}$.

    \medskip
    \locallabelreset

    By construction, $\t$ is equal to $\shapp[\Sh] \tys$~\llabel 1.
    where $\tys$ is the sequence composed of $\ti$ equal to $\t/\pi_i$
    for $i$ ranging from $1$ to $n$.
    Indeed, by
    definition, $\shapp[\Sh] \tys$ is equal to $(\t\where {\pi_i \is \tvci}\iton)
    \where {\tvci \is \ti}$ which is obviously equal to $\t$.
    The remaining of the proof checks that $\Sh$ is minimal~\llabel 2, that is,
    we assume that $\Sh'$ is another shape such that $\tpoly {\all\tvs\t}$ is
    equal to $\shapp [\Shp] \typs$ for some $\typs$~\llabel H and show that $\Sh
    \preceq \Shp$~\llabel C.

    \medskip

    It follows from~\lref H that
      $\Shp$ must be a polytype shape, \ie of the form $\any \tvcps {\tpoly
      {\all \tvbs \typ}}$ and
      $\tpoly {\all \tvs \t}$ is equal to $\tpoly {\all\tvbs \tp} \where {\tvcps
      \is \typs}$~\llabel{P}.
    \relax
    We may assume \Wlog that $\tvbs$ and $\tvcps$ are disjoint, that
    $\tvcps$ does not contain useless variables, \ie
    that they all appear in $\tp$ and that they actually appear in lexicographic
    order.
    \relax
    Now that never term contains useless variables, \lref P implies that the
    sequences $\tvas$ and $\tvbs$ can be put in one-to-one correspondences.
    Besides, since they all ordered in the order of appearance in terms, they
    the correspondence respects the ordering. Hence, the substitution $\where
    {\tvbs \is \tvas}$ is a renaming. Therefore, we can assume \Wlog that
    $\tvbs$ is $\tvas$,
    \relax
    That is, \lref P becomes that $\tpoly {\all \tvs \t}$ is equal to $\tpoly
    {\all \tvs \typ \where {\tvcps \is \typs}}$, which given that variables
    $\tvs$ appear in the same order in both terms, implies that $\t$ is
    equal to $\typ \where {\tvcps \is
    \typs}$~\llabel T.

    \relax

    \medskip

    Since $\typs$ does not contain any variable in $\tvs$, every path $\pi_i$
    is a path in $\typ$. Thus, we may write $\typ$ as
    \relax $\typ \where {\pi_i \is \tyi''}\iton$ where $\tyi''$ is $\typ/\pi_i$.
    This is also equal to
    \relax $(\typ \where {\pi_i \is \tvci}\iton) \where {\tvci \is \tyi''}\iton$,
    that is $\tyz\where {\tvci \is \tyi''}\iton$.
    In summary, we have $\typ$ is equal to
    \relax $\tyz \where {\tvci \is \tyi''}\iton$,
    which implies that
    \relax  $\tpoly {\all \tvs \typ}$ is equal to
    \relax  $\tpoly {\all \tvs {\tyz \where {\tvci \is \tyi''}\iton}}$, \ie
    \relax  $\tpoly {\all \tvs \tyz} \where {\tvci \is \tyi''}\iton$~\llabel E.
    By \Rule {Inst-Shape}, we have
    \begin{mathpar}[inline]
    \any \tvcs  \tpoly {\all \tvs \tyz} \preceq
    \any \tvcps\tpoly {\all \tvs \tyz} \where {\tvci \is \tyi''}\iton,
    \end{mathpar}
    which, given~\lref E, is exactly~\lref C.

  \end{proofcases}
\end{proof}

\subsection{Canonicalization of satisfiability}

They key result in this section is that our semantic derivations $\semenv \th
\c$ can always be rewritten to only apply the rule \Rule{Match-Ctx} at the very
bottom of the derivation, rather than in the middle of derivations. This
corresponds to explicitating the unique shapes of all suspended constraints (in
some order that respects the dependency between suspended constraints), and
then continuing with a syntax-directed proof of a fully-discharged constraint.

We did not impose this ordering in our definition of the semantics to make it
more flexible and more declarative, but the inversion principle that it
provides will be helpful when reasoning about the solver in
\cref{app:proofs-solving}.

We define in \cref{fig:simple} a formal judgment $\c \simple$ that says that
$\c$ does not contain any suspended match constraint, and extend it trivially
to constraint contexts: $\C \simple$. In particular, the erasure $\cerase \c$
of a constraint (\cref{def:erasure}) is always simple. We then introduce in
\cref{fig:canonical-sem} a ``canonical'' semantic judgment $\semenv \Th \c$
that enforces the structure we mentioned: its derivation starts by discharging
suspended constraints, until eventually we reach a simple constraint $\c$.
Below we prove that any semantic derivation $\semenv \th \c$ can be turned into
a canonical semantic derivation $\semenv \Th \c$.

We can think of this result as controlling the amount of non-syntax-directness in our rules: we need some of it, but it suffices to have it only at the outside, and it contains a more standard derivation that is easy to reason about.

\paragraph{Inversion} When $\c$ is simple, a derivation of $\semenv \th \c$
does not use the contextual rule (it is a derivation in $\semenv \thsimple
\c$), so it enjoys the usual inversion principle on syntax-directed judgments;
for example, if $\semenv \thsimple {\ca \cand \cb}$ then by inversion $\semenv
\thsimple \ca$ and $\semenv \thsimple \cb$, etc.

\paragraph{Congruence} Congruence does not hold in general in our system due to
the contextual rule. For example, $\ca \eqdef (\cmatch \tva {\cbranch \wild
\ctrue})$ is unsatisfiable so we have $\ca \cequiv \cfalse$, but for $\C \eqdef
(\cexists \tva \cunif \tva \tint \cand \hole)$ we have $\C \where \ca \cequiv
\ctrue$ and $\C \where \cfalse \equiv \cfalse$. It holds simply for simple
constraints.

\begin{lemma}[Simple congruence]
  \label{lem:cong-simple}
  Given simple constraints $\ca, \cb$ and simple context $\C$.
  If \\$\ca \centails \cb$, then $\C\where{\ca} \centails \C\where{\cb}$.

  \begin{proof}
    Induction on the derivation of $\C \simple$.
  \end{proof}
\end{lemma}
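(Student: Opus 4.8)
The plan is to proceed by structural induction on the derivation of $\C \simple$, following the rules of \cref{fig:simple-context}. The crucial observation—and precisely the reason congruence holds here while failing in general—is that since $\C$, $\ca$, and $\cb$ are all simple, the filled constraints $\C\where{\ca}$ and $\C\where{\cb}$ contain no suspended match constraints, so every applicable semantic rule is syntax-directed and the contextual rule \Rule{Match-Ctx} is never invoked. This licenses the usual inversion principle at each context former, which would be unsound if a match constraint could be discharged somewhere inside $\C$. I would phrase the whole argument in terms of the simple satisfiability judgment $\semenv \thsimple \c$, so that inversion is available throughout.

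The base case $\C = \hole$ is immediate: $\C\where{\ca} = \ca$, $\C\where{\cb} = \cb$, and $\ca \centails \cb$ is the hypothesis. For the logical context formers I would fix an arbitrary $\semenv$ with $\semenv \th \C\where{\ca}$, invert, apply the induction hypothesis to the sub-context $\C'$ (the IH being universally quantified over environments, so freely instantiable at an extended $\semenv$), and re-apply the matching rule. Concretely: for $\C' \cand \c$ (symmetrically $\c \cand \C'$), inversion of \Rule{Conj} gives $\semenv \th \C'\where{\ca}$ and $\semenv \th \c$, the IH yields $\semenv \th \C'\where{\cb}$, and \Rule{Conj} recombines; for $\cexists \tv \C'$, inversion of \Rule{Exists} produces a witness $\gt$ with $\semenv\where{\tv \is \gt} \th \C'\where{\ca}$, and the IH plus \Rule{Exists} repackages it; the universal case $\cfor \tv \C'$ is identical with \Rule{Forall} quantifying over all $\gt$; and $\cexistsi \inst \x \C'$ mirrors the existential via \Rule{Exists-Inst}. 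The let-in case $\clet \x \tv \c {\C'}$ is equally routine, since the abstraction $\c$ is untouched: inversion of \Rule{Let} fixes $\gabs = \semenv(\cabs \tv \c)$ with $\gabs \neq \eset$, the IH transports $\semenv\where{\x \is \gabs} \th \C'\where{\ca}$ to $\semenv\where{\x \is \gabs} \th \C'\where{\cb}$, and \Rule{Let} reassembles.

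The main obstacle is the let-abstraction case $\C = \clet \x \tv {\C'} \c$, where the congruence occurs in the definition position rather than the body. Inversion of \Rule{Let} gives $\gabs = \semenv(\cabs \tv {\C'\where{\ca}})$ with $\gabs \neq \eset$ and $\semenv\where{\x \is \gabs} \th \c$. Unfolding $\semenv(\cabs \tv {\C'\where{\ca}})$ and applying the IH pointwise shows that every witness of $\C'\where{\ca}$ is a witness of $\C'\where{\cb}$, hence $\gabs \subseteq \gabs' \eqdef \semenv(\cabs \tv {\C'\where{\cb}})$, so $\gabs' \neq \eset$. It then remains to transport $\semenv\where{\x \is \gabs} \th \c$ to $\semenv\where{\x \is \gabs'} \th \c$, and this is where I expect the real work: it requires a \emph{monotonicity} sub-lemma stating that satisfiability of a simple constraint is monotone in the set of ground instances bound to a term variable. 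This holds because the only simple constraint forms mentioning a free term variable $\x$—namely $\capp \x \t$ (\Rule{App}) and the binder $\cexistsi \inst \x {\,\cdot\,}$ (\Rule{Exists-Inst})—respectively test membership in, or existentially select a region from, $\semenv(\x)$, and are therefore preserved when $\semenv(\x)$ grows; the sub-lemma is itself proved by a straightforward induction on the simple constraint, with term-variable binders handling shadowing in the usual way. With this monotonicity in hand, \Rule{Let} reassembles $\semenv \th \clet \x \tv {\C'\where{\cb}} \c$, completing the case.
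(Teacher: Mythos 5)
Your proposal is correct and is essentially the paper's own proof: the paper gives only the one-line sketch ``Induction on the derivation of $\C \simple$,'' and your case analysis is precisely that induction carried out in full. You also correctly isolate the only non-routine step, the let-abstraction case $\clet \x \tv \Cp \c$, where the entailment hypothesis yields merely $\gabs \subseteq \gabs'$ rather than equality of the two abstraction sets, so one must additionally invoke monotonicity of simple-constraint satisfiability in the term-variable bindings --- a sub-lemma that indeed holds (and that the paper's sketch leaves implicit), since the only simple forms that consult $\semenv(\x)$, namely \Rule{App}, \Rule{AppR} and \Rule{Exists-Inst}, are membership or existential tests over $\semenv(\x)$ and are therefore preserved when that set grows.
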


\paragraph{Composability}

The composability result below is an important test of our definition of the
unicity condition $\Cshape \C \t \sh$, which is in part engineered for this
lemma to be simple to prove. In the past we used a definition of unicity
that also required $\C \where \ctrue$ to be satisfiable, which broke the
composability property.
\begin{lemma}[Composability of unicity]
  \label{lem:compose-unicity}
  If $\Cshape \Ca \t \sh$, then $\Cshape {\Cb\where\Ca} \t \sh$.
  \begin{proof}
    Induction on the structure of $\Cb$.
    \begin{proofcases}
      \proofcase{$\hole$}
        immediate.
      \proofcase{$\Cc \cand \c$}

	\begin{llproof}
	  \shapePf{\Ca}{\t}{\sh}{Premise}
	  \shapePf{\Cc\where\Ca}{\t}{\sh}{By \ih}
	  \ForallPf{\semenv, \gt}{}{Definition of $\Cshape {\parens{\Cc\where\Ca \cand \c}} \t \sh$}
	  \vdashPf{\semenv}{\cerase {\Cc\where\Ca\where{\cunif \t \gt}} \cand \cerase \c}{$\implies$I}
	  \vdashPf{\semenv}{\cerase {\Cc\where\Ca\where{\cunif \t \gt}}}{Simple inversion}
	  \eqPf{\shape \gt}{\sh}{$\implies$E on $\Cshape {\Cc\where\Ca} \t \sh$}
\Hand 	  \shapePf{\parens{\Cc\where\Ca \cand \c}}{\t}{\sh}{Above}
	\end{llproof}

      \proofcase{$\c \cand \Cc$}

	\begin{llproof}
	  Similar to the $\Cc \cand \c$ case.
	\end{llproof}

      \proofcase{$\cexists \tv \Cc$}

	\begin{llproof}
	  \shapePf{\Ca}{\t}{\sh}{Premise}
	  \shapePf{\Cc\where\Ca}{\t}{\sh}{By \ih}
	  \ForallPf{\semenv, \gt}{}{Definition of $\Cshape {\parens{\cexists \tv \Cc\where\Ca}} \t \sh$}
	  \vdashPf{\semenv}{\cexists \tv \cerase {\Cc\where\Ca\where{\cunif \t \gt}}}{$\implies$I}
	  \vdashPf{\semenv\where{\tv \is \gtp}}{\cerase {\Cc\where\Ca\where{\cunif \t \gt}}}{Simple inversion}
	  \eqPf{\shape \gt}{\sh}{$\implies$E on $\Cshape {\Cc\where\Ca} \t \sh$}
\Hand 	  \shapePf{\parens{\cexists \tv \Cc\where\Ca}}{\t}{\sh}{Above}
	\end{llproof}

      \proofcase{$\cfor \tv \Cc$}

	\begin{llproof}
	  Similar to $\cexists \tv \Cc$ case.
	\end{llproof}

      \proofcase{$\cexistsi \inst \x \Cc$}

	\begin{llproof}
	  Similar to $\cexists \tv \Cc$ case.
	\end{llproof}

      \proofcase{$\clet \x \tv \Cc \c$}

	\begin{llproof}
	  \shapePf{\Ca}{\t}{\sh}{Premise}
	  \shapePf{\Cc\where\Ca}{\t}{\sh}{By \ih}
	  \ForallPf{\semenv, \gt}{}{Definition of $\Cshape {\parens {\Let \x \ldots}} \t \sh$}
	  \vdashPf{\semenv}{\clet \x \tv {\cerase {\Cc\where\Ca\where{\cunif \t \gt}}} {\cerase \c}}{$\implies$I}
	  \vdashPf{\semenv}{\cexists \tv \cerase {\Cc\where\Ca\where{\cunif \t \gt}}}{Simple inversion}
	  \vdashPf{\semenv\where{\tv \is \gtp}}{\cerase {\Cc\where\Ca\where{\cunif \t \gt}}}{Simple inversion}
	  \eqPf{\shape \gt}{\sh}{$\implies$E on $\Cshape {\Cc\where\Ca} \t \sh$}
\Hand 	  \shapePf{\parens{\clet \x \tv {\Cc\where\Ca} \c}}{\t}{\sh}{Above}
	\end{llproof}

      \proofcase{$\clet \x \tv \c \Cc$}

	\begin{llproof}
	  Similar to $\clet \x \tv \Cc \c$ case.
	\end{llproof}

      \proofcase{$\cletr \x \tv \tvs \Cc \c$}

	\begin{llproof}
	  Similar to $\clet \x \tv \Cc \c$ case.
	\end{llproof}

      \proofcase{$\cletr \x \tv \tvs \c \Cc$}

	\begin{llproof}
	  Similar to $\clet \x \tv \c \Cc$ case.
	\end{llproof}
    \end{proofcases}
  \end{proof}
\end{lemma}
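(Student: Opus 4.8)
The plan is to proceed by structural induction on the context $\Cb$, exploiting the fact that the unicity predicate $\Cshape \C \t \sh$ is stated entirely in terms of the \emph{erased} constraint $\cerase{\C\where{\cunif \t \gt}}$. Since erasure replaces every suspended match constraint by $\ctrue$, the erasure of any context is \emph{simple}, and so the satisfiability judgments occurring in the unicity condition enjoy the ordinary syntax-directed inversion principle. This is the single observation that makes the whole argument go through.

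For the base case $\Cb = \hole$ we have $\Cb\where\Ca = \Ca$, so the conclusion is literally the hypothesis. For each inductive case, where $\Cb$ is built from an inner context $\Cc$ by some context former—a conjunct $\Cc \cand \c$ or $\c \cand \Cc$, an existential $\cexists \tv \Cc$, a universal $\cfor \tv \Cc$, an instantiation binder $\cexistsi \inst \x \Cc$, or a let- or regional-let binder in either position—I would first invoke the induction hypothesis to obtain $\Cshape{\Cc\where\Ca}\t\sh$. Then, unfolding the definition of $\Cshape{\Cb\where\Ca}\t\sh$, I fix an arbitrary model $\semenv$ of $\cerase{\Cb\where\Ca\where{\cunif\t\gt}}$ and apply simple inversion on the outermost former to extract a model of the inner erased constraint $\cerase{\Cc\where\Ca\where{\cunif\t\gt}}$—possibly under an environment extended with a fresh witness for a bound variable, as in the $\cexists$, $\cfor$, and let cases. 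Feeding that model to the consequence of the induction hypothesis yields $\shape\gt = \sh$, which is exactly what is required.

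I expect every inductive case to follow this same two-line template (apply the induction hypothesis, then invert one layer of the simple context), so the proof is essentially routine. The point that deserves emphasis—and which I would flag as the real content of the lemma—is that this smoothness is a consequence of \emph{how} unicity is defined: because $\Cshape \C \t \sh$ quantifies over all models of the erasure without any additional satisfiability side-condition on $\C\where\ctrue$, inversion lets us pass freely between the outer and inner contexts. Under a stricter definition that also demanded $\C\where\ctrue$ be satisfiable, that extra side-condition would not survive the inductive descent—a larger enclosing context $\Cb$ could well be unsatisfiable even when $\Ca$ is fine—and composability would break. Thus the main obstacle is not in the argument itself but in having settled on a definition of unicity for which this argument goes through cleanly.
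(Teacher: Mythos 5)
Your proof is correct and follows essentially the same route as the paper's: structural induction on $\Cb$, using the induction hypothesis on the inner context and the fact that erasure yields a \emph{simple} constraint, so that one layer of syntax-directed inversion transports a model of the outer erased constraint to a model of the inner one (under an extended environment in the binder cases). Even your closing remark about the definition being engineered for composability—in particular that an added satisfiability requirement on $\C\where\ctrue$ would break the lemma—mirrors the paper's own commentary preceding the lemma.
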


\begin{lemma}[Inversion of unicity]
  \label{lem:unicity-inversion}~
  \begin{enumerate}[(\roman*)]
    \item If $\Cshape {\parens{\cexists \tv \C}} \t \sh$, then $\Cshape \C \t \sh$.
    \item If $\Cshape {\parens{\cfor \tv \C}} \t \sh$, then $\Cshape \C \t \sh$.
  \end{enumerate}
  \begin{proof}
    The definition of $\Cshape \C \t \sh$ uses simple semantics on the
    erasure $\cerase \C$, so these results are easily shown by simple inversion.
  \end{proof}
\end{lemma}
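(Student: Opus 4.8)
The plan is to reduce both parts to a routine inversion on the \emph{simple} satisfiability judgment $\thsimple$. The starting observation is that erasure is homomorphic (\cref{fig:erasure}) and commutes with hole-filling, so, after $\alpha$-renaming the bound variable to be disjoint from $\t$, $\gt$, and $\bvs \C$, we have
\[
  \cerase{(\cexists \tv \C)\where{\cunif \t \gt}} = \cexists \tv \cerase{\C\where{\cunif \t \gt}},
  \qquad
  \cerase{(\cfor \tv \C)\where{\cunif \t \gt}} = \cfor \tv \cerase{\C\where{\cunif \t \gt}}.
\]
Because erasure deletes every suspended match constraint (\cref{def:erasure}), each right-hand side is a simple constraint (\cref{fig:simple}); hence any satisfaction fact about it is in fact a derivation in the syntax-directed judgment $\thsimple$, and the \Rule{Exists}/\Rule{Forall} step at its root can be inverted. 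Unfolding \cref{def:unicity} on both sides then turns each claim into a statement relating satisfiability of the quantified erased constraint to satisfiability of its body.

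For part (i) I would argue by existential introduction. Assuming $\Cshape{(\cexists \tv \C)} \t \sh$, take arbitrary $\semenv$ and $\gt$ with $\semenv \th \cerase{\C\where{\cunif \t \gt}}$; the goal is $\shape \gt = \sh$. Applying \Rule{Exists} with the witness $\semenv(\tv)$—so that $\semenv\where{\tv \is \semenv(\tv)}$ is just $\semenv$—gives $\semenv \th \cexists \tv \cerase{\C\where{\cunif \t \gt}}$, i.e.\ $\semenv \th \cerase{(\cexists \tv \C)\where{\cunif \t \gt}}$ by the commutation above. The hypothesis then delivers $\shape \gt = \sh$, which is exactly $\Cshape \C \t \sh$. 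This direction carries no content beyond the binding bookkeeping.

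Part (ii) is where I expect the real difficulty, and it is the step I would write out in full. The hypothesis $\Cshape{(\cfor \tv \C)} \t \sh$ ranges only over environments satisfying the \emph{universally} closed erased constraint, whereas the goal $\Cshape \C \t \sh$ hands us satisfiability of the body under a single binding of $\tv$. Rather than introduce a universal in the goal, the plan is to invert \Rule{Forall} inside the hypothesis: a witness of $\semenv \th \cfor \tv \cerase{\C\where{\cunif \t \gt}}$ yields $\semenv\where{\tv \is \gt_0} \th \cerase{\C\where{\cunif \t \gt}}$ at every $\gt_0$, in particular at the value $\semenv$ already assigns to $\tv$. The crux is then the converse \emph{lifting} needed to feed the hypothesis—that satisfiability of the body under $\semenv$ yields satisfiability of its $\cfor \tv$-closure. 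This is the delicate point: it hinges on the bound variable being treated generically by the erased context (introduced fresh, with the surrounding simple constraint never pinning it to a rigid type), so that rebinding $\tv$ leaves satisfiability undisturbed. Once this lifting is in place, the shape equation follows exactly as in part (i), and the only remaining obligations—the $\alpha$-renaming that avoids capture and the check that erasure commutes past the binder—are routine.
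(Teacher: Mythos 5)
Your part~(i) is essentially right, and it is the argument the paper intends: a satisfier $\semenv$ of $\cerase{\C\where{\cunif \t \gt}}$ becomes a satisfier of $\cexists \tv {\cerase{\C\where{\cunif \t \gt}}}$ by \Rule{Exists} with witness $\semenv(\tv)$, and the hypothesis then forces $\shape \gt = \sh$. However, you should delete the preliminary $\alpha$-renaming. Hole filling in this paper is deliberately \emph{capture-allowing}, and in the intended uses of unicity the scrutinee $\t$ typically \emph{is} a variable bound by the context; renaming the context's binder away from $\fvs \t$ therefore changes which instance of the lemma you are proving. The renaming is also unnecessary: $(\cexists \tv \C)\where{\cunif \t \gt} = \cexists \tv {\parens{\C\where{\cunif \t \gt}}}$ holds by the very definition of (capturing) hole filling, and erasure commutes with binders structurally, so both of your displayed equations hold as stated, capture included.

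The genuine gap is part~(ii), and it is exactly the step you postpone. The ``lifting'' you need---from $\semenv \th \cerase{\C\where{\cunif \t \gt}}$ to satisfiability of $\cfor \tv {\cerase{\C\where{\cunif \t \gt}}}$---is false for arbitrary contexts, and no genericity of $\tv$ is supplied by the lemma statement. Concretely, take $\C = \hole$ and $\t = \tv$: then $\cerase{\parens{\cfor \tv \hole}\where{\cunif \tv \gt}}$ is $\cfor \tv {\parens{\cunif \tv \gt}}$, which is unsatisfiable for every $\gt$ (it requires every ground type to equal $\gt$), so $\Cshape {\parens{\cfor \tv \hole}} \tv \sh$ holds vacuously for every $\sh$; yet $\Cshape \hole \tv \sh$ holds for no $\sh$, since any $\semenv$ with $\semenv(\tv) = \gt$ satisfies $\cunif \tv \gt$, and ground types $\gt$ of different shapes exist. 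The same collapse occurs whenever the erased context pins $\tv$ down, e.g.\ contains $\cunif \tv \tunit$. So your closing claim that ``rebinding $\tv$ leaves satisfiability undisturbed'' is not a routine check; it is the point at which the stated direction founders. Note also that the paper's one-line appeal to ``simple inversion'' really fits the \emph{converse} direction: inverting \Rule{Forall} or \Rule{Exists} on a satisfier of the closure hands you a satisfier of the body, which is the composability direction (\cref{lem:compose-unicity} with $\Cb = \cfor \tv \hole$ or $\cexists \tv \hole$). For the direction as stated, (i) needs introduction---as you correctly do---and (ii) needs an additional hypothesis (for instance that $\cerase \C$ places no constraint on $\tv$), which neither your proposal nor the paper makes explicit; as written, your proof of (ii) does not go through.
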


\begin{lemma}[Decanonicalization]
  \label{lem:decanonicalization}
  If $\semenv \Th \c$, then $\semenv \th \c$.
  \begin{proof}
    Induction on the given derivation $\semenv \Th \c$
  \end{proof}
\end{lemma}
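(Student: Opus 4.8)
The plan is to proceed by straightforward induction on the derivation of $\semenv \Th \c$. The canonical semantics (\cref{fig:canonical-sem}) has exactly two rules, \Rule{Can-Simple} and \Rule{Can-Match-Ctx}, so there are only two cases, and the whole point of the canonical judgment is that its derivations are, by design, already a structured subset of the ordinary ones. Accordingly I expect no real obstacle: the substantive direction is the converse (\emph{canonicalization}), which must permute uses of the contextual rule downward and relies on composability of unicity (\cref{lem:compose-unicity}); decanonicalization simply discards that extra structure.

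For the \Rule{Can-Match-Ctx} case, the derivation concludes $\semenv \Th \C\where{\cmatch \t \cbrs}$ from the two premises $\Cshape \C \t \sh$ and $\semenv \Th \C\where{\cmatched \t \sh \cbrs}$. Applying the induction hypothesis to the second premise yields $\semenv \th \C\where{\cmatched \t \sh \cbrs}$. Since the contextual rule \Rule{Match-Ctx} is itself one of the rules defining the ordinary semantics $\semenv \th \c$, and its side condition $\Cshape \C \t \sh$ is supplied verbatim by the first premise, I conclude $\semenv \th \C\where{\cmatch \t \cbrs}$ directly. The canonical and ordinary rules are identical up to replacing $\Th$ by $\th$, so this case is purely mechanical.

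For the \Rule{Can-Simple} case, the premise is $\semenv \thsimple \c$ and I must derive $\semenv \th \c$. Here the one small thing to check is the auxiliary fact that $\semenv \thsimple \c$ implies $\semenv \th \c$ for every simple $\c$. This holds because the rules defining $\thsimple$ are exactly the syntax-directed rules of $\th$ restricted to the simple fragment (\cref{fig:simple}); none of them is the contextual \Rule{Match-Ctx}, which applies only to constraints containing a suspended match. A trivial induction on the $\semenv \thsimple \c$ derivation therefore reconstructs a $\semenv \th \c$ derivation rule-for-rule, which discharges this case.

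In summary, the only genuine content is the observation that $\thsimple$ is a restriction of $\th$; everything else follows by feeding the induction hypothesis into the already-available \Rule{Match-Ctx}. I would state the $\thsimple \Rightarrow \th$ fact as a one-line sub-lemma (proved by induction on the simple derivation) and then present the main induction in the two cases above.
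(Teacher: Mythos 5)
Your proof is correct and matches the paper's approach exactly: the paper's entire proof is ``induction on the given derivation $\semenv \Th \c$'', and your two cases (\Rule{Can-Simple} via the observation that $\thsimple$ is a restriction of $\th$, and \Rule{Can-Match-Ctx} by feeding the induction hypothesis into the ordinary \Rule{Match-Ctx} with the same unicity premise) are precisely the expected unfolding of that one-liner.
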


\begin{theorem}[Canonicalization]
  \label{thm:canonicalization}
  If $\semenv \th \c$, then $\semenv \Th \c$.
  \begin{proof}
  We proceed by induction on $\semenv \th \c$ with the measure $\cmeasure \c$.
  \begin{proofcases}
    \proofcasederivation
      {True}
      { }
      {\semenv \th \ctrue}

      \begin{llproof}
\Hand   \VdashPf{\semenv}{\ctrue}{immediate by \Rule{Can-Base}}
      \end{llproof}

    \proofcasederivation
      {Unif}
      {\semenv(\ta) = \semenv(\tb)}
      {\semenv \th \cunif \ta \tb}

      \begin{llproof}
	Similar to the \Rule{True} case.
      \end{llproof}
    \proofcasederivation
      {Conj}
      {\semenv \th \ca \\ \semenv \th \cb}
      {\semenv \th \ca \cand \cb}

      \begin{llproof}
	\vdashPf{\semenv}{\ca} {Premise}
	\vdashPf{\semenv}{\cb} {Premise}
	\VdashPf{\semenv}{\ca} {By \ih}
	\VdashPf{\semenv}{\cb} {By \ih}
	\decolumnizePf
	\casesPf{\semenv \Th \ca, \semenv \Th \cb}
      \end{llproof}

      \begin{proofcases}
	\proofcasederivationdouble
	  {Can-Base}
	  {\semenv \th \ca \\ \ca \simple}
	  {\semenv \Th \ca}
	  {Can-Base}
	  {\semenv \th \cb \\ \cb \simple}
	  {\semenv \Th \cb}

        \begin{llproof}
\Hand     \VdashPf{\semenv}{\ca \cand \cb}{immediate by \Rule{Can-Base}}
        \end{llproof}

	\proofcasederivationdouble
	  {Can-Match-Ctx}
	  {\Cshape \C \t \sh \\ \semenv \Th \C\where{\cmatched \t \sh \cbrs}}
	  {\semenv \Th \underbrace{\C\where{\cmatch \t \cbrs}}_\ca}
	  {}
	  {}
	  {\semenv \Th \cb}

	  \begin{llproof}

	    \VdashPf{\semenv}{\C\where{\cmatched \t \sh \cbrs}} {Premise}
	    \vdashPf{\semenv}{\C\where{\cmatched \t \sh \cbrs}} {\cref{lem:decanonicalization}}
	\decolumnizePf
	    \vdashPf{\semenv}{\C\where{\cmatched \t \sh \cbrs} \cand \cb}{By \Rule{Conj}}
 	    \VdashPf{\semenv}{\C\where{\cmatched \t \sh \cbrs} \cand \cb}{By \ih}
	    \Pf{}{}{\Cshape \C \tv \sh}{Premise}
	    \Pf{}{}{\Cshape {\parens {\C \cand \cb}} \tv \sh}{\cref{lem:compose-unicity}}
\Hand 	    \VdashPf{\semenv}{\C\where{\cmatch \t \cbrs}}{By \Rule{Can-Match-Ctx}}
	  \end{llproof}

	\proofcasederivationdouble
	  {}
	  {}
	  {\semenv \Th \ca}
	  {Can-Match-Ctx}
	  {\Cshape \C \t \sh \\ \semenv \Th \C\where{\cmatched \t \sh \cbrs}}
	  {\semenv \Th \underbrace{\C\where{\cmatch \t \cbrs}}_\cb}

	  \begin{llproof}
	    \Pf{}{}{}{Symmetric to the above case.}
	  \end{llproof}
      \end{proofcases}

      \proofcasederivation
	{Exists}
	{\semenv\where{\tv \is \gt} \th \c}
	{\semenv \th \cexists \tv \c}

	\begin{llproof}
	  \vdashPf{\semenv\where{\tv \is \gt}}{\c}{Premise}
	  \VdashPf{\semenv\where{\tv \is \gt}}{\c}{By \ih}
	  \casesPf{\semenv\where{\tv \is \gt} \Th \c}
	\end{llproof}

	\begin{proofcases}

	    \proofcasederivation
	      {Can-Base}
	      {\semenv\where{\tv \is \gt} \th \c \\ \c \simple}
	      {\semenv\where{\tv \is \gt} \Th \c}

	      \begin{llproof}
\Hand 		\VdashPf{\semenv}{\cexists \tv \c}{Immediate by \Rule{Can-Base}}
	      \end{llproof}

	      \proofcasederivation
		{Can-Match-Ctx}
		{\Cshape \C \t \sh \\ \semenv\where{\tv \is \gt} \Th \C\where{\cmatched \t \sh \cbrs}}
		{\semenv \Th \underbrace{\C\where{\cmatch \t \cbrs}}_\c}

		\begin{llproof}
		  \VdashPf{\semenv\where{\tv \is \gt}}{\C\where{\cmatched \t \sh \cbrs}}{Premise}
		  \vdashPf{\semenv\where{\tv \is \gt}}{\C\where{\cmatched \t \sh \cbrs}}{\cref{lem:decanonicalization}}
	    \decolumnizePf
		  \vdashPf{\semenv}{\cexists \tv \C\where{\cmatched \t \sh \cbrs}}{By \Rule{Exists}}
		  \VdashPf{\semenv}{\cexists \tv \C\where{\cmatched \t \sh \cbrs}}{By \ih}
		  \Pf{}{}{\Cshape \C \t \sh}{Premise}
		  \Pf{}{}{\Cshape {\parens {\cexists \tv \C}} \t \sh}{\cref{lem:compose-unicity}}
\Hand             \VdashPf{\semenv}{\cexists \tv \C\where{\cmatch \t \cbrs}}{By \Rule{Can-Match-Ctx}}
		\end{llproof}
	\end{proofcases}

	\proofcasederivation
	  {Forall}
	  {\forall \gt,~ \semenv\where{\tv \is \gt} \th \c}
	  {\semenv \th \cfor \tv \c}

	  \begin{llproof}
	    Similar to the \Rule{Exists} case.
	  \end{llproof}

	\proofcasederivation
	  {Let}
    {\semenv\where{\x \is \gabs} \th \cb\\
     \gabs = \semenv(\cabs \tv \ca) \\
     \gabs \neq \eset
     }
    {\semenv \th \clet \x \tv \ca \cb}

    \begin{llproof}
      \VdashPf{\semenv\where{\x \is \gabs}}{\cb}{By \ih}
      \casesPf{\semenv\where{\x \is \gabs} \Th \cb}
    \end{llproof}

          \begin{proofcases}
            \proofcasederivation
              {Can-Base}
              {\semenv\where{\x \is \gabs} \th \cb \\ \cb \simple}
              {\semenv\where{\x \is \gabs} \Th \cb}

              \begin{llproof}
\Hand                \VdashPf{\semenv}{\clet \x \tv \ca \cb}{Immediate by \Rule{Can-Base}}
              \end{llproof}

	    \proofcasederivation
		{Can-Match-Ctx}
		{\Cshape \C \t \sh \\ \semenv\where{\x \is \gabs} \Th \C\where{\cmatched \t \sh \cbrs}}
		{\semenv\where{\x \is \gabs} \Th \underbrace{\C\where{\cmatch \t \cbrs}}_\cb}

		\begin{llproof}
		  \shapePf{\C}{\t}{\sh}{Premise}
		  \shapePf{\parens{\clet \x \tv \ca \C}}{\t}{\sh}{\cref{lem:compose-unicity}}
		  \VdashPf{\semenv\where{\x \is \gabs}}{\C\where{\cmatched \t \sh \cbrs}}{Premise}
		  \vdashPf{\semenv\where{\x \is \gabs}}{\C\where{\cmatched \t \sh \cbrs}}{\cref{lem:decanonicalization}}
		  \vdashPf{\semenv}{\clet \x \tv \ca {\C\where{\cmatched \t \sh \cbrs}}}{By \Rule{Let}}
		  \VdashPf{\semenv}{\clet \x \tv \ca {\C\where{\cmatched \t \sh \cbrs}}}{By \ih}
\Hand 		  \VdashPf{\semenv}{\clet \x \tv \ca {\C\where{\cmatch \t \sh}}}{By \Rule{Can-Match-Ctx}}
		\end{llproof}
	  \end{proofcases}

      \proofcasederivation
	{App}
	{\semenv(\t) \in \semenv(\x)}
	{\semenv \th \capp \x \t}

      \begin{llproof}
	Similar to the \Rule{True} case.
      \end{llproof}

      \proofcasederivation
        {Match-Ctx}
        {\Cshape \C \t \sh \\ \semenv \th \C\where{\cmatched \t \sh \cbrs}}
        {\semenv \th \C\where{\cmatch \t \cbrs}}

      \begin{llproof}
        \shapePf{\C}{\t}{\sh}{Premise}
        \VdashPf{\semenv}{\C\where{\cmatched \t \sh \cbrs}}{Premise}
        \vdashPf{\semenv}{\C\where{\cmatched \t \sh \cbrs}}{By \ih}
\Hand        \vdashPf{\semenv}{\C\where{\cmatch \t \cbrs}}{By \Rule{Can-Match-Ctx}}
      \end{llproof}

      \proofcasederivation
	{LetR}
        {\semenv\where{\x \is \gabsr} \th \cb \\
         \gabsr = \semenv(\cabsr \tv \tvs \ca) \\
         \gabsr \neq \emptyset
         }
        {\semenv \th \cletr \x \tv \tvs \ca \cb}

      \begin{llproof}
	Similar to the \Rule{Let} case.
      \end{llproof}

      \proofcasederivation
	{AppR}
        {\greg {\semenv(\t)} \wild \in \semenv(\x)  }
        {\semenv \th \capp \x \t}

      \begin{llproof}
	Similar to the \Rule{App} case.
      \end{llproof}

    \proofcasederivation
      {Exists-Inst}
      {\greg \wild \semenvp \in \semenv(\x) \\\\
       \semenv\where{\inst \is \semenvp} \th \c}
      {\semenv \th \cexistsi \inst \x \c}

      \begin{llproof}
	Similar to the \Rule{Exists} case.
      \end{llproof}

    \proofcasederivation
      {Multi-Unif}
      {\forall \t \in \ueq,~ \semenv(\t) = \gt}
      {\semenv \th \ueq}

      \begin{llproof}
	Similar to the \Rule{Unif} case.
      \end{llproof}

    \proofcasederivation
      {Incr-Inst}
      {\semenv(\inst)(\tv) = \semenv(\t)}
      {\semenv \th \cpinst \inst \tv \t}

      \begin{llproof}
	Similar to the \Rule{App} case.
      \end{llproof}

  \end{proofcases}
  \end{proof}
\end{theorem}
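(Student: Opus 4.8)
The plan is to prove the statement by rule induction on the derivation of $\semenv \th \c$, but ordered by the lexicographic measure $\cmeasure \c = \angles{\cnmatches \c, \csize \c}$ rather than by derivation height. The reason for this choice is that the canonical judgment $\semenv \Th \c$ insists on discharging every suspended match through \Rule{Can-Match-Ctx} before reaching a simple constraint at the leaves (\Rule{Can-Simple}). So when the last rule of the given derivation is a syntax-directed former such as \Rule{Conj} or \Rule{Let}, the canonical derivations obtained for the premises may already end in \Rule{Can-Match-Ctx}, and I will need to ``bubble'' that contextual step outward past the former. Re-assembling in this way replaces a suspended $\cmatch \t \cbrs$ by its discharged form $\cmatched \t \sh \cbrs$, which strictly lowers $\cnmatches$; it is the measure, not the derivation height, that then licenses the recursive appeal to the induction hypothesis.

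First I would dispatch the leaf and purely-logical cases. For \Rule{True}, \Rule{Unif}, \Rule{App}, \Rule{Multi-Unif}, and \Rule{Incr-Inst} the constraint is already simple, so \Rule{Can-Simple} applies immediately. The interesting cases are the constraint formers. For each, I apply the induction hypothesis to the premise(s), obtaining canonical derivations, and then case-split on their last rule. If every premise ends in \Rule{Can-Simple}, the whole constraint is simple and I conclude by \Rule{Can-Simple}. Otherwise some premise ends in \Rule{Can-Match-Ctx} with conclusion $\C\where{\cmatch \t \cbrs}$ and premise $\Cshape \C \t \sh$; here I rewrite the enclosing former so that the hole of $\C$ becomes the hole of a larger context (e.g. $\C\where{\cmatch \t \cbrs} \cand \cb = (\C \cand \cb)\where{\cmatch \t \cbrs}$, and symmetrically for $\cexists \tv \C$, $\cfor \tv \C$, $\cexistsi \inst \x \C$, and the abstraction/body variants of $\clet$ and $\cletr$). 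The key algebraic fact that makes this work is composability of unicity (\cref{lem:compose-unicity}): from $\Cshape \C \t \sh$ I obtain $\Cshape {\Cp} \t \sh$ for the enlarged context $\Cp$. I then build $\semenv \th \Cp\where{\cmatched \t \sh \cbrs}$ by applying the original former to the decanonicalized subderivation (\cref{lem:decanonicalization}), apply the induction hypothesis to it (legal because $\cnmatches$ has dropped by one), and finally reconstruct $\semenv \Th \Cp\where{\cmatch \t \cbrs}$ via \Rule{Can-Match-Ctx}.

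The remaining case is when the given derivation itself ends in \Rule{Match-Ctx}, deriving $\semenv \th \C\where{\cmatch \t \cbrs}$ from $\Cshape \C \t \sh$ and $\semenv \th \C\where{\cmatched \t \sh \cbrs}$. This already has the target shape: discharging the top-level match lowers $\cnmatches$ by one, so the induction hypothesis yields $\semenv \Th \C\where{\cmatched \t \sh \cbrs}$, and \Rule{Can-Match-Ctx} with the same unicity premise closes the case.

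I expect the main obstacle to be the bookkeeping in the former cases, specifically confirming two things: that the measure genuinely decreases at every recursive call (which hinges on $\cnmatches$ being reduced whenever a match is discharged, even though $\csize$ can grow when a branch constraint is substituted in, hence the lexicographic ordering), and that the unicity condition really does lift through each binder. The latter is where composability (\cref{lem:compose-unicity}) does the heavy lifting, and it is crucial that the definition of $\Cshape \C \t \sh$ was engineered to speak only about the erasure $\cerase \C$, so that a let-binder or quantifier around the hole imposes no additional satisfiability obligation that could fail to compose. I would treat $\cand$, $\cexists$, $\cfor$, $\cexistsi$, and $\clet$/$\cletr$ uniformly, relegating the symmetric left/right and abstraction/body variants to ``similar to the above'' once the pattern is established.
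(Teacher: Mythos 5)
Your plan follows the paper's proof almost exactly: the same lexicographic measure $\cmeasure \c$, the same dispatch of the simple leaves, the same bubbling of \Rule{Can-Match-Ctx} outward through each former using \cref{lem:decanonicalization}, a second appeal to the induction hypothesis licensed by the drop in $\cnmatches \c$, and \cref{lem:compose-unicity} to enlarge the unicity context. You even cover the case where the source derivation itself ends in \Rule{Match-Ctx}, which the paper's written proof omits (it is indeed the trivial case you describe). However, there is one missing ingredient, and it is precisely the ingredient the paper identifies as the delicate point of this theorem.

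The gap is your claim that the \emph{abstraction} variants of $\clet$/$\cletr$ (and, for the same reason, \Rule{Forall}) can be handled uniformly with decanonicalization plus composability. Consider $\clet \x \tv \ca \cb$ whose derivation ends with \Rule{Let} and where the suspended match lies inside the abstraction, $\ca = \Ca\where{\cmatch \t \cbrs}$. Rebuilding a derivation for the discharged constraint requires the premise $\semenv\where{\x \is \gabs'} \th \cb$ where $\gabs' = \semenv(\cabs \tv {\Ca\where{\cmatched \t \sh \cbrs}})$ is the denotation of the \emph{discharged} abstraction, whereas your hypothesis gives $\semenv\where{\x \is \gabs} \th \cb$ only for the undischarged denotation $\gabs = \semenv(\cabs \tv \ca)$. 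Rule \Rule{Match-Ctx} yields $\gabs' \subseteq \gabs$ for free, but the inclusion you actually need, $\gabs \subseteq \gabs'$ (every ground instance satisfying the suspended abstraction also satisfies the discharged one), is exactly inversion of suspension (\cref{lem:susp-inversion}, packaged as \cref{corollary:matched-abstractions}); and since satisfiability of instantiation constraints $\capp \x \t$ is not preserved when $\semenv(\x)$ shrinks, nothing short of the equality $\gabs = \gabs'$ suffices. Decanonicalization converts only the single witness derivation you hold, not the whole set $\gabs$. The same problem hits \Rule{Forall}: its premise ranges over all $\gt$, the canonical derivations produced by the induction hypothesis for different $\gt$ may discharge different matches first, and converting each of them into the one discharged form you fixed is again an appeal to \cref{lem:susp-inversion}. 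This lemma is not an afterthought: its proof is mutually recursive with canonicalization itself, and the paper devotes an explicit argument to why that circularity is well-founded (the lemma is invoked only on strictly smaller constraints, measured by $\cmeasure \c$). Without adding this ingredient, the \Rule{Let}, \Rule{LetR}, and \Rule{Forall} cases of your induction do not close.
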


\begin{lemma}[Inversion of suspension]
  \label{lem:susp-inversion}
  If $\semenv \th \C\where{\cmatch \t \cbrs}$ and $\Cshape \C \t \sh$,
  then\\$\semenv \th \C\where{\cmatched \t \sh \cbrs}$.

  \begin{proof}
    We use canonicalization (\cref{thm:canonicalization}) to induct on $\semenv \Th
    \C\where{\cmatch \t \cbrs}$ instead of $\semenv \th \C\where{\cmatch \t
    \cbrs}$.

    \begin{proofcases}
      \proofcasederivation
	{Can-Base}
	{\semenv \th \C\where{\cmatch \t \cbrs} \\ \C\where{\cmatch \t \cbrs} \simple}
	{\semenv \Th \C\where{\cmatch \t \cbrs}}

        The second premise is a contradiction.

      \proofcasederivation
	{Can-Match-Ctx}
	{\Cshape \Cp \tp \shp \\ \semenv \Th \Cp\where{\cmatched \tp \shp \cbrs'}}
	{\semenv \Th \underbrace{\Cp\where{\cmatch \tp \cbrs'}}_{\C\where{\cmatch {~\t~} {~\cbrs}}}}

	\begin{llproof}
	  \casesPf{\C = \Cp}
	\end{llproof}
	\begin{proofcases}
	  \proofcase{$\C = \Cp$}

	    \begin{llproof}
	      \eqPf{\C}{\Cp}{Premise}
	      \eqPf{\tp}{\t}{}
	      \eqPf{\shp}{\sh}{}
	      \eqPf{\cbrs'}{\cbrs}{}
\Hand         \VdashPf{\semenv}{\C\where{\cmatched \t \sh \cbrs}}{Premise}
	    \end{llproof}

	  \proofcase{$\C \neq \Cp$}

	    \newcommand{\Ctwo}{\C_2}
	    \begin{llproof}
	      \Eqpf{\Ctwo\where{\cmatch \t \cbrs, \cmatch \tp \cbrs'}}
                  {\C\where{\cmatch \t \cbrs}}
                  {For some 2-hole context $\Ctwo$}
	      \continueeqPf{\Cp\where{\cmatch \tp \cbrs'}}{}
	      \decolumnizePf
	      \VdashPf{\semenv}{\Ctwo\where{\cmatch \t \cbrs, \cmatched \tp \shp \cbrs'}}{Premise}
	      \decolumnizePf
	      \ForallPf{\semenvp, \gtp}{}{\hspace{30.5ex}Defn. of $\Cshape {\Ctwo\where{\hole, \cmatched \tp \shp \cbrs'}} \t \sh$}
	      \decolumnizePf
	      \vdashPf{\semenvp}{\cerase {\Ctwo\where{\cunif \t \gtp, \cmatched \tp \shp \cbrs'}}}{$\implies$I}
	      \vdashPf{\semenvp}{\cerase {\Ctwo\where{\cunif \t \gtp, \ctrue}}}{\cref{lem:cong-simple}}
	    \decolumnizePf
	      \eqPf{\cerase {\Ctwo\where{\cunif \t \gtp, \ctrue}}}{\cerase {\Ctwo\where{\cunif \t \gtp, \cerase {\cmatch \tp \cbrs'}}}}{By definition}
	      \continueeqPf{\cerase {\C\where{\cunif \t \gtp}}}{By definition}
	      \vdashPf{\semenvp}{\cerase {\C\where{\cunif \t \gtp}}}{Above}
	      \eqPf{\shape \gtp}{\sh}{$\implies$E on $\Cshape \C \t \sh$}
	      \shapePf{\Ctwo\where{\hole, \cmatched \tp \shp \cbrs'}}{\t}{\sh}{Above}
	      \VdashPf{\semenv}{\Ctwo\where{\cmatched \t \sh \cbrs, \cmatched \tp \shp \cbrs'}}{By \ih}
	      \decolumnizePf
	      \ForallPf{\semenvp, \gtp}{}{\hspace{30.5ex}Defn. of $\Cshape {\Ctwo\where{\cmatched \t \sh \cbrs, \hole}} \tp \shp$}
	      \decolumnizePf
	      \vdashPf{\semenvp}{\cerase {\Ctwo\where{\cmatched \t \sh \cbrs, \cunif \tp \gtp}}}{$\implies$I}
	      \vdashPf{\semenvp}{\cerase {\Ctwo\where{\ctrue, \cunif \tp \gtp}}}{\cref{lem:cong-simple}}
	      \eqPf{\cerase {\Ctwo\where{\ctrue, \cunif \tp \gtp}}}{\cerase {\Ctwo\where{\cerase {\cmatch \t \cbrs}, \cunif \tp \gtp}}}{By definition}
	      \continueeqPf{\cerase {\Cp\where{\cunif \tp \gtp}}}{By definition}
	      \vdashPf{\semenvp}{\cerase {\C\where{\cunif \t \gtp}}}{Above}
	      \shapePf{\Cp}{\tp}{\shp}{Premise}
	      \eqPf{\shape \gtp}{\shp}{$\implies$E on $\Cshape \Cp \tp \shp$}
	      \shapePf{\Ctwo\where{\cmatched \t \sh \cbrs, \hole}}{\tp}{\shp}{Above}
\Hand 	      \VdashPf{\semenv}{\Ctwo\where{\cmatched \t \sh \cbrs, \cmatch \tp \cbrs'}}{By \Rule{Can-Match-Ctx}}
	    \end{llproof}
	\end{proofcases}
    \end{proofcases}
  \end{proof}
\end{lemma}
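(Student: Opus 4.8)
The plan is to prove this inversion principle by first normalizing the derivation so that all uses of the contextual rule \Rule{Match-Ctx} sit at the very bottom, and then to peel those discharges off one at a time. Concretely, I would invoke canonicalization (\cref{thm:canonicalization}) to replace the hypothesis $\semenv \th \C\where{\cmatch \t \cbrs}$ by a canonical derivation $\semenv \Th \C\where{\cmatch \t \cbrs}$, and then induct on that derivation with the measure $\cmeasure \c$. Since canonicalization itself appears earlier and morally relies on this very inversion principle, I would note explicitly that the induction stays well-founded: every appeal to inversion used inside the proof of canonicalization is on a strictly smaller constraint, so there is no genuine circularity.

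The base case \Rule{Can-Base} is immediate, since it would force $\C\where{\cmatch \t \cbrs}$ to be $\simple$, which is impossible as it contains a match constraint. The real work is the inductive case \Rule{Can-Match-Ctx}, whose last step discharges some suspended match $\cmatch \tp \cbrs'$ inside a context $\Cp$, yielding $\Cshape \Cp \tp \shp$ and $\semenv \Th \Cp\where{\cmatched \tp \shp \cbrs'}$. If this discharged occurrence is exactly the target, i.e. $\C = \Cp$ (whence $\tp = \t$ and $\cbrs' = \cbrs$), then the two unicity conditions $\Cshape \C \t \sh$ and $\Cshape \C \t \shp$ pin down the same shape of the same scrutinee in the same context, forcing $\sh = \shp$, and we are done after decanonicalization.

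The delicate subcase is $\C \neq \Cp$, which I expect to be the main obstacle. Here I would exhibit a two-hole context $\C_2$ carrying both matches, so that $\C\where{\cmatch \t \cbrs} = \C_2\where{\cmatch \t \cbrs,\ \cmatch \tp \cbrs'}$ and the premise reads $\semenv \Th \C_2\where{\cmatch \t \cbrs,\ \cmatched \tp \shp \cbrs'}$. To apply the induction hypothesis and discharge the target match at $\sh$, I must first re-establish unicity in the presence of the already-discharged sibling, namely $\Cshape {\C_2\where{\hole,\ \cmatched \tp \shp \cbrs'}} \t \sh$. This is where the design of unicity pays off: since erasure sends $\cmatch$ to $\ctrue$, any $\semenv'$ satisfying $\cerase{\C_2\where{\cunif \t \gt,\ \cmatched \tp \shp \cbrs'}}$ also satisfies the weaker $\cerase{\C_2\where{\cunif \t \gt,\ \ctrue}} = \cerase{\C\where{\cunif \t \gt}}$ by simple congruence (\cref{lem:cong-simple}), so the original hypothesis $\Cshape \C \t \sh$ forces $\shape \gt = \sh$. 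With unicity in hand, the induction hypothesis gives $\semenv \Th \C_2\where{\cmatched \t \sh \cbrs,\ \cmatched \tp \shp \cbrs'}$.

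Finally I would reassemble the conclusion by re-suspending the sibling via \Rule{Can-Match-Ctx}, which needs $\Cshape {\C_2\where{\cmatched \t \sh \cbrs,\ \hole}} \tp \shp$; this follows from the premise $\Cshape \Cp \tp \shp$ by the same erasure argument, because $\cerase{\cmatched \t \sh \cbrs}$ is invisible to unicity (it entails $\ctrue$). Combining the two yields $\semenv \Th \C_2\where{\cmatched \t \sh \cbrs,\ \cmatch \tp \cbrs'} = \semenv \Th \C\where{\cmatched \t \sh \cbrs}$, and decanonicalization (\cref{lem:decanonicalization}) returns this to the required $\semenv \th \C\where{\cmatched \t \sh \cbrs}$. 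The crux throughout is that unicity is a statement about \emph{erasures}, hence insensitive to whether sibling matches are still suspended or already discharged; this insensitivity is exactly what lets the two-hole juggling commute, and it is why the composability and simple-congruence lemmas are stated the way they are.
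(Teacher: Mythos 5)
Your proposal follows essentially the same route as the paper's proof: canonicalize the hypothesis, induct on the canonical derivation (noting the circularity with canonicalization is benign because its internal uses are on strictly smaller constraints), dismiss \Rule{Can-Base} by simplicity, split on whether the discharged match is the target, and in the two-hole case transfer unicity across the sibling discharge via erasure and \cref{lem:cong-simple}, apply the induction hypothesis, re-suspend the sibling with \Rule{Can-Match-Ctx}, and finish by decanonicalization. The only cosmetic differences are that you justify $\sh = \shp$ in the $\C = \Cp$ case (the paper asserts it silently) and you make the final decanonicalization step explicit.
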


%
%
%
%

\section{Properties of the constraint solver}
\label{app:proofs-solving}

The primary requirement of our constraint solver is correctness:
a constraint $\c$ is satisfiable if and only if the solver terminates with a solution.

This section decomposes this requirement into three properties: preservation,
progress, and termination---and provides proofs for each. Correctness then
follows as a corollary of these results.

\subsection{Preservation}

This section details the proof of \emph{preservation} for the solver: if $\ca
\csolve \cb$, then $\ca \cequiv \cb$.
Since rewriting may occur under arbitrary contexts, it suffices to check for
each rule, that the equivalence $\ca \cequiv \cb$ holds under all contexts
$\C$.

However, the introduction of suspended match constraints breaks congruence of
equivalence. That is, it is no longer the case that $\ca \cequiv \cb$ implies
$\C\where\ca \cequiv \C\where\cb$.
For instance, we have $\cmatch \tv \cbrs \cequiv \cfalse$, yet
$\C\where{\cmatch \tv \cbrs} \cnequiv \C\where\cfalse$ for $\C \is \hole
\cand \cunif \tv \tint$.

As a result, we must prove \emph{contextual equivalence} for each rewriting
rule explicitly. This is both non-trivial and tedious. To simplify the task, we
first present a series of auxiliary lemmas that recover contextual equivalence
for many common cases.
Whenever possible, we prefer to work with equivalences on \emph{simple}
constraints, as these retain the desired congruence properties that do not hold
generally in our system.

\begin{definition}[Contextual eqiuvalence]
  Two constraints $\ca$ and $\cb$ are contextually equivalence, written $\ca \cequivctx \cb$,
  iff:
  \begin{mathpar}
    \ca \cequivctx \cb \uad\eqdef\uad \all \C \uad \C\where\ca \cequiv \C\where\cb
  \end{mathpar}
\end{definition}

\begin{corollary}[Simple equivalence is congruent]
  \label{corollary:cong-simple-equiv}
  Given simple constraints $\ca, \cb$ and simple context $\C$. If
  $\ca \cequiv \cb$, then $\C\where\ca \equiv \C\where\cb$.
  \begin{proof}
    Follows from \cref{lem:cong-simple}.
  \end{proof}
\end{corollary}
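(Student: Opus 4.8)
The plan is to reduce the statement to the already-established one-directional congruence of entailment for simple constraints, namely \cref{lem:cong-simple}, by exploiting the fact that equivalence is, by definition, mutual entailment. Recall that $\ca \cequiv \cb$ abbreviates the conjunction $(\ca \centails \cb) \wedge (\cb \centails \ca)$, and likewise the desired conclusion $\C\where\ca \cequiv \C\where\cb$ unfolds into the two entailments $\C\where\ca \centails \C\where\cb$ and $\C\where\cb \centails \C\where\ca$. So the whole corollary is a matter of unfolding the definition of $\cequiv$, invoking the lemma in each direction, and re-folding.

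Concretely, first I would unfold the hypothesis $\ca \cequiv \cb$ into its two constituent entailments $\ca \centails \cb$ and $\cb \centails \ca$. Next, since $\ca$ and $\cb$ are simple and $\C$ is a simple context---these being exactly the side conditions demanded by \cref{lem:cong-simple}---I would apply that lemma once in each direction: from $\ca \centails \cb$ it yields $\C\where\ca \centails \C\where\cb$, and from $\cb \centails \ca$ it yields $\C\where\cb \centails \C\where\ca$. Finally, conjoining these two entailments re-folds the definition of $\cequiv$, giving $\C\where\ca \cequiv \C\where\cb$ as required.

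There is essentially no obstacle in the corollary itself: all the real work is discharged inside \cref{lem:cong-simple}, whose own proof proceeds by structural induction on the derivation of $\C \simple$. The only point worth flagging is \emph{why} the simplicity restrictions are indispensable, and I would make this explicit in the surrounding prose rather than the proof. As recalled just above the statement, congruence of $\cequiv$ fails in general---one has $\cmatch \tv \cbrs \cequiv \cfalse$ while $\C\where{\cmatch \tv \cbrs} \cnequiv \C\where\cfalse$ for $\C \is \hole \cand \cunif \tv \tint$---precisely because a suspended match constraint can be discharged using information supplied by the surrounding context. Confining both the constraints and the context to be simple removes suspended match constraints entirely, so that satisfiability reduces to the syntax-directed simple semantics $\semenv \thsimple \c$, under which entailment (and hence mutual entailment) is preserved by context filling; this is exactly the content that \cref{lem:cong-simple} packages up, leaving the corollary as a one-line consequence.
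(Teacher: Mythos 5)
Your proof is correct and matches the paper's: the paper's one-line proof ``Follows from \cref{lem:cong-simple}'' is precisely your argument of unfolding $\cequiv$ into mutual entailment, applying \cref{lem:cong-simple} in each direction, and re-folding. The side conditions (simplicity of $\ca$, $\cb$, and $\C$) are exactly those of the lemma, so nothing further is needed.
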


\begin{lemma}[Simple equivalence is contextual]
  \label{lem:ctxt-equiv-simple}
  For simple constraints $\ca, \cb$. If $\ca \cequiv \cb$, then $\ca \cequivctx \cb$.
  \begin{proof}
    We proceed by induction on the number of suspended match constraints $n$ in $\C$.

    \begin{proofcases}
      \proofcase{$n$ is 0}
	Follows from \cref{corollary:cong-simple-equiv}.

      \proofcase{$n$ is $k + 1$}

	\begin{proofcases}
	  \proofcase{$\implies$}

	  \newcommand{\Ctwo}{\Cb}
	  \begin{llproof}
	    \vdashPf{\semenv}{\C\where{\ca}}{Premise}
	    \VdashPf{\semenv}{\C\where{\ca}}{\cref{thm:canonicalization}}
	    \shapePf\Cp\t\sh{Inversion of \Rule{Can-Match-Ctx}}
	    \VdashPf{\semenv}{\Cp\where{\cmatched \t \sh \cbrs}}{\ditto}
	    \eqPf{\C\where{\ca}}{\Cp\where{\cmatch \t \cbrs}}{\ditto}
	    \continueeqPf{\Ctwo\where{\cmatch \t \cbrs, \ca}}{For some two-hole context $\Ctwo$}
	    \vdashPf{\semenv}{\Ctwo\where{\cmatched \t \sh \cbrs, \cb}}{By \ih}
	    \ForallPf{\semenvp, \gt}{}{Defn of $\Cshape \Cp \t \sh$}
	    \vdashPf{\semenvp}{\cerase{\Ctwo\where{\t \is \gt, \cb}}}{Premise}
	    \vdashPf{\semenvp}{\cerase{\Ctwo\where{\t \is \gt, \ca}}}{\cref{corollary:cong-simple-equiv}}
	    \vdashPf{\semenvp}{\cerase{\Cp\where{\t \is \gt}}}{Above}
	    \eqPf{\shape \gt}{\sh}{$\implies$E on $\Cshape \Cp \t \sh$}
	    \decolumnizePf
	    \shapePf{\Ctwo\where{\hole, \cb}}{\t}{\sh}{Above}
\Hand	    \vdashPf{\semenv}{\Ctwo\where{\cmatch \t \cbrs, \cb}}{By \Rule{Match-Ctx}}
	  \end{llproof}

	  \proofcase{$\impliedby$}

	  \begin{llproof}
	    Symmetric argument.
	  \end{llproof}
	\end{proofcases}
    \end{proofcases}
  \end{proof}
\end{lemma}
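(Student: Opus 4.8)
The plan is to prove the stronger statement by induction on the number $n$ of suspended match constraints occurring in the universally-quantified context $\C$ (the constraints $\ca, \cb$ stay fixed and simple throughout). An induction is needed at all — rather than a one-line appeal to congruence — precisely because congruence of $\cequiv$ fails once $\C$ contains a $\cmatch$: whether such a constraint can be discharged by \Rule{Match-Ctx} depends on a unicity condition $\Cshape \C \t \sh$, and a priori this condition could behave differently when the hole is filled with $\ca$ versus $\cb$. The entire difficulty is to show that it does not. For the base case $n = 0$, the context $\C$ is itself simple, so $\C\where\ca$ and $\C\where\cb$ are simple and the result is immediate from congruence of simple equivalence (\cref{corollary:cong-simple-equiv}).

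For the inductive step I would prove each direction of $\C\where\ca \cequiv \C\where\cb$ separately, the two being symmetric. Assume $\semenv \th \C\where\ca$. Since at least one match constraint is present and $\ca$ is simple, I first apply canonicalization (\cref{thm:canonicalization}) to obtain $\semenv \Th \C\where\ca$, whose derivation cannot end in \Rule{Can-Base} and so must end in \Rule{Can-Match-Ctx}, discharging some match $\cmatch \t \cbrs$ against a shape $\sh$. Inverting this rule yields both a unicity fact $\Cshape {\mathcal{D}\where{\hole, \ca}} \t \sh$ and a canonical derivation of the discharged constraint $\mathcal{D}\where{\cmatched \t \sh \cbrs, \ca}$, where $\mathcal{D}$ is the two-hole context obtained by factoring $\C\where\ca = \mathcal{D}\where{\cmatch \t \cbrs, \ca}$ — one hole for the discharged match, one for the original hole of $\C$.

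The key bookkeeping observation is that discharging replaces $\cmatch \t \cbrs$ by (an existential wrapping of) a single selected branch, so $\mathcal{D}\where{\cmatched \t \sh \cbrs, \hole}$ contains strictly fewer match constraints than $\C$; the induction hypothesis therefore applies and, from $\semenv \th \mathcal{D}\where{\cmatched \t \sh \cbrs, \ca}$, delivers $\semenv \th \mathcal{D}\where{\cmatched \t \sh \cbrs, \cb}$. It then remains to re-apply \Rule{Match-Ctx} to recover $\semenv \th \mathcal{D}\where{\cmatch \t \cbrs, \cb} = \C\where\cb$, which requires transferring the unicity condition to the $\cb$-filled context, i.e.\ establishing $\Cshape {\mathcal{D}\where{\hole, \cb}} \t \sh$.

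This transfer is the crux, and I expect it to be the main obstacle. The saving grace is that $\Cshape$ is defined purely on the \emph{erasure} of the context, and erasures are always simple; since $\ca, \cb$ are simple we have $\cerase \ca = \ca$ and $\cerase \cb = \cb$, so $\ca \cequiv \cb$ lets me rewrite $\cb$ to $\ca$ inside the (simple) erased context via simple congruence (\cref{corollary:cong-simple-equiv}). Concretely, for any $\semenvp, \gt$, satisfaction of $\cerase{\mathcal{D}\where{\cunif \t \gt, \cb}}$ implies satisfaction of $\cerase{\mathcal{D}\where{\cunif \t \gt, \ca}}$, whence $\shape \gt = \sh$ by the unicity fact already in hand; this is exactly $\Cshape {\mathcal{D}\where{\hole, \cb}} \t \sh$. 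Applying \Rule{Match-Ctx} then closes the forward direction, and the backward direction is entirely symmetric, completing the induction.
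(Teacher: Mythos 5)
Your proof is correct and follows essentially the same route as the paper's: induction on the number of suspended match constraints in $\C$, canonicalization plus inversion of \Rule{Can-Match-Ctx} to expose the discharged match through a two-hole factoring of $\C$, the induction hypothesis on the strictly smaller discharged context, transfer of the unicity condition from the $\ca$-filled to the $\cb$-filled context via simple congruence on (simple) erasures, and a final re-application of \Rule{Match-Ctx}, with the converse direction by symmetry. The only differences are cosmetic: you make explicit a few points the paper leaves implicit, such as why the canonical derivation cannot end in \Rule{Can-Base} and why the match count strictly decreases after discharging.
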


\begin{lemma}[Unification is simple]
  \label{lem:unif-problem-simple}
  For all unification problems $\up$, $\up \simple$.
  \begin{proof}
    By induction on the structure of $\up$.
  \end{proof}
\end{lemma}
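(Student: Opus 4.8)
The plan is to prove this by a straightforward structural induction on the grammar of unification problems $\up$. The essential observation is that the syntactic category of unification problems (\cref{fig:unification-syntax-and-semantics}) is a strict subset of the full constraint language whose productions---$\ctrue$, $\cfalse$, $\upa \cand \upb$, $\cexists \tv \up$, and multi-equations $\ueq$---never include a suspended match constraint former. Since the judgment $\c \simple$ is precisely the predicate asserting the absence of such constraints, every unification problem is simple essentially by construction.

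Concretely, I would perform induction on the structure of $\up$ with one case per production. The base cases $\ctrue$ and $\cfalse$ are discharged immediately by \Rule{Simple-True} and \Rule{Simple-False}. The multi-equation case $\ueq$ is discharged by \Rule{Simple-Multi-unif}, which treats a multi-equation atomically: a multi-equation is a multiset of types and so contains no constraint subterms, let alone match constraints. For the conjunction $\upa \cand \upb$, the two induction hypotheses give $\upa \simple$ and $\upb \simple$, and \Rule{Simple-Conj} concludes. For the existential $\cexists \tv \up$, the induction hypothesis gives $\up \simple$, and \Rule{Simple-Exists} concludes.

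There is no real obstacle here: the statement is a syntactic sanity check confirming that the unification sublanguage lies entirely within the simple fragment. Its only role is to enable the congruence and contextual-equivalence results for simple constraints (\cref{corollary:cong-simple-equiv,lem:ctxt-equiv-simple}) to be applied to unification problems, which is exactly how it will be used in the preservation argument for \Rule{S-Unif}. The single point worth stating carefully is that the grammar of $\up$ genuinely has no match-constraint production, so that the induction never reaches a case requiring anything beyond the structural rules of the simple judgment.
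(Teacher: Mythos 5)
Your proof is correct and takes exactly the approach of the paper, which also proceeds by structural induction on $\up$; you simply spell out the (immediate) cases that the paper leaves implicit. Nothing further is needed.
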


\begin{definition}[Context equivalence]
  Two contexts $\Ca$ and $\Cb$ are equivalent with guard $P$, written $\Ca
  \cctxequiv^P \Cb$ iff:
  \begin{mathpar}
    \Ca \cctxequiv^P \Cb \uad\eqdef\uad \all \cs \uad P(\cs) \implies
    \Ca\where\cs \cequivctx \Cb\where\c s
  \end{mathpar}
  \end{definition}

\begin{definition}[Match-closed]
  A predicate $P$ on constraints is \emph{match-closed} if, for all
  constraints $\cs, \cs'$, contexts $\C$, matches $\cmatch \t \cbrs$ and
  shapes $\sh$,
  \begin{mathpar}
    P(\cs, \C\where{\cmatch \t \cbrs}, \cs') \implies
    P(\cs, \C\where{\cmatched \t \sh \cbrs}, \cs')
  \end{mathpar}
\end{definition}

\begin{lemma}[Determines is match-closed]
  \label{lem:determines-is-match-closed}
  $\cdetermines \c \tvbs$ is match-closed. Similarly, $\th \cdetermines \c
  \tvbs$ is matched closed.
  \begin{proof}
    Follows from the definitions of $\cdetermines \c \tvbs$, $\th
    \cdetermines \c \tvbs$, and \cref{lem:cong-simple}.
  \end{proof}
\end{lemma}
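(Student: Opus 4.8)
The plan is to exploit the fact that determinacy of a constraint $\c$ is insensitive to its suspended match constraints except through the erasure $\cerase \c$, and that discharging a match can only \emph{strengthen} the erased constraint (a suspended match erases to $\ctrue$, whereas its discharged form erases to a conjunction of unifications with the selected branch). Since both the semantic judgment $\cdetermines \c \tvbs$ and its syntactic approximation $\th \cdetermines \c \tvbs$ are monotone under adding conjuncts, replacing a suspended match by its discharged form within $\c$ preserves them; this is exactly the content of match-closedness in the general three-slot form.

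First I would handle the semantic judgment $\cdetermines \c \tvbs$. This judgment is defined purely in terms of $\cerase \c$, and erasure is homomorphic on contexts, so $\cerase{\C\where{\cmatch \t \cbrs}} = \cerase\C\where{\ctrue}$ and $\cerase{\C\where{\cmatched \t \sh \cbrs}} = \cerase\C\where{\cerase{\cmatched \t \sh \cbrs}}$, where $\cerase\C$ is a \emph{simple} context. Because $\cerase{\cmatched \t \sh \cbrs}$ is a simple constraint and trivially entails $\ctrue$, \cref{lem:cong-simple} (congruence of entailment for simple constraints under simple contexts) yields $\cerase{\C\where{\cmatched \t \sh \cbrs}} \centails \cerase{\C\where{\cmatch \t \cbrs}}$. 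Hence every model of the discharged erasure is also a model of the suspended erasure, so the set of environments over which the determinacy implication is quantified only shrinks. Unfolding the definition then closes the case: given $\semenv, \semenvp$ that satisfy the discharged erasure and agree outside $\tvbs$, they also satisfy the suspended erasure, so the hypothesis $\cdetermines {\C\where{\cmatch \t \cbrs}} \tvbs$ forces $\semenv = \semenvp$, which is exactly $\cdetermines {\C\where{\cmatched \t \sh \cbrs}} \tvbs$.

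Next I would treat the syntactic judgment $\th \cdetermines \c \tvbs$, established by \Rule{Det-Dom} or \Rule{Det-Esc} exhibiting a witnessing multi-equation $\ueq$ in the top-level conjunction of $\c$ (read modulo associativity, commutativity, and existential lifting). The key observation is that a suspended match $\cmatch \t \cbrs$ contributes no top-level multi-equation, so the witness for $\th \cdetermines {\C\where{\cmatch \t \cbrs}} \tvbs$ must reside in the context part $\C$. That same multi-equation survives discharge, since $\C\where{\cmatched \t \sh \cbrs}$ retains every conjunct of $\C$ and merely adds the existential prefix and the conjuncts of $\cmatched \t \sh \cbrs$. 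I would therefore re-apply the very same rule instance, after floating the freshly introduced existentials of $\cmatched \t \sh \cbrs$ out to the prefix, to derive $\th \cdetermines {\C\where{\cmatched \t \sh \cbrs}} \tvbs$.

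The main obstacle is bookkeeping rather than conceptual: making the homomorphism of erasure over constraint contexts precise, and, in the syntactic case, arguing carefully that the witnessing multi-equation genuinely lives outside the match and that the extra existentials produced by discharge can be lifted to the prefix so that the same instance of \Rule{Det-Dom} or \Rule{Det-Esc} still applies. Both points become routine once one is explicit about the convention that these rules are read modulo $\alpha$-equivalence, conjunction AC, and existential lifting.
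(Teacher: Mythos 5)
Your proof is correct and is essentially an expansion of the paper's one-line argument: the semantic half combines the erasure-based definition of $\cdetermines \c \tvbs$ with \cref{lem:cong-simple} exactly as intended (erasure commutes with hole-filling, the discharged form's erasure entails $\ctrue$, so the set of models only shrinks and the determinacy implication is preserved), and the syntactic half rests on the correct observation that the witnessing multi-equation lies outside the suspended match and survives discharge. One superfluous step: there is no need (and, strictly speaking, no license, since $\th \cdetermines \c \tvbs$ is a purely syntactic judgment not read modulo existential lifting) to float the discharged match's existentials out to the prefix --- the rest-constraint slot in \Rule{Det-Dom} and \Rule{Det-Esc} is an arbitrary constraint, so the very same rule instance applies verbatim to $\C\where{\cmatched \t \sh \cbrs}$.
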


\begin{lemma}[Simple context equivalence]
  \label{lem:simple-ctxt-equiv}
  For any two simple contexts $\Ca, \Cb$ and a match-closed guard $P$. If
  the two contexts $\Ca$ and $\Cb$ are equivalent under any simple
  constraints satisfying $P$, then $\Ca \cctxequiv^P \Cb$.
  \begin{proof}
    Let us assume that ($\dagger$) holds:
    \begin{mathpar}
      \all{\C, \cs \simple} P(\cs) \implies \C\where{\Ca\where\cs} \cequiv \C\where{\Cb\where\cs}
    \end{mathpar}

    We proceed by induction on the number of suspended match constraints $n$ with
    the statement $Q(n) \is  \all {\cs, \C} \cnmatches {\C} + \cnmatches \cs = n \implies P(\cs) \implies
    \C\where{\Ca\where\cs} \equiv \C\where{\Cb\where\cs}$.

    \begin{proofcases}
      \proofcase{$n$ is 0}

	\begin{llproof}
	  \simplePf{\C, \cs}{Premise ($n$ is 0)}
	  \Hand	  \equivPf{P(\cs) \implies \C\where\Ca\where\cs}{\C\where\Cb\where\cs}{$\dagger$}
	\end{llproof}

      \proofcase{$n$ is $k + 1$}

	\begin{proofcases}
	  \proofcase{$\implies$}

	  \begin{llproof}
	    \Pf{P(\cs)}{}{}{Premise}
	    \vdashPf{\semenv}{\C\where\Ca\where\cs}{Premise}
	    \VdashPf{\semenv}{\C\where\Ca\where\cs}{\cref{thm:canonicalization}}
	    \VdashPf{\semenv}{\Cp\where{\cmatched \t \sh \cbrs}}{Inversion of \Rule{Can-Match-Ctx}}
	    \shapePf{\Cp}{\t}{\sh}{\ditto}
	    \eqPf{\C\where\Ca\where\cs}{\Cp\where{\cmatch \t \cbrs}}{\ditto}
	    \commentPf{Cases on $\C, \cs$.}{}
	  \end{llproof}

	  \begin{proofcases}
	    \proofcase{$\C$ contains $\Cp$'s hole}

	      \newcommand{\Ctwo}{\Cc}
	      \begin{llproof}
		\eqPf{\C\where\Ca\where\cs}{\Ctwo\where{\cmatch \t \cbrs, \Ca\where\cs}}{For some 2-hole context $\Ctwo$}
		\VdashPf{\semenv}{\Ctwo\where{\cmatched \t \sh \cbrs, \Ca\where\cs}}{}
		\eqPf{k}{\cnmatches {\Ctwo\where{\cmatched \t \sh \cbrs, \Ca\where\cs}}}{}
		\vdashPf{\semenv}{\Ctwo\where{\cmatched \t \sh \cbrs, \Cb\where\cs}}{By \ih}
		\decolumnizePf
		\ForallPf{\semenvp, \gt}{}{}
		\vdashPf{\semenvp}{\cerase{\Ctwo\where{\cunif \t \gt, \Cb\where\cs}}}{Premise}
		\vdashPf{\semenvp}{\cerase{\Ctwo\where{\cunif \t \gt, \Ca\where\cs}}}{$\dagger$}
		\eqPf{\shape \gt}{\sh}{$\implies$E on $\Cshape \Cp \t \sh$}
		\shapePf{\Ctwo\where{\hole, \Cb\where\cs}}{\t}{\sh}{Above}
\Hand		\vdashPf{\semenv}{\Ctwo\where{\cmatch \t \cbrs, \Cb\where\cs}}{By \Rule{Match-Ctx}}
	      \end{llproof}

	    \proofcase{$\ci$ contains $\Cp$'s hole}

	      \begin{llproof}
		Similar argument to the above case, but relies on the match-closure of $P$.
	      \end{llproof}
	  \end{proofcases}

	  \proofcase{$\impliedby$}

	  \begin{llproof}
	    Symmetric argument.
	  \end{llproof}
	\end{proofcases}
    \end{proofcases}
  \end{proof}
\end{lemma}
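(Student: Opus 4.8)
The plan is to lift the hypothesized equivalence, which is available only for \emph{simple} outer contexts and \emph{simple} plugged constraints, to \emph{arbitrary} contexts, relative to the guard $P$. Unfolding the definitions, establishing $\Ca \cctxequiv^P \Cb$ amounts to showing that for every $\cs$ with $P(\cs)$ and every context $\C$ we have $\C\where{\Ca\where\cs} \cequiv \C\where{\Cb\where\cs}$. The difficulty is that $\cequiv$ is not congruent once suspended match constraints are present, so the hypothesis $(\dagger)$ cannot be invoked directly: the matches occurring in $\C$ and in $\cs$ (recall $\Ca,\Cb$ are simple and contribute none) are dischargeable only when their surrounding context fixes a unique shape, and swapping $\Ca$ for $\Cb$ could in principle disturb that shape.

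First I would set up an induction on $n = \cnmatches \C + \cnmatches \cs$, the total number of suspended matches. In the base case $n = 0$ both $\C$ and $\cs$ are match-free, hence simple, so $(\dagger)$ applies verbatim. For the inductive step I would prove each direction of $\cequiv$ separately (they are symmetric), so assume $\semenv \th \C\where{\Ca\where\cs}$. The key move is to apply canonicalization (\cref{thm:canonicalization}) to obtain a canonical derivation $\semenv \Th \C\where{\Ca\where\cs}$, whose bottommost rule must be \Rule{Can-Match-Ctx}. Inverting that rule isolates an outermost match $\cmatch \t \cbrs$ sitting in some context $\Cp$ with $\Cshape \Cp \t \sh$, already discharged to $\cmatched \t \sh \cbrs$; since discharging removes one match, the residual constraint has strictly fewer matches, making the induction hypothesis applicable once $\Ca$ is replaced by $\Cb$.

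Two sub-cases then arise, according to whether the isolated match lies in the outer context $\C$ (away from the $\Ca$-hole) or inside the plugged constraint $\cs$. In each I would re-express the term through a two-hole context $\Cc$ exposing both the discharged match slot and the $\Ca$-hole, treat the match-filled context as a fresh single-hole outer context, apply the induction hypothesis to swap $\Ca$ for $\Cb$ in the now lower-match-count discharged term, and finally re-suspend the match with \Rule{Match-Ctx}. When the isolated match lies inside $\cs$, discharging it produces a constraint $\cs'$ that must still satisfy the guard; this is exactly where match-closedness of $P$ is needed, so that the induction hypothesis remains applicable to $\cs'$.

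The hardest step, and the real content of the lemma, is re-establishing the unicity premise $\Cshape{\Cc\where{\hole, \Cb\where\cs}} \t \sh$ after the swap, so that \Rule{Match-Ctx} may fire. Here I would exploit that $\Cshape{\cdot}{\t}{\sh}$ is defined purely on the \emph{erasure} of the context, which replaces every match by $\ctrue$ and is therefore simple: on such erased, simple contexts I may invoke $(\dagger)$ together with simple congruence (\cref{lem:cong-simple}) to conclude that for all $\semenvp, \gt$, $\semenvp \th \cerase{\Cc\where{\cunif \t \gt, \Cb\where\cs}}$ iff $\semenvp \th \cerase{\Cc\where{\cunif \t \gt, \Ca\where\cs}}$, whence the set of admissible shapes for $\gt$ is unchanged and $\shape \gt = \sh$ still follows from the original $\Cshape \Cp \t \sh$. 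This is precisely why congruence, which fails in general, is recovered in this guarded, simple-relative form, and why the extra hypotheses on $P$ and on simplicity appear in the statement.
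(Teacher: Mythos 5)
Your proposal is correct and takes essentially the same route as the paper's own proof: induction on the total number of suspended matches in $\C$ and $\cs$, canonicalization followed by inversion of \Rule{Can-Match-Ctx}, a case split on whether the discharged match lies in the outer context $\C$ or inside the plugged constraint $\cs$ (the latter case using match-closedness of $P$), and re-establishing the unicity premise on erasures via $(\dagger)$ and simple congruence before re-suspending with \Rule{Match-Ctx}. Nothing to add.
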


\begin{lemma}[Simple let equivalence]
  \label{lem:simple-let-equiv}
  Given simple constraints $\ca, \cb$ and a simple context $\C$.
  Suppose that
    \begin{mathpar}
      \forall \semenv, \semenvp, \cs \simple. \uad
	\semenvp(\x) = \semenv(\cabsr \tv \tvs {\C\where\cs}) \implies
	  \semenvp \th \ca \iff \semenvp \th \cb
    \end{mathpar}
  Then, for any context $\Cp$ that does not re-bind $\x$, we have:
    \begin{mathpar}
      \cletr \x \tv \tvs {\C\where{\bar\hole}} {\Cp\where\ca}
	\cctxequiv^P \cletr \x \tv \tvs {\C\where{\bar\hole}} {\Cp\where\cb}
    \end{mathpar}
  for any match-closed guard $P$ on the holes.

  \begin{proof}
    Let us pose $\gabsr = \semenv(\cabsr \tv \tvs {\C\where\cs})$ and let us
    assume ($\dagger$):
    \begin{mathpar}
      \forall \semenv, \semenvp, \cs. \uad
	\semenvp(\x) = \gabsr \implies
	  \semenvp \th \ca \iff \semenvp \th \cb
    \end{mathpar}

    We proceed by induction on the number of suspended match constraints in
    $\Cpp, \Cp, \cs$ with the statement $P(n) \is \all {\Cpp, \Cp, \cs} \cnmatches {\Cpp, \Cp, \cs} = n \implies \Cpp\where{\cletr \x \tv \tvs {\C\where\cs} {\Cp\where{\ca}}}
    \cequiv \Cpp\where{\cletr \x \tv \tvs {\C\where\cs} {\Cp\where\cb}}$.

    \begin{proofcases}
      \proofcase{$n$ is 0}

	Thus $\Cpp, \Cp, \cs$ are simple. It suffices to show the equivalence on the let-constraint directly and use congruence
	of equivalence for simple constraints (\cref{lem:ctxt-equiv-simple}) to establish the result.

	We proceed by induction on the structure of $\Cp$ with the statement ($\ddagger$):
	\begin{mathpar}
	\forall \semenv, \semenvp. \uad
	  \semenvp(\x) = \gabsr \implies
	    \semenvp \th \Cp\where{\ca} \iff \semenvp \th \Cp\where{\cb}
	\end{mathpar}
	This holds due to the compositionality of simple equivalence using $\dagger$ as a base case.

	\begin{proofcases}
	  \proofcase{$\implies$}

	\begin{llproof}
	  \vdashPf{\semenv}{\cletr \x \tv \tvs {\C\where{\cs}} {\Cp\where{\ca}}}{Premise}
	  \vdashPf{\semenv\where{\x \is \gabsr}}{\Cp\where{\ca}}{\ditto}
	  \vdashPf{\semenv\where{\x \is \gabsr}}{\Cp\where{\cb}}{$\ddagger$}
	  \vdashPf{\semenv}{\cletr \x \tv \tvs {\C\where{\cs}} {\Cp\where{\cb}}}{By \Rule{LetR}}
	\end{llproof}
	  \proofcase{$\impliedby$}

	  \begin{llproof}
	    Symmetric argument.
	  \end{llproof}
	\end{proofcases}

      \proofcase{$n$ is $k + 1$}

      \begin{llproof}
	Analogous to the inductive step in \cref{lem:simple-ctxt-equiv}.
      \end{llproof}
    \end{proofcases}
  \end{proof}
\end{lemma}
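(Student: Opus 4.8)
The plan is to reduce the whole statement to reasoning about \emph{simple} constraints---where $\cequiv$ is a congruence---and to tame the suspended match constraints by induction on their number, pushing every use of the contextual rule to the bottom of derivations with canonicalization (\cref{thm:canonicalization}). Unfolding the two layers $\cctxequiv^P$ and $\cequivctx$, it suffices to show that for every hole-filling $\cs$ with $P(\cs)$ and every outer context $\Cpp$ we have $\Cpp\where{\cletr \x \tv \tvs {\C\where\cs} {\Cp\where\ca}} \cequiv \Cpp\where{\cletr \x \tv \tvs {\C\where\cs} {\Cp\where\cb}}$. Writing $\gabsr = \semenv(\cabsr \tv \tvs {\C\where\cs})$, the hypothesis $(\dagger)$ says exactly that $\ca$ and $\cb$ are interchangeable in any $\semenvp$ with $\semenvp(\x) = \gabsr$, i.e.\ precisely the environments reachable inside the body of the let-constraint, and note that both sides share the \emph{same} abstraction $\C\where\cs$, so $\gabsr$ is literally identical on the two sides. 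I would then induct on the number $n$ of suspended match constraints in $(\Cpp, \Cp, \cs)$.

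In the base case $n = 0$ everything is simple, so I would first prove the equivalence of the two let-constraints in isolation and then lift it across the simple $\Cpp$ by congruence of simple equivalence (\cref{corollary:cong-simple-equiv}). For the isolated equivalence I would run an inner structural induction on $\Cp$ establishing the intermediate claim $(\ddagger)$: for every $\semenvp$ with $\semenvp(\x) = \gabsr$, $\semenvp \th \Cp\where\ca \iff \semenvp \th \Cp\where\cb$. Its base case is $(\dagger)$, and its inductive cases follow from compositionality of simple satisfiability (\cref{lem:cong-simple}). With $(\ddagger)$ in hand, applying \Rule{LetR} in each direction---reusing the common $\gabsr$, since the abstraction is untouched---yields the equivalence of the full let-constraints.

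For the inductive step $n = k+1$ there is at least one suspended match somewhere in $(\Cpp, \Cp, \cs)$. I would take a satisfying derivation of the left-hand side, canonicalize it (\cref{thm:canonicalization}), and invert the outermost \Rule{Can-Match-Ctx}, exposing an outermost discharged match $\cmatched \t \sh \cbrs$ together with its unicity witness. Re-expressing the constraint as a two-hole context $\Cc$ that separates this match from the distinguished $\ca$-occurrence---so the constraint reads $\Cc\where{\cmatch \t \cbrs, \ca}$---and discharging the match gives $\Cc\where{\cmatched \t \sh \cbrs, \ca}$, which has strictly fewer suspended matches, so the induction hypothesis applies and replaces $\ca$ by $\cb$. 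The crux is then re-establishing $\Cshape {\Cc\where{\hole, \cb}} \t \sh$ so that \Rule{Match-Ctx} can be re-applied to rebuild a derivation of the right-hand side; this is exactly the manoeuvre already carried out in \cref{lem:simple-ctxt-equiv}, valid because unicity is computed on erasures and the simple $\ca, \cb$ are preserved there verbatim. When the $\ca$-occurrence instead lies inside a branch $\ci$ of the discharged match, the same argument applies, using the match-closedness of $P$ to guarantee the hole-guard survives discharging. The reverse direction $\cb \to \ca$ is symmetric.

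The main obstacle is precisely the failure of congruence for $\cequiv$ once suspended match constraints are present (witnessed by $\cmatch \tv \cbrs \cequiv \cfalse$, yet $\hole \cand \cunif \tv \tint$ distinguishes the two sides): I cannot simply substitute $\ca$ for $\cb$ under an arbitrary context. Everything hinges on controlling this non-congruence by canonicalizing, inverting a single contextual rule at a time, and re-proving the unicity side-condition $\Cshape{\cdot}\t\sh$ after each swap---which is sound only because unicity looks solely at erasures, where the simple constraints $\ca$ and $\cb$ are untouched and the match-closedness of $P$ keeps branch-guards intact.
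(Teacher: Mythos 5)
Your proposal is correct and follows essentially the same route as the paper's proof: the same outer induction on the number of suspended match constraints in $(\Cpp, \Cp, \cs)$, the same base case via an inner structural induction on $\Cp$ establishing $(\ddagger)$ from $(\dagger)$ plus simple compositionality and then rebuilding with \Rule{LetR} and simple congruence, and the same inductive step via canonicalization, inversion of the contextual rule, and re-establishing unicity on erasures. The only difference is presentational: the paper compresses the $n = k+1$ case into ``analogous to \cref{lem:simple-ctxt-equiv}'', whereas you spell out that manoeuvre (two-hole context, discharge, IH, re-proving $\Cshape{\cdot}{\t}{\sh}$, match-closedness of $P$) explicitly.
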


\newcommand{\disjointPf}[3]{\Pf{#1}{\disjoint}{#2}{#3}}
\begin{lemma}
  \label{lem:unicity-soundness}
  If $\th \Cshape \C \t \sh$, then $\Cshape \C \t \sh$.

  \begin{proof}
    \begin{proofcases}
      \proofcasederivation
      {S-Uni-Type}
      {{\t \notin \TyVars}}
      {\th \Cshape \C \t {~\shape \t}}

      \begin{llproof}
        \notinPf{\t}{\TyVars}{Premise}
        \decolumnizePf
        \eqPf{\t}{\shapp[\shape \t] \tys}{For some $\tys$}
        \ForallPf{\semenv, \gt}{}{Defn. of $\Cshape \C \t {~\shape \t}$}
        \vdashPf{\semenv}{\cerase{\C\where{\cunif \t \gt}}}{Premise}
        \vdashPf{\semenva}{\cunif \t \gt}{Inversion of $\cerase \Ca$}
        \eqPf{\gt}{\semenva(\t)}{Simple inversion}
        \continueeqPf{\shapp[\shape \t] {\semenva(\tys)}}{\ditto}
\Hand   \eqPf{\shape \gt}{\shape \t}{Applying shape to both sides}
      \end{llproof}

      \proofcasederivation
        {S-Uni-Var}
        {\tv \disjoint \bvs \Cb}
        {\th \Cshape {\Ca\where{\cunif \tv {\cunif \t \ueq} \cand \Cb\where{-}}} \tv {~\shape \t}}

      \begin{llproof}
        \disjointPf{\tv}{\bvs \Cb}{Premise}
        \eqPf{\t}{\shapp[\shape \t] \tys}{For some $\tys$}
        \ForallPf{\semenv, \gt}{}{Defn. of $\Cshape \C \tv {~\shape \t}$}
        \vdashPf{\semenv}{\cerase{\Ca\where{\cunif \tv {\cunif {\shapp[\shape \t] \tys} \ueq} \cand \Cb\where{\cunif \tv \gt}}}}{Premise}
        \vdashPf{\semenva}{\cunif \tv {\cunif {\shapp[\shape \t] \tys} \ueq}}{Inversion of $\cerase \Ca$}
        \vdashPf{\semenvb}{\cunif \tv \gt}{Inversion of $\cerase \Cb$}
        \eqPf{\gt}{\semenvb(\tv)}{Simple inversion}
        \continueeqPf{\semenva(\tv)}{$\tv \disjoint \bvs \Cb$}
        \continueeqPf{\shapp[\shape \t] {\semenva(\tys)}}{Simple inversion}
  \Hand \eqPf{\shape \gt}{\shape \t}{Applying shape to both sides}
      \end{llproof}

      \proofcasederivation
        {S-Uni-BackProp}
        {\th \Cshape{\cletr \x \tv \tvs {\Ca\where{\ctrue}} {\Cb\where{\cpapp \x \tvp \tvc \inst \cand -}}} \tvc \sh \\
         \tvp \in \tv, \tvs \\
         \x \disjoint \bvs \Cb \\
         \tvp \disjoint \bvs \Ca}
        {\th \Cshape{\cletr \x \tv \tvs {\Ca\where{-}} {\Cb\where{\cpapp \x \tvp \tvc \inst}}} \tvp \sh}

        \begin{llproof}
          \inPf{\tvp}{\tv,\tvs}{Premise}
          \disjointPf{\x}{\bvs \Cb}{\ditto}
          \disjointPf{\tvp}{\bvs \Ca}{\ditto}
          \vdashPf{}{\Cshape{\cletr \x \tv \tvs {\Ca\where{\ctrue}} {\Cb\where{\cpapp \x \tvp \tvc \inst \cand -}}} \tvc \sh}{\ditto}
          \shapePf{\cletr \x \tv \tvs {\Ca\where{\ctrue}} {\Cb\where{\cpapp \x \tvp \tvc \inst \cand -}}}{\tvc}{\sh}{By \ih}
          \decolumnizePf
          \ForallPf{\semenv, \gt}{}{Defn. of $\Cshape \ldots \tv {~\shape \t}$}
          \vdashPf{\semenv}{\cerase{\cletr \x \tv \tvs {\Ca\where{\cunif \tvp \gt}} {\Cb\where{\cpapp \x \tvp \tvc \inst}}}}{Premise}
          \LetPf{\gabsr}{\semenv(\cabsr \tv \tvs {\cerase {\Ca\where{\cunif \tvp \gt}}})}{}
          \LetPf{\semenva}{\semenv[\x \is \gabsr]}{}
          \eqPf{\semenvp(\tvp)}{\gt}{For any $\greg \wild \semenvp \in \semenva(\x)$}
          \vdashPf{\semenvb}{\cpapp \x \tvp \tvc \inst}{Inversion of $\cerase \Cb$}
          \eqPf{\semenvb(\inst^\x)(\tvp)}{\semenvb(\tvc)}{Simple inversion}
          \inPf{\semenvb(\inst^\x)}{\semenvb(\x)}{Since $\cexistsi \inst \x \in \Cb$, $\semenvb$ extends $\semenva$}
          \eqPf{\semenvb(\inst^\x)(\tvp)}{\gt}{Above}
          \continueeqPf{\semenvb(\tvc)}{\ditto}
          \decolumnizePf
          \vdashPf{\semenva}{\cerase {\Cb\where{\cpapp \x \tvp \tvc \inst \cand \cunif \tvc \gt}}}{Entailment for $\cerase \Cb$}
          \vdashPf{\semenv}{\cerase {\cletr \x \tv \tvs {\Ca\where{\cunif \tvp \gt}} {\Cb\where{\cpapp \x \tvp \tvc \inst \cand \cunif \tvc \gt}}}}{By \Rule{LetR}}
          \vdashPf{\semenv}{\cletr \x \tv \tvs {\Ca\where{\ctrue}} {\Cb\where{\cpapp \x \tvp \tvc \inst \cand \cunif \tvc \gt}}}{Simple congruence}
\Hand     \eqPf{\shape \gt}{\sh}{$\implies E$ on $\Cshape \ldots \tvc \sh$}
        \end{llproof}
    \end{proofcases}
  \end{proof}
\end{lemma}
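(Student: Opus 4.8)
The plan is to proceed by induction on the derivation of the syntactic judgment $\th \Cshape \C \t \sh$, which has exactly three rules. In every case I will unfold the semantic definition of unicity (Definition~\ref{def:unicity}): I fix an arbitrary $\semenv$ and $\gt$ with $\semenv \th \cerase{\C\where{\cunif \t \gt}}$ and must derive $\shape \gt = \sh$. Since erasures are \emph{simple} constraints, the satisfiability derivations are syntax-directed and I may freely use inversion on them. The recurring fact I rely on is that substituting ground types into a non-variable type leaves its top-level canonical shape unchanged.

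The two non-recursive cases are straightforward. For \Rule{S-Uni-Type} the scrutinee $\t$ is a non-variable type, so $\t = \shapp[\shape \t]\tys$ for some $\tys$; the hole constraint $\cunif \t \gt$ forces $\gt = \semenv(\t) = \shapp[\shape \t]{\semenv(\tys)}$, and taking shapes of both sides yields $\shape \gt = \shape \t$. For \Rule{S-Uni-Var} the context exhibits an equality $\cunif \tv {\cunif \t \ueq}$ with $\t$ non-variable, while the hole records $\cunif \tv \gt$. The crucial observation is that $\tv \disjoint \bvs \Cb$, so inversion assigns $\tv$ a single value across both occurrences; chaining $\gt = \semenv(\tv) = \semenv(\t) = \shapp[\shape \t]{\dots}$ again gives $\shape \gt = \shape \t$.

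The genuinely delicate case, and the one I expect to be the main obstacle, is \Rule{S-Uni-BackProp}: this is the only recursive rule, and it is where the semantics of regions does real work. Here the scrutinee is a regional variable $\tvp \in \tv, \tvs$ of a let-region whose body contains an incremental instantiation $\cpapp \x \tvp \tvc \inst$ linking $\tvp$ to its copy $\tvc$. I first unfold \Rule{LetR} applied to the region filled with the hole $\cunif \tvp \gt$: every ground region $\greg \wild \semenvp$ in its interpretation satisfies $\semenvp(\tvp) = \gt$ (using $\tvp \disjoint \bvs \Ca$). Next, in the body, \Rule{Exists-Inst} and \Rule{Incr-Inst} tell me that $\inst$ selects one such region $\semenvp$ and that $\semenv(\tvc) = \semenvp(\tvp) = \gt$. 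The remaining work is to feed this back into the induction hypothesis: I rebuild a model of the \emph{premise's} context---where the hole has been moved onto $\tvc$ by adjoining $\cunif \tvc \gt$ and the abstraction body relaxed to $\ctrue$---and conclude $\shape \gt = \sh$ from the unicity of $\tvc$ granted by the \ih.

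The two things that make this last reconstruction subtle are the scoping discipline of regional let-constraints (the variables $\tv, \tvs$ may appear in the body only inside incremental instantiations of $\x$) and checking that relaxing the abstraction body to $\ctrue$ only enlarges the set of ground regions, so that the same witnessing region $\semenvp$ still validates the body together with $\cunif \tvc \gt$; the side conditions $\x \disjoint \bvs \Cb$ and $\tvp \disjoint \bvs \Ca$ are precisely what license these moves. Finally, the induction is well-founded because the regional depth of the hole strictly decreases in the premise of \Rule{S-Uni-BackProp}.
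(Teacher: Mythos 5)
Your proposal is correct and follows essentially the same route as the paper's proof: the same induction on the syntactic judgment, the same inversion arguments for \Rule{S-Uni-Type} and \Rule{S-Uni-Var} using the disjointness side condition, and for \Rule{S-Uni-BackProp} the same reconstruction of a model of the premise's context (hole moved to $\tvc$ via $\cunif \tvc \gt$, abstraction body relaxed to $\ctrue$ by monotonicity of the regional-let semantics) before applying the induction hypothesis. The subtleties you flag---the regional scoping discipline, the enlargement of the set of ground regions, and well-foundedness via regional depth---are exactly the points the paper's proof relies on.
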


\begin{lemma}
  \label{lem:unicity-completeness}

  If $\C$ is normalized, then $\Cshape \C \t \sh$ if and only $\th \Cshape \C \t \sh$.
  \begin{proof}~
    \begin{proofcases}

      \proofcase{$\implies$}
        \newcommand{\NC}{R}

        Let us assume $\Cshape \C \t \sh$ and $\C$ is normalized.

        Given $\C$ is normalized, every constraint in $\C$ is of the form: \\

        \begin{bnfgrammar}
          \entry[]{\NC}{
            \bar{\hat\ueq}
            \cand \overline{\cmatch \tv \cbrs}
            \cand \cexists {\overline{\inst^\x}} {\overline{\cpapp \x \tvb \tvc \inst}}
            \cand \overline{\cletr \x \tvd \tvds {\NC_1} {\NC_2}}
          }
        \end{bnfgrammar}
        \\

        By assumptions, we have $\all {\phi, \gt} {\phi \th \cerase
        \C\where{\cunif \tv \gt}} \implies \shape \gt = \sh$. Hence $\cerase
        \C$ contains $\cerase \NC$ where: \\

        \begin{bnfgrammar}
          \entry[]{\cerase \NC}{
            \bar{\hat\ueq}
            \cand \cexists {\overline{\inst^\x}} {\overline{\cpapp \x \tvb \tvc \inst}}
            \cand \overline{\cletr \x \tvd \tvds {\cerase {\NC_1}} {\cerase {\NC_2}}}
          }
        \end{bnfgrammar}
        \\

        \Wlog all constraints that may determine the shape of $\tv$ are located
        with the regional binder (following the \Rule{S-Exists-Lower} and
        \Rule{S-Let-ConjLeft} rules). There are two cases:

        \begin{proofcases}

          \proofcase{$\cunif \tv {\cunif \t \ueq} \in \bar{\hat\ueq}$} Apply \Rule{S-Uni-Var}.

          \proofcase{Otherwise}

            Since $\C$ is normalized, it must be that case that no equality
            constraint determines the shape of $\tv$. Since any such equality
            would normalize to $\cunif \tv {\cunif \t \ueq}$, contradicting our
            assumption that $\C$ is normalized.

            By elimination on the structure of $\NC$, the only constraints that
            could determine the shape of $\tv$ are incremental instantiation
            constraints that copy $\tv$. So there exists a partial
            instantiation constraint $\cpapp \x \tv \tvc \inst$ such that $\Cp
            \where{\cpapp \x \tv \tvc \inst} = \C\where{\ctrue}$ and $\Cshape
            \Cp \tvc \sh$.

            By induction, we have $\th \Cshape \Cp \tvc \sh$. From
            \Rule{S-Uni-BackProp}, we have $\th \Cshape \C \tv \sh$.

        \end{proofcases}

      \proofcase{$\impliedby$} Follows from \cref{lem:unicity-soundness}.

    \end{proofcases}

  \end{proof}
\end{lemma}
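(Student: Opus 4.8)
The plan is to prove the two directions separately. The $\impliedby$ direction---soundness of the syntactic judgment, i.e.\ $\th \Cshape \C \t \sh \implies \Cshape \C \t \sh$---is exactly \cref{lem:unicity-soundness}, so no new argument is required there. All of the work lies in the $\implies$ (completeness) direction: assuming $\C$ is normalized and the semantic condition $\Cshape \C \t \sh$ holds, I must build a derivation of $\th \Cshape \C \t \sh$ using only the three unicity rules \Rule{S-Uni-Type}, \Rule{S-Uni-Var}, and \Rule{S-Uni-BackProp}.

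For completeness I would first dispatch the easy subcase. If $\t$ is a non-variable type, \Rule{S-Uni-Type} applies immediately to give $\th \Cshape \C \t {\shape \t}$, and instantiating the semantic hypothesis at any $\gt$ forced equal to $\t$ pins $\sh = \shape \t$. The substantive subcase is $\t = \tv$ a variable. Here I would induct on the \emph{regional depth of the hole} in $\C$, exploiting the restricted shape of a normalized constraint: after erasure, $\cerase \C$ is a conjunction of solved multi-equations, existentially-bound incremental instantiation constraints $\cpapp \x \tvb \tvc \inst$, and nested regional let-constraints (the suspended match constraints having erased to $\ctrue$). Before the case analysis I would invoke the solver's hoisting and lowering rules (\Rule{S-Let-ConjLeft}, \Rule{S-Exists-Lower}) to assume, without loss of generality, that every constraint capable of determining the shape of $\tv$ is located at the regional binder that introduces $\tv$.

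The case analysis then splits in two. If the region contains a solved multi-equation of the form $\cunif \tv {\cunif \t' \ueq}$ with $\t'$ a non-variable type, then \Rule{S-Uni-Var} applies directly, and the semantic hypothesis again forces the produced shape $\shape \t'$ to equal $\sh$. Otherwise no equation in $\cerase \C$ fixes $\tv$'s shape; because $\C$ is in solved form, any such equation would already have been normalized into precisely that form, so its absence is genuine rather than hidden. By elimination on the normalized grammar, the only remaining constraint that can transmit a shape to $\tv$ is an incremental instantiation $\cpapp \x \tv \tvc \inst$ in which $\tv$ is a regional variable of $\x$ and the instance $\tvc$ is shape-determined by the body. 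I would then write $\C$ as $\Cp\where{\cpapp \x \tv \tvc \inst}$, observe that $\Cshape \Cp \tvc \sh$ holds (the instance must share $\tv$'s shape, else the instantiation is unsatisfiable), apply the induction hypothesis---the hole now sits in the let body, at strictly smaller regional depth---to obtain $\th \Cshape \Cp \tvc \sh$, and conclude with \Rule{S-Uni-BackProp}.

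The hard part will be establishing \emph{exhaustiveness} of this case analysis: I must argue that in a normalized constraint no mechanism other than a direct equation or a backpropagated instance can force $\tv$'s shape, so that whenever neither \Rule{S-Uni-Var} nor \Rule{S-Uni-BackProp} applies the semantic condition genuinely fails. Concretely, I would prove the contrapositive by constructing, from a normalized $\C$ in which $\tv$ is neither directly equated to a non-variable type nor backpropagated from a shape-determined instance, two satisfying environments of $\cerase{\C\where{\cunif \tv \gt}}$ that assign $\gt$ types of \emph{different} shapes, contradicting $\Cshape \C \tv \sh$. This construction is precisely where the solved-form invariant earns its keep, since it is what rules out residual shape constraints lurking in unmerged multi-equations or undischarged structure and thereby makes the enumeration of shape-forcing constraints complete.
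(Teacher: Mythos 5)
Your proposal is correct and follows essentially the same route as the paper's proof: the $\impliedby$ direction is discharged by \cref{lem:unicity-soundness}, and completeness proceeds by exploiting the normalized form, hoisting shape-determining constraints to the regional binder via \Rule{S-Let-ConjLeft}/\Rule{S-Exists-Lower}, then splitting on a direct equation (\Rule{S-Uni-Var}) versus a backpropagated instance (\Rule{S-Uni-BackProp}) with induction on the regional depth of the hole. Your explicit dispatch of the non-variable case via \Rule{S-Uni-Type} and your sketch of the exhaustiveness argument (constructing two differently-shaped satisfying environments) only make explicit what the paper compresses into ``by elimination on the structure of $R$''.
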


\begin{lemma}[Unification preservation]
  \label{lem:unification-preservation}
  If $\upa \unif \upb$, then $\upa \equiv \upb$
  \begin{proof}
    By induction on the given derivation $\upa \unif \upb$.
    See \citet*{\BBemlti} for more details.
  \end{proof}
\end{lemma}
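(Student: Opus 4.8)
The plan is to prove the equivalence one rewriting rule at a time. Since every unification problem is simple (\cref{lem:unif-problem-simple}), semantic equivalence on unification problems is a congruence for simple contexts (\cref{corollary:cong-simple-equiv}, and more generally \cref{lem:ctxt-equiv-simple}). Because a step $\upa \unif \upb$ takes place under an arbitrary unification context $\Up$ and modulo associativity and commutativity of conjunction, it therefore suffices to verify, for each rewriting schema $l \to r$ in \cref{fig:unification-algorithm}, that redex and contractum have exactly the same solutions, i.e.\ $l \cequiv r$; the full statement then follows by a single appeal to congruence to lift the local equivalence through $\Up$. Formally this is an induction on the derivation of $\upa \unif \upb$, but all the content lies in the base case analysis on the fired rule.

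For the structural and bookkeeping rules the check is a direct unfolding of the semantics. \Rule{U-Exists} is the standard commutation of an existential past a conjunction whose other conjunct does not mention the bound variable, justified by the side condition $\tv \disjoint \upb$ together with \Rule{Exists} and \Rule{Conj}. \Rule{U-True} and \Rule{U-Trivial} are immediate: a conjunct $\ctrue$ is inert, and by \Rule{Multi-Unif} any multi-equation with at most one member is satisfied in every $\semenv$. \Rule{U-Merge} and \Rule{U-Stutter} both follow because, under \Rule{Multi-Unif}, the members of a multi-equation containing $\tv$ are precisely the ground types forced to equal $\semenv(\tv)$, so merging two multi-equations on a shared variable, or deleting a repeated variable, neither adds nor removes solutions. \Rule{U-Name} is a flattening step: the fresh existential $\cexists \tv$ together with $\cunif \tv \ti$ names a subterm without constraining it, so solutions of the two sides are in bijection. \Rule{U-False} propagates failure, and holds for simple contexts since an unsatisfiable conjunct renders any simple surrounding problem unsatisfiable.

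The two genuinely load-bearing rules are the decomposition/clash pair and the occurs check. For \Rule{U-Decomp} and \Rule{U-Clash} I would appeal to the theory of canonical principal shapes (\cref{thm:principal-shapes}): a ground type determines its canonical shape and its argument vector uniquely. Hence $\pshapp \tvs$ and $\pshapp \tvbs$ are equal under $\semenv$ iff their shapes coincide and the arguments agree pointwise, which is exactly the equivalence realized by \Rule{U-Decomp}; when the two shapes differ (\Rule{U-Clash}) no $\semenv$ satisfies the equation, so the redex is equivalent to $\cfalse$. For \Rule{U-Cycle} the key fact is that ground types are finite members of $\Ground$ with no infinite unfolding, so a variable cannot be mapped to a type that properly contains it; a cyclic problem, in which some $\tv$ satisfies $\tv \prec_\up^+ \tv$, thus forces $\semenv(\tv)$ to be a strict subterm of itself and admits no solution, making it equivalent to $\cfalse$.

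I expect the main obstacle to be precisely this soundness-of-the-occurs-check argument for \Rule{U-Cycle}, together with the injectivity underlying \Rule{U-Decomp}: both require pushing the syntactic relation $\prec_\up$ and the principal-shape decomposition through the interpretation $\semenv(\cdot)$ and establishing that ground solutions are well-founded finite trees. The remaining rules are routine once congruence for simple constraints has been invoked. This is also why it is legitimate to cite \citet*{Pottier-Remy/emlti}: modulo replacing type constructors by canonical principal shapes, the reasoning is their standard unification-correctness proof.
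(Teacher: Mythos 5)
Your proposal is correct and takes essentially the same route as the paper, whose proof is exactly the induction on the rewriting derivation with rule-by-rule local equivalences that it delegates to \citet*{Pottier-Remy/emlti}; your use of \cref{lem:unif-problem-simple} together with \cref{corollary:cong-simple-equiv}/\cref{lem:ctxt-equiv-simple} to lift each local equivalence through the unification context $\Up$ is the right way to discharge the congruence step that the paper leaves implicit. The only content genuinely beyond the cited reference---the uniqueness/injectivity of canonical principal shape decompositions behind \Rule{U-Decomp} and \Rule{U-Clash}, and the well-foundedness of ground types behind \Rule{U-Cycle}---is precisely what you flag, and handling it via \cref{thm:principal-shapes} is sound.
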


\preservationBIS*
\newcommand{\unifPf}[3]{\Pf{#1}{\unif}{#2}{#3}}
\newcommand{\equivctxPf}[3]{\Pf{#1}{\cequivctx}{#2}{#3}}
\newcommand{\ctxequivPf}[3]{\Pf{#1}{\cctxequiv}{#2}{#3}}
\begin{proof}
  We proceed by induction on the given derivation.
  It suffices to show that for each individual rule $R$ ($\ca \csolve_R \cb$),
  that $\ca \cequivctx \cb$.

  \begin{proofcases}
    \proofcaserewrite
      {S-Unif}
      {\upa \\ \upa \unif \upb}
      {\upb}

	\begin{llproof}
	  \unifPf{\upa}{\upb}{Premise}
	  \equivPf{\upa}{\upb}{\cref{lem:unification-preservation}}
	  \simplePf{\upa, \upb}{\cref{lem:unif-problem-simple}}
\Hand 	  \equivctxPf{\upa}{\upb}{\cref{lem:ctxt-equiv-simple}}
	\end{llproof}

    \proofcaserewrite
      {S-Exists-Conj}
      {\parens {\cexists \tv \ca} \cand \cb \\ \tv \disjoint \cb}
      {\cexists \tv \ca \cand \cb}

	\begin{llproof}
	  \disjointPf{\tv}{\cb}{Premise}
	  \decolumnizePf
	  \sufficientPf{equivalence for simple constraints}{}{\cref{lem:simple-ctxt-equiv}}
	  \supposePf{\ca, \cb \simple} {Premise}
	\end{llproof}
	\begin{proofcases}
	  \proofcase{$\implies$}

	  \begin{llproof}
	    \ForallPf{\semenv}{}{}
	    \vdashPf{\semenv}{\parens {\cexists \tv \ca} \cand \cb}{Premise}
	    \vdashPf{\semenv\where{\tv \is \gt}}{\ca}{Simple inversion}
	    \vdashPf{\semenv}{\cb}{Simple inversion}
	    \vdashPf{\semenv\where{\tv \is \gt}}{\cb}{$\tv \disjoint \cb$}
	    \vdashPf{\semenv\where{\tv \is \gt}}{\ca \cand \cb}{By \Rule{Conj}}
\Hand       \vdashPf{\semenv}{\cexists \tv \ca \cand \cb}{By \Rule{Exists}}
	  \end{llproof}
	  \proofcase{$\impliedby$}

	    \begin{llproof}
	    Symmetric argument.
	    \end{llproof}
	\end{proofcases}

      \proofcase{\Rule{S-Let}, \Rule{S-True}, \Rule{S-False},
      \Rule{S-Let-ExistsRight},
      \Rule{S-Let-ConjLeft},
      \Rule{S-Let-ConjRight}, \Rule{S-Inst-Name}, \\\Rule{S-Exists-Exists-Inst},
      \Rule{S-Exists-Inst-Conj}, \Rule{S-Exists-Inst-Let}, \Rule{S-Exists-Inst-Solve},
      \Rule{S-All-Conj}}

      \begin{llproof}
	Similar argument to the \Rule{S-Exists-Conj} case.
      \end{llproof}

    \proofcaserewrite
      {S-Let-ExistsLeft}
      {\cletr \x \tv \tvs {\cexists \tvb \ca} \cb \\ {
       \tvb \disjoint \tv, \tvs, \cb}}
      {\cletr \x \tv {\tvs, \tvb} \ca \cb}


      If we have
      $\semenv \th \cletr \x \tv \tvs {\cexists \tvb \ca} \cb$,
      then for each instance $(\inst^\x \is \semenv'_\inst) \in \semenv$
      we have $\semenv'_\inst \th \cexists \tvb \ca$.
      By simple inversion, we have $\semenv'_\inst \th \cexists \tvb \ca$
      if and only if $\semenv'_\inst \where {\tvb \is \gt_\inst} \th \ca$
      for some $\gt_\inst$, and so our derivation
      $$\semenv \where {\overline{\inst^x \is \semenv'_\inst}}
       \th \cletr \x \tv \tvs {\cexists \tvb \ca} \cb$$
      corresponds to exactly one derivation
      $$\semenv \where {\overline{\inst^x \is {\semenv'_\inst \where {\tvb \is \gt_\inst}}}}
       \th \cletr \x \tv {\tvs,\tvb} {\ca} \cb$$

    \proofcaserewrite
      {S-Match-Ctx}
      {\C\where{\cmatch \t \cbrs} \\ \th \Cshape \C \t \sh}
      {\C\where{\cmatched \t \sh \cbrs}}

	\begin{llproof}
    \vdashPf{}{\Cshape \C \t \sh}{Premise}
	  \shapePf{\C}{\t}{\sh}{\cref{lem:unicity-soundness}}
	  \sufficientPf{equivalences between constraints}{}{\cref{lem:compose-unicity}}
	\end{llproof}

	\begin{proofcases}
	  \proofcase{$\implies$}

	  \begin{llproof}
	    \ForallPf{\semenv}{}{}
	    \vdashPf{\semenv}{\C\where{\cmatch \tv \cbrs}}{Premise}
\Hand 	    \vdashPf{\semenv}{\C\where{\cmatched \tv {\shape \t} \cbrs}}{\cref{lem:susp-inversion}}
	  \end{llproof}

	  \proofcase{$\impliedby$}

	  \begin{llproof}
	    \ForallPf{\semenv}{}{}
	    \vdashPf{\semenv}{\C\where{\cmatched \tv {\shape \t} \cbrs}}{Premise}
\Hand 	    \vdashPf{\semenv}{\C\where{\cmatch \tv \cbrs}}{By \Rule{Match-Ctx}}
	  \end{llproof}
	\end{proofcases}

    \proofcaserewrite
      {S-Let-AppR}
      {\cletr \x \tv \tvs \c {\C\where{\capp \x \t}} \\ \tvc \disjoint \t \\ \x \disjoint \bvs \C}
      {\cletr \x \tv \tvs \c {\C\where{\cexistsi {\tvc, \inst} \x \cunif \tvc \t \cand \cpinst \inst \tv \tvc }}}

	\begin{llproof}
	  \disjointPf{\tvc}{\t}{Premise}
	  \disjointPf{\x}{\bvs \C}{Premise}
	  \decolumnizePf
	  \sufficientPf{a simple equivalence between}{\capp \x \t \text{ and } \cexistsi {\tvc, \inst} \x \cunif \tvc \t \cand \cpinst \inst \tv \tvc}{\cref{lem:simple-let-equiv}}
	  \supposePf{\semenvp(\x) = \semenv(\cabsr \tv \tvs \c)}{Premise}
	\end{llproof}
	\begin{proofcases}
	  \proofcase{$\implies$}

	    \begin{llproof}
	      \vdashPf{\semenvp}{\capp \x \t}{Premise}
	      \inPf{\greg {\semenvp(\t)} {\semenva}}{\semenvp(\x)}{Simple inversion}
              \eqPf{\semenvp(\t)}{\semenva(\tv)}{Defn. of $\semenv(\cabsr \tv \tvs \c)$}
	      \vdashPf{\semenvp\where{\tvc \is \semenvp(\t), \inst \is \semenva}}{\cpinst \inst \tv \tvc}{By \Rule{Incr-Inst}}
	      \vdashPf{\semenvp\where{\tvc \is \semenvp(\t), \inst \is \semenva}}{\cunif \tvc \t}{By \Rule{Unif}}
\Hand	      \vdashPf{\semenvp}{\cexistsi {\tvc, \inst} \x {\cunif \tvc \t \cand \cpinst \inst \tv \tvc}}{By \Rule{Exists}, \Rule{Exists-Inst} and \Rule{Conj}}
	    \end{llproof}

	  \proofcase{$\impliedby$}

	    \begin{llproof}
	      Symmetric argument.
	    \end{llproof}
	\end{proofcases}

    \proofcaserewrite
      {S-Inst-Copy}
      {\cletr \x \tv \tvs {\c}
	\C\where{\cpapp \x \tvp \tvc \inst}\\
	\c = \cp \cand \cunif \tvp {\cunif {\shapp \tvbs} \ueq}\\
	\tvp \in \reg \tv \tvs \\
	\neg \cyclic {\c} \\
	\tvbs' \disjoint \tvp, \tvc, \tvbs \\
       \x \disjoint \bvs \C}
      {\cletr \x \tv \tvs {\c}
	\C\where{\cexists {\tvbs'} \cunif \tvc {\shapp \tvbs'} \cand \cpapp \x {\tvbs} {\tvbs'} \inst}}

	\begin{llproof}
	  \disjointPf{\x}{\bvs \C}{Premise}
	  \disjointPf{\tvbs'}{\tvp, \tvc, \tvbs}{Premise}
	  \decolumnizePf
	  \Sufficientpf{equivalence between}{\cpapp \x \tvp \tvc \inst
          \text{ and }
          \hfil \penalty -50{}\hfill
           \cexists {\tvbs'} \cunif \tvc {\shapp \tvbs'} \cand
          \cpapp \x {\tvbs} {\tvbs'} \inst}{\cref{lem:simple-let-equiv}}
	  \supposePf{\semenvp(\x) = \semenv(\cabsr \tv {\tvs} \c)}{Premise}
	\end{llproof}
	\begin{proofcases}
	  \proofcase{$\implies$}

	    \begin{llproof}
	      \vdashPf{\semenvp} {\cpapp \x \tvp \tvc \inst}{Premise}
	      \inPf{\greg \gt \semenva}{\semenvp(\x)}{$\cexistsi \inst \x \in \C$}
	      \eqPf{\semenvp(\inst)}{\semenva}{\ditto}
	      \eqPf{\semenvp(\tvc)}{\semenv(\inst)(\tvp)}{Simple inversion}
	      \continueeqPf{\semenva(\tvp)}{Above}
	      \vdashPf{\semenva}{\cp \cand \cunif \tvp {\cunif {\shapp \tvbs} \ueq}}{Above}
	      \vdashPf{\semenva}{\cunif \tvp {\cunif {\shapp \tvbs} \ueq}}{Simple inversion}
	      \eqPf{\semenva(\tvp)}{\shapp {\semenva(\tvbs)}}{\ditto}
	      \eqPf{\semenvp(\tvc)}{\shapp {\semenva(\tvbs)}}{Above}
	      \vdashPf{\semenvp\where{\tvbs' \is \semenva(\tvbs)}}{\cunif \tvc {\shapp {\tvbs'}}}{By \Rule{Unif}}
	      \vdashPf{\semenvp\where{\tvbs' \is \semenva(\tvbs)}}{\cpapp \x \tvbs {\tvbs'} \inst}{By \Rule{Incr-Inst}}
\Hand 	      \vdashPf{\semenvp}{\cexists {\tvbs'} \cunif \tvc {\shapp \tvbs'} \cand \cpapp \x {\tvbs} {\tvbs'} \inst}{By \Rule{Exists} and \Rule{Conj}}
	    \end{llproof}

	  \proofcase{$\impliedby$}

	    \begin{llproof}
	      Symmetric argument.
	    \end{llproof}
	\end{proofcases}

    \proofcaserewrite
      {S-Inst-Unify}
      {\cpinst \inst \tv \tvca \cand \cpinst \inst \tv \tvcb}
      {\cpinst \inst \tv \tvca \cand \cunif \tvca \tvcb}

      \begin{llproof}

	\Sufficientpf{equivalence between}
        {\cpinst \inst \tv \tvca \cand \cpinst \inst \tv \tvcb
         \text{ and }
         \hfil \penalty -50{}\hfill
         \cpinst \inst \tv \tvca \cand \cunif \tvca \tvcb}
        {\cref{lem:simple-ctxt-equiv}}

      \end{llproof}

      \begin{proofcases}
	\proofcase{$\implies$}

	  \begin{llproof}
	    \vdashPf{\semenv}{\cpinst \inst \tv \tvca \cand \cpinst \inst \tv \tvcb}{Premise}
	    \vdashPf{\semenv}{\cpinst \inst \tv \tvca}{Simple inversion}
	    \vdashPf{\semenv}{\cpinst \inst \tv \tvcb}{\ditto}
	    \eqPf{\semenv(\tvca)}{\semenv(\inst)(\tv)}{\ditto}
	    \eqPf{\semenv(\tvcb)}{\semenv(\inst)(\tv)}{\ditto}
	    \eqPf{\semenv(\tvca)}{\semenv(\tvcb)}{Above}
	    \vdashPf{\semenv}{\cunif \tvca \tvcb}{By \Rule{Unif}}
\Hand	    \vdashPf{\semenv}{\cpinst \inst \tv \tvca \cand \cunif \tvca \tvcb}{By \Rule{Conj}}
	  \end{llproof}

	\proofcase{$\impliedby$}

	  \begin{llproof}
	    Symmetric argument.
	  \end{llproof}
      \end{proofcases}

    \proofcaserewrite
      {S-Inst-Poly}
      {\cletr \x \tv {\tvs} {\ueqs \cand \c} {\C\where{\cpapp \x \tvb \tvc \inst}} \\
       \cfor \tvb \cexists {\tvbs \setminus \tvb} {\ueqs} \cequiv \ctrue \\\\
       \tvb \in \tvbs \subseteq \tv, \tvs \\
       \tvbs \disjoint \c \\
       \inst(\tvb) \disjoint \insts \C \\
       \x \disjoint \bvs \C}
      {\cletr \x \tv {\tvs} {\ueqs \cand \c} {\C\where\ctrue}}

	\begin{llproof}
	  \equivPf{\cfor \tvb \cexists {\tvbs \setminus \tvb} \ueqs}{\ctrue}{Premise}
	  \disjointPf{\tvbs}{\c}{Premise}
    \disjointPf{\inst(\tvb)}{\insts \C}{Premise}
	  \disjointPf{\x}{\bvs \C}{Premise}
	  \decolumnizePf
	  \sufficientPf{equivalence between}
	    {\cpapp \x \tvb \tvc \inst \text{ and } \ctrue}
	    {\cref{lem:simple-let-equiv}}
	  \supposePf{\semenvp(\x) = \semenv(\cabsr \tv {\tvs} \ueqs \cand \c)}{Premise}
	\end{llproof}
	\begin{proofcases}
	  \proofcase{$\implies$}

	    \begin{llproof}
	      \vdashPf{\semenvp}{\cpapp \x \tvb \tvc \inst}{Premise}
\Hand	      \vdashPf{\semenvp}{\ctrue}{By \Rule{True}}
	    \end{llproof}

	  \proofcase{$\impliedby$}

	    \begin{llproof}
	      \vdashPf{\semenvp}{\ctrue}{Premise}
	      \inPf{\greg \gt {\semenva}}{\semenvp(\x)}{$\C = \Ca\where{\cexistsi \inst \x \Cb}$}
	      \eqPf{\semenvp(\inst)}{\semenva}{\ditto}
	      \casesPf{\semenva(\tvb)}
	    \end{llproof}
	    \begin{proofcases}
	      \proofcase{$\semenva(\tvb) = \semenvp(\tvc)$}

		\begin{llproof}
		  \eqPf{\semenva(\tvb)}{\semenvp(\tvc)}{Premise}
\Hand		  \vdashPf{\semenvp}{\cpapp \x \tvb \tvc \inst}{By \Rule{Incr-Inst}}
		\end{llproof}

	      \proofcase{$\semenva(\tvb) \neq \semenvp(\tvc)$}

		\begin{llproof}
		  \LetPf{\semenvbp}{\semenva\where{\tvb \is \semenvp(\tvc)}}{}
      \vdashPf{\semenvbp}{\cexists {\tvbs \setminus \tvb} \ueqs}{$\tfor \tvb {\cexists {\tvbs \setminus \tvb} \ueqs} \cequiv \ctrue$}
		  \LetPf{\semenvb}{\semenva\where{\tvb \is \semenvp(\tvc), \tvbs \setminus \tvb \is \gts}}{}
      \vdashPf{\semenvb}{\ueqs}{Simple inversion}
		  \vdashPf{\semenva}{\ueqs \cand \c}{By definition}
		  \vdashPf{\semenva}{\c}{Simple inversion}
		  \vdashPf{\semenvb}{\c}{$\tvbs \disjoint \c$}
		  \vdashPf{\semenvb}{\ueqs \cand \c}{By \Rule{Conj}}
		  \inPf{\greg \gtp \semenvb}{\semenv(\x)}{By definition}
		  \supposePf{\semenvc \th \Cb\where\ctrue}{Considering entailment on $\cexistsi \inst \x$}
		  \eqPf{\semenvc(\inst)}{\semenva}{\ditto}
      \vdashPf{\semenvc\where{\inst \is \semenvb}}{\Cb\where\ctrue}{$\inst(\tvb) \disjoint \insts \Cb$}
		  \vdashPf{\deriv :: \semenvc}{\Cb\where\ctrue}{By \Rule{Exists-Inst}}
		  \commentPf{$\deriv$ is a derivation that satisfies $\semenva(\tvb) = \semenvp(\tvc)$.}{}
\Hand		  \commentPf{So this case degenerates to the former case.}{}
		\end{llproof}
	    \end{proofcases}
	\end{proofcases}

    \proofcaserewrite
      {S-Inst-Mono}
      {\cletr \x \tv \tvs \c {\C\where{\cpapp \x \tvb \tvc \inst}} \\
       \tvb \notin \reg \tv \tvs \\
       \x, \tvb \disjoint \bvs \C}
      {\cletr \x \tv \tvs \c {\C\where{\cunif \tvb \tvc}}}

	\begin{llproof}
	  \disjointPf{\tvb}{\tv, \tvs}{Premise}
	  \disjointPf{\x, \tvb}{\bvs \C}{Premise}
	  \decolumnizePf
	  \sufficientPf{equivalence between}
	    {\cpapp \x \tvb \tvc \inst \text{ and } \cunif \tvb \tvc}
	    {\cref{lem:simple-let-equiv}}
	  \supposePf{\semenvp(\x) = \semenv(\cabsr \tv {\tvs} \c)}{Premise}
	\end{llproof}

	\begin{proofcases}
	  \proofcase{$\implies$}

	  \begin{llproof}
	    \vdashPf{\semenvp}{\cpapp \x \tvb \tvc \inst}{Premise}
	    \inPf{\greg \wild \semenva}{\semenv(\c)}{$\cexistsi \inst \x \in \C$}
	    \eqPf{\semenvp(\inst)}{\semenva}{\ditto}
	    \eqPf{\semenvp(\tvc)}{\semenva(\tvb)}{Simple inversion}
	    \eqPf{\semenva(\tvb)}{\semenv(\tvb)}{$\tvb \disjoint \tv, \tvs$}
	    \eqPf{\semenvp(\tvb)}{\semenv(\tvb)}{$\tvb \disjoint \bvs \C$}
	    \eqPf{\semenvp(\tvc)}{\semenvp(\tvb)}{Above}
	    \vdashPf{\semenvp}{\cunif \tvc \tvb}{By \Rule{Unif}}
	  \end{llproof}

	  \proofcase{$\impliedby$}

	  \begin{llproof}
	    Symmetric argument.
	  \end{llproof}
	\end{proofcases}

    \proofcaserewrite
      {S-Let-Solve}
      {\cletr \x \tv \tvs \ueqs \c \\ \x \disjoint \c \\
       \cexists {\tv, \tvs} \ueqs \cequiv \ctrue}
      {\c}
	\begin{llproof}
	  \disjointPf{\x}{\c}{Premise}
	  \equivPf{\cexists {\tv, \tvs} \ueqs}{\ctrue}{}
	  \decolumnizePf
	  \sufficientPf{equivalence for simple constraints}{}{\cref{lem:simple-ctxt-equiv}}
	  \supposePf{\c \simple} {Premise}
	\end{llproof}
	\begin{proofcases}
	  \proofcase{$\implies$}

	  \begin{llproof}
	    \ForallPf{\semenv}{}{}
	    \vdashPf{\semenv}{\cletr \x \tv \tvs \ueqs \c}{Premise}
	    \vdashPf{\semenv}{\cexists {\tv, \tvs} \ueqs}{Simple inversion}
	    \vdashPf{\semenv\where{\x \is \semenv(\cabsr \tv \tvs \ueqs)}}{\c}{\ditto}
\Hand	    \vdashPf{\semenv}{\c}{$\x \disjoint \c$}
	  \end{llproof}
	  \proofcase{$\impliedby$}

	    \begin{llproof}
	    \ForallPf{\semenv}{}{}
	    \vdashPf{\semenv}{\c}{Premise}
	    \vdashPf{\semenv\where{\x \is \semenv(\cabsr \tv \tvs \ueqs)}}{\c}{$\x \disjoint \c$}
	    \vdashPf{\semenv}{\cexists {\tv, \tvs} \ueqs}{}
\Hand	    \vdashPf{\semenv}{\cletr \x \tv \tvs \ueqs \c}{By \Rule{LetR}}
	    \end{llproof}
	\end{proofcases}

  \proofcaserewrite{S-Exists-Lower}
    {\cletr \x \tv {\tvas, \tvbs} \ca \cb \\
     \cdetermines {\cexists {\tv, \tvas} \ca} \tvbs \\
     }
    {\cexists \tvbs \cletr \x \tv \tvas \ca \cb}

    \begin{llproof}
      \Pf{}{}{\cdetermines {\cexists {\tv, \tvas} \ca} \tvbs}{Premise}
      \sufficientPf{equivalence for simple constraints}{}{\cref{lem:simple-ctxt-equiv} and \cref{lem:determines-is-match-closed}}
      \supposePf{\ca, \cb \simple}{Premise}
    \end{llproof}
    \begin{proofcases}
      \proofcase{$\implies$}

      \begin{llproof}
	\vdashPf{\semenv}{\cletr \x \tv {\tvas, \tvbs} \ca \cb}{Premise}
	\vdashPf{\semenv}{\cexists {\tv, \tvas, \tvbs} \ca}{Simple inversion}
	\vdashPf{\semenv\where{\x \is \semenv(\cabsr \tv {\tvs, \tvbs} \ca)}}{\cb}{\ditto}
	\vdashPf{\semenv\where{\tv \is \gt, \tvas \is \gts, \tvbs \is \bar\gtp}}{\ca}{\ditto}
	\vdashPf{\semenv\where{\tvbs \is \bar\gtp}}{\cexists {\tv, \tvas} \ca}{By \Rule{Exists}}
	\sufficientPf{}{\semenv\where{\x \is \semenv(\cabsr \tv {\tvs, \tvbs} \ca)} = \semenv\where{\tvbs \is \bar\gt'}(\cabsr \tv \tvs \ca)}{}
      \end{llproof}
      \begin{proofcases}

      \proofcase{$\implies$}

	\begin{llproof}
	  \vdashPf{\semenv\where{\tv \is \gta, \tvs \is \bar\gta, \tvbs \is \bar\gtb}}{\ca}{Premise}
	  \vdashPf{\semenv\where{\tvbs \is \bar\gtb}}{\cexists {\tv, \tvs} \ca}{By \Rule{Exists}}
	  \eqPf{\bar\gtb}{\bar\gtp}{By definition of determines}
\Hand	  \vdashPf{\semenv\where{\tvbs \is \bar\gtp, \tv \is \gta, \tvs \is \bar\gta}}{\ca}{Above}

	\end{llproof}

      \proofcase{$\impliedby$}

      \begin{llproof}
	Symmetric argument.
      \end{llproof}

      \end{proofcases}

      \proofcase{$\impliedby$}

      \begin{llproof}
	Symmetric argument.
      \end{llproof}

    \end{proofcases}

  \proofcase{\Rule{S-Compress}, \Rule{S-Gc}, \Rule{S-Exists-All}, \Rule{S-All-Escape}, \Rule{S-All-Rigid}, \Rule{S-All-Solve}}

  \begin{llproof}
    \begin{tabular}{l}
    Similar argument. Use \cref{lem:simple-ctxt-equiv}. \\
    The simple equivalences are standard, see \citet*{\BBemlti}.
    \end{tabular}
  \end{llproof}
  \end{proofcases}
\end{proof}

\subsection{Progress}

\begin{lemma}[Unification progress]
  If unification problem $\up$ cannot take a step $\up \unif \upp$, then either:
  \begin{enumerate}[(\roman*)]
    \item $\up$ is solved.
    \item $\up$ is $\cfalse$.
    \end{enumerate}
  \begin{proof}
    This is a standard result. See \citet*{\BBemlti}.
  \end{proof}
\end{lemma}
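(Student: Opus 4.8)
The plan is to prove the lemma by a structural analysis of $\unif$-normal forms: assuming $\up$ admits no step, I would show it must fall into one of the two stated shapes. The trivial case is handled first---if $\up = \cfalse$ then conclusion~(ii) holds---so from now on I assume $\up \neq \cfalse$ and aim to show that $\up$ is in solved form in the sense of \cref{def:solved-form}.

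The first phase fixes the overall skeleton of $\up$. Since \Rule{U-Exists} is stuck, no conjunct of $\up$ may itself be an existential, so all existential binders must already sit in the outermost prefix---working, as the rewriting does, modulo associativity and commutativity of $\cand$ and $\alpha$-equivalence. This gives $\up = \cexists \tvs \up'$ with $\up'$ a conjunction whose conjuncts are multi-equations and occurrences of $\ctrue$ or $\cfalse$. Stuckness of \Rule{U-True} rules out any $\ctrue$ as a proper conjunct (if $\up'$ collapses to the empty conjunction, $\up$ is the trivial solved form); stuckness of \Rule{U-False} together with $\up \neq \cfalse$ rules out any $\cfalse$. Hence $\up = \cexists \tvs {\cAnd\iton \ueqi}$, already the required skeleton.

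It then remains to verify the three side-conditions of \cref{def:solved-form}, each secured by the inapplicability of one group of rules. For condition~(1), I first use stuckness of \Rule{U-Name} to conclude that every non-variable member of each $\ueqi$ is \emph{shallow}, of the form $\pshapp \tvs$ with variable arguments; then if some $\ueqi$ held two such members $\pshapp \tvs$ and $\pshapp[\shp] \tvbs$, either $\sh \neq \shp$ and \Rule{U-Clash} fires, or $\sh = \shp$ and \Rule{U-Decomp} fires---both contradicting normality---so each multi-equation carries at most one non-variable type. For condition~(2), a variable repeated inside one $\ueqi$ would enable \Rule{U-Stutter}, and a variable shared between two distinct multi-equations would enable \Rule{U-Merge}; since neither can, each variable occurs in at most one $\ueqi$ (and by \Rule{U-Trivial} every surviving multi-equation has at least two members). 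For condition~(3), \Rule{U-Cycle} would rewrite any $\up$ with $\cyclic \up$ to $\cfalse$, so normality and $\up \neq \cfalse$ force acyclicity. Together these yield $\up = \hat\up$.

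The argument is in essence a completeness check on the rule set, and the step I expect to require the most care is the structural-normalization phase: justifying rigorously that stuckness under \Rule{U-Exists}, \Rule{U-True} and \Rule{U-False}, interpreted modulo the quotient by associativity, commutativity, and $\alpha$-equivalence under which rewriting operates, genuinely forces the canonical prefix of existentials over a conjunction of multi-equations. The shallowness observation drawn from \Rule{U-Name} must likewise be checked to be exhaustive across all type formers through their canonical principal shapes (\cref{thm:principal-shapes}). Since this unification fragment is exactly that of \citet*{Pottier-Remy/emlti} with type constructors replaced by canonical principal shapes, the only genuinely new obligation is that this shape-based reformulation preserves the clash/decompose dichotomy, which \Rule{U-Clash} and \Rule{U-Decomp} handle uniformly.
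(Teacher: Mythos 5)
Your proof is correct and takes essentially the same route as the paper's, which simply defers to the standard normal-form analysis of \citet*{Pottier-Remy/emlti}: you spell out exactly that analysis, with stuckness of each rule group forcing the corresponding clause of the solved-form definition (prefix normalization via \Rule{U-Exists}/\Rule{U-True}/\Rule{U-False}, shallowness via \Rule{U-Name}, the clash/decompose dichotomy, variable occurrence via \Rule{U-Merge}/\Rule{U-Stutter}, and acyclicity via \Rule{U-Cycle}). The only obligation the citation glosses over---that the dichotomy between \Rule{U-Clash} and \Rule{U-Decomp} remains exhaustive once type constructors are replaced by canonical principal shapes---is precisely the point you flag, so nothing is missing.
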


\begin{theorem}[Progress]
  \label{thm:progress}
  If constraint $\c$ cannot take a step $\c \csolve \cp$, then either:
  \begin{enumerate}
    \item $\c$ is solved.
    \item $\c$ is stuck, it is either:
      \begin{enumerate*}
      \item
        \label {item/progress/false}
        $\cfalse$;
      \item
        \label {item/progress/scope/x}
        $\hat\C\where{\capp \x \t}$ where $\x \disjoint \hat\C$;
      \item
        \label {item/progress/scope/i}
        $\hat\C\where{\cpapp \x \tv \tvc \inst}$ where
          $\x \disjoint \hat\C$ and $\inst(\tv) \disjoint \insts {\hat\C}$;
      \item
        \label {item/progress/suspended}
      for every match constraint $\c = \hat\C\where{\cmatch \tv \cbrs}$, 
          $\Cshape {\hat\C} \tv \sh$ does not hold for any $\sh$.
      \end{enumerate*}
    Here, $\hat\C$ is a normalized context (\cref{def/normal-forms-appendix}).  
  \end{enumerate}
\end{theorem}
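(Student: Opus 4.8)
The plan is to prove the Progress theorem by a standard case analysis on the structure of a normal constraint $\c$ (one that cannot take any step $\c \csolve \cp$), showing that it must either be in solved form or fall into one of the stuck categories. The overall strategy mirrors the typical progress argument for small-step rewriting systems: I assume $\c$ is normal and argue by induction on its structure (equivalently, on the normalized context $\hat\C$ surrounding each ``active'' subconstraint), leveraging the fact that the lifting rules (\Rule{S-Exists-Conj}, \Rule{S-Let-ExistsLeft}, \Rule{S-Let-ExistsRight}, \Rule{S-Let-ConjLeft}, \Rule{S-Let-ConjRight}, and the various \Rule{S-Exists-Inst-*} rules) force any normal form into a canonical shape: a block of existential/universal binders and region binders, whose bodies are conjunctions of solved multi-equations followed by residual constraints that can only be instantiations $\capp \x \t$, incremental instantiations $\cpinst \inst \tv \t$, or suspended matches $\cmatch \t \cbrs$.

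\textbf{Key steps, in order.} First I would establish a normalization lemma: because none of the structural lifting rules apply, every region in $\c$ has the form $\cletr \x \tv \tvs {\ueqs \cand \cresidual} {\cdots}$ where $\ueqs$ is a conjunction of solved multi-equations and $\cresidual$ contains no further liftable structure. I invoke unification progress (the preceding lemma) to conclude that the unification subproblems are either solved or $\cfalse$; if any is $\cfalse$ then \Rule{S-False} would fire unless $\hat\C = \hole$, giving case \ref{item/progress/false}. Second, I handle the residual instantiation constraints $\capp \x \t$: if $\x$ is bound by some enclosing region, then since \Rule{S-Let-AppR} is not applicable, it must be that $\x \disjoint \hat\C$ — but \Rule{S-Let-AppR} only requires $\x \disjoint \bvs \C$, so the only way it is blocked is the scope condition of case \ref{item/progress/scope/x}. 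Similarly for $\cpinst \inst \tv \t$: I would check each of \Rule{S-Inst-Copy}, \Rule{S-Inst-Poly}, \Rule{S-Inst-Mono}, \Rule{S-Inst-Unify}, and \Rule{S-Inst-Name} in turn, arguing that when none applies the constraint must satisfy the scope condition of case \ref{item/progress/scope/i} — in particular the variable $\tvb$ being copied is neither inside nor outside the region in a way that triggers a rule, which pins down the residual form. Finally, for each suspended match $\cmatch \tv \cbrs$, since \Rule{S-Match-Ctx} is blocked and its side-condition is $\th \Cshape {\hat\C} \tv \sh$, I conclude that the \emph{syntactic} unicity judgment fails; by the completeness direction of \cref{lem:unicity-completeness} (applicable because $\hat\C$ is in solved/normalized form), the \emph{semantic} unicity $\Cshape {\hat\C} \tv \sh$ also fails for every $\sh$, yielding case \ref{item/progress/suspended}.

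\textbf{The hard part} will be the instantiation-constraint analysis: I must verify that the five instantiation-related rules are jointly exhaustive in the sense that, whenever a residual $\cpinst \inst \tv \t$ (or $\capp \x \t$) is blocked, the precise scope side-conditions of cases \ref{item/progress/scope/x} and \ref{item/progress/scope/i} hold — this requires carefully tracking the region membership $\tvb \in \reg \tv \tvs$ versus $\tvb \notin \reg \tv \tvs$, the acyclicity precondition of \Rule{S-Inst-Copy}, and the ``no other instantiation at the same site'' condition of \Rule{S-Inst-Poly}. The subtlety is that each rule has nontrivial premises (e.g. the polymorphism test $\cfor \tvp \cexists \tv \ueqs \cequiv \ctrue$), so ruling out applicability of \emph{all} of them simultaneously is where the bookkeeping concentrates. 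A secondary obstacle is ensuring the appeal to \cref{lem:unicity-completeness} is legitimate, i.e. that a fully-normal $\hat\C$ genuinely meets that lemma's normalization hypothesis; I would discharge this by the same normalization lemma used at the outset, so that the semantic-to-syntactic bridge for unicity is available exactly where the match case needs it.
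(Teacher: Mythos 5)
Your proposal is correct and follows essentially the same route as the paper's proof: both characterize the normal forms forced by the lifting rules, invoke unification progress, exhaustively check the side conditions of the instantiation rules (region membership, acyclicity, \Rule{S-Inst-Unify}/\Rule{S-Inst-Poly} interaction) to pin down the stuck cases, and use the completeness direction of \cref{lem:unicity-completeness} on the normalized context to convert failure of the syntactic unicity judgment into failure of semantic unicity for suspended matches. The only difference is organizational---the paper packages the normal-form analysis as a structural induction on the constraint (with cases for matches, conjunctions, and regional let-constraints, where stuck descriptions propagate through the induction) rather than as a standalone normalization lemma---which does not change the substance of the argument.
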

\begin{proof}
  We proceed by induction on the structure of $\c$. We
  focus on suspended match constraints, conjunctions, and $\Let$ rules.
  \begin{proofcases}
    \proofcase{$\cmatch \t \cbrs$}
      We have two cases:
      \begin{proofcases}
  \proofcase{$\t$ is a non-variable type} Apply \Rule{S-Match-Ctx} using \Rule{S-Uni-Type}
	\proofcase{$\t$ is a type variable $\tv$}

	  We have $\nCshape \hole \tv$. It suffices
	  that every match constraint in a context-reachable position
	  $\hat\C\where{\cmatch \tvp \cbrs}$ satisfies $\nCshape {\hat\C} \tvp$.
	  By the definition of constraint contexts, there is only one such
	  $\hat\C$, namely $\hole$, for which we already have $\nCshape \hole \tv$.
	  Hence $\cmatch \t \cbrs$ is stuck.
      \end{proofcases}

    \proofcase{$\ca \cand \cb$}
    We begin by inducting on $\ca$ and $\cb$. Then we consider cases:
    \begin{proofcases}
      \proofcase{$\ca$ (or $\cb$) take a step} Apply congruence rewriting rule.
      \proofcase{$\ca$ (or $\cb$) is $\ctrue$} Apply \Rule{S-True}.
      \proofcase{$\ca$ (or $\cb$) is $\cfalse$} Apply \Rule{S-False}.
      \proofcase{$\ca$ (or $\cb$) begins with $\exists$} Apply \Rule{S-Exists-Conj}.
      \proofcase{$\ca, \cb$ are solved}

	We either apply the above $\exists$ case, or both $\ca$ and $\cb$ are solved
	multi-equations $\ueqs_1, \ueqs_2$. We perform cases on this:
	\begin{proofcases}
	  \proofcase{$\ueqs_1$ and $\ueqs_2$ are mergable} Apply \Rule{U-Merge}.
	  \proofcase{$\cyclic {\ueqs_1, \ueqs_2}$} Apply \Rule{U-Cycle}.
	  \proofcase{Otherwise} The conjunction $\ueqs_1 \cand \ueqs_2$ is solved.
	\end{proofcases}

      \proofcase{$\ca$ and $\cb$ are stuck (and not $\cfalse$)}

	\Wlog, consider cases $\ca$.
	\begin{proofcases}
	  \proofcase{$\hat\Ca\where{\capp \x \t}$}
	    We have $\x \disjoint \bvs {\hat\Ca}$.

	    $\hat\Ca\where{\capp \x \t} \cand \cb$ is stuck as we do not bind $\x$ in $\hat\Ca \cand \cb$.
	  \proofcase{$\hat\Ca\where{\cpapp \x \tv \tvc \inst}$}
      We have $\x \disjoint \bvs {\hat \Ca}$ and $\inst(\tv) \disjoint \insts {\hat \Ca}$.

      If $\inst(\tv) \in \insts \cb$ and $\inst \disjoint \bvs {\hat\ca}$, then apply \Rule{S-Inst-Unify}.
	    It must be the case that we can apply \Rule{S-Inst-Unify}, otherwise, we could lift these instantiation
	    constraints using \Rule{S-Exists-Lower} and \Rule{S-Let-ConjLeft}, contradicting that $\hat\Ca$ is stuck.

	    Otherwise, $\x \disjoint \bvs {\hat \Ca \cand \cb}$, thus $\hat\Ca\where{\cpapp \x \tv \tvc \inst}$ is stuck.

	  \proofcase{$\hat\Ca\where{\cmatch \tvp \cbrs}$}
	    We have $\nCshape \Ca \tvp$.

	    Consider a match constraint $\cmatch \tvp \cbrs$ in $\ca$.

      If $\th \Cshape {\where{\hat\Ca\where{-} \cand \cb}} \tvp \sh$. Then we can apply \Rule{S-Match-Ctx}.

      Otherwise $\not \th \Cshape {\where{\hat\Ca\where{-} \cand \cb}} \tvp \sh$. We have \cref{lem:unicity-completeness},
      so we are stuck and $\nCshape {(\Ca \cand \cb)} \tvp$.
	\end{proofcases}

    \end{proofcases}

    \proofcase{$\cletr \x \tv \tvs \ca \cb$}
    We begin by inducting on $\ca$ and $\cb$. Then we consider cases:
    \begin{proofcases}
      \proofcase{$\ca$ (or $\cb$) take a step} Apply congruence rewriting rule.
      \proofcase{$\ca$ (or $\cb$) is $\cfalse$} Apply \Rule{S-False}.
      \proofcase{$\ca$ begins with $\exists$} Apply \Rule{S-Let-ExistsLeft}
      \proofcase{$\cb$ begins with $\exists$} Apply \Rule{S-Let-ExistsRight}
      \proofcase{$\cb$ begins with $\cand$ with $\x \disjoint$ from conjunct} Apply \Rule{S-Let-ConjRight}.
      \proofcase{$\ca$ begins with $\cand$ with $\tv, \tvs \disjoint$ from conjunct } Try apply \Rule{S-Let-ConjLeft}
      \proofcase{$\cb$ begins with $\cexistsi \inst \xp {}$, $\x \neq \xp$} Apply \Rule{S-Exists-Inst-Let}
      \proofcase{$\tvp \in \tvs$ is determined by $\ca$} Apply \Rule{S-Exists-Lower}
      \proofcase{$\cb$ is solved}

	Thus $\cb$ must be $\ctrue$ (due to above cases).
	\begin{proofcases}
	  \proofcase{$\ca$ is solved}
	    Thus $\ca$ must be $\ueqs$.

	    There are two cases:
	    \begin{itemize}
	      \proofcase{$\cexists {\tv, \tvs} \ueqs \cequiv \ctrue$} Apply \Rule{S-Let-Solve}.
	      \proofcase{$\cexists {\tv, \tvs} \ueqs \cnequiv \ctrue$} It must be the case there is some $\tvb$ that dominates a $\tvp$ in $\tv, \tvs$ in $\ueqs$.
		Hence $\cdetermines {\cexists {\tv, \tvs \setminus \tvp} \ueqs} \tvp$.
		So we can apply \Rule{S-Exists-Lower}.
	    \end{itemize}

	  \proofcase{$\ca$ is stuck}

	    The constraint $\cletr \x \tv \tvs \ca \cb$ remains stuck, since
	    no additional term variable bindings occur for the scope of $\ca$,
	    ruling out the instantiation cases. Additionally, we cannot apply
	    backpropagation since $\cb$ is $\ctrue$.
	\end{proofcases}

      \proofcase{$\cb$ is stuck}
	\begin{proofcases}
	  \proofcase{$\hat\C\where{\capp \x \t}$} We have $\x \disjoint \bvs {\hat\C}$.

	  Apply \Rule{S-Let-AppR}.

    \proofcase{$\hat\C\where{\cpapp \x \tvp \tvc \inst}$} We have $\x \disjoint \bvs {\hat\C}$ or $\inst(\tvp) \disjoint \insts {\hat\C}$.
	    \begin{itemize}
		\proofcase{$\tvp \in \reg \tv \tvs$}

		We can either apply \Rule{S-Inst-Copy} or \Rule{S-Compress}
		if a multi-equation involving $\tvp$ occurs in $\ca$.

		Otherwise, we consider cases where $\ca$ is solved or stuck.

		If $\ca$ is solved, then it must be of the form $\ueqs$.
		There are two cases:
		\begin{itemize}
		  \proofcase{$\cexists {\tv, \tvs} \ueqs \cequiv \ctrue$}
		  As $\tvp$ does not appear in the head position of any multi-equation in $\ueqs$,
		  it must be polymorphic. Thus $\cfor \tvp {\cexists {\tv, \tvs \setminus \tvp} \ueqs} \cequiv \ctrue$.
		  So we can apply \Rule{S-Inst-Poly}.

		  \proofcase{$\cexists {\tv, \tvs} \ueqs \cnequiv \ctrue$}
		  Apply \Rule{S-Exists-Lower} (using the same logic as above).

		\end{itemize}

		If $\ca$ is stuck, then neither stuck case regarding instantiations
		in $\ca$ is fixed, so in these cases the constraint remains
		stuck. If $\ca$ is stuck with $\hat\Cp\where{\cmatch \tvb
    \cbrs'}$. Then either backpropagation (via \Rule{S-Uni-BackProp} and \Rule{S-Match-Ctx})
		applies with an equation in $\hat\C$, or the entire constraint
    is stuck (by \cref{lem:unicity-completeness}).

		\proofcase{$\tvp \notin \reg \tv \tvs$} Apply \Rule{S-Inst-Mono}.

	    \end{itemize}

	  \proofcase{For any $\hat\C\where{\cmatch \tvp \cbrs}$} We have $\nCshape {\hat\C} \tvp$.

	  Either $\cletr \x \tv \tvs \ca \cb$ can progress with an instantiation constraint (in the above case) to discharge
	  the match constraint or $\cletr \x \tv \tvs \ca \cb$ is stuck.
	\end{proofcases}
    \end{proofcases}
 \end{proofcases}
\end{proof}

\subsection{Termination}

This section presents a proof of termination for our solver.
Most rewrite rules, in both unification and constraint solving, are
\emph{destructive}---that is, they eliminate or modify the structure of a
constraint in a way that prevents the rule from begin applied again.
Consequently, to establish termination, it suffices to consider only those
rules that are not inherently destructive.

\begin{lemma}[Unification termination]
  \label{lem:unification-termination}
  The unifier terminates on all inputs.
  \begin{proof}
    \newcommand{\sw}[1]{\kwd{sw}{(#1)}}
    \newcommand{\iw}[1]{\kwd{iw}{(#1)}}
    \newcommand{\tw}[1]{\kwd{tw}{(#1)}}
    \newcommand{\uw}[1]{\kwd{uw}{(#1)}}

    Let every shape $\sh$ have an integer \emph{weight}
    defined by $\sw \sh \eqdef 4 + 2 \times |\sh|$, where $|\sh|$ is the
    arity of the shape $\sh$.
    The weight of a type $\tw \t$ is defined by:
    \begin{mathpar}
      \begin{tabular}{RCL}
	\tw \tv &\eqdef& 1\\
	\tw {\shapp \tys} &\eqdef& \iw {\shapp \tys} - 2\\[1ex]
	\iw \tv &\eqdef& 0\\
	\iw {\shapp \tys} &\eqdef& \sw \sh + \iw \tys\\
        \iw \tys & \eqdef & \sum\iton \iw \ti\\

      \end{tabular}
    \end{mathpar}
    The helper $\iw \t$ computes the ``internal'' weight of $\t$; in
    the common case of shallow types it is just the weight of its head
    shape.

    We define the weight of a multi-equation as the sum of the weights of its
    members. The weight of a unification problem $\uw \up$ is defined
    as the sum of the weights of its multi-equations.

    In $\up \unif \upp$, the rules \Rule{U-Decomp} and \Rule{U-Name} are not
    obviously destructive, as they may introduce new constraints that
    are structurally larger than the constraint being rewritten.

    However, we show that this is not problematic: in both cases, the unification
    weight $\uw \up$ strictly decreases. The remaining rules are obviously
    destructive and either maintain or decrease the unification weight.

    \begin{proofcases}
      \proofcaserewrite
	{U-Decomp}
	{\cunif {\pshapp \tvs} {\cunif {\pshapp \tvbs} \ueq}}
	{\cunif {\pshapp \tvs} \ueq \cand \cunif \tvs \tvbs}

	We have:
	\begin{mathpar}
	  \begin{tabular}{RRCL}
           (+)&
	    \uw {\cunif {\pshapp \tvs} {\cunif {\pshapp \tvbs} \ueq}} &=&
	      \tw {\pshapp \tvs} + \tw {\pshapp \tvbs}  + \tw \ueq \\
           (-)&
	    \uw {\cunif {\pshapp \tvs} \ueq \cand \cunif \tvs \tvbs}
	      &=&
	      \tw {\pshapp \tvs} + \tw \ueq + \tw \tvs + \tw \tvbs \Strut \\
              \hline
	       &&=&\Strut \tw {\pshapp \tvbs} - \tw \tvs - \tw \tvbs \\
               &&=&  (\sw \sh  + 0 - 2) - 2 |\sh| \\
               &&=&  (2 + 2|\sh|) - 2 |\sh| \Wide = \textbf {2}\\
	  \end{tabular}
	\end{mathpar}
	Hence $\uw {{\cunif {\pshapp \tvs} {\cunif {\pshapp \tvbs} \ueq}}} > \uw {\cunif {\pshapp \tvs} \ueq \cand \cunif \tvs \tvbs}$.

      \proofcaserewrite
	{U-Name}
      {\cunif {\pshapp \parens{\tys, \ti, \typs}} \ueq \\ \tv \disjoint \tys, \typs, \ueq \\ \ti \notin \TyVars }
      {\cexists \tv {\cunif {\pshapp \parens{\tys, \tv, \typs}} \ueq \cand \cunif \tv \ti}}

	Given $\ti \notin \TyVars$, by \cref{thm:principal-shapes},
	$\ti = \shapp[\shp] \bar\typp$ for some shape $\shp$ and types $\bar\typp$.
	So we have:
	\begin{mathpar}
	  \begin{tabular}{.R;;R;C;L.}
          (+) &
	    \uw {\cunif {\pshapp {\parens{\tys, \ti, \typs}}} \ueq}
            &=& \sw \sh + \iw \tys + \iw \ti + \iw \typs - 2 + \uw \ueq \\
          (-) &
	    \uw {\cexists \tv {\cunif \tv \ti \cand \cunif {\pshapp \parens{\tys, \tv, \typs}} \ueq}}
            &=& \sw \sh + \iw \tys + 0 + \iw \typs - 2 + \uw \ueq + 1 + \tw \ti \\
            \hline
           &&=& \iw \tyi - \iw \tv - \tw \ti - 1 \\
	    &&=& \iw \tyi - 0 - (\iw \tyi - 2) - 1 \\
	    &&=& \textbf{1}
	  \end{tabular}
	\end{mathpar}

	Hence $\uw {\cunif {\pshapp \parens{\tys, \ti, \typs}} \ueq } > \uw {\cexists \tv {\cunif {\pshapp \parens{\tys, \tv, \typs}} \ueq \cand \cunif \tv \ti}}$.
    \end{proofcases}
  \end{proof}
\end{lemma}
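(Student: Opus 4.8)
The plan is to exhibit a well-founded measure on unification problems that strictly decreases along every rewrite step $\up \unif \upp$, from which termination follows at once. The first move is to classify the rules. Almost all are manifestly \emph{destructive}: \Rule{U-True}, \Rule{U-Stutter}, \Rule{U-Merge}, and \Rule{U-Trivial} each shrink the problem (deleting a $\ctrue$, removing a duplicate variable, collapsing two multi-equations into one, or erasing a trivial multi-equation), while \Rule{U-Cycle}, \Rule{U-Clash}, and \Rule{U-False} rewrite directly to $\cfalse$ and so can fire at most once along any chain. The rule \Rule{U-Exists} merely hoists an existential outward without touching the underlying equations, so it cannot by itself cause divergence; it is bounded by the number of $\cexists$-binders sitting strictly beneath a conjunction, a count it strictly lowers. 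Hence the only rules that can \emph{increase} structural size—and thus the sole genuine threat to termination—are \Rule{U-Name}, which introduces a fresh variable and an extra equation, and \Rule{U-Decomp}, which replaces one equation by several.

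The heart of the argument is therefore to design a single integer weight on types and multi-equations that strictly decreases under both \Rule{U-Name} and \Rule{U-Decomp}, while never increasing under the destructive rules. I would assign to each occurrence of a type constructor—equivalently, to each principal shape $\sh$—a weight growing linearly in its arity $|\sh|$, charge variables only a small constant, and, crucially, build in a fixed \emph{per-occurrence subtraction} so that eliminating a constructor occurrence nets a strict saving. The essential design tension is that \Rule{U-Decomp} fires only on already-\emph{flattened} equations $\cunif{\pshapp\tvs}{\cunif{\pshapp\tvbs}\ueq}$ whose arguments are variables—shallowness that \Rule{U-Name} is precisely what establishes. The weight must therefore be calibrated so that (i) pulling a nested non-variable subterm out into its own equation $\cunif \tv \ti$ strictly lowers the total (for \Rule{U-Name}), and (ii) discarding one of two identically shaped shallow terms, paying only the cheap cost of the induced variable-to-variable equations $\cunif\tvs\tvbs$, still yields a strict net decrease (for \Rule{U-Decomp}). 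The point is that the saved shape weight must dominate the variable-equation overhead of size $2|\sh|$.

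Having fixed the weight, I would package it as the primary component of a lexicographic measure $\langle \text{weight},\ \text{existential-below-conjunction count},\ \text{total size}\rangle$, all ranging over $\mathbb{N}$ and hence well-founded. One then checks rule by rule that the tuple strictly decreases: \Rule{U-Name} and \Rule{U-Decomp} drop the first component (regardless of the rest); \Rule{U-Exists} leaves the weight fixed but lowers the second; and the remaining destructive rules either lower the weight or leave the first two components unchanged while shrinking the size. The main obstacle I anticipate is purely arithmetic: calibrating the per-shape weight and the per-occurrence subtraction so that \Rule{U-Name} and \Rule{U-Decomp} \emph{simultaneously} strictly decrease the same integer—not one at the expense of the other—and so that the variable-equation overhead of decomposition is provably outweighed. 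I expect this to reduce to two short net-difference calculations, one per rule, each of which must come out as a fixed strictly positive constant independent of the surrounding problem.
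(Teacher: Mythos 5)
Your proposal is correct and matches the paper's proof in all essentials: the paper defines exactly the weight you describe (per-shape weight $4 + 2\times|\sh|$, variables costing $1$, and a per-occurrence subtraction of $2$ via $\mathprefix{tw}{(\shapp \tys)} = \mathprefix{iw}{(\shapp \tys)} - 2$), and the two net-difference calculations come out as the fixed positive constants $2$ (for \Rule{U-Decomp}, where the saved shape weight dominates the $2|\sh|$ variable-equation overhead, just as you anticipated) and $1$ (for \Rule{U-Name}). Your explicit lexicographic packaging with an existential-count and size component is a slightly more careful rendering of what the paper dismisses as the ``obviously destructive'' rules merely maintaining the weight, but the argument is the same.
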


\terminationBIS*
\begin{proof}
\begin{local}\sloppy
  The difficulty for termination comes from the ``suspended match discharge'' rule
  \Rule{S-Match-Ctx} which can make arbitrary
  sub-constraints appear in the non-suspended part of the constraint;
  and from the instantiation rules that copy/duplicate existing
  structure in another part of the constraint, increasing its total
  size.

\end{local}
  As we argued before, the other rewrite rules are \emph{destructive},
  they strictly simplify the constraint towards a normal form and can
  only be applied finitely many times when taken together. The
  fragment without discharge rules and incremental instantiation is
  also extremely similar to the constraint language of
  \citet*{\BBemlti}, so their termination proof applies
  directly.

  \paragraph{Discharge rules} The discharge rules strictly decrease
  the number of occurrences of suspended match constraint (if we also
  count nested suspended constraints), and no rewriting rule
  introduces new suspended match constraints. So these discharge rules
  can only be applied finitely many times. To prove termination of
  constraint solving, it thus suffices to prove that rewriting
  sequences that do not contain one of the discharge rules (those that
  occur in-between two discharge rules) are always finite.

  \paragraph{Starting instantiations} By a similar argument, the number
  of non-incremental instantiations $\capp \x \t$ decreases strictly on
  \Rule{S-Let-AppR} when an incremental instantiation starts, and is
  preserved by other non-discharge rules. The rule \Rule{S-Let-AppR}
  can thus only occur finitely many times in non-discharging sequences,
  and it suffices to prove that all rewriting sequences that are
  non-discharging and do not contain \Rule{S-Let-AppR} are finite.

  \paragraph{Other instantiation rules} Among other instantiation
  rules, the rule of concern is \Rule{S-Inst-Copy}, which is not
  destructive: it introduces new instantiation constraints
  and structurally increases the size of the constraint.

  Intuitively, \Rule{S-Inst-Copy} should not endanger termination
  because the amount of copying it can perform for a given
  instantiation is bounded by the size of the types in the constraint
  $\c$ it is copying from. ($\c$ could have cyclic equations with
  infinite unfoldings, but \Rule{S-Inst-Copy} forbids copying in that
  case.) The difficulty is that rewrites to $\c$ can be interleaved
  with instantiation rules, so that the equations that are being
  copied can grow strictly during instantiation.

  To control this, we perform a structural induction: to prove that
  $(\cletr \x \tv \tvs \ca \cb)$ does not contain infinite
  non-discharging non-instance-starting rewrite rules, we can assume
  that the result holds for the strictly smaller constraint $\ca$, and
  then prove termination of the incremental instantiations of $\x$ in
  $\cb$. (The notion of structural size used here is preserved by
  non-discharging rewrite rules, as they do not affect the
  let-structure of the constraint.)

  Assuming that $\ca$ has no infinite rewriting sequence, it suffices
  to prove that only finitely many rewrites in the rest of the
  constraint (namely $\cb$) can occur between each rewrite of $\ca$.

  \newcommand{\sw}[1]{\mathprefix{sw}{(#1)}}
  \newcommand{\iw}[1]{\mathprefix{iw}{(#1)}}
  \newcommand{\stw}[1]{\mathprefix{tw}{(#1)}}
  \newcommand{\tw}[2]{\mathprefix{tw}{(#1 \in #2)}}
  \newcommand{\eqs}[1]{\mathprefix{eqs}({#1})}
  \newcommand{\cw}[1]{\mathprefix{cw}{(#1)}}

  We define a weight that captures the contribution of types
  within $\ca$ to the partial instances in~$\cb$:
  \begin{mathpar}
    \begin{tabular}{RCL}
      \stw {\shapp \tys} &\eqdef& 2 \times \sw \sh + \sum \iton \stw \ti \\[1ex]
      \stw \tv &\eqdef&
      \left\{
        \begin{array}{ll}
        \sup \Braces {\stw \t : \cunif \tv \t \in \ca } & \text{if $\ca$ is acyclic} \\
        0 & \text{otherwise}
        \end{array}
      \right.
    \end{tabular}
  \end{mathpar}
  The weight of a incremental instantiation $\cw {\cpapp \x \tv \t \inst}$ is
  defined as  the sum of $\stw \t$ and ${\stw \tv}$.
  The weight of other constraints is given using the measure
  $\mathprefix{uw}$ defined in the the
  proof of \cref{lem:unification-termination}.

  \begin{proofcases}
    \proofcaserewrite
      {S-Inst-Copy}
      {\cletr \x \tv \tvs {\c}
	\C\where{\cpapp \x \tvp \tvc \inst}\\
	\c = \cp \cand \cunif \tvp {\cunif {\shapp \tvbs} \ueq}\\
	\tvp \in \reg \tv \tvs \\
	\neg \cyclic {\c} \\
       \tvbs' \disjoint \tvp, \tvc, \tvbs \\
       \x \disjoint \bvs \C}
      {\cletr \x \tv \tvs {\c}
	\C\where{\cexists {\tvbs'} \cunif \tvc {\shapp \tvbs'} \cand \cpapp \x {\tvbs} {\tvbs'} \inst}}

      We aim to show that the weight of the rewritten constraint
      $\cexists {\tvbs'} \cunif \tvc {\shapp \tvbs'} \cand \cpapp \x {\tvbs} {\tvbs'} \inst$
      is strictly less than the original $\cpapp \x \tvp \tvc \inst$.

      \begin{mathpar}
	\begin{tabular}{RCL}
	  \cw {\cpapp \x \tvp \tvc \inst} &=& 1 + \stw \tv \\
	  &\geq& 1 + 2 \times \sw \sh + \sum\iton \stw \tvbi  \\[1ex]
	  \cw {\cexists {\tvbs'} \cunif \tvc {\shapp \tvbs'} \cand \cpapp \x {\tvbs} {\tvbs'} \inst}
	  &=&
	  1 + \sw \sh + \sum\iton \stw \tvbi + |\tvbs'|
	\end{tabular}
      \end{mathpar}
      To ensure a strict decrease, it suffices to show that $\sw \sh > |\tvbs'|$.
      Given that $|\tvbs'| = |\sh|$, and by the definition of $\sw \sh$, this inequality holds.
      Therefore, the weight strictly decreases under \Rule{S-Inst-Copy}.

  \end{proofcases}

  Thus the constraint solver terminates.
\end{proof}

\subsection{Correctness}

\begin{lemma}
  \label{lem:unsat-match}
  Given non-simple $\c$ constraint. If every match constraint $\C\where{\cmatch \t \cbrs} = \c$
  satisfies $\nCshape \C \t$, then $\c$ is unsatisfiable.
  \begin{proof}
    By contradiction, inverting on the canonical derivation of $\c$.
  \end{proof}
\end{lemma}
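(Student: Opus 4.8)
The plan is to argue by contradiction, using the canonicalization theorem (\cref{thm:canonicalization}) to pin down the shape of any satisfiability derivation for a non-simple constraint. Suppose, toward a contradiction, that $\c$ is satisfiable, so that $\semenv \th \c$ holds for some semantic environment $\semenv$. By \cref{thm:canonicalization} I may replace this derivation with a canonical one, $\semenv \Th \c$, in which all uses of the contextual discharge rule are pushed to the bottom of the derivation.

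Next I would invert on the last rule of $\semenv \Th \c$. There are only two possibilities: \Rule{Can-Simple} and \Rule{Can-Match-Ctx}. The rule \Rule{Can-Simple} requires $\c \simple$, i.e.\ that $\c$ contain no suspended match constraint; but $\c$ is non-simple by hypothesis, so this case is ruled out. Hence the derivation must end with \Rule{Can-Match-Ctx}, which exhibits a decomposition $\c = \C\where{\cmatch \t \cbrs}$ together with a shape $\sh$ and the premise $\Cshape \C \t \sh$ (the second premise, $\semenv \Th \C\where{\cmatched \t \sh \cbrs}$, I do not need). This decomposition is precisely one of the suspended-match occurrences quantified over in the lemma's hypothesis, so that hypothesis yields $\nCshape \C \t$, i.e.\ $\Cshape \C \t \shp$ fails for every shape $\shp$ --- contradicting the premise $\Cshape \C \t \sh$ produced by \Rule{Can-Match-Ctx}. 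This contradiction shows no such $\semenv$ exists, so $\c$ is unsatisfiable.

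The one point that needs care --- and the reason canonicalization is doing the real work --- is the inversion step. In the raw semantics $\semenv \th \c$ the non-syntax-directed rule \Rule{Match-Ctx} may be applied anywhere inside the derivation, around any subcontext, so a direct inversion on $\semenv \th \c$ would not immediately expose a match constraint whose surrounding context is literally a decomposition of $\c$. Canonicalization is exactly what guarantees that whenever $\c$ has a suspended match constraint, a satisfying derivation must discharge one of them at the very bottom, in a context $\C$ that matches one of the decompositions covered by the hypothesis. I would also note that the hypothesis $\nCshape \C \t$ silently forces every matchee to be a variable of undetermined shape: if some matchee $\t$ were a non-variable type, then $\Cshape \C \t {\shape \t}$ would hold by composability (\cref{lem:compose-unicity}), violating the hypothesis --- so no non-variable matchee can sneak through, and the contradiction obtained above is clean.
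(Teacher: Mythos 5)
Your proof is correct and is exactly the paper's argument: assume satisfiability, canonicalize via \cref{thm:canonicalization}, rule out \Rule{Can-Simple} by non-simplicity, and let the \Rule{Can-Match-Ctx} premise $\Cshape \C \t \sh$ contradict the hypothesis $\nCshape \C \t$. Your closing observations (why canonicalization is indispensable for the inversion, and why non-variable matchees are excluded via $\Cshape \hole \t {\shape\t}$ plus \cref{lem:compose-unicity}) are accurate refinements of the same approach, not a deviation from it.
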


\begin{lemma}[Scope preservation]
  \label{lem:scoping-preservation}
  For all $\ca, \cb$, if $\ca \csolve \cb$, then $\fvs \ca \supseteq \fvs \cb$.
  \begin{proof}
    By induction on $\ca \csolve \cb$.
  \end{proof}
\end{lemma}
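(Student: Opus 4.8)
The plan is to prove the containment $\fvs \ca \supseteq \fvs \cb$ by induction on the derivation of $\ca \csolve \cb$. Since the solver rewrites under an arbitrary one-hole constraint context $\C$ (modulo $\alpha$-equivalence and the associativity and commutativity of conjunction), the induction naturally splits into two parts: a congruence step for the surrounding context, and a base case for each individual rewrite rule.

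For the congruence step, I would first establish a small monotonicity fact: whenever $\fvs {\ca'} \supseteq \fvs {\cb'}$, we also have $\fvs {\C\where{\ca'}} \supseteq \fvs {\C\where{\cb'}}$, proved by a routine induction on the structure of $\C$. The key observation is that $\fvs {\C\where{\c}}$ decomposes into the free variables contributed by $\C$ together with $\fvs \c \setminus \bvs \C$; since the context part is identical on both sides and set difference is monotone in $\fvs \c$, the containment is preserved. The quotient by $\alpha$-equivalence and by associativity--commutativity of conjunction is harmless, as $\fvs$ is invariant under both.

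For the base cases, I would check each rewrite rule $\ell \csolve r$ in turn, verifying $\fvs \ell \supseteq \fvs r$; they fall into a few families. Most rules are \emph{destructive}---\Rule{S-True}, \Rule{S-False}, \Rule{S-Let-Solve}, \Rule{S-Gc}, and the various hoisting and lifting rules---and can only drop free variables, so the containment is immediate. A second family introduces \emph{fresh, locally bound} variables: \Rule{S-Let-AppR}, \Rule{S-Inst-Copy}, \Rule{S-Inst-Name}, and (at the unification level, invoked by \Rule{S-Unif}) \Rule{U-Name}. In each of these the newly introduced variables are bound immediately by an $\exists$ or by a regional binder, hence not free, so every free variable of $r$ is already free in $\ell$. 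A third family performs a variable substitution---notably \Rule{S-Compress} with $\where{\tvb \is \tvc}$---where the replacement target $\tvc$ already occurs free in $\ell$ (within the compressed multi-equation and the incremental-instantiation domains it updates), so substitution cannot manufacture new free variables.

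The one genuinely interesting case, and the one I expect to be the main obstacle, is \Rule{S-Match-Ctx}, which replaces $\cmatch \t \cbrs$ by the discharged form $\cmatched \t \sh \cbrs$, i.e. $\cexists \tvs {\cunif \t {\shapp \tvs} \cand \theta(\ci)}$. Here I would rely on the fact that canonical shapes are \emph{closed}---a shape $\any \tvcs \t$ has $\tvcs$ exactly equal to $\fvs \t$---so the discharged shape $\sh$ contributes no free variables of its own; the components $\tvs$ are freshly existentially bound; and the matching substitution $\theta$ maps pattern variables only to shape components built over $\tvs \cup \fvs \t$. Consequently every free type variable of the discharged constraint lies in $\fvs \t$ together with the free variables of the branches, hence in $\fvs{\cmatch \t \cbrs}$. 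Since \Rule{S-Unif} delegates to the unification relation, I would also record a short sub-lemma that $\upa \unif \upb$ preserves free variables, by the same case analysis over the U-rules (again the only nontrivial case being \Rule{U-Name}); this is standard and follows \citet*{Pottier-Remy/emlti}. Assembling the base cases with the monotonicity fact closes the induction.
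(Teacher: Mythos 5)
Your proposal is correct and takes essentially the same approach as the paper, whose entire proof is the one-line ``By induction on $\ca \csolve \cb$''; you have simply filled in the case analysis that the paper leaves implicit. Your handling of the only delicate points---monotonicity of $\fvs$ under contexts, freshly bound variables in the copying rules, closedness of canonical shapes in the \Rule{S-Match-Ctx} case, and the auxiliary check for the unification rules invoked by \Rule{S-Unif}---is sound.
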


\begin{corollary}
  \label{corollary:correctness}
  For the closed-term-variable constraint $\c$, $\c$ is satisfiable if and only if
  $\c \csolve^* \hat\c$ and $\hat\c$ is a solved form equivalent to $\c$.
  \begin{proof}
    We show each direction individually:
    \begin{proofcases}
      \proofcase{$\implies$}

	By transfinite induction on the well-ordering of constraints whose
	existence is shown in \cref{thm:termination}.

	We have $\c$ is satisfiable.
	By \cref{thm:progress}, we have three cases:
	\begin{proofcases}
	  \proofcase{$\c$ is solved}
	  We have $\c \csolve^* \c$ and $\c \cequiv \c$ by reflexitivity.
	  So we are done.

	  \proofcase{$\c$ is stuck}
	    Given $\c$ is a closed-term-variable constraint,
	    it must be the case that either $\c$ is $\cfalse$
      \emph{or} $\hat\C\where{\cmatch \t \cbrs}$ and $\nCshape \C \t$.

	    If $\c$ is $\cfalse$, this contradicts our assumption that $\c$ is satisfiable.
	    Similarly, by \cref{lem:unsat-match}, if $\c$ is $\hat\C\where{\cmatch \t \cbrs}$,
	    then this also contradicts the satisfiability of $\c$.

	  \proofcase{$\c \csolve \cp$}

	  By \cref{thm:preservation}, we have $\c \cequiv \cp$, thus $\cp$ is satisfiable.
    Additionally, by \cref{lem:scoping-preservation}, we have $\fvs \cp = \eset$.
	  So by induction, we have $\cp \csolve^* \hat \c$ and $\hat\c$
	  is a solved form equivalent to $\cp$.
	  By transitivity of equivalence, we therefore have $\hat\c \cequiv \c$, as
	  required.

	\end{proofcases}

      \proofcase{$\impliedby$}

      By induction on the rewriting $\c \csolve^* \hat\c$.
      \begin{proofcases}
	\proofcaserewrite
	  {Zero-Step}
	  { }
	  {\hat\c \csolve^* \hat\c}

	We have $\c = \hat\c$ by inversion. All solved forms are satisfiable, thus
	$\c$ is satisfiable.

	\proofcaserewrite
	  {One-Step}
	  {\c \csolve \cp \\ \cp \csolve^* \hat\c}
	  {\c \csolve^* \hat\c}

	  By induction, we have $\cp$ is satisfiable. By \cref{thm:preservation},
	  $\c \cequiv \cp$, hence $\c$ is satisfiable.

      \end{proofcases}
    \end{proofcases}
  \end{proof}
\end{corollary}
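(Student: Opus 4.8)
The plan is to prove the two implications of the biconditional separately, using the solver's three metatheoretic pillars—Progress (\cref{thm:progress}), Preservation (\cref{thm:preservation}), and Termination (\cref{thm:termination})—as the load-bearing lemmas, and gluing them together with scope preservation (\cref{lem:scoping-preservation}) and the unsatisfiability characterization of stuck matches (\cref{lem:unsat-match}). The key design insight that makes this clean is that each solver step preserves \emph{both} equivalence (Preservation) and term-variable-closedness (scope preservation), so the hypotheses of Progress stay available all along a reduction sequence, while Termination guarantees that every such sequence is finite and hence admits well-founded induction.

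For the forward direction ($\Longrightarrow$), I would assume $\c$ is term-variable-closed and satisfiable, and induct along $\csolve$ using the well-foundedness supplied by Termination. If $\c$ cannot step, Progress gives three cases: $\c$ is solved, $\c$ is $\cfalse$, or every residual match $\hat\C\where{\cmatch \tv \cbrs}$ satisfies that $\Cshape{\hat\C}{\tv}{\sh}$ holds for no $\sh$. The last two must be excluded under the satisfiability hypothesis: $\cfalse$ is plainly unsatisfiable, and a constraint all of whose matches are semantically undischargeable is unsatisfiable by \cref{lem:unsat-match}. Hence a stuck satisfiable constraint is a solved form $\hat\up$, and reflexivity of $\csolve^{*}$ together with $\c \cequiv \c$ closes the base case. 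If instead $\c \csolve \cp$, then Preservation yields $\c \cequiv \cp$ (so $\cp$ is still satisfiable) and \cref{lem:scoping-preservation} keeps $\cp$ term-variable-closed; the induction hypothesis gives $\cp \csolve^{*} \hat\up$ with $\hat\up$ solved and $\cp \cequiv \hat\up$, and transitivity of $\cequiv$ delivers $\c \cequiv \hat\up$.

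The backward direction ($\Longleftarrow$) is the easy one. Assuming $\c \csolve^{*} \hat\up$ with $\hat\up$ a solved form, a routine induction on the length of the reduction, applying Preservation at each step, yields $\c \cequiv \hat\up$. Since every solved form is satisfiable—its multi-equations admit a ground unifier, which is exactly the most general solution one reads off \cref{def:solved-form}—$\hat\up$ is satisfiable, and therefore so is $\c$.

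The main obstacle is not the inductive plumbing but the forward-direction analysis of stuck states: ruling out that a satisfiable closed constraint wedges at an undischargeable suspended match. This is where the contextual, non-syntax-directed semantics genuinely bites, since one cannot reason locally about a single match. Fortunately, Progress already packages the needed statement in its third case semantically (no $\sh$ with $\Cshape{\hat\C}{\tv}{\sh}$), and \cref{lem:unsat-match} converts that ``no known shape anywhere'' condition into outright unsatisfiability by inverting the canonical derivation of \cref{thm:canonicalization}. The delicate part, already internalized in Progress, is that the solver's decidable unicity rules are sound and complete on normal contexts (\cref{lem:unicity-completeness}), so the solver's inability to fire \Rule{S-Match-Ctx} faithfully reflects genuine semantic non-uniqueness rather than a mere algorithmic gap; without that correspondence the forward direction would fail to exclude a ``stuck yet satisfiable'' constraint.
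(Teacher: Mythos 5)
Your proposal is correct and takes essentially the same route as the paper's proof: well-founded induction on solver steps (justified by Termination) for the forward direction, with the stuck cases discharged by Closed Progress, $\cfalse$ contradicting satisfiability, and \cref{lem:unsat-match} ruling out a satisfiable constraint wedged on undischargeable matches, while Preservation and \cref{lem:scoping-preservation} maintain the equivalence and term-variable-closedness hypotheses along the reduction; the backward direction is the same routine induction applying Preservation at each step plus satisfiability of solved forms. Your closing remarks on \cref{lem:unsat-match} and \cref{lem:unicity-completeness} accurately identify where the contextual semantics is handled, exactly as the paper factors it into Progress.
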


\section{Properties of \OML}
\label{app/oml/proofs}


This section states and proves the two central metatheoretic properties of
\OML. The first is the \emph{soundness and completeness} of the constraint
generator $\cinfer \e \tv$ with respect to the \OML typing rules. The second
is the existence of \emph{principal types}, which follows as a consequence
of soundness and completeness: every closed well-typed term $\e$ admits a
most general type.


Throughout this section, we restrict our attention to \emph{closed
terms}. This is because the typing context $\G$ can contain bindings to
terms whose type is ``guessed''. When we generate constraints for a term
$\e$ under a context $\G$, we encode the type schemes in $\G$ as part of the
constraint itself using let-constraints. However, these schemes are
treated as known within the constraint! As a result, we assume terms are
closed from the outside to avoid $\G$ leaking any guessed type information.

\subsection{Simple syntax-directed system}

As a first step towards proving soundness and completeness of constraint
generation, we first present a variant of the \OML type system for
\emph{simple terms}. For this system, the syntax tree completely determines
the derivation tree.

We use the standard technique of removing the \Rule{Inst} and \Rule{Gen} rules,
and always apply instantiations in \Rule{Var} (\Rule{Var-SD}) and always
generalize at let-bindings (\Rule{Let-SD}). We can show that this system is
sound and complete with respect to the declarative rules.

\begin{theorem}[Soundness of the syntax directed rules]
  \label{thm:soundness-sd}
  Given the simple term $\e$.
  If $\G \thsimplesd \e : \t$ then we also have $\G \thsimple \e : \t$
  \begin{proof}
    Induction on the given derivation.
  \end{proof}
\end{theorem}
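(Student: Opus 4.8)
The plan is to proceed by a straightforward structural induction on the derivation of $\G \thsimplesd \e : \t$, reconstructing for each syntax-directed rule a corresponding derivation in the declarative system $\G \thsimple \e : \t$. The declarative system shares all the rules that act directly on the term structure---\Rule{Fun}, \Rule{App}, \Rule{Unit}, \Rule{Annot}, \Rule{Tuple}, together with the explicit record and projection rules---so for these cases I would simply apply the induction hypothesis to each premise and re-apply the same rule. All the content of the theorem therefore sits in the four rules where the syntax-directed presentation diverges from the declarative one by folding \Rule{Gen} and \Rule{Inst} into the leaves and the binders: \Rule{Var-SD}, \Rule{Let-SD}, \Rule{Poly-X}, and \Rule{Use-X}. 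Because $\e$ is simple, no implicit rules ever occur, so only the robust fragment plus the generative rules \Rule{Gen} and \Rule{Inst} are in play.

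For \Rule{Var-SD}, which concludes $\G \thsimplesd x : \t\where{\tvs \is \tys}$ from $x : \tfor \tvs \t \in \G$, I would first apply declarative \Rule{Var} to obtain $\G \thsimple x : \tfor \tvs \t$ and then eliminate the quantifier prefix by iterating \Rule{Inst}, one bound variable at a time, reaching $\G \thsimple x : \t\where{\tvs \is \tys}$. Dually, for \Rule{Let-SD} I would apply the induction hypothesis to the definition to get $\G \thsimple \ea : \ta$, iterate \Rule{Gen} over $\tvs$ to obtain $\G \thsimple \ea : \tfor \tvs \ta$---which is legal precisely because of the side condition $\tvs \disjoint \fvs \G$ carried by the rule---apply the induction hypothesis to the body under $\G, x : \tfor \tvs \ta$, and close with declarative \Rule{Let}. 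The \Rule{Poly-X} and \Rule{Use-X} cases follow the same pattern: for \Rule{Poly-X} I would generalize over the fresh $\tvbs$ with \Rule{Gen} (justified by $\tvbs \disjoint \G$) to raise the monotype premise to the scheme $\ts\where{\tvs \is \tys}$ before applying the declarative boxing rule, and for \Rule{Use-X} I would obtain the declarative typing of $\e$ from the premise, apply the declarative unboxing rule to expose $(\tfor \tvbs \t)\where{\tvs \is \tys}$, and then instantiate the residual $\tvbs$ with $\typs$ by iterated \Rule{Inst}.

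The only care required---and the main, if mild, obstacle---is the bookkeeping around iterated generalization and instantiation. The declarative \Rule{Gen} and \Rule{Inst} each act on a single variable, whereas the syntax-directed rules quantify over and substitute whole sequences $\tvs$ and $\tvbs$. I would discharge this with a small auxiliary observation that an $n$-fold application of \Rule{Gen} realizes simultaneous generalization of $\tvs$ whenever $\tvs \disjoint \fvs \G$, and that an $n$-fold application of \Rule{Inst} realizes the simultaneous substitution $\where{\tvs \is \tys}$ up to $\alpha$-renaming of the bound variables to avoid capture. Once this composition lemma is in place, every case is immediate, and the induction goes through without further complication.
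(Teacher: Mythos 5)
Your proposal is correct and matches the paper's proof, which is stated simply as ``Induction on the given derivation'': the cases you work out (homomorphic cases for the shared rules, \Rule{Var} plus iterated \Rule{Inst} for \Rule{Var-SD}, iterated \Rule{Gen} justified by the freshness side condition plus \Rule{Let} for \Rule{Let-SD}, and the analogous treatment of \Rule{Poly-X} and \Rule{Use-X}) are precisely the content that induction carries, and your auxiliary observation about $n$-fold \Rule{Gen}/\Rule{Inst} realizing simultaneous generalization and substitution up to $\alpha$-renaming is the right bookkeeping to make it go through.
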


\begin{theorem}[Completeness of the syntax directed rules]
  \label{thm:completeness-sd}
  Given the simple term $\e$.
  If $\G \thsimple \e : \ts$, then $\G \thsimple \e : \t$ for any instance $\t$ of $\ts$.
  \begin{proof}
    Induction on the given derivation.
  \end{proof}
\end{theorem}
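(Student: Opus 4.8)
The plan is to induct on the derivation of $\G \thsimple \e : \ts$, showing that every monotype instance $\t$ of $\ts$ is derivable in the syntax-directed system, i.e.\ $\G \thsimplesd \e : \t$; this is the substantive content of \emph{completeness of the syntax-directed rules}, recovering each declarative typing up to a choice of instance (so the judgment on the right of the statement is read as $\thsimplesd$, the converse of \cref{thm:soundness-sd}). Two auxiliary lemmas drive the argument. First, a \emph{type-substitution stability} lemma: if $\G \thsimplesd \e : \t$ then $\theta(\G) \thsimplesd \e : \theta(\t)$ for any type substitution $\theta$, by a routine induction on the syntax-directed derivation---the only interesting cases, \Rule{Var-SD} and \Rule{Let-SD}, go through because $\thsimplesd$ bakes instantiation into variables and generalizes only over variables disjoint from the context. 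Second, a \emph{context-monotonicity} lemma: if $\G, x : \ts' \thsimplesd \e : \t$ and $\ts''$ is at least as general as $\ts'$ (every instance of $\ts'$ is an instance of $\ts''$), then $\G, x : \ts'' \thsimplesd \e : \t$, since each use of $x$ through \Rule{Var-SD} picks an instance of $\ts'$, which remains an instance of $\ts''$.

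The two non-syntax-directed rules are handled first. For \Rule{Inst}, the conclusion $\ts\where{\tv \is \t}$ is less general than the premise's scheme $\all \tv \ts$, so every instance of the conclusion is already an instance of the premise, and the target instance is covered directly by the induction hypothesis. The \Rule{Gen} case is the first subtle point: from $\G \thsimple \e : \ts$ with $\tv \disjoint \fvs \G$ we must reach an arbitrary instance $\t$ of $\all \tv \ts$. Such a $\t$ is an instance of $\ts\where{\tv \is \tya}$ for some $\tya$, and since instantiation commutes with substitution (after renaming the bound variables of $\ts$ away from $\tya$) it equals $\theta(\tp)$ for an instance $\tp$ of $\ts$ and $\theta = \where{\tv \is \tya}$. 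The induction hypothesis gives $\G \thsimplesd \e : \tp$, and stability together with $\theta(\G) = \G$ (as $\tv \disjoint \fvs \G$) yields $\G \thsimplesd \e : \t$. The \Rule{Var} case is immediate, as \Rule{Var-SD} instantiates $x$'s bound scheme to exactly the required instance.

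For the structural rules \Rule{Fun}, \Rule{App}, \Rule{Unit}, \Rule{Annot}, \Rule{Tuple}, the explicit record and projection rules, and \Rule{Use-X}, the derived scheme $\ts$ is a monotype whose only instance is itself, so we invert the rule, apply the induction hypothesis to each premise, and reassemble with the corresponding syntax-directed rule. The main obstacle is the pair of binder rules \Rule{Let} and \Rule{Poly-X}, where the declarative system stores (respectively packages) a \emph{scheme} $\ts_0$ while the syntax-directed \Rule{Let-SD} and \Rule{Poly-X} generalize the monotype produced by the sub-derivation. Writing $\ts_0 = \all \tvbs \ta$ with $\tvbs \disjoint \fvs \G$, the induction hypothesis on the definition yields $\G \thsimplesd \ea : \ta$; choosing the generalized set in \Rule{Let-SD} to be all variables of $\ta$ outside $\G$ produces a scheme at least as general as $\ts_0$. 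We then apply the induction hypothesis to the body under $x : \ts_0$ and transport it to this more general binding via context-monotonicity, giving the syntax-directed derivation; \Rule{Poly-X} is identical, with the generalization happening inside the polytype. Matching these generalizations---ensuring the syntax-directed scheme dominates the declarative one, and that the freshness side-conditions are met by $\alpha$-renaming---is the only delicate bookkeeping in the proof.
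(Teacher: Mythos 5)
Your overall strategy---reading the conclusion as $\G \thsimplesd \e : \t$ (the statement as printed has an evident typo, since with $\thsimple$ on both sides the theorem is just iterated \Rule{Inst}), inducting on the declarative derivation, and discharging \Rule{Gen}, \Rule{Let} and \Rule{Poly-X} via a substitution-stability lemma and a context-monotonicity lemma---is exactly the standard elaboration of the paper's own proof, which consists of the single line ``induction on the given derivation.'' The structural cases, the \Rule{Inst} case, and the \Rule{Let}/\Rule{Poly-X} bookkeeping through monotonicity are all sound.

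There is, however, one step that is false as written: in the \Rule{Gen} case you claim that any instance $\t$ of $\tfor \tv \ts$ factors as $\theta(\tp)$ where $\tp$ is an instance of $\ts$ and $\theta = \where{\tv \is \tya}$. Renaming the bound variables of $\ts$ away from $\tya$ does not rescue this, because the obstruction is not capture but that the instance substitution may mention $\tv$ in its \emph{range}. Concretely, take $\ts = \tfor \tvb {\tv \to \tvb}$, so that $\tfor \tv \ts$ is $\tfor {\tv, \tvb} {\tv \to \tvb}$, and consider the instance $\t = \tint \to \tv$ (obtained via $\where{\tv \is \tint, \tvb \is \tv}$; the declarative system reaches it by two applications of \Rule{Inst}). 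Every instance of $\ts$ has the form $\tv \to \tyb$, and $\where{\tv \is \tint}$ applied to $\tv \to \tyb$ gives $\tint \to \tyb\where{\tv \is \tint}$, whose codomain can never be $\tv$; so no factorization with $\theta = \where{\tv \is \tint}$ exists. The repair is routine and stays entirely within your framework: apply the induction hypothesis at the instance $\tp = \tv \to \tvc$ for a \emph{fresh} $\tvc$, and take $\theta = \where{\tv \is \tint, \tvc \is \tv}$; then $\theta(\tp) = \t$, and $\theta(\G) = \G$ still holds because $\tv \disjoint \fvs \G$ and $\tvc$ is fresh, so substitution stability closes the case. In general, instantiate the quantified variables of $\ts$ with fresh variables and fold the actual replacement types (which may mention $\tv$) into $\theta$ alongside $\tv \is \tya$. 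With that one correction your proof is complete and matches the paper's intended argument.
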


\paragraph{Inversion} On a simple syntax-directed derivation $\G \thsimplesd \e : \t$, we have the usual inversion
principle:
\begin{lemma}[Simple inversion]
  \label{lem:simple-inversion-sd}
  ~
  \begin{enumerate}[(\roman*)]
    \item If $\G \thsimplesd \x : \t$, then $\x : \tfor \tvs \tp \in \G$ and $\t = \tp\where{\tvs \is \tys}$.
    \item If $\G \thsimplesd \efun \x \e : \t$, then $\G, \x : \ta \thsimplesd \e : \tb$ and $\t = \ta \to \tb$.
    \item If $\G \thsimplesd \eapp \ea \eb : \t$, then $\G \thsimplesd \ea : \tp \to \t$ and $\G \thsimplesd \eb : \tp$.
    \item If $\G \thsimplesd \eunit : \t$, then $\t = \tunit$.
    \item If $\G \thsimplesd \elet \x \ea \eb : \t$, then $\G \thsimplesd \ea : \tp$, $\tvs \disjoint \fvs \G$, and $\G, \x : \tfor \tvs \tp \thsimplesd \eb : \t$.
    \item If $\G \thsimplesd \eannot \e \tvs \tp : \t$, then $\G \thsimplesd \e : \tp\where{\tvs \is \tys}$ and $\t = \tp\where{\tvs \is \tys}$.
   \item If $\G \thsimplesd \etuple {\ea, \ldots, \en} : \t$, then $\G \thsimplesd \ei : \ti$ for all $1 \leq i \leq n$ and $\t = \tProd \ti$.
    \item If $\G \thsimplesd \exproj \e j n : \t$, then $\G \thsimplesd \e : \tProd \ti$ and $\t = \tj$, with $n \geq j$.
    \item If $\G \thsimplesd \expoly \e \tvs {\tfor \tvbs \tp} : \t$, then $\G \thsimplesd \e : \t\where{\tvs \is \tys}$, $\tvbs \disjoint \G$ and
      $\t = \tpoly {\tfor \tvbs \tp}\where{\tvs \is \tys}$.
    \item If $\G \thsimplesd \exinst \e \tvs \ts : \t$, then $\G \thsimplesd \e : \tpoly \ts\where{\tvs \is \tys}$ and $\ts\where{\tvs \is \tys} \leq \t$.
    \item If $\G \thsimplesd \emagic \es : \t$, then $\G \thsimplesd \ei : \tip$ for all $1 \leq i \leq n$.
    \item If $\G \thsimplesd \exrecord \T {\overline{\elab = \e}} : \t$, then $\G \thsimplesd \ei : \ti$ and ${\labfrom \elab \T} \leq \t \to \ti$ for $1 \leq i \leq n$ and $\Dom {\labfrom \labenv \T} = \elabs$.
    \item If $\G \thsimplesd \erecord {\overline{\elab = \e}} : \t$, then $\labsuni \elabs \T$ and $\G \thsimplesd \exrecord \T {\overline{\elab = \e}} : \t$.
    \item If $\G \thsimplesd \exfield \e \T \elab : \t$, then $\G \thsimplesd \e : \tp$, ${\labfrom \elab \T} \leq \tp \to \t$.
    \item If $\G \thsimplesd \efield \e \elab : \t$, then $\labuni \elab \T$ and $\G \thsimplesd \exfield \e \T \elab : \t$.
  \end{enumerate}
\end{lemma}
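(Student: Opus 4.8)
The plan is to prove each of the fifteen statements by a single inversion step on the given derivation $\G \thsimplesd \e : \t$, exploiting the fact that the simple system is, as its name promises, syntax-directed: each term constructor is the conclusion of exactly one rule, so the head constructor of $\e$ forces the last rule of the derivation. First I would record the precise shape of the rule set being inverted. The standalone rules \Rule{Gen} and \Rule{Inst} of the declarative system have been removed; generalization is performed only at let-bindings (\Rule{Let-SD}) and instantiation only at variable occurrences (\Rule{Var-SD}). Every other term former — $\efun$, $\eapp$, $\eunit$, $\eannot$, $\etuple$, $\exproj$, $\expoly$, $\exinst$, the typed hole $\emagic \es$, $\exrecord$, $\exfield$, and the two closed-world forms $\erecord$ and $\efield$ — is introduced by a unique rule whose conclusion matches its head constructor.

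Given this, each case is immediate: since the head constructor determines the last rule, I would invert on that rule and read off its premises, which are exactly the claimed decomposition. For the purely structural cases (ii)--(iv), (vii), (viii), (xi), (xii), (xiv) this yields the stated premises directly. For the remaining cases the side conditions carried by the rule reappear verbatim in the conclusion of the lemma: \Rule{Var-SD} supplies the instantiating substitution $\where{\tvs \is \tys}$ in (i); \Rule{Let-SD} supplies both the generalized binding $\x : \tfor \tvs \tp$ and the freshness condition $\tvs \disjoint \fvs \G$ in (v); \Rule{Annot} supplies the existential substitution on $\tvs$ in (vi); the simple-term \Rule{Poly-X} and \Rule{Use-X} supply the substitutions on $\tvs$ (and $\tvbs$) together with the freshness condition $\tvbs \disjoint \G$ in (ix)--(x); and the closed-world rules \Rule{Rcd-Closed} and \Rule{Rcd-Proj-Closed} supply the uniqueness predicates $\labsuni \elabs \T$ and $\labuni \elab \T$ in (xiii), (xv).

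The one point that warrants care — and the closest thing to an obstacle — is checking that the rule set genuinely has no two rules with overlapping conclusions, so that inversion is truly deterministic. Here I would appeal to the $\e \simple$ judgment to verify that the implicit record forms and their explicit counterparts never compete: in the simple-term fragment, $\efield \e \elab$ is admitted only via \Rule{Simple-Rcd-Proj-Closed}, i.e. only when $\labuni \elab \T$ holds, while $\exfield \e \T \elab$ is a syntactically distinct explicit form; likewise $\erecord$ versus $\exrecord$. Thus the closed and explicit projection/record rules partition the relevant terms rather than overlap, and the record name $\T$ mentioned in (xiii), (xv) is the unique one forced by the corresponding uniqueness predicate. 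With this disjointness established, each of the fifteen cases follows from its single forced inversion, completing the proof.
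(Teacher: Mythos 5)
Your proposal is correct and matches the paper's treatment: the paper states this lemma without an explicit proof, presenting it as ``the usual inversion principle'' that holds because the simple system is syntax-directed, which is exactly your argument that the head constructor forces the last rule. Your additional check that the closed-world forms $\efield \e \elab$ / $\erecord {\overline{\elab = \e}}$ and their explicit counterparts do not overlap (via the $\e \simple$ restriction) is a sound piece of diligence the paper leaves implicit.
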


\subsection{Canonicalization of typability}
Our system satisfies a similar canonicalization theorem to constraint satisfiability.

\begin{lemma}[Composability of unicity]
  ~
  \label{lem:comp-unicity-typing}
  \begin{enumerate}[(\roman*)]
    \item If $\Eshape \Ea \es \sh$, then $\Eshape {\Eb\where\Ea} \es \sh$.
    \item If $\eshape \Ea \e \sh$, then $\eshape {\Eb\where\Ea} \e \sh$.
  \end{enumerate}
  \begin{proof}
    By induction on $\Eb$.
  \end{proof}
\end{lemma}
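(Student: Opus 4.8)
The plan is to prove both parts simultaneously by structural induction on the context $\Eb$, directly mirroring the composability proof for constraint unicity (\cref{lem:compose-unicity}). The two statements differ only in what fills the hole of $\Ea$---the annotated typed hole $\eannotmagic\es{}\gt$ in part~(i) versus $\emagic{\es, \eannot\e{}\gt}$ in part~(ii)---and since the induction peels off the outer constructors of $\Eb$ without ever inspecting the hole contents, a single induction handles both uniformly. Writing $H$ for the relevant typed-hole content (which pins the designated type to $\gt$), the goal in each inductive case is to show that any typing $\G \th \eerase{\Eb\where{\Ea\where{H}}} : \t$ forces $\shape\gt = \sh$, using as premise that every typing $\G \th \eerase{\Ea\where{H}} : \t$ already forces $\shape\gt = \sh$.

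For the base case $\Eb = \hole$ the claim is immediate, since $\Eb\where\Ea = \Ea$. In the inductive step we apply the induction hypothesis to the immediate subcontext $\Eb'$ of $\Eb$, obtaining $\Eshape{\Eb'\where\Ea}\es\sh$ (resp.\ $\eshape[\es]{\Eb'\where\Ea}\e\sh$), and then derive the claim for $\Eb\where\Ea$ by inverting the typing derivation of the erased term. Since erasures contain no implicit constructs (\cref{fig/oml/erasure}), they are simple terms, so we may work in the syntax-directed presentation $\thsimplesd$---related to the declarative judgment by \cref{thm:soundness-sd,thm:completeness-sd}---whose generation principle (\cref{lem:simple-inversion-sd}) supplies the required inversions. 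Two families of cases arise. When the head constructor of $\Eb$ is robust and explicit (application, let, annotation, explicit records and projections, explicit boxing and unboxing, closed-world records), erasure is homomorphic, so $\eerase{\Eb\where{\Ea\where{H}}}$ keeps that head; inverting the matching clause of \cref{lem:simple-inversion-sd} exposes the immediate subterm $\eerase{\Eb'\where{\Ea\where{H}}}$ as typable at some type, which is exactly the antecedent of the induction hypothesis and yields $\shape\gt = \sh$. When the head of $\Eb$ is fragile and implicit ($\epoly\E$, $\einst\E$, an overloaded $\efield\E\elab$, or an overloaded record), erasure rewrites it to a typed hole $\emagic{\ldots}$; inverting the \Rule{Hole} clause again exposes the subterm as typable---the hole's own type being irrelevant---and the induction hypothesis closes the case.

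The only case demanding slight care is the let-context $\elet\x{\Eb'}\e$, where \cref{lem:simple-inversion-sd} assigns the definiens a type scheme rather than a monotype; here we rely on the instance-closure built into \cref{thm:completeness-sd} and on the result-type quantifier of the unicity condition ranging over the subterm's assigned type, exactly as the constraint-side proof threads the satisfiability of the abstraction through \Rule{Let}. The record and projection cases additionally split on whether the label set is closed-world unique (\cref{fig/oml/erasure}), but both branches of the erasure expose the relevant subterm as typable and hence reduce to the two families above. We expect the main obstacle to be bookkeeping rather than conceptual depth: unfolding the non-homomorphic erasure correctly on each filled context, aligning the exposed subterm with the antecedent of the induction hypothesis, and keeping the sibling terms $\es$ and the distinguished subterm $\e$ attached to the hole of $\Ea$ throughout. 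No technique beyond inversion and the induction hypothesis is needed---the statement is precisely the typing-side analogue of \cref{lem:compose-unicity}.
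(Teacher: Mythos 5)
Your proposal is correct and follows the paper's approach exactly: the paper proves this lemma by structural induction on $\Eb$ (stated without elaboration), mirroring the detailed constraint-side proof of \cref{lem:compose-unicity}, which is precisely the template you follow---peel off the head constructor of $\Eb$, invert the typing of the erased term to expose the subterm filled by $\Ea$, and apply the induction hypothesis. Your additional care with the non-syntax-directed rules \Rule{Gen}/\Rule{Inst} (detouring through $\thsimplesd$ via \cref{thm:soundness-sd,thm:completeness-sd} and \cref{lem:simple-inversion-sd}) and with the non-homomorphic erasure cases is exactly the bookkeeping the paper leaves implicit.
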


\begin{lemma}[Decanonicalization]
  \label{lem:decanonicalization-typing}
  If $\Th \e : \t$, then $\eset \th \e : \t$.
  \begin{proof}
    By induction on the given derivation $\Th \e : \t$.
  \end{proof}
\end{lemma}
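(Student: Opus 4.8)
The plan is to prove the statement by structural induction on the derivation of $\Th \e : \t$, exactly parallel to the proof of its constraint-level analogue (\cref{lem:decanonicalization}). The canonical judgment $\Th \e : \t$ is generated by six rules: the base rule \Rule{Can-Base}, together with the five contextual rules \Rule{Can-Proj-I}, \Rule{Can-Poly-I}, \Rule{Can-Use-I}, \Rule{Can-Rcd-I}, and \Rule{Can-Rcd-Proj-I}, one per fragile implicit construct. Each of these is by construction the $\Th$-shaped twin of a declarative rule, so the whole argument amounts to a rule-by-rule translation of a canonical derivation into a declarative one at the empty context $\G = \eset$.

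For the base case \Rule{Can-Base}, the premise is $\eset \thsimplesd \e : \t$ with $\e$ a simple term. First I would apply soundness of the syntax-directed system (\cref{thm:soundness-sd}) to obtain $\eset \thsimple \e : \t$. It then suffices to observe that the simple typing judgment $\thsimple$ is nothing but the declarative judgment $\th$ restricted to simple terms: the implicit rules (\Rule{Proj-I}, \Rule{Use-I}, \Rule{Poly-I}, \Rule{Rcd-I}, \Rule{Rcd-Proj-I}) never fire on a simple term, whereas \Rule{Hole} and every explicit/robust rule are shared verbatim. Hence the derivation of $\eset \thsimple \e : \t$ is already a derivation of $\eset \th \e : \t$, closing this case.

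Each inductive case is then immediate. The key observation is that a canonical contextual rule and its declarative counterpart have identical premises except that the recursive typing premise $\Th \E\where{\e^x} : \t$ is replaced by $\th \E\where{\e^x} : \t$; in particular the unicity side conditions $\eshape \E \e \sh$ and $\Eshape \E \es \sh$ are literally the same in both systems, since both are defined purely in terms of the robust typability of \emph{erased} terms. Concretely, for \Rule{Can-Use-I} the premises are $\eshape \E \e {\any \tvcs \tpoly \ts}$ and $\Th \E\where{\exinst \e \tvcs \ts} : \t$; the induction hypothesis converts the second into $\eset \th \E\where{\exinst \e \tvcs \ts} : \t$, and applying the declarative rule \Rule{Use-I} (with $\G = \eset$) yields $\eset \th \E\where{\einst \e} : \t$. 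The cases \Rule{Can-Proj-I}, \Rule{Can-Poly-I}, \Rule{Can-Rcd-I}, and \Rule{Can-Rcd-Proj-I} proceed identically against \Rule{Proj-I}, \Rule{Poly-I}, \Rule{Rcd-I}, and \Rule{Rcd-Proj-I}.

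I do not expect a genuine obstacle here: unlike the canonicalization direction, which requires composability of unicity (\cref{lem:comp-unicity-typing}) to permute contextual rules to the bottom of a derivation, decanonicalization is only the soundness direction and needs no such reshuffling. The single point demanding care is the base case's bridge between the three judgments $\thsimplesd$, $\thsimple$, and $\th$, and the accompanying check that the unicity conditions in the canonical rules are defined against the ordinary (implicit-free) typing judgment rather than against $\Th$ itself; this is what makes them directly reusable in the declarative rules and keeps the induction free of any circularity.
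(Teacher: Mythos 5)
Your proposal is correct and follows exactly the paper's (one-line) proof: a structural induction on the derivation of $\Th \e : \t$, where the base case \Rule{Can-Base} is bridged to the declarative judgment via \cref{thm:soundness-sd} and the observation that $\thsimple$ coincides with $\th$ on simple terms, and each contextual rule \Rule{Can-*-I} is replayed as its declarative counterpart \Rule{*-I} at $\G = \eset$ after applying the induction hypothesis. Your closing observation---that the unicity premises are shared verbatim because they are stated against the robust judgment on erased terms, so no reshuffling or composability lemma is needed in this direction---is precisely why the paper's proof can afford to be so terse.
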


\newcommand{\enimplicit}[1]{{\#\mathprefix[\mathsf]{implicit} {#1}}}
\begin{theorem}[Canonicalization]
  \label{thm:canonicalization-typing}
  If $\th \e : \ts$, then $\Th \e : \t$ for any instance $\t$ of $\ts$.
  \begin{proof}
    By induction on the following measure of $\e$:
    \begin{mathpar}
      \| \e \| \uad\eqdef\uad \angles {\enimplicit \e, |\e|}
    \end{mathpar}
    where $\angles \ldots$ denotes a lexicographically ordered pair, and
  \begin{enumerate}

    \item $\enimplicit \e$ is the number of implicit constructs in $\e$ \ie overloaded tuple projections $\eproj \e j$,
      implicit non-unique field projections $\efield \e \elab$, implicit non-unique records $\erecord {\overline{\elab = \e}}$, polytype instantiations $\einst \e$
      and polytype boxing $\epoly \e$.

    \item the last component $|\e|$ is a structural measure of terms \ie a
      application $\eapp \ea \eb$ is larger than the two terms $\ea, \eb$.
  \end{enumerate}
  This measure is analogous to the measure $\cmeasure \c$ for constraints.
  \end{proof}
\end{theorem}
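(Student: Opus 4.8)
The plan is to prove this by well-founded induction on the stated measure $\|\e\| = \angles{\enimplicit\e, |\e|}$, in close parallel with the constraint-level canonicalization \cref{thm:canonicalization}. The role of \Rule{Can-Base} here is played by \Rule{Can-Simple} there, the implicit contextual rules (\Rule{Use-I}, \Rule{Poly-I}, \Rule{Rcd-I}, \Rule{Rcd-Proj-I}, \Rule{Proj-I}) play the role of \Rule{Match-Ctx}, and the composability lemma \cref{lem:comp-unicity-typing} plays the role of \texttt{compose-unicity}. Given a declarative derivation $D$ of $\th \e : \ts$ and a target instance $\t$ of $\ts$, I would split on whether $\e$ is simple. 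The key organising idea, which sidesteps the fact that the canonical judgment $\Th$ carries an empty context, is to never descend into open subterms: instead I discharge exactly one implicit construct at the \emph{root} of a canonical derivation and recurse on the whole term with that construct elaborated.

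If $\e$ is simple ($\enimplicit\e = 0$), then $D$ uses none of the implicit rules, so it is a derivation of $\eset \thsimple \e : \ts$. By completeness of the syntax-directed system (\cref{thm:completeness-sd}) we obtain $\eset \thsimplesd \e : \t$ for the chosen instance $\t$, and \Rule{Can-Base} concludes $\Th \e : \t$. This single appeal to completeness-sd is also what absorbs all interior uses of \Rule{Gen} and \Rule{Inst}, so no separate treatment of those non-syntax-directed rules is required.

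If $\e$ is not simple, then $D$ must apply at least one implicit rule; let $R$ be the one occurring closest to the root, discharging an implicit construct $\e^i$ in a term context $\E_0$, so that $R$ has premises $\eshape {\E_0} {\e_0} \sh$ (for the appropriate shape $\sh$) and $\G \th \E_0\where{\e^x} : \t_0$, where $\e^x$ is the explicit elaboration of $\e^i$. Since the premise and conclusion of $R$ share the same context $\G$ and type $\t_0$ and differ only in the hole, I can replace the subderivation $R$ inside $D$ by its premise subderivation; every rule above $R$ (structural, \Rule{Gen}, or \Rule{Inst}) propagates unchanged, and the resulting derivation $D_x$ proves $\th \e_x : \ts$, where $\e_x = \E\where{\e^x}$ is $\e$ with that one construct elaborated and $\E$ is the full context from the root. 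Because $\enimplicit{\e_x} = \enimplicit\e - 1$, we have $\|\e_x\| < \|\e\|$, so the induction hypothesis yields $\Th \e_x : \t$ for our target instance $\t$. Composability (\cref{lem:comp-unicity-typing}) lifts the local unicity premise $\eshape {\E_0} {\e_0} \sh$ of $R$ to the full root context, $\eshape {\E} {\e_0} \sh$, and the matching \Rule{Can-*-I} rule then concludes $\Th \e : \t$ (decanonicalization, \cref{lem:decanonicalization-typing}, is available should one instead prefer to re-enter the declarative system between recursive calls).

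I expect the main obstacle to be the lifting of the unicity condition from the local context $\E_0$ at $R$ to the full root context $\E$: this is exactly what composability of unicity is engineered to supply, but it is delicate because the unicity predicate is itself a universally-quantified statement over all erased typings, so one must check that prepending the outer context (including any intervening $\lambda$- and $\mathsf{let}$-binders, which the term-context grammar does traverse) neither weakens nor falsifies it. The secondary points are verifying that the node swap $R \mapsto$ premise is type-preserving for the ambient derivation (immediate, since the two judgments agree on context and type) and that the measure strictly decreases at each discharge (immediate, since one implicit construct is removed); neither should pose real difficulty, leaving composability as the genuine crux.
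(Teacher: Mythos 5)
Your proposal is correct and is essentially the paper's proof: the paper establishes this theorem by induction on exactly the measure you use (pointing to the analogy with the constraint-level canonicalization, \cref{thm:canonicalization}), and the lemmas you invoke---composability of unicity (\cref{lem:comp-unicity-typing}), completeness of the syntax-directed system (\cref{thm:completeness-sd}), decanonicalization (\cref{lem:decanonicalization-typing}), and the \Rule{Can-Base}/\Rule{Can-*-I} rules---are precisely the infrastructure the paper sets up to support it. Your specific organization (surgically replacing the root-most implicit rule by its typing premise and recursing on the whole elaborated term, rather than descending into subderivations as the constraint-level proof does) is a faithful and arguably forced realization of the paper's sketch, since the canonical judgment is defined only for closed terms under the empty context and so cannot be applied to open subterms.
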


\subsection{Unifiers}

A substitution $\sub$ is an idempotent function from type variables to types.
The (finite) domain of $\sub$ is the set of type variables such that $\sub(\tv)
\neq \tv$ for any $\tv \in \dom \sub$, while the codomain consists of the free
type variables of its range.
We use the notation $\where{\tvs \is \tys}$ for the substitution $\sub$ with
domain $\tvs$ and $\sub(\tvs) = \tys$.

The constraint induced by a substitution $\sub$, written $\exists \sub$, is
$\cexists {\tvbs} \tvs = \tys$ where $\tvbs = \rng \sub$, $\tvs = \dom \sub$
and $\sub(\tvs) = \tys$.

\begin{definition}[Unifier]
  A substitution $\sub$ is a unifier of $\c$ if $\exists \sub$ entails $\c$.
  A unifier $\sub$ of $\c$ is \emph{most general} when $\exists \sub$ is equivalent
  to $\c$.
\end{definition}

\begin{lemma}[Simple inversion of unifiers]
  \label{lem:unifier-simple-inversion}
  ~
  \begin{itemize}
    \item If $\sub$ is a unifier of $\cunif \ta \tb$, then $\sub(\ta) = \sub(\tb)$.
    \item For simple $\ca, \cb$, if $\sub$ is a unifier of $\ca \cand \cb$, then $\sub$ is a unifier of $\ca$ and $\cb$.
    \item For simple $\c$, if $\sub$ is a unifier of $\cexists \tv \c$, then $\sub\where{\tv \is \t}$ is a unifier of $\c$ for some $\t$.
    \item For simple $\c$, if $\sub$ is a unifier of $\cfor \tv \c$, then $\sub$ is a unifier of $\c$.
  \end{itemize}
  \begin{proof}
    Follows by simple inversion.
  \end{proof}
\end{lemma}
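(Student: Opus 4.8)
The plan is to unfold the definition of unifier, namely that $\sub$ unifies $\c$ iff $\exists\sub \centails \c$, and then, for each of the four clauses, invert the outermost constraint former in the entailed constraint. The only delicate point is that we must connect the \emph{semantic} entailment with the \emph{syntactic} conclusions ($\sub(\ta)=\sub(\tb)$, and a witness type $\t$), so I would first record a ``generic valuation'' lemma: for an idempotent $\sub$ there is an injective ground valuation $h$, sending each variable (in particular each member of $\rng\sub$) to a distinct fresh nullary ground type, such that $h\circ\sub$ satisfies $\exists\sub$, such that $h$ reflects equality ($h(\t)=h(\tp)\implies\t=\tp$), and such that every ground type reads back along $h$. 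This is the free/Herbrand model of finite types made explicit, and it is the tool that upgrades ground facts to syntactic ones.

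Clauses 2 and 4 are then immediate, and are exactly the ``simple inversion'' the statement alludes to. For the conjunction, I would take any $\semenv$ with $\semenv\th\exists\sub$; from $\exists\sub\centails\ca\cand\cb$ we get $\semenv\th\ca\cand\cb$, and since $\ca,\cb$ are simple the derivation uses no contextual rule, so \Rule{Conj} inverts to $\semenv\th\ca$ and $\semenv\th\cb$; re-abstracting over $\semenv$ yields $\exists\sub\centails\ca$ and $\exists\sub\centails\cb$. For the universal clause, given $\semenv\th\exists\sub$ we obtain $\semenv\th\cfor\tv\c$, invert \Rule{Forall} to get $\semenv\where{\tv\is\gt}\th\c$ for every $\gt$, and instantiate $\gt \is \semenv(\tv)$ so that $\semenv\where{\tv\is\semenv(\tv)}=\semenv$ gives $\semenv\th\c$; hence $\exists\sub\centails\c$. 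Clause 1 uses the generic valuation once: since $h\circ\sub$ satisfies $\exists\sub$ we get $h\circ\sub\th\cunif\ta\tb$, so by \Rule{Unif} we have $h(\sub(\ta))=h(\sub(\tb))$, and because $h$ reflects equality this forces $\sub(\ta)=\sub(\tb)$.

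The main obstacle is clause 3, the existential, because I must manufacture a \emph{single syntactic} witness type $\t$ that works in \emph{every} model, whereas semantic inversion of \Rule{Exists} only supplies a per-model ground witness. My plan is to route the argument through the generic valuation: from $h\circ\sub\th\cexists\tv\c$, \Rule{Exists} gives a ground $\gt$ with $(h\circ\sub)\where{\tv\is\gt}\th\c$; reading $\gt$ back along $h$ produces a type $\t$ with $h(\t)=\gt$, and I take the extension $\sub\where{\tv\is\t}$.

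The crux is then the canonicity of the free model: satisfaction of a simple constraint under the generic instance $h\circ\phi$ should coincide with $\phi$ being a unifier of that constraint, so that the single generic witness certifies the entailment $\exists(\sub\where{\tv\is\t})\centails\c$ uniformly over all models. I expect this generic-model correspondence, together with its extension beyond pure unification constraints to the $\clet$, $\capp$, $\cpinst$ fragments that simplicity still permits, to be the part requiring the most care; it is standard in the Pottier--R\'emy framework, so I would either cite it or establish it by induction on the structure of the simple constraint, which is where the real work sits.
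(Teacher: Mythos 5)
Your handling of the conjunction and universal clauses is correct, and it is precisely the paper's one-line proof: for simple $\ca$, $\cb$, $\c$ no satisfaction derivation can end in \Rule{Match-Ctx}, so \Rule{Conj} and \Rule{Forall} invert, and re-abstracting over environments gives the entailments. You are also right that clauses 1 and 3 need a genuine bridge from semantic entailment to syntactic facts, something the paper's terse proof leaves implicit. The problem is that the bridge you build---the generic valuation lemma---is unsatisfiable as stated. A ground valuation acts on types as a substitution and therefore fixes every ground type; taking $\tp = h(\tva)$, which is ground, gives $h(\tp) = \tp = h(\tva)$ while $\tp \neq \tva$, so ``$h$ reflects equality'' fails for \emph{any} choice of codes, and for the same reason it is incompatible with ``every ground type reads back along $h$''. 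Moreover this signature has no infinite stock of nullary ground types (essentially $\tunit$ and finitely many record names), so your codes must be composite, creating further collisions: if $h(\tvb) = h(\tva) \to \tunit$ then $h(\tvb) = h(\tva \to \tunit)$ although $\tvb \neq \tva \to \tunit$. The same observation refutes the ``canonicity'' claim at the crux of your clause 3: for a fixed generic $h$, the simple constraint $\cunif \tva \gt$ with $\gt = h(\tva)$ is satisfied by $h$ itself (the generic instance of the identity substitution), yet the identity is not a unifier of it. Everything must be relativized to codes chosen fresh for the finitely many types and constraints at hand. With that repair clause 1 goes through---or, more simply, argue contrapositively: if $\sub(\ta) \neq \sub(\tb)$ there is a separating ground valuation $\rho$ with $\rho(\sub(\ta)) \neq \rho(\sub(\tb))$, which needs no global injectivity at all.

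Even after relativizing, clause 3 remains a real gap in your plan. Read-back is not a function (any subset of code occurrences in the witness may be left as ground types), so you need the maximal read-back, and one avoiding $\dom\sub$, otherwise $\sub\where{\tv \is \t}$ is not idempotent; and the statement you then rely on---satisfaction under the fresh-coded generic instance implies being a unifier, for \emph{every} simple constraint---is not citable: \citet*{Pottier-Remy/emlti} do not prove it, and the simple fragment here additionally contains regional let-constraints $\cletr \x \tv \tvs \ca \cb$ and incremental instantiations $\cpinst \inst \tv \t$, whose ground-region semantics your induction would have to handle from scratch. So the proposal settles the two easy clauses and reduces the hard one to an unproven claim of at least the same strength as the lemma itself. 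The economical route, and the one consistent with how the paper proceeds in the very next lemma (\cref{lem:unifier-abs-equiv}), is to invoke the existence of most general unifiers (principal solved forms) for simple constraints: $\exists\sub \cand \c$ is satisfiable, its solved form yields a most general unifier $\subp$ extending $\sub$, and $\t = \subp(\tv)$ is the uniform witness. If you prefer to keep the Herbrand-style development self-contained, the genericity induction over the full simple fragment is the actual proof obligation and has to be written out.
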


\begin{lemma}
  \label{lem:unifier-abs-equiv}
  If $\sub$ unifies $\cexists \tv \c$, then there exists a unifier $\subp$ that extends $\sub$ with $\tv$,
  where $\subp$ is most general unifier of $\exists \sub \cand \c$.

  Then $\cabs \tv \c$ is equivalent to $\cabs \tv \sigma \leq \tv$ under $\sub$, where $\ts = \tfor \tvbs \subp(\tv)$ and
  $\tvbs = \fvs {\subp(\tv)} \setminus \rng \sub$. We write this equivalent constraint abstraction as $\csem {\cabs \tv \c}_\sub$.
  \begin{proof}
    See \citet*{\BBemlti}.
  \end{proof}
\end{lemma}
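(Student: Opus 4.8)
The plan is to adapt the standard argument connecting constraint solving to \ML-style generalization, due to \citet*{Pottier-Remy/emlti}, to our setting. Throughout I would work under the assumption that $\c$ is simple, which is how the lemma is actually used: it is applied only to the (already solved) body of a let-abstraction, and simple constraints enjoy the usual inversion and congruence properties recorded in \cref{lem:unifier-simple-inversion}.

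First I would establish the existence of the extended unifier $\subp$. Since $\sub$ unifies $\cexists \tv \c$, we have $\exists\sub \centails \cexists \tv \c$, so $\exists\sub \cand \c$ is satisfiable with a suitable witness for $\tv$. A satisfiable simple constraint admits a most general unifier, computed by our unification algorithm (\cref{fig:unification}): it terminates (\cref{lem:unification-termination}) on a solved form (\cref{def:solved-form}) from which $\subp$ is read off, and preservation (\cref{lem:unification-preservation}) guarantees that the resulting $\subp$ is most general. Because $\exists\sub$ pins down the variables of $\dom\sub$ exactly, $\subp$ must agree with $\sub$ on those variables and additionally assign $\tv$; hence $\subp$ extends $\sub$ with $\tv$.

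Next I would prove the semantic equivalence. Unfolding the interpretation gives $\semenv(\cabs \tv \c) = \set {\gt : \semenv\where{\tv \is \gt} \th \c}$ in any $\semenv$ agreeing with $\sub$ on the outer free variables, and I would show this set coincides with the set of ground instances of the scheme $\ts = \tfor \tvbs {\subp(\tv)}$, which is exactly $\semenv(\cabs \tv {\cleq \ts \tv})$. For the forward inclusion, if $\semenv\where{\tv \is \gt} \th \c$, then by most-generality of $\subp$ the value $\gt$ factors through $\subp(\tv)$ via some substitution of its free variables; the free variables already fixed by the outer context lie in $\rng\sub$, so the only remaining degrees of freedom are $\tvbs = \fvs {\subp(\tv)} \setminus \rng \sub$, exhibiting $\gt$ as a ground instance of $\ts$. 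The reverse inclusion is immediate, since $\subp$ unifies $\c$ and therefore every instantiation of the generalizable variables $\tvbs$ yields a solution of $\c$. This equivalence holds uniformly for every $\semenv$ extending $\sub$, which justifies the notation $\csem {\cabs \tv \c}_\sub$.

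The main obstacle, and the only genuinely delicate point, is justifying the precise choice $\tvbs = \fvs {\subp(\tv)} \setminus \rng \sub$ of generalizable variables. One must argue that a variable of $\subp(\tv)$ lying in $\rng\sub$ is constrained by the outer context and hence \emph{cannot} be generalized, whereas a variable outside $\rng\sub$ is free and \emph{may} be. This is exactly the side condition $\tvs \disjoint \fvs \G$ of \Rule{Gen} transported into the constraint world; I would phrase it via the determinacy relation $\cdetermines \c \tvs$ so that the generalizable variables are precisely the region variables \emph{not} determined by the outer context, lining the argument up with \Rule{S-Exists-Lower} in the solver. Once this is pinned down, the remaining bookkeeping is routine.
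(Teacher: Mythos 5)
Your proposal is, in substance, the argument the paper defers to: the paper's own proof of this lemma is a bare citation to \citet*{Pottier-Remy/emlti}, and your reconstruction---existence of a most general unifier $\subp$ of $\exists \sub \cand \c$ extending $\sub$, followed by a two-inclusion argument that the denotation of $\cabs \tv \c$ in any environment compatible with $\sub$ is the set of ground instances of $\ts = \tfor \tvbs {\subp(\tv)}$, with the factoring-through-$\subp$ step justifying $\tvbs = \fvs {\subp(\tv)} \setminus \rng \sub$---is exactly that standard development. Reading the lemma as implicitly restricted to simple $\c$ is also the right call: it is only ever invoked, through the ``Let inversion of unifiers'' lemma, on constraints $\cinfer \ea \tv$ generated from simple terms, and for non-simple $\c$ the conclusion can genuinely fail, since an abstraction containing a suspended match constraint need not denote the instance set of any ordinary scheme---that failure is precisely what the partial type schemes of \cref{sec:solving} exist to handle.

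The one step that does not hold as written is your justification for the existence of $\subp$. The rewriting system of \cref{fig:unification-algorithm} operates on unification problems $\up$ only, whereas in the lemma's actual uses $\c$ is the constraint generated from an arbitrary simple term, so it contains nested let-constraints and instantiation constraints $\capp \x \t$ (even a lambda generates a monomorphic let-constraint); the unifier cannot take such a constraint as input, so \cref{lem:unification-termination} and \cref{lem:unification-preservation} do not apply to $\exists \sub \cand \c$. The existence of a most general unifier for constraints of this shape is the actual content of the cited result: one either proceeds by induction on the let-structure of $\c$, eliminating inner let-constraints first (using the let-inversion lemma together with this very lemma on the inner abstractions, which is how \citet*{Pottier-Remy/emlti} arrange the proof), or one appeals to the full solver's \cref{thm:termination} and \cref{thm:preservation} specialized to simple constraints---not to the unification algorithm alone. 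Finally, the detour through $\cdetermines \c \tvs$ at the end is unnecessary (though not wrong): the factoring argument you already give---variables of $\subp(\tv)$ lying in $\rng \sub$ are pinned by the ambient environment, the remaining ones are free to vary---is by itself the complete justification for the choice of $\tvbs$; determinacy is a device for the solver's lowering rule \Rule{S-Exists-Lower} and adds nothing to the semantic statement.
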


\begin{lemma}[Let inversion of unifiers]
  For simple $\ca, \cb$.
  If $\sub$ unifies $\clet \x \tv \ca \cb$, then
  $\sub$ unifies $\cexists \tv \ca$ and
  $\sub$ unifies $\cletin \x {\csem {\cabs \tv \ca}_\sub} \cb$
  \begin{proof}
    Follows from \cref{lem:unifier-abs-equiv} and simple inversion.
  \end{proof}
\end{lemma}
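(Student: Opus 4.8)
The plan is to unfold unification ($\sub$ unifies $\c$ iff $\exists \sub \centails \c$) and argue semantically, fixing an arbitrary environment $\semenv$ with $\semenv \th \exists \sub$ and checking each target constraint in $\semenv$. The key enabling fact is that, with $\ca,\cb$ simple, \Rule{Simple-Let} makes $\clet \x \tv \ca \cb$ simple, so the derivation of $\semenv \th \clet \x \tv \ca \cb$ is a simple derivation and enjoys the clean inversion principle that \Rule{Match-Ctx} would otherwise block. For the first conjunct, fix $\semenv \th \exists \sub$; the hypothesis gives $\semenv \th \clet \x \tv \ca \cb$, and inverting \Rule{Let} yields $\gabs = \semenv(\cabs \tv \ca)$ with $\gabs \neq \eset$. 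Non-emptiness supplies a witness $\gt \in \gabs$, i.e.\ $\semenv\where{\tv \is \gt} \th \ca$, so $\semenv \th \cexists \tv \ca$ by \Rule{Exists}. As $\semenv$ was arbitrary, $\exists \sub \centails \cexists \tv \ca$, establishing that $\sub$ unifies $\cexists \tv \ca$.

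For the second conjunct I would feed this result back through \cref{lem:unifier-abs-equiv}. Since $\sub$ now unifies $\cexists \tv \ca$, that lemma gives the equivalence of the abstractions $\cabs \tv \ca$ and $\csem {\cabs \tv \ca}_\sub$ under $\sub$: for every $\semenv \th \exists \sub$ the two denote the same set of ground types, $\semenv(\cabs \tv \ca) = \semenv(\csem {\cabs \tv \ca}_\sub)$. Fixing such a $\semenv$ and once more inverting \Rule{Let} on $\semenv \th \clet \x \tv \ca \cb$, I obtain $\gabs = \semenv(\cabs \tv \ca) \neq \eset$ together with $\semenv\where{\x \is \gabs} \th \cb$. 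Substituting the equal set $\semenv(\csem {\cabs \tv \ca}_\sub)$ for $\semenv(\cabs \tv \ca)$ leaves both the non-emptiness premise and the body premise unchanged, so re-applying \Rule{Let} with the scheme-based abstraction gives $\semenv \th \cletin \x {\csem {\cabs \tv \ca}_\sub} \cb$. Hence $\exists \sub \centails \cletin \x {\csem {\cabs \tv \ca}_\sub} \cb$ and $\sub$ unifies it.

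The one point requiring care is keeping the quantifier scopes aligned: the abstraction equivalence delivered by \cref{lem:unifier-abs-equiv} holds precisely over the environments $\semenv \th \exists \sub$, and these are exactly the environments over which the entailment $\exists \sub \centails (-)$ quantifies. Matching the two scopes is what licenses using the set-equality $\semenv(\cabs \tv \ca) = \semenv(\csem {\cabs \tv \ca}_\sub)$ inside the inverted \Rule{Let} derivation; once that alignment is noted, the remainder is routine semantic bookkeeping and no induction is required.
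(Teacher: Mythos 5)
Your proposal is correct and takes essentially the same route as the paper, whose one-line proof (``follows from \cref{lem:unifier-abs-equiv} and simple inversion'') is precisely your two steps: simple inversion of \Rule{Let} (licensed because simplicity of $\ca, \cb$ excludes \Rule{Match-Ctx}, so the derivation ends in the syntax-directed rule) yields the nonempty denotation $\semenv(\cabs \tv \ca)$ and hence that $\sub$ unifies $\cexists \tv \ca$, after which \cref{lem:unifier-abs-equiv} justifies replacing the abstraction by the semantically equal $\csem {\cabs \tv \ca}_\sub$ in the re-applied \Rule{Let}. You have merely spelled out the semantic bookkeeping, including the correct observation that the environments quantified by $\exists \sub \centails (-)$ are exactly those over which the abstraction equivalence holds.
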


\begin{lemma}
  \label{lem:mgus}
  For two substitutions $\sub$, $\subp$. If $\exists \sub \centails \exists \subp$, there exists
  $\subpp$ such that $\sub = \subpp \compose \subp$.
  \begin{proof}
    Standard result, follows from definition of $\exists \sub$.
  \end{proof}
\end{lemma}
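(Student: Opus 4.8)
The plan is to prove the factorization semantically, using only the declarative reading of $\exists \sub$. Writing $\sub$ as $\where{\tvs \is \tys}$ with $\tvs = \dom \sub$ and $\tvbs = \rng \sub = \fvs \tys$ (so $\tvs \disjoint \tvbs$ by idempotence), the only fact I need about $\exists \sub = \cexists {\tvbs} {\cunif \tvs \tys}$ is its set of models: by \Rule{Exists} and \Rule{Multi-Unif}, a ground environment $\semenv$ satisfies $\exists \sub$ exactly when there is a ground assignment $\rho$ to the variables $\tvbs$ with $\semenv(\tv) = \rho(\sub(\tv))$ for every $\tv \in \tvs$, the values of $\semenv$ elsewhere being unconstrained. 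In words, the models of $\exists \sub$ are precisely the ground instances of $\sub$ on $\dom \sub$; the same unfolding applies to $\subp$. This is the only place the definition of $\exists \sub$ enters.

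First I would feed a single, carefully chosen model of $\exists \sub$ through the entailment hypothesis. Fix a grounding $\iota \colon \tvbs \to \Ground$ of the finitely many range variables of $\sub$, extended homomorphically to types, and take $\semenv^\star$ with $\semenv^\star(\tv) = \iota(\sub(\tv))$. By the characterization above $\semenv^\star \th \exists \sub$, so by hypothesis $\semenv^\star \th \exists \subp$; unfolding the latter yields a ground assignment $\rho'$ to $\rng \subp$ with $\iota(\sub(\tv)) = \rho'(\subp(\tv))$ for every $\tv \in \dom \subp$. I then set $\subpp$ to agree with $\sub$ away from $\rng \subp$ and to send each $\tvc \in \rng \subp$ to $\iota^{-1}(\rho'(\tvc))$. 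Since $\subp(\tv)$ mentions only variables of $\rng \subp$, extending homomorphically gives $\iota(\subpp(\subp(\tv))) = \rho'(\subp(\tv)) = \iota(\sub(\tv))$ for $\tv \in \dom \subp$, and cancelling $\iota$ yields $\subpp \compose \subp = \sub$.

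The step that makes this result ``standard'' rather than immediate, and the main obstacle, is justifying the pull-back $\iota^{-1}$: I must choose $\iota$ so that (i) its homomorphic extension is injective on types---so that $\iota$ can be cancelled---and (ii) each witness $\rho'(\tvc)$ lies in the image of that extension---so that $\iota^{-1}(\rho'(\tvc))$ is defined. Both hold once $\iota$ maps the finitely many variables of $\tvbs$ to pairwise independent ground types, chosen large and structurally incomparable enough that no $\iota$-image is a proper subterm of a type built from the others; such a choice exists because $\Ground$ is an infinite free term algebra over the type constructors (arrows alone already furnish infinitely many closed types). With such a rigid $\iota$, every ground type equal to some $\iota(\sub(\tv))$ decomposes uniquely over the $\iota$-images, which both forces each $\rho'(\tvc)$ into the image of $\iota$ and makes the extension injective---this is the classical grounding construction underlying the substitution-ordering results of \citet*{Pottier-Remy/emlti}. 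One final bit of bookkeeping: as is standard, the equation $\sub = \subpp \compose \subp$ is read on the free variables of the problem, with the fresh range variables introduced by $\subp$ renamed apart, so that defining $\subpp$ on $\rng \subp$ as above does not clash with $\sub$. The converse implication (that $\sub = \subpp \compose \subp$ gives $\exists \sub \centails \exists \subp$) is routine and not needed here.
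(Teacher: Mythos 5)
Your overall route is the right one, and since the paper's own ``proof'' is a one-line appeal to a standard result, there is no paper-specific argument to match: the semantic reading of $\exists \sub$ (its models are exactly the ground instances of $\sub$), a single generic model $\semenv^\star$, and a pull-back of the witness is indeed the classical Lassez--Maher--Marriott-style proof. However, the step you single out as the main obstacle is justified by an argument that does not work. First, the injectivity claim (i) is impossible as literally stated: since $\iota$ maps into the same ground-type algebra, the homomorphic extension satisfies $\iota(\tvb) = \gt_b = \iota(\gt_b)$, so it identifies the variable $\tvb$ with the ground type $\gt_b$; injectivity can only hold on types of size bounded below the sizes of the $\gt_b$, which is what ``large'' must be made to mean. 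Second, and more seriously, rigidity of a single grounding does \emph{not} force the witness into the (small-preimage part of the) image. Your condition---no $\iota$-image is a proper subterm of a type built from the others---controls occurrences \emph{of} the $\gt_b$ but says nothing about matching \emph{into} them: in the $\tunit$/arrow fragment every large ground type is an arrow tower, so a substitution like $\subp = \where{\tv \is \tvc \to \tvd}$ happily decomposes $\gt_b$ at the model $\semenv^\star$ when $\sub(\tv) = \tvb$; the resulting witness values $\rho'(\tvc), \rho'(\tvd)$ are proper subterms of $\gt_b$, which are trivially ``in the image'' (any ground type is its own $\iota$-image), so the pull-back $\iota^{-1}$ silently picks the wrong, ground preimage and the computed $\subpp$ fails $\sub = \subpp \compose \subp$ at exactly the variable positions. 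In this example the entailment hypothesis happens to fail, so the lemma is safe---but that is the point: your proof never uses the entailment beyond the one model $\semenv^\star$, so it cannot distinguish this situation from a legitimate one.

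The missing ingredient is a ``no-digging'' lemma extracted from the entailment at \emph{varying} models: if $\subp(\tva)$ has a constructor at a position where $\sub(\tva)$ has a variable $\tvb$, then choosing a grounding $\rho$ with $\rho(\tvb)$ headed by a different constructor (always possible, e.g.\ $\tunit$ versus an arrow) produces a model of $\exists \sub$ that refutes $\exists \subp$, contradicting $\exists \sub \centails \exists \subp$. Once constructor positions of $\subp$ are known to land on constructor positions of $\sub$, every witness value at $\semenv^\star$ is forced to be $\iota(S)$ for $S$ a subterm of the range of $\sub$, and your size-restricted injectivity then makes $\subpp$ well defined and correct. Alternatively, in this paper's type language you can make your single-model argument honest by exploiting the $n$-ary products $\Pi\iton \ti$: there are infinitely many head shapes, so each $\gt_b$ can be given a head arity occurring nowhere in $\sub$ or $\subp$, after which the $\gt_b$ genuinely behave like Skolem constants and digging is impossible at $\semenv^\star$ itself. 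One last bit of bookkeeping you should also repair: $\iota$ (hence $\semenv^\star$) must be defined, rigidly, on all variables of $\dom \subp \setminus \dom \sub$ and not only on $\rng \sub$, since the two substitutions need not have the same domain and the factorization must also hold at those variables.
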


\subsection{Soundness and completeness of constraint generation}

\begin{lemma}
  \label{lem:ctxt-gen-correctness}
  For any term context $\E$, term $\e$, $\ctxinfer \E \t \tp \where{\cinfer \e \t} = \cinfer {\E\where{\e}} \tp$.
  \begin{proof}
    By induction on the structure of $\E$.
  \end{proof}
\end{lemma}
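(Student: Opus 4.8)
The plan is to proceed by structural induction on the term context $\E$, following exactly the shared recursive shape of the two functions $\ctxinfer {-} {-} {-}$ and $\cinfer {-} {-}$, and to establish \emph{syntactic} equality of the two sides (not merely equivalence). The only real content of the lemma is that these two recursions are laid out in parallel: in each defining clause of $\ctxinfer \E \t \tp$, one immediate subterm is replaced by the hole carrying the hole-type $\t$, while the surrounding constraint skeleton is identical to the corresponding clause of $\cinfer {\E\where\e} \tp$. Consequently, after filling the hole with $\cinfer \e \t$ and invoking the induction hypothesis on the recursive sub-context, the two constraints become literally equal.

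First I would dispatch the base case $\E = \hole$, where $\ctxinfer \hole {} {}$ is defined only when the hole-type and expected-type coincide, so necessarily $\t = \tp$; then $\ctxinfer \hole \t \t \where{\cinfer \e \t} = \cinfer \e \t = \cinfer {\hole\where\e} \t$ since $\hole\where\e = \e$. For the inductive cases I would unfold the defining clause of $\ctxinfer$ on the left and of $\cinfer$ on the right. Taking $\E = \eapp \E' \e'$ as representative, $\ctxinfer {\eapp \E' \e'} \t \tp = \cexists {\tva,\tvb} \cunif \tva {\tvb \to \tp} \cand \ctxinfer {\E'} \t \tva \cand \cinfer {\e'} \tvb$, whereas $\cinfer {\eapp {\E'\where\e} \e'} \tp = \cexists {\tva,\tvb} \cunif \tva {\tvb \to \tp} \cand \cinfer {\E'\where\e} \tva \cand \cinfer {\e'} \tvb$; filling the hole touches only the middle conjunct, and the \ih gives $\ctxinfer {\E'} \t \tva \where{\cinfer \e \t} = \cinfer {\E'\where\e} \tva$, so the two coincide. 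Every other clause in which a single subterm is a context — $\eapp \e \E$, $\elet \x \E \e$, $\elet \x \e \E$, $\eannot \E \tvs \t$, the tuple, the explicit/implicit projections and records, the boxing and unboxing forms, and the typed-hole clause — follows by the identical one-line calculation.

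Two groups of cases deserve explicit mention. The clauses with conditional definitions, namely $\efield \E \elab$, $\erecord {\dots}$ and their explicit counterparts, branch on the purely global predicates $\labuni \elab \T$ and $\labsuni \elabs \T$; since these depend only on the label environment $\labenv$ and not on the contents of the hole, the two sides select the same branch, and within each branch the argument is as above. The polymorphic-expected-type clause $\ctxinfer \E \tp {\tfor \tvs \t} = \cfor \tvs \ctxinfer \E \tp \t$ mirrors $\cinfer {\E\where\e} {\tfor \tvs \t} = \cfor \tvs \cinfer {\E\where\e} \t$, and is handled by the same induction (either by an outer case analysis on the shape of the expected type, or by folding the expected type into the inductive measure).

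The single point requiring genuine care — and the closest thing to an obstacle — is that the claim is an \emph{equality} of constraints, so the fresh type variables introduced on the two sides (the $\tva, \tvb$ of applications, the $\tvs$ of tuples and records, and so on) must be chosen identically rather than merely consistently renamed. This is guaranteed by reading both definitions under a shared fresh-variable convention: $\ctxinfer$ and $\cinfer$ introduce the same fresh variables in the same clauses, and freshness is taken relative to the same expected type, so the names agree on the nose. With that convention fixed, each inductive step reduces to a syntactic identification of constraints, and no step is mathematically deep.
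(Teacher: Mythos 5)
Your proposal is correct and takes essentially the same route as the paper, whose entire proof is ``by induction on the structure of $\E$'': your clause-by-clause unfolding---including the observation that the branching on $\labuni \elab \T$ and $\labsuni \elabs \T$ depends only on $\labenv$ and not on the hole, and the remark about the clause for polymorphic expected types---is exactly what that induction amounts to. The one caveat you rightly flag, that the claimed equality holds only under a shared fresh-variable convention (and modulo the paper's implicit identification of constraints up to renaming and conjunct order), is also how the paper intends the statement to be read.
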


\begin{lemma}
  \label{lem:erasure-constraint-gen}
  For any term $\e$, $\cerase {\cinfer \e \t} = \cinfer {\eerase \e} \t$.
  \begin{proof}
    By induction on $\e$.
  \end{proof}
\end{lemma}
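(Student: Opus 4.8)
The plan is to prove the identity by structural induction on the term $\e$, following the case split of the constraint generator in \cref{fig/constraint-gen}. The two operations are, by construction, aligned construct-by-construct: the constraint erasure $\cerase{-}$ (\cref{def:erasure}) replaces every suspended match constraint by $\ctrue$ and commutes with all the other formers ($\cand$, $\cexists$, $\cfor$, $\clet$, and the leaves $\cunif$, $\capp$), while the term erasure $\eerase{-}$ (\cref{fig/oml/erasure}) replaces every not-yet-elaborated fragile subterm $\e^i$ by a typed hole $\emagic{\dots}$ around its subterms and is homomorphic everywhere else. So the proof reduces to checking, case by case, that erasing the generated constraint yields exactly the constraint generated from the erased term.

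The robust cases are immediate. For the standard \ML formers ($\x$, $\eunit$, $\efun \x \e$, $\eapp \ea \eb$, $\elet \x \ea \eb$, $\eannot \e \tvs \tp$), the explicit polytype and record forms, and the closed-world cases of $\efield \e \elab$ and $\erecord{\dots}$ (those where $\labuni \elab \T$, resp.\ $\labsuni \elabs \T$), the term erasure is homomorphic and the generated constraint contains no top-level suspended match. Here $\cerase{-}$ simply distributes through the logical connectives and the desugared \OML constraints (e.g.\ $\labfrom \elab \T \leq \ta \to \tb$, which already contains no match), so the claim follows by unfolding both sides and invoking the induction hypothesis on the immediate subterms.

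The interesting cases are the fragile ones: $\einst \e$, $\epoly \e$, the non-unique $\efield \e \elab$, the non-unique $\erecord{\dots}$ (and tuple projection $\eproj \e j$ in the full system). In each, $\cinfer{\e^i}{\t}$ wraps the subterm constraints in an existential prefix and conjoins (or takes as the body of a let) a suspended $\cmatch \tv \cbrs$. Erasing kills the match, leaving the existentially quantified conjunction of the subterm constraints; on the other side, $\eerase{\e^i}$ is a typed hole $\emagic{\dots}$ over the erased subterms, and $\cinfer{\emagic{\es}}{\t}$ is precisely that same existential over the conjunction of the subterm constraints. Applying the induction hypothesis to each subterm then makes the two sides coincide.

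The step needing care---and the \emph{main obstacle}---is that this coincidence holds only up to trivial structural identities, so the ``$=$'' must be read at the right granularity. In the elimination-form cases the erased constraint carries a residual $\cand \ctrue$ left behind by the vanished match, which must be absorbed using $\ctrue$ as a unit for conjunction. The boxing case $\epoly \e$ is sharper: $\cerase{\cinfer{\epoly \e}{\t}}$ is a let-constraint with trivial body, $\clet \x \tv {\cinfer{\eerase \e}{\tv}} \ctrue$, whereas $\cinfer{\eerase{\epoly \e}}{\t}$ is the bare existential $\cexists \tv {\cinfer{\eerase \e}{\tv}}$; these are not literally identical syntax trees. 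I would therefore read the statement modulo the structural congruence on constraints already used for the solver (associativity and commutativity of conjunction), extended with $\ctrue$ as a unit and with the identification of a let-binding whose bound term variable is unused with the corresponding existential---equivalently, one proves the claim up to $\cequiv$, which is all its downstream uses require. Once this congruence is fixed, every case closes by routine unfolding plus the induction hypothesis.
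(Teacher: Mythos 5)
Your proposal is correct and takes the same route as the paper, whose entire recorded proof is just ``By induction on $\e$.'' Your additional observation---that in the fragile cases ($\einst \e$, $\epoly \e$, non-unique projections and records) the two sides agree only up to $\cand\,\ctrue$ unit laws and the identification of a let-constraint with unused bound term variable with an existential, so the ``$=$'' must be read modulo this structural congruence or as $\cequiv$---is a genuine refinement that the paper's one-line proof silently glosses over, and your justification that this weaker reading suffices is sound, since both sides are simple constraints (for which equivalence is contextual by the paper's own congruence lemmas) and every downstream use of the lemma only needs satisfiability and entailment to be preserved.
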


\begin{lemma}[Simple soundness and completeness]
  \label{lem:simple-soundness-completeness}
  For simple terms $\e$.
  $\sub(\G) \thsimplesd \e : \sub(\t)$ if and only if $\sub$ is a unifier of $\csem {\G \th \e : \t}$.
  \begin{proof}
    By induction on $\e \simple$.
  \end{proof}
\end{lemma}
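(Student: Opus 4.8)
The plan is to induct on the derivation of $\e \simple$, establishing both directions of the biconditional simultaneously in each case. The first observation I would record is that when $\e$ is simple the generated constraint $\cinfer{\G \th \e}{\t}$ (the constraint denoted $\csem{\G \th \e : \t}$ in the statement) is itself \emph{simple}: a routine induction on the definition of constraint generation shows that none of the clauses for simple constructs emits a suspended match constraint. Consequently the tools restricted to simple constraints all apply — in particular the simple inversion principle for unifiers (\cref{lem:unifier-simple-inversion}) and the congruence of simple equivalence (\cref{corollary:cong-simple-equiv}) — while on the typing side the matching tool is the simple inversion principle for the syntax-directed judgment (\cref{lem:simple-inversion-sd}). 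Since the context is encoded by the clauses $\cinfer{\eset \th \e}{\t} = \cinfer \e \t$ and $\cinfer{\x:\ts,\G \th \e}{\t} = \clet \x \tv {\ts \leq \tv}{\cinfer{\G \th \e}{\t}}$, I would first peel off the leading let-constraints, using the Let-inversion lemma for unifiers to convert each context binding into a scheme hypothesis, reducing the goal to $\cinfer \e \t$ under a $\G$ installed as constraint-level abstractions.

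For the structurally routine cases — the variable $\x$, the unit $\eunit$, abstraction $\efun \x \e$, application $\eapp \ea \eb$, annotation, tuples, and all of the \emph{explicit} and \emph{closed-world} constructs ($\exinst$, $\expoly$, $\exrecord$, $\exfield$, the closed-world $\erecord$ and $\efield$, and the magic hole) — the argument is uniform: unfold the matching clause of $\cinfer{\cdot}{\cdot}$, apply the simple inversion lemma for unifiers to decompose the conjunctive, existential, and universal structure into subgoals about the subterms, and invoke the induction hypothesis on each subterm, reassembling the result via the matching syntax-directed rule (and its inversion, \cref{lem:simple-inversion-sd}, for the reverse direction). The variable case reduces to $\cinst \x \t$, i.e. $\capp \x \t$, whose semantics says $\sub(\t)$ is an instance of the scheme bound to $\x$, exactly matching \Rule{Var-SD}; the closed-world cases additionally rely on the side-conditions $\labsuni \elabs \T$ and $\labuni \elab \T$ occurring identically in the typing rule and in the generator, and on the fact that the label-instantiation constraint $\labfrom \elab \T \leq \tya \to \tyb$ desugars to the same existential equality used by \Rule{Lab-Inst}.

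The crux — and the step I expect to be the main obstacle — is the let case $\elet \x \ea \eb$, whose constraint is $\clet \x \tv {\cinfer \ea \tv}{\cinfer \eb \t}$, to be matched against \Rule{Let-SD}, which generalizes the type of $\ea$ over a set $\tvs$ with $\tvs \disjoint \fvs{\sub(\G)}$. Here I would invoke \cref{lem:unifier-abs-equiv} together with the Let-inversion lemma for unifiers: the abstraction $\cabs \tv {\cinfer \ea \tv}$ under $\sub$ is equivalent to $\csem{\cabs \tv {\cinfer \ea \tv}}_\sub$, which denotes the scheme $\tfor \tvbs {\subp(\tv)}$, where $\subp$ extends $\sub$ to a most general unifier of $\exists \sub \cand \cinfer \ea \tv$ and $\tvbs = \fvs{\subp(\tv)} \setminus \rng \sub$. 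The delicate point is showing that this syntactically computed generalizable set $\tvbs$ coincides with the variables the ML side-condition permits to be quantified — that ``not pinned down by $\sub$'' (absent from $\rng \sub$) matches ``not free in $\sub(\G)$''. This is the constraint-level incarnation of the Hindley--Milner generalization condition, and once it is in hand the induction hypothesis on $\ea$ at type $\tv$ supplies the required scheme for $\x$, after which the induction hypothesis on $\eb$ under the extended context closes both directions.

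The abstraction case $\efun \x \e$ is the monomorphic analogue and serves as a sanity check: its generated let-constraint binds the parameter variable via a body that merely equates it to an externally existentially-bound variable, so the abstraction has no generalizable variables and reduces to a plain hypothesis $\x : \ta$, matching the absence of generalization in the function rule (clause~(ii) of \cref{lem:simple-inversion-sd}). Threading the context-encoding lets through this same abstraction machinery ensures the variable-lookup and let cases are handled by one coherent argument rather than two, which is what makes the induction go through cleanly.
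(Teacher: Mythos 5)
Your proposal is correct and takes essentially the same approach as the paper, whose proof is stated simply as induction on the derivation of $\e \simple$: your case analysis via the simple-inversion lemmas (\cref{lem:unifier-simple-inversion}, \cref{lem:simple-inversion-sd}), the peeling of the context-encoding let-constraints, and the treatment of \texttt{let} through \cref{lem:unifier-abs-equiv} and Let-inversion are exactly the intended instantiation of that induction. Your identification of the correspondence between $\fvs{\subp(\tv)} \setminus \rng \sub$ and the side-condition $\tvs \disjoint \fvs{\sub(\G)}$ as the crux of the \texttt{let} case fills in the one genuinely delicate step the paper leaves implicit.
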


\begin{theorem}[Soundness and completeness]
  \label{thm:soundness-and-completeness}
  $\Th \e : \sub(\tv)$ if and only if $\sub$ is a unifier of $\csem {\e : \tv}$
  \begin{proof}
    By induction on the number $n$ of implicit terms in $\e$.
    \begin{proofcases}
      \proofcase{$n$ is $0$}

	\begin{llproof}
	  \simplePf{\e}{Premise}
	  \iffPf{\eset\thsimplesd \e : \sub(\tv)}{\sub \text{ unifies } \csem {\e : \tv}}{\cref{lem:simple-soundness-completeness}}
	  \iffPf{\eset \thsimplesd \e : \sub(\tv)}{\Th \e : \sub(\tv)}{When $\e \simple$}
\Hand	  \iffPf{\Th \e : \sub(\tv)}{\sub \text{ unifies } \csem {\e : \tv}}{Above}
	\end{llproof}

      \proofcase{$n$ is $k + 1$}

	\begin{proofcases}
	  \proofcase{$\implies$}

	  \begin{proofcases}

	    \proofcasederivation
	      {Can-Proj-I}
	      {\eshape \E \e {\any \tvcs \Pi\iton \tvcs} \\ \sub(\G) \Th \E\where{\exproj \e j n} : \sub(\tv)}
	      {\Th \E\where{\eproj \e j} : \sub(\tv)}

	      \begin{llproof}
		\ThTypPf{\sub(\G)}{\E\where{\exproj \e j n}}{\sub(\tv)}{Premise}
		\UnifierPf{\sub}{\csem {\G \th \E\where{\exproj \e j n} : \tv}}{By \ih}
		\eqPf{\csem {\G \th \E\where{\exproj \e j n } : \tv}}{\cletG {\cinfer {\E\where{\exproj \e j n}} \tv}}{By definition}
		\continueeqPf{\cletG {\ctxinfer \E \tvb \tv}\where{\cinfer {\exproj \e j n} \tvb}}{\cref{lem:ctxt-gen-correctness}}
		\equivPf{\cinfer {\exproj \e j n} \tvb}{\cexists {\tvaa \tvcs} {\cinfer \e \tvaa \cand \cunif \tvaa {\Pi\iton \tvcs} \cand \cunif \tvb \tvc_j}}{By definition}
		\continueequivPf
                  {\begin{tabular}[t]{L}\cexists {\tvaa} \cinfer \e \tvaa
                  \\ \quad \cand \cmatched \tvaa {\any \tvcs \Pi\iton
                \tvcs}{\cbranch {\cpatprod \tvc j} {\cunif \tvb \tvc}}
                        \end{tabular}}{\ditto}
		\UnifierPf{\sub}{\cletG \ctxinfer \E \tvb \tv \where{\cexists \tvaa \cinfer \e \tvaa \cand \ldots}}{Above}
		\eshapePf{\E}{\e}{\any \tvcs \Pi\iton \tvcs}{Premise}
		\LetPf{\C}{\cletG \ctxinfer\E \tvb \tv \where{\cexists \tvaa \cinfer \e \tvaa \cand \hole}}{}
		\vdashPf{\semenv}{\cerase{\C\where{\cunif \tvaa \gt}}}{Premise}
		\eqPf{\cexists \tvaa \cinfer \e \tvaa \cand \cunif \tvaa \gt}{\cexists \tvaa \cinfer {\eannot \e {} \gt} \tvaa}{By definition}
		\continueeqPf{\cinfer {\emagic {\eannot \e {} \gt}} \tvb}{\ditto}
		\eqPf{\cerase {\C\where{\cunif \tvaa \gt}}}{\cerase {\cletG \ctxinfer \E \tvb \tv \where{\cinfer {\emagic {\eannot \e {} \gt}} \tvb}}}{\ditto}
		\continueeqPf{\cerase {\cletG \cinfer {\E\where{\emagic {\eannot \e {} \gt}}} \tv}}{\cref{lem:ctxt-gen-correctness}}
		\continueeqPf{\cletG \cerase {\cinfer {\E\where{\emagic {\eannot \e {} \gt}}} \tv}}{By definition}
		\continueeqPf{\cletG \cinfer {\eerase {\E\where{\emagic {\eannot \e {} \gt}}}} \tv}{\cref{lem:erasure-constraint-gen}}
		\UnifierPf{\semenv}{\cletG \cinfer {\eerase {\E\where{\emagic {\eannot \e {} \gt}}}} \tv}{Above}
		\ThTypPf{}{\eerase {\E\where{\emagic {\eannot \e {} \gt}}}}{\semenv(\tv)}{By \ih}
		\thTypPf{\eset}{\eerase {\E\where{\emagic {\eannot \e {} \gt}}}}{\semenv(\tv)}{\cref{lem:decanonicalization-typing}}
		\decolumnizePf
		\eqPf{\shape \gt}{\any \tvcs \Pi\iton \tvcs}{$\implies$E}
		\shapePf \C \tvaa {\any \tvcs \Pi\iton \tvcs}{Above}
		\UnifierPf{\sub}{\C\where{\cmatch \tvaa {\cbranch {\cpatprod \tvc j} {\cunif \tvb \tvc}}}}{By \Rule{Match-Ctx}}
		\eqPf{\cinfer {\eproj \e j} \tvb}{\cexists \tvaa \cinfer \e \tvaa \cand \cmatchdots \tvaa}{By definition}
		\eqPf{\C\where{\cmatchdots \tvaa}}{\cletG \ctxinfer \E \tvb \tv\where{\cexists \tvaa \cinfer \e \tvaa \cand \ldots}}{\ditto}
		\continueeqPf{\cletG \ctxinfer \E \tvb \tv \where{\cinfer {\eproj \e j} \tvb}}{Above}
		\continueeqPf{\cletG \cinfer {\E\where{\eproj \e j}} \tv}{\cref{lem:ctxt-gen-correctness}}
		\continueeqPf{\csem {\E\where{\eproj \e j} : \tv}}{}
\Hand  		\UnifierPf{\sub}{\csem {\E\where{\eproj \e j} : \tv}}{}
	      \end{llproof}

      \proofcase{\Rule{Can-Poly-I}, \Rule{Can-Use-I}, \Rule{Can-Rcd-I}, \Rule{Can-Rcd-Proj-I}}

	    \begin{llproof}
	      Similar arguments.
	    \end{llproof}
	  \end{proofcases}

	  \proofcase{$\impliedby$}

	  \begin{proofcases}
	    \proofcasederivation
	      {Can-Match-Ctx}
	      {\Cshape \C \tvaa {\any \tvcs \Pi\iton \tvcs} \\
	       \sub \text{ unifies } \C\where{\cmatched \tvaa {\any \tvcs \Pi\iton \tvcs} {\ldots}}}
	      {\sub \text{ unifies } \underbrace{\C\where{\cmatch \tvaa {\cbranch {\cpatprod \tvc j} {\cunif \tvb \tvc}}}}_{\cinfer \e \tv}}

	      \begin{llproof}
		\eqPf{\csem {\e : \t}}{\cletG \cinfer {\E\where{\eproj \e j}} \tv}{Premise}
    \decolumnizePf
		\eqPf{\C}{\cletG \ctxinfer \E \tvb \tv\where{\cexists \tv \cinfer \e \tv \cand \hole}}{Premise}
		\UnifierPf{\sub}{\C\where{\cmatched \tvaa {\any \tvcs \Pi\iton \tvcs} {\ldots}}}{Premise}
		\UnifierPf{\sub}{\csem {\E\where{\exproj \e j n} : \tv}}{Above (See $\implies$ direction)}
		\ThTypPf{}{\E\where{\exproj \e j n}}{\sub(\tv)}{By \ih}
		\thTypPf{\Gp}{\E\where{\emagic {\eannot \e {} \gt}}}{\tp}{Premise}
		\eqPf{\Gp}{\eset}{$\E\where{\emagic {\eannot \e {} \gt}}$ is closed}
		\ThTypPf{}{\E\where{\emagic {\eannot \e {} \gt}}}{\tp}{\cref{lem:decanonicalization-typing}}
		\UnifierPf{\where{\tv \is \tp}}{\csem {\E\where{\emagic {\eannot \e {} \gt}} : \tv}}{By \ih}
		\vdashPf{\semenv\where{\tv \is \semenv(\tp)}}{\cinfer {\E\where{\emagic {\eannot \e {} \gt}}} \tv}{By definition}
		\shapePf{\C}{\tvaa}{\any \tvcs \Pi\iton \tvcs}{Premise}
		\eqPf{\shape \gt}{\any \tvcs \Pi\iton \tvcs}{$\implies$E}
		\eshapePf{\E}{\e}{\any \tvcs \Pi\iton \tvcs}{Above}
		\ThTypPf{}{\E\where{\eproj \e j}}{\sub(\tv)}{By \Rule{Can-Proj-I}}
	      \end{llproof}

      \proofcase{$\epoly \e$, $\einst \e$, $\erecord {\overline{\elab = \e}}$, $\efield \e \elab$}

	    \begin{llproof}
	      Similar arguments.
	    \end{llproof}
	  \end{proofcases}
	\end{proofcases}
    \end{proofcases}
  \end{proof}
\end{theorem}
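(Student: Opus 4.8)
The plan is to prove the biconditional by induction on the number $n$ of fragile implicit constructs occurring in $\e$---exactly the measure that controls canonicalization in \cref{thm:canonicalization-typing}. The two difficult interfaces are (i) reducing everything for fully-elaborated, explicit terms down to the syntax-directed simple system, and (ii) transferring the \emph{typing-side} unicity conditions $\eshape \E \e \sh$ and $\Eshape \E \es \sh$ to the \emph{constraint-side} unicity condition $\Cshape \C \t \sh$, and back. I would treat the two directions of the iff in lockstep within each inductive case.

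For the base case $n = 0$ the term $\e$ is simple, so the only applicable canonical rule is \Rule{Can-Base} and $\Th \e : \sub(\tv)$ coincides with the syntax-directed judgment $\eset \thsimplesd \e : \sub(\tv)$. I would then invoke the simple soundness-and-completeness lemma \cref{lem:simple-soundness-completeness} at the empty context---legitimate precisely because we restrict to closed terms, so no guessed scheme leaks in from $\G$---to identify $\eset \thsimplesd \e : \sub(\tv)$ with ``$\sub$ is a unifier of $\csem{\e : \tv}$''. Chaining these equalities settles the base case.

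For the inductive step $n = k+1$ I would write $\e = \E\where{\e^i}$ with $\e^i$ the outermost implicit construct (one of $\eproj \e j$, $\einst \e$, $\efield \e \elab$, $\erecord{\overline{\elab = \e}}$, $\epoly \e$), illustrate the argument on \Rule{Can-Proj-I}, and claim the remaining cases are symmetric. In the forward direction, inversion of $\Th \E\where{\eproj \e j} : \sub(\tv)$ yields a unicity premise $\eshape \E \e \sh$ together with a canonical typing of the \emph{elaborated} term $\E\where{\exproj \e j n}$, which contains exactly $k$ implicit constructs; the inductive hypothesis gives that $\sub$ unifies $\csem{\E\where{\exproj \e j n} : \tv}$. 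Using the context-composition lemma \cref{lem:ctxt-gen-correctness} I would rewrite $\csem{\E\where{\eproj \e j} : \tv}$ into the shape $\C\where{\cmatch \tvaa \cbrs}$, where $\C$ fills the hole of $\ctxinfer \E \tvb \tv$ with $\cexists \tvaa {\cinfer \e \tvaa \cand \hole}$, and observe that the same $\C$ filled instead with the \emph{discharged} constraint is exactly the elaborated constraint already unified by $\sub$. It then remains only to verify $\Cshape \C \tvaa \sh$, after which \Rule{Match-Ctx} closes the goal.

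The main obstacle, and the genuine content of the step, is this transfer of unicity. To prove $\Cshape \C \tvaa \sh$ I would unfold its definition (a universal implication over ground completions of $\cerase{\C\where{\cunif \tvaa \gt}}$) and, for an arbitrary model, massage the erased constraint with \cref{lem:erasure-constraint-gen} ($\cerase{\cinfer \e \t} = \cinfer{\eerase \e} \t$) and again \cref{lem:ctxt-gen-correctness} into the generated constraint of the \emph{erased} context $\eerase{\E\where{\emagic{\eannot \e {} \gt}}}$. Since that erased term has strictly fewer than $k+1$ implicit constructs, the inductive hypothesis (applied in reverse) converts the model into a canonical typing of it, which \cref{lem:decanonicalization-typing} demotes to an ordinary typing; the typing-side hypothesis $\eshape \E \e \sh$ then forces $\shape \gt = \sh$, which is exactly the constraint-side conclusion. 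The converse direction is the mirror image: I would first apply canonicalization of satisfiability so the unifying derivation ends in \Rule{Can-Match-Ctx}, invert to expose $\Cshape \C \tvaa \sh$ and the discharged constraint, re-derive $\eshape \E \e \sh$ by the same erasure-and-IH argument, apply the inductive hypothesis to the explicit form, and reassemble the typing with \Rule{Can-Proj-I}. The subtle point throughout is the well-foundedness of these nested appeals to the inductive hypothesis on erased terms; this is sound precisely because erasure strictly decreases the implicit-construct count---the same reason the corresponding appeals were well-founded in \cref{thm:canonicalization-typing}.
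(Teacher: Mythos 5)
Your proposal is correct and follows essentially the same route as the paper's proof: the same induction on the implicit-construct count, the base case via \cref{lem:simple-soundness-completeness} at the empty context, and the inductive step decomposing through \cref{lem:ctxt-gen-correctness}, transferring unicity between $\eshape \E \e \sh$ and $\Cshape \C \tvaa \sh$ via erasure (\cref{lem:erasure-constraint-gen}), the reverse appeal to the inductive hypothesis, and \cref{lem:decanonicalization-typing}, closing with \Rule{Match-Ctx} in one direction and \Rule{Can-Proj-I} (after canonicalizing the satisfiability derivation) in the other. Your well-foundedness justification—erasure strictly decreases the implicit-construct count—is also the one the paper relies on.
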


\subsection{Principal types}

\principalTypesBIS*
\begin{proof}

Let $\e$ be an arbitrary closed well-typed term; that is, there exists a type
  $\t$ such that $\th \e : \t$.
  By \cref{thm:soundness-and-completeness}, the constraint $\cinfer \e \tv$ is satisfiable
  (specifically under the unifier $\cunif \tv \t$). By \cref{corollary:correctness}, there
  exists a solved constraint $\hat\c$ such that $\hat\c \cequiv \cinfer \e \tv$.
  From $\hat\c$, we extract a unifier $\sub$. Since $\hat\c \cequiv \exists \sub$,
  it follows that $\sub$ is \emph{most general}.

  We claim that $\sub(\tv)$ is the principal type of $\e$. This amounts to showing:
  \begin{enumerate}[(\roman*)]
    \item
      \label{proof:principal-types:1}
      $\th \e : \sub(\tv)$
    \item
      \label{proof:principal-types:2}
      For any other typing $\th \e : \tp$, then $\tp = \theta(\sub(\tv))$ for some $\theta$.
  \end{enumerate}
  Since $\sub$ is a unifier of $\cinfer \e \tv$, it follows immediately from
  \cref{thm:soundness-and-completeness} that $\th \e : \sub(\tv)$, proving \ref{proof:principal-types:1}.
  For \ref{proof:principal-types:2}, suppose $\th \e : \tp$ for some $\tp$. Then by
  \cref{thm:soundness-and-completeness} again, there exists a unifier $\subp$ of $\cinfer \e \tv$
  such that $\subp(\tv) = \tp$. Since $\sub$ is most general, we have $\exists \subp \centails \exists \sub$,
  and by \cref{lem:mgus}, this implies the existence of a substitution $\subpp$ such that
  $\subp = \subpp \compose \sub$.
  Hence, $\tp = \subp(\tv) = \subpp(\sub(\tv))$, witnessing that $\tp$ is an instance of $\sub(\tv)$, as
  required \ref{proof:principal-types:2}.
\end{proof}

\Draft{\input {draft}}{}

\clearpage
\setcounter{tocdepth}{1}
\tableofcontents

\end{document}

